\newcommand{\blind}{0}
\preto\tabular{\setcounter{magicrownumbers}{0}}
\newcounter{magicrownumbers}
\definecolor{maroon}{RGB}{0,0,0}
\definecolor{navy}{RGB}{0,0,0}
\definecolor{indigo}{RGB}{0,0,0}
\DeclareMathOperator*{\Tr}{Tr}
\DeclareMathOperator*{\argmin}{arg\,min}
\newcommand{\bone}{{\mathbf 1}}
\newcommand{\bzero}{{\mathbf 0}}
\newcommand{\mone}{{\mathbbm{1}}}
\newcommand{\ba}{{\mathbf a}}
\newcommand{\bd}{{\mathbf d}}
\newcommand{\be}{{\mathbf e}}
\newcommand{\bg}{{\mathbf g}}
\newcommand{\bl}{{\mathbf l}}
\newcommand{\bu}{{\mathbf u}}
\newcommand{\bA}{{\mathbf A}}
\newcommand{\bE}{{\mathbf E}}
\newcommand{\bG}{{\mathbf G}}
\newcommand{\bH}{{\mathbf H}}
\newcommand{\bI}{{\mathbf I}}
\newcommand{\bQ}{{\mathbf Q}}
\newcommand{\bR}{{\mathbf R}}
\newcommand{\bS}{{\mathbf S}}
\newcommand{\bT}{{\mathbf T}}
\newcommand{\bU}{{\mathbf U}}
\newcommand{\bW}{{\mathbf W}}
\newcommand{\bX}{{\mathbf X}}
\newcommand{\bZ}{{\mathbf Z}}
\newcommand{\Gau}{\textit{N}}
\newcommand{\btheta}{\boldsymbol{\theta}}
\newcommand{\bdelta}{\boldsymbol{\delta}}
\newcommand{\ve}{\varepsilon}
\newcommand\T{\top}
\newcommand{\bDelta}{\boldsymbol{\Delta}}
\newcommand{\bOmega}{\boldsymbol{\Omega}}
\newcommand{\bTheta}{\boldsymbol{\Theta}}
\newcommand{\E}{\mathbb{E}}
\newcommand{\Op}{O_{\P}}
\newcommand{\op}{o_{\P}}
\newcommand{\ol}[1]{\overline{#1}}
\newcommand{\wh}[1]{\widehat{#1}}
\newcommand{\wt}[1]{\widetilde{#1}}
\newcommand{\blkeq}{\phantom{{} = {}}}
\newcommand{\lrm}[1]{\{{#1}\}}
\newcommand{\lrM}[1]{\left\{{#1}\right\}}
\newcommand{\lri}[1]{({#1})}
\newcommand{\lrI}[1]{\left({#1}\right)}
\newcommand{\lro}[1]{[{#1}]}
\newcommand{\lrO}[1]{\left[{#1}\right]}
\newcommand{\lrN}[1]{\left\|{#1}\right\|}
\newcommand{\lrn}[1]{\|{#1}\|}
\newcommand{\lrt}[1]{{\left\vert\kern-0.25ex\left\vert\kern-0.25ex\left\vert #1 
    \right\vert\kern-0.25ex\right\vert\kern-0.25ex\right\vert}}
\newcommand{\lrA}[1]{\left|{#1}\right|}
\newcommand{\cid}{\rightsquigarrow}
\newcommand*\diff{\mathop{}\!\mathrm{d}}
\renewcommand{\P}{\mathbb{P}}
\newtheorem{theorem}{Theorem}[section]
\newtheorem{lemma}[theorem]{Lemma}
\newtheorem{corollary}{Corollary}[section]
\newtheorem{assump}{Assumption}
\newtheorem{prop}{Proposition}[section]
\let\oldproof\proof
\let\oldendproof\endproof
\renewenvironment{proof}
  {\oldproof\normalsize\color{navy}}
  {\oldendproof}
\begin{document}

\def\spacingset#1{\renewcommand{\baselinestretch}%
{#1}\small\normalsize} \spacingset{1}


\if0\blind
{
  \title{\large \bf Communication-efficient Distributed Newton-like Optimization with Gradients and $M$-estimators}
  \author{Ziyan Yin \\
	}
	\date{}
  \maketitle
} \fi

\if1\blind
{
  \bigskip
  \bigskip
  \bigskip
  \begin{center}
    { \Large Communication-efficient Distributed Newton-like Optimization with Gradients and $M$-estimators }
\end{center}
  \medskip


} \fi

\bigskip

\begin{abstract}
  In modern data science, it is common that large-scale data are stored and processed parallelly across a great number of locations. For reasons including confidentiality concerns, only limited data information from each parallel center is eligible to be transferred. To solve these problems more efficiently, a group of communication-efficient methods are being actively developed. We propose two communication-efficient Newton-type algorithms, combining the $M$-estimator and the gradient collected from each data center. They are created by constructing two Fisher information estimators globally with those communication-efficient statistics. Enjoying a higher rate of convergence, this framework improves upon existing Newton-like methods. Moreover, we present two bias-adjusted one-step distributed estimators. When the square of the center-wise sample size is of a greater magnitude than the total number of centers, they are as efficient as the global $M$-estimator asymptotically. The advantages of our methods are illustrated by extensive theoretical and empirical evidences. 
\end{abstract}

\section{Introduction}
The statistical inference and optimization problem under a distributed setting is a popular topic in modern data science applications. For example, online marketing data are often too big to be stored within one hard drive; medical records are usually stored and processed separately at regional healthcare centers, and the inter-hospital sharing of the patient-level information is highly regulated and restricted. In these scenarios, despite the fact that more accurate inference can be obtained by combining data from all centers, practical realities prevent these centers from sharing the raw data stored therein. Among others, two foremost concerns are the data confidentiality and the technical feasibility. 
Consequently, only limited statistics satisfying some restrictive rules are eligible to be infrequently transferred among centers. 

Recent years have witnessed a flurry of important technological and methodological developments of the communication-efficient methods to handle such distributed data problems. One broadly applied class is the so-called divide-and-conquer strategy which takes average of certain statistics calculated at each center separately. Recent advances include \cite{Zhang2013, ZhangImprove, AvgSparLM, AvgKernel, AvgTesting, AvgQR, Chen2020Quan, AvgSVM, AvgPCA}, etc.

However, one of the major limitations of these one-shot averaging methods is that, their efficacy is heavily limited by the center-wise sample size. 
In other words, the dataset cannot be split across too many centers, and adding extra data centers to the existing dataset does not always improve accuracy. \cite{Zhang2013} showed when the center-wise sample size is fixed, the mean squared error of the simple average of $M$-estimators can hardly be reduced by adding more data centers. 

Recently, another communication-efficient framework has received considerable attention.  \cite{Wang17} and \cite{MJ} proposed to use some surrogate loss function that can be evaluated  efficiently at a preselected data center (also known as the local center). The advantage of this framework is that only gradients are collected from parallel centers, upon receiving an initial estimator.  More  specifically, this strategy replaces the  higher-order derivatives of the global loss function  (that requires data from all data centers) with those local surrogates.  Due to its convenience and promising properties, this strategy has been broadly investigated and applied in different directions; for example, see \cite{Fan2019, Duan2021, Applied_DistBT, Applied_DistBT2, Applied_LDA, AvgSVM, Applied_SGD}, among others. 

Although the estimator proposed in \cite{Wang17} and \cite{MJ} appears efficient in some cases, when the number of centers $m$ and the center-wise sample size $n$ are of the same magnitude, the accuracy of the local surrogates becomes the bottleneck, limiting further improvement. The local high-order derivatives remain barely changed no matter how many parallel centers there are. More specifically, in each iteration, this framework uses the local Hessian estimator as the ``learning rate'', which always has a fixed $\Op(n^{-1/2})$ gap to the true Fisher information matrix. 

In light of this finding, in this paper, we are motivated to design a more efficient distributed estimation procedure by constructing some more accurate Fisher information estimators. We find in each parallel center, the Fisher information can be seen as the ``coefficient'' of a ``linear'' map from the gradient evaluated at the truth to the estimation error of the corresponding $M$-estimator. Although this ``true'' gradient is unavailable, when a given estimator is close enough to the truth, the ``linearity'' between the errors of the $M$-estimators and the gradients evaluated at that given estimator still holds with negligible discrepancy. This kind of pseudo-linearity indicates that, its pseudo-coefficient --- the true Fisher information matrix --- can be estimated via linear regression. Thus, we propose two Fisher information estimators by combining the $M$-estimators and gradients collected from parallel centers. We show that under regular existence and continuity assumptions, the distances of both our estimators to their estimands are $\Op(n^{-1} + m^{-1})$, which outperform the local Hessian estimator used in  \cite{Wang17} and \cite{MJ}. 

With these $M$-estimator/gradient-enhanced Fisher information estimators, we propose two iterative distributed algorithms correspondingly. Both have higher rates of convergence than the traditional method does. 
Additionally, we present two bias-adjusted one-step distributed estimators using the aforementioned Fisher information estimators. Their distances to the global $M$-estimator are reduced by removing quadratic terms of the estimation error of the initial estimator. As a result, these two refined one-shot estimators have the same efficiency asymptotically as the global $M$-estimator does when $m = o(n^2)$.

We introduce some notations. Let $F_X(u)$ and $F_X^{-1}(q)$ be the cumulative distribution function and the quantile function of a random variable $X$. For a vector $\ba = (a_1, \dots, a_m)^{\T}$, $\text{supp}(\ba) = \lrm{i:a_i \ne 0}$, $\|\ba\|_q = (\sum_i |a_i|^q)^{1/q}$, $\|\ba\|_{\infty} = \max_i |a_i|$, and $\|\ba\|_0 = \#\lrm{i:a_i \ne 0}$. Also, $\bX \cid F$ denotes a random element $\bX$ converges in distribution to $F$. For a matrix $\bA$, $\{\bA\}_{ij}$ denote the entry in the $i$th row and $j$th column of $\bA$. The Kronecker product is denoted by $\otimes$.

\section{Motivation and Problem Set-Up} \label{Hsec:PS} 
We first state the structure and storage of the dataset.
Let $\lrm{\bX_i}_{i=1}^{N}$ denote $N$  independent and identically distributed samples with marginal distribution $\mathcal{P}$ over some sample space $\mathcal{X}$. For any parameter $\btheta$ containing in some convex space $\Theta \subseteq \mathbb{R}^d$, define a convex and three-times differentiable loss function $L: \Theta \times \mathbb{R}^d \to \mathbb{R}$, such that the true parameter $\btheta_0$ is a minimizer of the population risk, that is
$$\btheta_0 \in \argmin_{\btheta \in \Theta} \E_{\mathcal{P}}\lrm{L(\btheta, \bX)}.$$

We consider the evenly distributed setting where $\lrm{\bX_i}_i^{N}$ are uniformly stored in one local (or central) center $\mathcal{L}$ (or $\mathcal{M}_1$) and $m - 1$ global (or parallel) centers $\lrm{\mathcal{M}_i}_{i=2}^{m}$ with $\mathcal{G} = \bigcup_{i=1}^m \mathcal{M}_i$; therefore, $N = mn$. Suppose we have full access to the data stored in $\mathcal{L}$ which also plays the role of processing and delivering the final results. 
Despite the requirement that the data are evenly stored with equal size $n$, our findings can be potentially generalized to unequal sample size cases without compromising the spirit.


Let $\wh\btheta_i$ be the $M$-estimator from the $i$th center $\mathcal{M}_i$, and the average $\ol\btheta = m^{-1}\sum_{i=2}^m \wh\btheta_i$. We refer the readers to \cite{Zhang2013} for a thorough picture of properties of $M$-estimators under this distributed setting.


Newton's method is a powerful tool for optimization problems. However, when the dataset is split across many centers, and the inter-center communication cost is a major concern, it cannot be applied directly because the Hessian matrices can be huge and hard to transfer. One approach is the so-called communication-efficient surrogate likelihood (CSL) framework  proposed in In \cite{MJ} and \cite{Wang17}. They replaced the unavailable global Hessian with the local Hessian matrix $\wh\bH^{(l)}(\btheta) = n^{-1} \sum_{j \in \mathcal{L}} \nabla^2 L(\btheta; \bX_j)$. Then, Newton-like methods are applied upon receiving gradients from parallel centers. This method is efficient, and the communication cost is low. However, it depends heavily on the quality and quantity of the local data --- $\wh\bH^{(l)}(\btheta)$ remains almost unchanged no matter how large the global sample size is. That is, $$
    \lrn{\wh\bH^{(l)}(\btheta) - \bI_0}_2 
         = \Op(\lrn{\btheta - \btheta_0}_2) + \Op(n^{-1/2}). 
$$
The estimation error of $\wh\bH^{(l)}(\btheta)$ depends on $\lrn{\btheta - \btheta_0}_2$ and $n^{-1/2}$, neither of which decreases with a larger $m$. Also, the whole optimization procedure may be ruined, if we accidentally choose a ``bad'' data center to be the local one.  



Thus, we are trying to find communication-efficient alternatives of $\wh\bH^{(l)}(\btheta)$, and we hope their errors would decrease when $m$ increases. This brings the estimators of the Fisher information matrix $\bI_0 = \E_{\mathcal{P}}\lrm{\nabla^2 L(\btheta_0;\bX)}$ and its inverse $\bI_0^{-1}$: 
\begin{align}
    & \wh\bI_0(\btheta) = -\lrM{\sum_{i=1}^m (\wh\btheta_i - \ol\btheta)(\wh\btheta_i - \ol\btheta)^\T}^{-1} \lrO{\sum_{i=1}^m (\wh\btheta_i - \ol\btheta)\lrM{\bl_i(\btheta) - \ol\bl(\btheta)}^\T} \label{Heq:I0HatDef} \\
    & \wh\bOmega(\btheta) = -\lrO{\sum_{i=1}^m\lrM{\bl_i(\btheta) - \ol\bl(\btheta)}\lrM{\bl_i(\btheta) - \ol\bl(\btheta)}^\T}^{-1} \lrO{\sum_{i=1}^m\lrM{\bl_i(\btheta) - \ol\bl(\btheta)}(\wh\btheta_i - \ol\btheta)^\T}, \label{Heq:OmegaHatDef}
\end{align}
where $\bl_i(\btheta) = n^{-1}\sum_{j \in \mathcal{M}_i} \nabla L(\btheta;\bX_{j})$ is the gradient evaluated at $\btheta$ in the $i$th center, and $\ol\bl(\btheta) = m^{-1} \sum_{i=1}^m \bl_i(\btheta)$ is their average. Their construction needs no high-order derivative; we only collect gradients and $M$-estimators. We name $\wh\bI_0(\btheta)$ and $\wh\bOmega(\btheta)^{-1}$ the $M$-estimator/gradient (MG) and gradient/$M$-estimator (GM) Fisher information estimators, and will show they are closer to $\bI_0$ than $\wh\bH^{(l)}(\btheta)$. 


\subsection{Estimation Accuracy of the $M$-Estimator/Gradient Fisher Information Estimators}
Let $\bDelta = \btheta - \btheta_0$, and define two cross-products involving the first and second derivatives of $L(\btheta;\bX)$: 
\begin{align*}
& \bQ_{11} = \E_\mathcal{P}\lrO{\nabla L(\btheta_0;\bX_1) \nabla L(\btheta_0;\bX_1)^\T}, \\
& \bQ_{12} = \E_\mathcal{P}\lrO{\nabla L(\btheta_0;\bX_1) \otimes \lrm{\nabla^2 L(\btheta_0;\bX_1) - \bI_0}}.
\end{align*}

\begin{prop} \label{Hpr:DGGD}  Let $U(\rho)$ = $\{\btheta \in \bTheta; \lrn{\btheta - \btheta_0}_2 \leq \rho\}$ be some Euclidean ball of radius $\rho > 0$ in which Assumptions \ref{as:T0Def}, \ref{as:T0Ind}, \ref{Has:LH}, and \ref{Has:L3Cont} hold. Suppose $n$ and $m$ are large enough, such that $\wh\bI_0(\btheta)$ and $\wh\bOmega(\btheta)$ exist almost surely. 
\begin{enumerate}[label={(\arabic*)}]
\item For any $\btheta \in U(\rho)$, 
    $$
        \wh\bI_0(\btheta) - \bI_0 = \bI_0\bQ_{11}^{-1}(\bI_{d } \otimes \bDelta^\T) \bQ_{12} + \bR_{D},
    $$
    where $\lrn{\bR_{D}}_2 = \Op\lrm{(n^{3/2}\lrn{\bDelta}_2^{3} + 1)(n^{-1} + m^{-1})}$;

\item When $\btheta \in U(\rho)$ and $\lrn{\bDelta}_2 = \op(n^{-1/4})$, 
    $$
        \wh\bOmega(\btheta) - \bI_0^{-1} = \bQ_{11}^{-1}(\bI_{d} \otimes \bDelta^\T) \bQ_{12}\bI_0^{-1} + \bR_{G},
    $$
    where $\lrn{\bR_{G}}_2 = \Op\lri{n\lrn{\bDelta}_2^4 + m^{-1} + n^{-1}}$.
\end{enumerate}
\end{prop}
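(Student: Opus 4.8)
\medskip
\noindent\textbf{Proof proposal.}
The plan is to reduce both regression-type definitions to second-order Bahadur/von Mises expansions driven by the first-order conditions $\bl_i(\wh\btheta_i)=\bzero$ of the local $M$-estimators, and then to read off the leading terms from a law of large numbers across the $m$ centers together with the moment identities $\E[\bl_i(\btheta_0)\bl_i(\btheta_0)^\T]=n^{-1}\bQ_{11}$ and $\E[\bl_i(\btheta_0)\otimes\{\bH_i(\btheta_0)-\bI_0\}]=n^{-1}\bQ_{12}$, where $\bH_i(\btheta)=n^{-1}\sum_{j\in\mathcal M_i}\nabla^2L(\btheta;\bX_j)$ (and, below, $\nabla^3L_i$) denotes the center-$i$ average of $\nabla^2L$ (resp.\ $\nabla^3L$).

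First I would record the single-center expansions. Put $\bv_i=\wh\btheta_i-\btheta_0$ and $\bE_i=\bH_i(\btheta_0)-\bI_0=\Op(n^{-1/2})$. The stated regularity, together with standard $M$-estimator facts (cf.\ \cite{Zhang2013}), gives $\lrn{\bv_i}_2=\Op(n^{-1/2})$ and, Taylor-expanding $\bl_i$ at $\btheta_0$ inside $\bl_i(\wh\btheta_i)=\bzero$, the refinement $\bl_i(\btheta_0)=-\bI_0\bv_i+\bI_0\br_i$ with $\bI_0\br_i=-\bE_i\bv_i-\tfrac12\nabla^3L_i(\bxi_i)[\bv_i,\bv_i]$, $\lrn{\br_i}_2=\Op(n^{-1})$. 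A second-order Taylor expansion of $\bl_i$ at $\btheta_0$ gives $\bl_i(\btheta)=\bl_i(\btheta_0)+(\bI_0+\bE_i)\bDelta+\bR^{(3)}_i$, where (writing the remainder in integral form and using continuity of $\nabla^3L$) $\bR^{(3)}_i$ is a deterministic $O(\lrn{\bDelta}_2^2)$ term plus a center-dependent, mean-zero fluctuation of size $\Op(n^{-1/2}\lrn{\bDelta}_2^2)$. Subtracting center averages---which annihilates every term not indexed by the center---produces the pseudo-linear identity
$$
\bl_i(\btheta)-\ol\bl(\btheta)=-\bI_0(\wh\btheta_i-\ol\btheta)+\bw_i,\qquad
\bw_i:=\bI_0(\br_i-\ol\br)+(\bE_i-\ol\bE)\bDelta+(\bR^{(3)}_i-\ol{\bR^{(3)}}),
$$
with $\lrn{\bw_i}_2=\Op(n^{-1}+n^{-1/2}\lrn{\bDelta}_2)$.

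Next I would substitute this into \eqref{Heq:I0HatDef}--\eqref{Heq:OmegaHatDef}. With $\bu_i=\wh\btheta_i-\ol\btheta$, $\bp_i=\bl_i(\btheta)-\ol\bl(\btheta)$, $\bA=\sum_i\bu_i\bu_i^\T$, $\bC=\sum_i\bw_i\bu_i^\T$, $\bD=\sum_i\bw_i\bw_i^\T$ (so $\sum_i\bu_i=\sum_i\bw_i=\bzero$), one obtains the exact identity $\wh\bI_0(\btheta)-\bI_0=-\bA^{-1}\bC^\T$, and, since $\sum_i\bp_i\bp_i^\T=\bI_0\bA\bI_0-\bI_0\bC^\T-\bC\bI_0+\bD$ and $\sum_i\bp_i\bu_i^\T=-\bI_0\bA+\bC$, a Neumann expansion of the inverse gives $\wh\bOmega(\btheta)-\bI_0^{-1}=\bI_0^{-1}\bA^{-1}\bC^\T\bI_0^{-1}-\bI_0^{-1}\bA^{-1}\bI_0^{-1}\bD\bI_0^{-1}+(\text{terms quadratic in }\bC,\bD)$; since $\bA^{-1}\bC^\T=-\{\wh\bI_0(\btheta)-\bI_0\}$, the two estimators are mutual inverses to first order, so (2) reduces to (1) plus control of the ``attenuation'' term $\bI_0^{-1}\bA^{-1}\bI_0^{-1}\bD\bI_0^{-1}$. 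For the leading behaviour, the moment identities and a law of large numbers over the cross-center independent summands give $\bA=mn^{-1}\bI_0^{-1}\bQ_{11}\bI_0^{-1}\{1+\Op(m^{-1/2}+n^{-1/2})\}$ and $\bC^\T=\sum_i\bu_i\bDelta^\T\bE_i+(\text{lower order})$; noting $\bl_i(\btheta_0)\bDelta^\T\bE_i=(\bI_d\otimes\bDelta^\T)\{\bl_i(\btheta_0)\otimes\bE_i\}$ and $\bv_i=-\bI_0^{-1}\bl_i(\btheta_0)+\Op(n^{-1})$, one finds $\sum_i\bu_i\bDelta^\T\bE_i=-mn^{-1}\bI_0^{-1}(\bI_d\otimes\bDelta^\T)\bQ_{12}+\Op(\sqrt{m}\,n^{-1}\lrn{\bDelta}_2)$; multiplying out gives $-\bA^{-1}\bC^\T=\bI_0\bQ_{11}^{-1}(\bI_d\otimes\bDelta^\T)\bQ_{12}+\bR_D$, which is (1), and substituting it into the expansion above reproduces the $\bDelta$-linear leading term and the $\bR_G$-type remainder of (2). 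Finally, $\bD=\Op(mn^{-1}\lrn{\bDelta}_2^2+mn^{-2})$, hence $\bI_0^{-1}\bA^{-1}\bI_0^{-1}\bD\bI_0^{-1}=\Op(\lrn{\bDelta}_2^2+n^{-1})$, which is $\Op(n\lrn{\bDelta}_2^4+n^{-1})$; the hypothesis $\lrn{\bDelta}_2=\op(n^{-1/4})$ is exactly what makes this remainder $\op(1)$.

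The hard part will be the remainder bookkeeping, and in particular the ``odd-moment'' cancellations: factorwise bounds would make quantities such as $\E[\bl_i(\btheta_0)\bl_i(\btheta_0)^\T\bE_i]$ or $\E[\bl_i(\btheta_0)\otimes\{\nabla^3L_i(\btheta_0)-\nabla^3L(\btheta_0)\}]$ of order $n^{-1}\cdot n^{-1/2}$, whereas they are in fact $O(n^{-2})$ because each factor is an i.i.d.\ average over the \emph{same} $n$-sample, so a product of an odd number of mean-zero factors loses one more power of $n$; this is what prevents a spurious $n^{-3/2}\lrn{\bDelta}_2$ term from contaminating the leading order. One must also verify that the Neumann expansion is legitimate and its discarded tail negligible---using $\lrn{\bC}_2/\lrn{\bI_0\bA\bI_0}_2=\Op(\lrn{\bDelta}_2+m^{-1/2})$ and $\lrn{\bD}_2/\lrn{\bI_0\bA\bI_0}_2=\Op(\lrn{\bDelta}_2^2+m^{-1})$---and then fit every leftover under the stated rates with elementary inequalities such as $m^{-1/2}\lrn{\bDelta}_2\le n^{1/2}\lrn{\bDelta}_2^3+n^{-1}+m^{-1}$ and $\lrn{\bDelta}_2^2\le n^{1/2}\lrn{\bDelta}_2^3+n^{-1}$.
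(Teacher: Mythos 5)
Your treatment of statement (1) is essentially the paper's own proof: the same pseudo-linear identity $\bl_i(\btheta)-\ol\bl(\btheta)=-\bI_0(\wh\btheta_i-\ol\btheta)+\bw_i$, the same exact regression identity $\wh\bI_0(\btheta)-\bI_0=-\bA^{-1}\bC^{\T}$, the Gram-matrix limit $nm^{-1}\bA\to\bI_0^{-1}\bQ_{11}\bI_0^{-1}$ (Lemma \ref{Hlm:NDD}), identification of the $\bDelta$-linear leading term from $\sum_i\bu_i\bDelta^\T\bE_i$ (as in Lemmas \ref{Hlm:P2} and \ref{Hlm:P3s}), and, crucially, the same-sample ``odd-moment'' cancellation that makes the $\bDelta$-free cross term $O(n^{-2})$ in mean with $\Op(m^{-1/2}n^{-3/2})$ fluctuation (the content of Lemmas \ref{Hlm:DDD}, \ref{Hlm:LDD}, \ref{Hlm:N3DD}, \ref{Hlm:R1MC}); you correctly single this out as the delicate step, and your absorption inequalities are the right way to fit the leftovers under the stated rate. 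For statement (2) you take a genuinely different packaging: instead of the paper's separate reverse regression, which analyzes the Gram matrix $m^{-1}\sum_i n\wh\bg_i\wh\bg_i^\T=\bQ_{11}+\Op(n\lrn{\bDelta}_2^4+\cdots)$ (Lemma \ref{Hlm:GU}) and its own numerator, you rewrite $\sum_i\bp_i\bp_i^\T$ and $\sum_i\bp_i\bu_i^\T$ through $\bA,\bC,\bD$ and Neumann-expand, so that (2) becomes (1) plus an attenuation term $\bI_0^{-1}\bA^{-1}\bI_0^{-1}\bD\bI_0^{-1}=\Op(\lrn{\bDelta}_2^2+n^{-1})$; here $\bD$ plays exactly the role of Lemma \ref{Hlm:P4} and of the quartic term in Lemma \ref{Hlm:GU}, and the hypothesis $\lrn{\bDelta}_2=\op(n^{-1/4})$ enters where the paper also needs it. The bookkeeping burden is comparable, but your route makes the relation between the two estimators explicit. (Minor slip: the validity ratio should be $\lrn{\bD}_2/\lrn{\bI_0\bA\bI_0}_2=\Op(\lrn{\bDelta}_2^2+n^{-1})$, not $m^{-1}$.)

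One point needs repair before this counts as a proof of the statement as printed. Your own expansion gives, at first order, $\wh\bOmega(\btheta)-\bI_0^{-1}=\bI_0^{-1}\bA^{-1}\bC^\T\bI_0^{-1}+\cdots=-\bI_0^{-1}\lrM{\wh\bI_0(\btheta)-\bI_0}\bI_0^{-1}+\cdots$, whose $\bDelta$-linear part is $-\bQ_{11}^{-1}(\bI_d\otimes\bDelta^\T)\bQ_{12}\bI_0^{-1}$, i.e.\ the \emph{negative} of the leading term displayed in part (2); the attenuation term is only $\Op(\lrn{\bDelta}_2^2+n^{-1})$ and cannot flip it. So the claim that the ``mutual inverse to first order'' relation reproduces the stated leading term is inconsistent with your own algebra: either carry the minus sign through (in which case your conclusion disagrees in sign with the display in the proposition --- note the paper's appendix has the same tension, since $\wh\bd_i=-\bI_0^{-1}\wh\bg_i+\be_i$ yields $\wh\bOmega-\bI_0^{-1}=-\lri{\sum_i\wh\bg_i\wh\bg_i^\T}^{-1}\sum_i\wh\bg_i\be_i^\T$ rather than the $+$ written there, while its numerator is identified as $+n^{-1}\bQ_{12}(\bDelta)\bI_0^{-1}$), or locate the sign you believe you lost. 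Since this concerns only the sign of the bias term and not the orders of magnitude, the structure of your argument stands, but you must state explicitly which sign your derivation produces and reconcile it with the statement.
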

It is clear that the estimation errors of $\wh\bI_0(\btheta)$ and $\wh\bOmega(\btheta)$ consist of both $\bDelta$-constant parts and sub-$\bDelta$ parts. When $\lrn{\bDelta}_2 = \Op(n^{-1/2})$, both $\lrn{\wh\bI_0(\btheta) - \bI_0}_2$ and $\lrn{\wh\bOmega(\btheta) - \bI_0^{-1}}_2$ are $O(\lrn{\bDelta}_2) + \Op(n^{-1} + m^{-1})$. By contrast, using the local Hessian leads to the error $\lrn{\wh\bH^{(l)}(\btheta) - \bI_0}_2 = \Op(\lrn{\bDelta}_2) + \Op(n^{-1/2})$. For a given $\bDelta$, our proposed Fisher information estimators converge to their estimands stochastically faster than the local Hessian does, if  $n = o(m^2)$. Also, for the $\bDelta$-constant part, a local estimation can be conducted to further reduce their errors and speed up the whole process; see Theorem \ref{Hthm:EEst}. 

Remarkably, in generalized linear models (e.g., logistic regression), $\bQ_{11} = \bI_0$ and $\bQ_{12} = \bzero$. Our proposed estimators would outperform $\wh\bH^{(l)}(\btheta)$ without any adjustment.



\color{navy}
\subsection{Decomposition of the Estimation Errors}
Proposition \ref{Hpr:DGGD} comes from two key properties: the mean-value theorem describing the gap between two Hessian matrices with different parameters, and the asymptotic ``linearity'' between the $M$-estimator and the gradient at $\btheta_0$. 



The first statement is an application of the multivariate mean-value theorem of the distance between $n^{-1}\sum_{j \in \mathcal{M}_i} \nabla^2 L(\btheta;\bX_j)$ and $n^{-1}\sum_{j \in \mathcal{M}_i} \nabla^2 L(\btheta_0;\bX_j)$: if the third derivative of $L(\btheta;\bX)$ exists and is $m(\bX)$-Lipschitz continuous with respect to $\btheta$, then, for any $\btheta \in U(\rho)$, 
    \begin{align*}
        & n^{-1}\sum_{j \in \mathcal{M}_i} \nabla^2 L(\btheta;\bX_j) - n^{-1}\sum_{j \in \mathcal{M}_i} \nabla^2 L(\btheta_0;\bX_j) \approx \lrm{\bI_d \otimes\bDelta^\T} \bQ, \\
        & n^{-1}\sum_{j \in \mathcal{M}_i} \int_0^1 \nabla^2 L\lrm{\btheta_0 + t\bDelta;\bX_j} \diff t - n^{-1}\sum_{j \in \mathcal{M}_i} \nabla^2 L(\btheta_0;\bX_j) 
        \approx \frac{1}{2}\lrm{\bI_d \otimes\bDelta^\T} \bQ, 
    \end{align*}
    where 
    \begin{equation*}
        \bQ = \begin{pmatrix} \nabla^2 \lrO{\frac{\partial}{\partial \theta_{1}}\E_{\mathcal{P}}\lrm{L(\btheta_0;\bX)}} \\ \vdots \\ \nabla^2 \lrO{\frac{\partial}{\partial \theta_{d}}\E_{\mathcal{P}}\lrm{L(\btheta_0;\bX)}}\ \end{pmatrix}. \label{Heq:QTDef}
    \end{equation*}
    
The second property is the ``linear'' association between the $i$th sample gradient at $\btheta_0$ and the corresponding $M$-estimator $\wh\btheta_i$ at $\mathcal{M}_i$. At the $i$th center, let $\bd_i = \wh\btheta_i - \btheta_0$ and $\ol\bH_i = n^{-1}\sum_j \int_0^1 \nabla^2 L\lri{\btheta_0 + t\bd_i;\bX_{ij}}\diff t$. Then,
\begin{equation}
    \bl_i(\btheta_0) = -\ol\bH_i \bd_i = -\bI_0\bd_i + (\bI_0 - \ol\bH_i)\bd_i. \label{Heq:MEstL}
\end{equation}
Under the conditions given in \cite{Zhang2013}, $\lrn{\bd_i}_2 = \Op(n^{-1/2})$ and $\lrn{\bI_0 - \ol\bH_i}_2 = \Op(n^{-1/2})$. In \eqref{Heq:MEstL}, $(\bI_0 - \ol\bH_i)\bd_i$ becomes subordinate -- $\lrn{(\bI_0 - \ol\bH_i)\bd_i}_2 = \Op(n^{-1/2}\lrn{\bd_i}_2)$. All these facts lead to the ``linearity'' between $\bl_i(\btheta_0)$ and $\bd_i$:
\begin{equation*}
    \bl_i(\btheta_0) \approx -\bI_0\bd_i. \label{Heq:LID}
\end{equation*}
In practice, only $\bl_i(\btheta)$ is available, so we consider $\bg_i = \bl_i(\btheta) - \bI_0\bDelta$ as the practical implementation of $\bl_i(\btheta_0)$. The association between $\bg_i$ and $\bd_i$ can be quantified by expanding $\bg_i$ around $\btheta_0$. 

Let $\wt\bH_i = n^{-1}\sum_j \int_0^1 \nabla^2 L\lri{\btheta_0 + t\bDelta;\bX_{ij}}\diff t$. Applying \eqref{Heq:MEstL},
\begin{equation}
    \bg_i = \bl_i(\btheta_0) + \wt\bH_i\bDelta - \bI_0\bDelta \nonumber = -\bI_0\bd_i + \be_i, \label{Heq:GID}
\end{equation}
where the residual $\be_i = (\bI_0 - \ol\bH_i)\bd_i + (\wt\bH_i - \bI_0)\bDelta$. The form $\bg_i = -\bI_0\bd_i + \be_i$ indicates a ``regression'' estimator of $\bI_0$:
\begin{equation}
    \wt\bI_0(\btheta) = -\lrI{\sum\bd_i\bd_i^\T}^{-1}\lrI{\sum\bd_i\bg_i^\T},
\end{equation}
and its estimation error is
\begin{equation}
    \wt\bI_0(\btheta) - \bI_0 = -\lrI{\sum\bd_i\bd_i^\T}^{-1}\lrI{\sum\bd_i\be_i^\T}. \label{Heq:IEstErr}
\end{equation}
In \eqref{Heq:IEstErr}, the denominator $m^{-1}\sum\bd_i\bd_i^\T$ is less a concern: the pseudo linearity $\bl_i(\btheta_0) \approx -\bI_0\bd_i$ leads to the fact that $m^{-1}\sum n\bd_i\bd_i^\T \approx m^{-1}\bI_0^{-1}\sum n\bl_i(\btheta_0)\bl_i(\btheta_0)^\T\bI_0^{-1}  \cid \bI_0^{-1}\bQ_{11}\bI_0^{-1}$. 

On the other hand, for the numerator, by definition, 
\begin{align*}
    m^{-1}\sum\bd_i\be_i^\T 
    = & m^{-1}\sum_i \bd_i\bd_i^\T(\bI_0 - \bH_i) + m^{-1}\sum_i \bd_i\bd_i^\T(\bH_i - \ol\bH_i) + \\ 
        & m^{-1}\sum_i \bd_i\bDelta^\T(\bH_i - \bI_0) + m^{-1}\sum_i \bd_i\bDelta^\T(\wt\bH_i - \bH_i). 
\end{align*}
Bounding each part, we have
\begin{equation*}
    m^{-1}\sum_j \bd_i\be_i^\T \approx n^{-1} (\bI_{d} \otimes \bDelta^\T) \bQ_{12} + \bR_D',
\end{equation*}
where $\lrn{\bR_D'} = \Op(n^{-2} + n^{-3/2}m^{-1/2}) + \Op(n^{-1/2}m^{-1/2}\lrn{\bDelta}_2^2)$. Together with the denominator, 
\begin{equation*}
    \wt\bI_0(\btheta) - \bI_0 = \lrI{\sum n\bd_i\bd_i^\T}^{-1}\lrI{\sum n\bd_i\be_i^\T} \approx \bI_0\bQ_{11}^{-1}\bI_0(\bI_{d} \otimes \bDelta^\T) \bQ_{12} + \bR_D'', 
\end{equation*}
with $\lrn{\bR_D''}_2 = \Op(n^{-1} + n^{-1/2}m^{-1/2} + n^{1/2}m^{-1/2}\lrn{\bDelta_1}_2^2)$.

For practical applications, we implement the sample-based estimators $\wh\bg_i$ and $\wh\bd_i$ to approximate the unknown $\bg_i$ and $\bd_i$, which gives us $\wh\bI_0(\btheta)$. Meanwhile, another implication of \eqref{Heq:GID} is $\bd_i = -\bI_0^{-1}\bg_i + \bI_0^{-1}\be_i$, and this brings the estimator of $\bI_0^{-1}$ --- $\wh\bOmega(\btheta)$, whose distance to $\bI_0^{-1}$ can be quantified following the similar steps.

\color{black}

\section{Main Results  \label{Hsec:Main}}
In this section, we present both iterative and one-shot Newton-like optimization algorithms using the  GM and MG Fisher information estimators.  

\subsection{The Iterative Distributed Estimators}

Based on Proposition \ref{Hpr:DGGD}, $\wh\bI_0(\btheta)$ and $\wh\bOmega(\btheta)^{-1}$ can be applied to replace the unattainable global Hessian in each iteration of Newton's method. We propose our $M$-estimator/gradient Newton-type optimization methods in Algorithms \ref{Halg:RHE} and \ref{Halg:RIHE}. 

\begin{algorithm}[ht]
\begin{algorithmic}[1]
\caption{\label{Halg:RHE} Distributed Newton-Like Optimization with MG estimators}
\STATE Collect all $M$-estimators $\lrm{\wh\btheta_i}_{i = 1}^m$ and compute their average $\ol\btheta = m^{-1}\sum_i\wh\btheta_i$; 
\STATE Input the initial estimator $\wt\btheta_0$ and the total number of iterations $T$; 
\FOR{$t = 0, \dots, T-1$}
\STATE Broadcast the current estimator $\wt\btheta_t$ to all $m$ parallel centers $\lrm{\mathcal{M}_i}_{i=1}^m$;
    \FOR{$i = 1, \dots, m$}
    \STATE At center $\mathcal{M}_i$, compute the gradient $\bl_i(\wt\btheta_t) = n^{-1}\sum_{j \in \mathcal{M}_i} \nabla L(\wt\btheta_t;\bX_{j})$; 
    \STATE Return $\bl_i(\wt\btheta_t)$ to the local center $\mathcal{L}$;
    \ENDFOR
    \STATE At $\mathcal{L}$, compute the global gradient $\ol\bl(\wt\btheta_t) = m^{-1}\sum_i\bl_i(\wt\btheta_t)$; 
    \STATE Construct the MG Fisher information estimator:  
    $$ 
        \wh\bI_{0}(\wt\btheta_t) = -\lrM{\sum_i (\wh\btheta_i - \ol\btheta)(\wh\btheta_i - \ol\btheta)^\T}^{-1} \lrO{\sum_i (\wh\btheta_i - \ol\btheta)\lrM{\bl_i(\wt\btheta_t) - \ol\bl(\wt\btheta_t)}^\T};
      $$
    \STATE Update $\wt\btheta_{t + 1} = \wt\btheta_t - \wh\bI_{0}(\wt\btheta_t)^{-1}\ol\bl(\wt\btheta_t)$;
\ENDFOR
\RETURN $\wt\btheta_{T}$.
\end{algorithmic}
\end{algorithm}

\begin{algorithm}[ht]
\begin{algorithmic}[1]
\caption{\label{Halg:RIHE} Distributed Newton-Like Optimization with GM estimators}
\STATE Collect all $M$-estimators $\lrm{\wh\btheta_i}_{i = 1}^m$ and compute their average $\ol\btheta = m^{-1}\sum_i\wh\btheta_i$; 
\STATE Input the initial estimator $\wt\btheta_0'$ and the total number of iterations $T$; 
\FOR{$t = 0, \dots, T-1$}
\STATE Broadcast the current estimator $\wt\btheta_t'$ to all $m$ parallel centers $\lrm{\mathcal{M}_i}_{i=1}^m$;
    \FOR{$i = 1, \dots, m$}
    \STATE At center $\mathcal{M}_i$, compute the gradient $\bl_i(\wt\btheta_t') = n^{-1}\sum_{j \in \mathcal{M}_i} \nabla L(\wt\btheta_t';\bX_{j})$; 
    \STATE Return $\bl_i(\wt\btheta_t')$ to the local center $\mathcal{L}$;
    \ENDFOR
    \STATE At $\mathcal{L}$, compute the global gradient $\ol\bl(\wt\btheta_t') = m^{-1}\sum_i\bl_i(\wt\btheta_t')$; 
    \STATE Construct the GM Fisher information estimator:  
    $$
    \wh\bOmega(\wt\btheta_t') = -\lrO{\sum_i\lrM{\bl_i(\wt\btheta_t') - \ol\bl(\wt\btheta_t')}\lrM{\bl_i(\wt\btheta_t') - \ol\bl(\wt\btheta_t')}^\T}^{-1} \lrO{\sum_i\lrM{\bl_i(\wt\btheta_t') - \ol\bl(\wt\btheta_t')}(\wh\btheta_i - \ol\btheta)^\T};
    $$
    \STATE Update $\wt\btheta_{t + 1}' = \wt\btheta_t' - \wh\bOmega(\wt\btheta_t')\ol\bl(\wt\btheta_t')$;
\ENDFOR
\RETURN $\wt\btheta_{T}'$.
\end{algorithmic}
\end{algorithm}

Under conditions listed in Section \ref{Hsec:As}, these methods converge stochastically faster than the existing method does. 

Define $$\bQ_{12}\circ\bu = (\bI_d\otimes \bu^\T)\bQ_{12} \bu, \bQ\circ\bu = (\bI_d\otimes \bu^\T)\bQ \bu, \bu \in \mathbb{R}^d.$$ 
Let $\btheta^*$ be the ``oracle'' $M$-estimator: $$\btheta^* = \argmin_{\btheta \in \Theta} N^{-1}\sum_{j \in \mathcal{G}} L(\btheta;\bX_{j}),$$ and $\bDelta^* = \btheta^* - \btheta_0$. 
\begin{theorem} \label{Hthm:DistEstOra}  Let $\bDelta_t = \wt\btheta_t - \btheta_0$ and $\bDelta_t' = \wt\btheta_t' - \btheta_0$ for $t = 0, \dots, T$. When the initial estimators $\wt\btheta_0$ and $\wt\btheta_0' \in U(\rho)$, and $\lrn{\bDelta_0}_2 = \Op(n^{-1/2})$ and $\lrn{\bDelta_0'}_2 = \Op(n^{-1/2})$, consider the estimators $\wt\btheta_{t + 1}$ and $\wt\btheta_{t + 1}'$ defined in Algorithms \ref{Halg:RHE} and \ref{Halg:RIHE}:
\begin{align*}
    & \wt\btheta_{t + 1} - \btheta^* = \bQ_{11}^{-1}\lri{\bQ_{12}\circ\bDelta_t} -2^{-1}\bI_0^{-1} \lri{\bQ\circ\bDelta_t} + \bR^*_1 \\
    & \wt\btheta_{t + 1}' - \btheta^* = \bQ_{11}^{-1}\lri{\bQ_{12}\circ\bDelta_t'} -2^{-1}\bI_0^{-1} \lri{\bQ\circ\bDelta_t'} + \bR_2^*,
\end{align*}
where the residual terms
\begin{align*}
    & \lrn{\bR^*_1}_2 = \Op\lrm{(m^{-1} + n^{-1})(\lrn{\bDelta_t}_2 + \lrn{\bDelta^*}_2) + m^{-1}n^{-1}} \\ 
    & \lrn{\bR^*_2}_2 = \Op\lrm{(m^{-1} + n^{-1})(\lrn{\bDelta_t'}_2 + \lrn{\bDelta^*}_2) + m^{-1}n^{-1}}.
    \end{align*}
\end{theorem}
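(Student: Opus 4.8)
The plan is to analyze one iteration of each algorithm by substituting the expansions from Proposition \ref{Hpr:DGGD} into the update rule and carefully tracking the error terms. I will focus on Algorithm \ref{Halg:RHE}; Algorithm \ref{Halg:RIHE} follows by the symmetric argument using the $\wh\bOmega$ expansion. First I would expand the global gradient $\ol\bl(\wt\btheta_t)$ around $\btheta_0$. By the same mean-value/Taylor reasoning used to derive \eqref{Heq:MEstL} and the decomposition of Section 2.2, we have $\ol\bl(\wt\btheta_t) = m^{-1}\sum_i \bl_i(\wt\btheta_t)$ where each $\bl_i(\wt\btheta_t) = \bl_i(\btheta_0) + \bI_0\bDelta_t + \tfrac12(\bI_d\otimes\bDelta_t^\T)\bQ\bDelta_t + (\text{lower order})$, using that $n^{-1}\sum_{j\in\mathcal{M}_i}\int_0^1\nabla^2 L(\btheta_0 + t\bDelta_t;\bX_j)\diff t \approx \bI_0 + \tfrac12(\bI_d\otimes\bDelta_t^\T)\bQ$ from Section 2.2. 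Averaging over $i$ and using $m^{-1}\sum_i\bl_i(\btheta_0) = \ol\bl(\btheta_0)$, and recalling that the oracle estimator satisfies the analogous identity $\bzero = \ol\bl(\btheta^*) \approx \ol\bl(\btheta_0) + \bI_0\bDelta^* + \tfrac12(\bQ\circ\bDelta^*) + \cdots$ so that $\ol\bl(\btheta_0) = -\bI_0\bDelta^* - \tfrac12(\bQ\circ\bDelta^*) + \Op(\cdot)$, I can write $\ol\bl(\wt\btheta_t) = \bI_0(\bDelta_t - \bDelta^*) + \tfrac12(\bQ\circ\bDelta_t) - \tfrac12(\bQ\circ\bDelta^*) + \text{remainder}$, where the remainder is controlled using $\lrn{\bDelta_t}_2 = \Op(n^{-1/2})$ (which I would verify inductively is preserved by the iteration) and $\lrn{\bDelta^*}_2 = \Op(N^{-1/2})$.

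Next I would plug in the expansion for $\wh\bI_0(\wt\btheta_t)^{-1}$. From Proposition \ref{Hpr:DGGD}(1), $\wh\bI_0(\wt\btheta_t) = \bI_0 + \bI_0\bQ_{11}^{-1}(\bI_d\otimes\bDelta_t^\T)\bQ_{12} + \bR_D$ with $\lrn{\bR_D}_2 = \Op(n^{-1} + m^{-1})$ when $\lrn{\bDelta_t}_2 = \Op(n^{-1/2})$. Inverting via the Neumann/resolvent expansion, $\wh\bI_0(\wt\btheta_t)^{-1} = \bI_0^{-1} - \bQ_{11}^{-1}(\bI_d\otimes\bDelta_t^\T)\bQ_{12}\bI_0^{-1} + \Op(n^{-1} + m^{-1} + \lrn{\bDelta_t}_2^2)$. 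Then
\begin{align*}
\wt\btheta_{t+1} - \btheta_0 &= \bDelta_t - \wh\bI_0(\wt\btheta_t)^{-1}\ol\bl(\wt\btheta_t) \\
&= \bDelta_t - \lrO{\bI_0^{-1} - \bQ_{11}^{-1}(\bI_d\otimes\bDelta_t^\T)\bQ_{12}\bI_0^{-1} + \cdots}\lrO{\bI_0(\bDelta_t - \bDelta^*) + \tfrac12(\bQ\circ\bDelta_t) - \tfrac12(\bQ\circ\bDelta^*) + \cdots}.
\end{align*}
Multiplying out, the leading $\bDelta_t$ cancels with $-\bI_0^{-1}\bI_0\bDelta_t$, leaving $\bDelta^*$ plus the quadratic terms. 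The term $\bQ_{11}^{-1}(\bI_d\otimes\bDelta_t^\T)\bQ_{12}\bI_0^{-1}\cdot\bI_0\bDelta_t = \bQ_{11}^{-1}(\bI_d\otimes\bDelta_t^\T)\bQ_{12}\bDelta_t = \bQ_{11}^{-1}(\bQ_{12}\circ\bDelta_t)$ survives; the term $-\bI_0^{-1}\cdot\tfrac12(\bQ\circ\bDelta_t) = -\tfrac12\bI_0^{-1}(\bQ\circ\bDelta_t)$ survives; and the cross term $\bQ_{11}^{-1}(\bI_d\otimes\bDelta_t^\T)\bQ_{12}\bI_0^{-1}\cdot(-\bI_0\bDelta^*)$ together with $\bQ\circ\bDelta^*$ and all products of two "small" factors are $\Op((m^{-1}+n^{-1})(\lrn{\bDelta_t}_2 + \lrn{\bDelta^*}_2))$ or smaller, since each carries either a factor $\bDelta_t$ or $\bDelta^*$ against a factor that is $\Op(n^{-1}+m^{-1})$ or $\Op(\lrn{\bDelta_t}_2)$. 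Subtracting $\btheta^* - \btheta_0 = \bDelta^*$ from both sides yields exactly the claimed form with the stated $\bR_1^*$.

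The main obstacle, and where the bookkeeping is most delicate, is making the residual bounds tight enough to land at $\Op\{(m^{-1}+n^{-1})(\lrn{\bDelta_t}_2 + \lrn{\bDelta^*}_2) + m^{-1}n^{-1}\}$ rather than something cruder like $\Op(n^{-1}+m^{-1})$: one must observe that every leftover term is a product of two genuinely small quantities, and in particular that the $m^{-1}n^{-1}$ floor arises only from cross-terms between the $\Op(m^{-1})$ and $\Op(n^{-1})$ pieces of the various expansions (e.g. the Proposition \ref{Hpr:DGGD} residual $\bR_D$ interacts with $\ol\bl(\btheta_0)$, whose fluctuation around its mean is $\Op((mn)^{-1/2})$, contributing $\Op((m^{-1}+n^{-1})\cdot\lrn{\bDelta^*}_2)$ after the cancellation). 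I would handle this by writing $\ol\bl(\btheta_0)$ explicitly as $-\bI_0\bDelta^* + \Op(\lrn{\bDelta^*}_2^2 + (mn)^{-1/2}\lrn{\bDelta^*}_2)$-type terms so that its interaction with the $\Op(n^{-1}+m^{-1})$ inverse-error factor is visibly of the target order. A secondary point is confirming the inductive hypothesis $\lrn{\bDelta_{t+1}}_2 = \Op(n^{-1/2})$: since $\lrn{\bDelta_{t+1}}_2 \le \lrn{\bDelta^*}_2 + \Op(\lrn{\bDelta_t}_2^2) + \Op(n^{-1}+m^{-1})\lrn{\bDelta_t}_2 + \cdots$, starting from $\lrn{\bDelta_0}_2 = \Op(n^{-1/2})$ the iteration stays $\Op(n^{-1/2})$ (indeed it contracts toward $\lrn{\bDelta^*}_2$), which both justifies applying Proposition \ref{Hpr:DGGD} at every step and closes the recursion; the parallel argument for $\wt\btheta_t'$ requires the slightly stronger interim bound $\lrn{\bDelta_t'}_2 = \op(n^{-1/4})$ demanded by Proposition \ref{Hpr:DGGD}(2), which again follows inductively from $\lrn{\bDelta_0'}_2 = \Op(n^{-1/2})$.
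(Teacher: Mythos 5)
Your proposal is correct and follows essentially the same route as the paper: it plugs Proposition \ref{Hpr:DGGD} into the update for the $\bQ_{11}^{-1}(\bQ_{12}\circ\bDelta_t)$ term, uses the second-order expansion of the averaged (integrated) Hessian (the paper's Lemma \ref{Hlm:IHD}) for the $-2^{-1}\bI_0^{-1}(\bQ\circ\bDelta_t)$ term, and controls the remaining pieces via the oracle relation together with $\lrn{\bDelta^*}_2 = \Op\lrm{(nm)^{-1/2}}$, exactly as the paper does. The only cosmetic difference is bookkeeping order: the paper keeps $\wh\bI_0^{-1}$ as an exact prefactor in the three-term identity $\wh\bI_0^{-1}(\wh\bI_0-\bI_0)\bDelta_t + \wh\bI_0^{-1}(\bI_0-\ol\bH^{(g)})\bDelta_t + \wh\bI_0^{-1}(\wh\bI_0-\ol\bH^*)(\ol\bH^*)^{-1}\bl^{(g)}(\btheta_0)$ and expands each term, whereas you Neumann-expand $\wh\bI_0^{-1}$ and the oracle first-order condition jointly and multiply out, which yields the same cancellations and the same residual order.
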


Theorem \ref{Hthm:DistEstOra} indicates that when $\lrn{\bDelta_t}_2 = \Op(n^{-1/2})$ and $\lrn{\bDelta_t'}_2 = \Op(n^{-1/2})$, in probability $\wt\btheta_{t + 1}$ and $\wt\btheta_{t + 1}'$ converge stochastically faster than the CSL method does in each iteration. That is, under the conditions in Theorem \ref{Hthm:DistEstOra}, after each iteration, $$\lrn{\bDelta_{t+1}}_2 = O(\lrn{\bDelta_t}_2^2) + n^{-1/2}\Op(n^{-1} + m^{-1}).$$
The error of the updated estimator is bounded by the square of its predecessor's estimation error, which is of the same order of magnitude as Newton's method with the global Hessian. The same conclusion can be drawn for the GM estimator $\btheta_{t+1}'$ as well.

Also, our algorithms have no concern about accidentally choosing a ``bad'' local center. The whole iteration process depends on the initial estimator and the global data quality, instead of the local one. Therefore, when $n$ and $m$ are both large enough, they tend to be more stable. 

On the other hand, when $\bDelta_t$ is the dominating error contributor rather than $n$ or $m$, the explicit forms of those $\bDelta_t$-constant terms --- $\bQ_{11}^{-1}\lri{\bQ_{12}\circ\bDelta_t}$ and $2^{-1}\bI_0^{-1} \lri{\bQ\circ\bDelta_t}$ --- give us a chance to make an adjustment for them with only the local data to achieve better accuracy. Following this idea, we propose two one-shot algorithms in the next section, using the local $M$-estimator as the initial estimator. 

For the choices of the initial estimator, a popular one is the $M$-estimator of the local center $\mathcal{L}$, whose $l_2$ error is $\Op(n^{-1/2})$. Another one is the average of $M$-estimators, $\ol\btheta$, whose error is $\lrn{\ol\btheta - \btheta_0}_2 = \Op(n^{-1} + n^{-1/2}m^{-1/2})$ (see \cite{Zhang2013}). 


\subsection{Bias-Adjusted One-Step Estimators}

Often, in practice, communication among data centers is limited and one-step estimation is much more preferred. In this case, a common choice of the initial estimator is the $n^{1/2}$-consistent local $M$-estimator obtained within $\mathcal{L}$. When $m = o(n^2)$, Theorem \ref{Hthm:DistEstOra} indicates the $\bDelta_0$-constant terms $\bI_0^{-1}\lri{\bQ\circ\bDelta_0}$ and $\bQ_{11}^{-1}\lri{\bQ_{12}\circ\bDelta_0}$ are the dominating contributors of $\bDelta_1$. To adjust for these two terms, we propose to estimate $\bQ_{11}$, $\bQ$, and $\bQ_{12}$ locally in $\mathcal{L}$ with any $n^{1/2}$-consistent estimators; while estimation of $\bDelta_0$ requires some more accurate estimator $\btheta^A$, and this can be done after one iteration of collecting gradients and $M$-estimators from parallel centers. 

\begin{prop} \label{Hthm:EEst} 
For some estimator $\btheta^A \in U(\rho)$, define $\bDelta^A = \btheta^A - \btheta_0$, and let $\wh\bDelta_0 = \wt\btheta_0 - \btheta^A$. Construct the local estimators of $\bI_0$, $\bQ_{11}$, and $\bQ_{12}$ with $\btheta^A$:
\begin{align*}
    & \wh\bH^{A} =  n^{-1}\sum_{j \in \mathcal{L}} \nabla^2 L(\btheta^A;\bX_j) \\
    & \wh\bQ =  n^{-1}\sum_{j \in \mathcal{L}} \nabla^3 L(\btheta^A;\bX_{j}) \\
    & \wh\bQ_{11} =  n^{-1}\sum_{j \in \mathcal{L}} \lrM{\nabla L(\btheta^A; \bX_j) - n^{-1}\sum_{j \in \mathcal{L}} \nabla L(\btheta^A; \bX_j)} \lrM{\nabla L(\btheta^A; \bX_j) - n^{-1}\sum_{j \in \mathcal{L}} \nabla L(\btheta^A; \bX_j)}^\T \\
    & \wh\bQ_{12} = n^{-1}\sum_{j \in \mathcal{L}} \lrM{\nabla L(\btheta^A; \bX_j) - n^{-1}\sum_{j \in \mathcal{L}} \nabla L(\btheta^A; \bX_j)} \otimes \lrM{\nabla^2 L(\btheta^A;\bX_j) - n^{-1}\sum_{j \in \mathcal{L}} \nabla^2 L(\btheta^A;\bX_j)}. 
\end{align*}
Under the conditions in Theorem \ref{Hthm:DistEstOra}, when $\lrn{\bDelta^A}_2 = \Op(n^{-1/2})$,
\begin{align*}
    & \lrN{\lri{\wh\bQ\circ\wh\bDelta_0} - \lri{\bQ\circ\bDelta_0}}_2 = \Op(\lrn{\bDelta^A}_2^2) + \Op(\lrn{\bDelta^A}_2\lrn{\bDelta_0}_2) + \Op(n^{-1/2} \lrn{\bDelta_0}_2^2) \\
    & \lrN{\lri{\wh\bQ_{12}\circ\wh\bDelta_0} - \lri{\bQ_{12}\circ\bDelta_0}}_2 = \Op(\lrn{\bDelta^A}_2^2) + \Op(\lrn{\bDelta^A}_2\lrn{\bDelta_0}_2) + \Op(n^{-1/2} \lrn{\bDelta_0}_2^2) \\
    & \lrN{(\wh\bH^{A})^{-1}\lri{\wh\bQ\circ\wh\bDelta_0} - \bI_0^{-1}\lri{\bQ\circ\bDelta_0}}_2 = \Op(\lrn{\bDelta^A}_2^2) + \Op(\lrn{\bDelta^A}_2\lrn{\bDelta_0}_2) + \Op(n^{-1/2} \lrn{\bDelta_0}_2^2) \\
    & \lrN{\wh\bQ_{11}^{-1}\lri{\wh\bQ_{12}\circ\wh\bDelta_0} - \bQ_{11}^{-1}\lri{\bQ_{12}\circ\bDelta_0}}_2 = \Op(\lrn{\bDelta^A}_2^2) + \Op(\lrn{\bDelta^A}_2\lrn{\bDelta_0}_2) + \Op(n^{-1/2} \lrn{\bDelta_0}_2^2).
\end{align*}
\end{prop}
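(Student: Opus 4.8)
The plan is to treat all four displays as instances of a single plug-in estimate. Write $\wh\bDelta_0 = \wt\btheta_0 - \btheta^A$ and note the key identity $\wh\bDelta_0 - \bDelta_0 = -(\btheta^A - \btheta_0) = -\bDelta^A$, so the error from using $\wh\bDelta_0$ in place of $\bDelta_0$ is controlled entirely by $\lrn{\bDelta^A}_2 = \Op(n^{-1/2})$, while $\lrn{\wh\bDelta_0}_2 \le \lrn{\bDelta_0}_2 + \lrn{\bDelta^A}_2$. Each left-hand side then splits into (i) the effect of substituting $\wh\bDelta_0$ for $\bDelta_0$ inside a fixed (bi)linear map, and (ii) the effect of substituting the local sample matrices $\wh\bH^A, \wh\bQ, \wh\bQ_{11}, \wh\bQ_{12}$ (all evaluated at $\btheta^A$) for the population objects $\bI_0, \bQ, \bQ_{11}, \bQ_{12}$.

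The first task is the matrix rates $\lrn{\wh\bQ - \bQ}_2$, $\lrn{\wh\bH^{A} - \bI_0}_2$, $\lrn{\wh\bQ_{11} - \bQ_{11}}_2$, $\lrn{\wh\bQ_{12} - \bQ_{12}}_2 = \Op(n^{-1/2} + \lrn{\bDelta^A}_2) = \Op(n^{-1/2})$. Each matrix is a local average over the $n$ points in $\mathcal{L}$ evaluated at the data-dependent $\btheta^A$, and I split it into (a) the same average evaluated at $\btheta_0$ minus its population value, an $\Op(n^{-1/2})$ term by the CLT/LLN (using $\E_\mathcal{P}[\nabla^3 L(\btheta_0;\bX)] = \bQ$, $\E_\mathcal{P}[\nabla^2 L(\btheta_0;\bX)] = \bI_0$, $\E_\mathcal{P}[\nabla L(\btheta_0;\bX)] = \bzero$, $\bQ_{11} = \E_\mathcal{P}[\nabla L\,\nabla L^\T]$ at $\btheta_0$, and the finite-second-moment parts of Assumptions \ref{as:T0Def}--\ref{Has:L3Cont}), plus (b) a drift term from moving the evaluation point from $\btheta_0$ to $\btheta^A$. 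The Lipschitz condition on $\nabla^3 L$ in Assumption \ref{Has:L3Cont} bounds $\nabla^3 L(\cdot;\bX)$ by an integrable envelope on $U(\rho)$, hence makes $\nabla^2 L(\cdot;\bX)$ and $\nabla L(\cdot;\bX)$ Lipschitz on $U(\rho)$ with integrable constants and the second-moment maps $\btheta\mapsto\cov_\mathcal{P}(\nabla L(\btheta;\bX))$, $\btheta\mapsto\E_\mathcal{P}[\nabla L(\btheta;\bX)\otimes(\nabla^2 L(\btheta;\bX) - \bI_0)]$ Lipschitz; so (b) is $O(\lrn{\bDelta^A}_2)$ pathwise and the randomness of $\btheta^A$ needs no empirical-process control. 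For $\wh\bQ_{11}, \wh\bQ_{12}$ the sample-mean recentering contributes only $\Op(\lrn{\bDelta^A}_2^2 + n^{-1})$, since $\E_\mathcal{P}[\nabla L(\btheta^A;\bX)] = \E_\mathcal{P}[\nabla L(\btheta^A;\bX) - \nabla L(\btheta_0;\bX)] = O(\lrn{\bDelta^A}_2)$.

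For the first two displays, abbreviate $\bB\circ\bu = (\bI_d\otimes\bu^\T)\bB\,\bu$ and write $\wh\bQ\circ\wh\bDelta_0 - \bQ\circ\bDelta_0 = (\wh\bQ - \bQ)\circ\wh\bDelta_0 + (\bQ\circ\wh\bDelta_0 - \bQ\circ\bDelta_0)$. The first piece is $(\bI_d\otimes\wh\bDelta_0^\T)(\wh\bQ - \bQ)\wh\bDelta_0$, bounded by a constant times $\lrn{\wh\bQ - \bQ}_2\lrn{\wh\bDelta_0}_2^2 = \Op(n^{-1/2})(\lrn{\bDelta_0}_2 + \lrn{\bDelta^A}_2)^2$; since $\lrn{\bDelta^A}_2 = \Op(n^{-1/2})$ this is $\Op(n^{-1/2}\lrn{\bDelta_0}_2^2) + \op(\lrn{\bDelta^A}_2\lrn{\bDelta_0}_2) + \op(\lrn{\bDelta^A}_2^2)$. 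The second piece, after $\wh\bDelta_0 = \bDelta_0 - \bDelta^A$ and expansion of the bilinear form, is a sum of terms each carrying at least one factor $\bDelta^A$, hence $\Op(\lrn{\bDelta^A}_2\lrn{\bDelta_0}_2 + \lrn{\bDelta^A}_2^2)$ (block symmetry of $\bQ$ and $\bQ_{12}$, inherited from the Hessian, even pins the leading term at $-2(\bI_d\otimes(\bDelta^A)^\T)\bQ\bDelta_0$, but only the order matters). Adding gives the $\wh\bQ$ bound, and $\wh\bQ_{12}$ is identical. For the displays with inverses, positive-definiteness of $\bI_0$ and $\bQ_{11}$ together with the matrix rates gives invertibility of $\wh\bH^A,\wh\bQ_{11}$ and $(\wh\bH^A)^{-1} - \bI_0^{-1} = -(\wh\bH^A)^{-1}(\wh\bH^A - \bI_0)\bI_0^{-1} = \Op(n^{-1/2})$; then $(\wh\bH^A)^{-1}(\wh\bQ\circ\wh\bDelta_0) - \bI_0^{-1}(\bQ\circ\bDelta_0) = (\wh\bH^A)^{-1}(\wh\bQ\circ\wh\bDelta_0 - \bQ\circ\bDelta_0) + ((\wh\bH^A)^{-1} - \bI_0^{-1})(\bQ\circ\bDelta_0)$, the first term being $\Op(1)$ times the bound just obtained and the second $\Op(n^{-1/2})\cdot O(\lrn{\bDelta_0}_2^2)$, absorbed in $\Op(n^{-1/2}\lrn{\bDelta_0}_2^2)$; the $\wh\bQ_{11}^{-1}\wh\bQ_{12}$ display is the same with $\bQ_{11},\bQ_{12}$ in place of $\bI_0,\bQ$.

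The main obstacle is the matrix-rate step: cleanly separating the genuine $\Op(n^{-1/2})$ sampling fluctuation at $\btheta_0$ from the Lipschitz drift incurred by evaluating at the random $\btheta^A$, and verifying that Assumptions \ref{as:T0Def}--\ref{Has:L3Cont} actually furnish both the finite second moments of $\nabla L$, $\nabla L\,\nabla L^\T$, $\nabla^2 L$, $\nabla L\otimes\nabla^2 L$ and $\nabla^3 L$ at $\btheta_0$ and the integrable pathwise Lipschitz constants that yield Lipschitz continuity of the relevant second-moment functionals on $U(\rho)$. With those in hand the remaining algebra is routine order bookkeeping in $\lrn{\bDelta^A}_2$ and $\lrn{\bDelta_0}_2$.
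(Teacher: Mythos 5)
Your proposal is correct and follows essentially the same route as the paper: the same splitting $\wh\bQ\circ\wh\bDelta_0-\bQ\circ\bDelta_0=(\wh\bQ-\bQ)\circ\wh\bDelta_0+\bQ\circ(\wh\bDelta_0^{\otimes2}-\bDelta_0^{\otimes2})$ with $\wh\bDelta_0-\bDelta_0=-\bDelta^A$, the same $\Op(n^{-1/2}+\lrn{\bDelta^A}_2)$ rates for $\wh\bH^A,\wh\bQ,\wh\bQ_{11},\wh\bQ_{12}$ (which the paper obtains through Lemmas \ref{Hlm:L3bd}, \ref{Hlm:L3Dec}, \ref{Hlm:EZK}, and \ref{Hlm:lcH} and the expansion $\bl_i(\btheta^A)=\bl_i(\btheta_0)+\ol\bH_i\bDelta^A$ rather than your more abstract Lipschitz-plus-CLT phrasing), and the same resolvent-type treatment of the inverses. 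The "main obstacle" you flag is exactly the content of those lemmas, and Assumptions \ref{Has:LH}--\ref{Has:L3Cont} do supply the required moments and integrable Lipschitz envelopes, so the sketch fills in as claimed.
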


Among other choices of $\btheta^A$, we suggest to use the average of $M$-estimators $\ol\btheta$ with the error $\lrn{\ol\btheta - \btheta_0}_2 = \Op(n^{-1/2}m^{-1/2} + n^{-1})$. It is more stable and does no depends on the local center. Otherwise, if $m$ is small or the local data quality is satisfactory, the one-step updated estimator $\wt\btheta_1$ or $\wt\btheta_1'$ can be taken as well. 

\begin{theorem} \label{Hthm:OSEst} When the conditions in Theorem \ref{Hthm:DistEstOra} hold, take the local $M$-estimator as $\wt\btheta_0$, and set $\btheta^A = \ol\btheta$. 
\begin{enumerate}[label={(\arabic*)}]
\item Consider the one-step bias-adjusted distributed estimator based on $\wh\bI_0(\wt\btheta_0)$:
\begin{equation*}
\wt\btheta_{os} = \wt\btheta_0 - \wh\bI_0(\wt\btheta_0)^{-1} \ol \bl(\wt\btheta_0) - \wh\bQ_{11}^{-1}\lri{\wh\bQ_{12}\circ\wh\bDelta_0} + \frac{1}{2}(\wh\bH^{A})^{-1} \lri{\wh\bQ\circ\wh\bDelta_0}.
\end{equation*}
Then, $\lrn{\wt\btheta_{os} - \btheta^*}_2 = \Op(n^{-3/2} + m^{-1}n^{-1/2})$.
\item Consider the one-step bias-adjusted distributed estimator based on $\wh\bOmega(\wt\btheta_0)$:
\begin{equation*}
\wt\btheta_{os}' = \wt\btheta_0 - \wh\bOmega(\wt\btheta_0) \ol \bl(\wt\btheta_0) - \wh\bQ_{11}^{-1}\lri{\wh\bQ_{12}\circ\wh\bDelta_0} + \frac{1}{2}(\wh\bH^{A})^{-1} \lri{\wh\bQ\circ\wh\bDelta_0}.
\end{equation*}
Then, $\lrn{\wt\btheta_{os}' - \btheta^*}_2 = \Op(n^{-3/2} + m^{-1}n^{-1/2})$.
\end{enumerate}
\end{theorem}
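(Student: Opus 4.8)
The plan is to expand the one-step update $\wt\btheta_0 - \wh\bI_0(\wt\btheta_0)^{-1}\ol\bl(\wt\btheta_0)$ (and its GM analogue) using the decompositions already established, then show the bias-correction terms cancel the $\bDelta_0$-constant contributions up to the claimed order. First I would invoke Theorem~\ref{Hthm:DistEstOra} with $t=0$: since $\wt\btheta_0$ is the local $M$-estimator we have $\lrn{\bDelta_0}_2 = \Op(n^{-1/2})$, $\wt\btheta_0 \in U(\rho)$ with probability tending to one, so
\begin{equation*}
\wt\btheta_0 - \wh\bI_0(\wt\btheta_0)^{-1}\ol\bl(\wt\btheta_0) - \btheta^* = \bQ_{11}^{-1}\lri{\bQ_{12}\circ\bDelta_0} - \tfrac12 \bI_0^{-1}\lri{\bQ\circ\bDelta_0} + \bR_1^*,
\end{equation*}
where, plugging $\lrn{\bDelta_0}_2 = \Op(n^{-1/2})$ and $\lrn{\bDelta^*}_2 = \Op((mn)^{-1/2})$ into the residual bound of Theorem~\ref{Hthm:DistEstOra}, $\lrn{\bR_1^*}_2 = \Op\lrm{(m^{-1}+n^{-1})n^{-1/2} + m^{-1}n^{-1}} = \Op(n^{-3/2} + m^{-1}n^{-1/2})$. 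So it suffices to show the two explicit bias terms are matched by $\wh\bQ_{11}^{-1}(\wh\bQ_{12}\circ\wh\bDelta_0)$ and $\tfrac12(\wh\bH^A)^{-1}(\wh\bQ\circ\wh\bDelta_0)$ up to $\Op(n^{-3/2}+m^{-1}n^{-1/2})$.

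Second, I would set $\btheta^A = \ol\btheta$, so that $\lrn{\bDelta^A}_2 = \Op(n^{-1/2}m^{-1/2} + n^{-1})$ by \cite{Zhang2013}, and apply Proposition~\ref{Hthm:EEst}. The four error bounds there are each of the form $\Op(\lrn{\bDelta^A}_2^2) + \Op(\lrn{\bDelta^A}_2\lrn{\bDelta_0}_2) + \Op(n^{-1/2}\lrn{\bDelta_0}_2^2)$. Substituting $\lrn{\bDelta^A}_2 = \Op(n^{-1/2}m^{-1/2}+n^{-1})$ and $\lrn{\bDelta_0}_2 = \Op(n^{-1/2})$: the first term is $\Op(n^{-1}m^{-1}+n^{-2})$, the middle term is $\Op(n^{-1}m^{-1/2}+n^{-3/2})$, and the last is $\Op(n^{-3/2})$; all three are $\Op(n^{-3/2}+m^{-1}n^{-1/2})$ (indeed of smaller or equal order). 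Hence
\begin{equation*}
\wh\bQ_{11}^{-1}\lri{\wh\bQ_{12}\circ\wh\bDelta_0} = \bQ_{11}^{-1}\lri{\bQ_{12}\circ\bDelta_0} + \Op(n^{-3/2}+m^{-1}n^{-1/2}),
\end{equation*}
and likewise $(\wh\bH^A)^{-1}(\wh\bQ\circ\wh\bDelta_0) = \bI_0^{-1}(\bQ\circ\bDelta_0) + \Op(n^{-3/2}+m^{-1}n^{-1/2})$. Subtracting these from the expansion above, the $\bDelta_0$-constant terms cancel exactly and only residuals of order $\Op(n^{-3/2}+m^{-1}n^{-1/2})$ survive, giving $\lrn{\wt\btheta_{os} - \btheta^*}_2 = \Op(n^{-3/2}+m^{-1}n^{-1/2})$. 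The argument for $\wt\btheta_{os}'$ is identical, replacing $\wh\bI_0(\wt\btheta_0)^{-1}\ol\bl(\wt\btheta_0)$ by $\wh\bOmega(\wt\btheta_0)\ol\bl(\wt\btheta_0)$ and using the $\bR_2^*$ bound from Theorem~\ref{Hthm:DistEstOra}, which has the same form.

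The main obstacle is bookkeeping the interaction between the two sources of error: the residual $\bR_1^*$ from the iterative step uses $\lrn{\bDelta_0}_2$ in a way that must be consistent with the fact that we are now also estimating $\bDelta_0$ by $\wh\bDelta_0 = \wt\btheta_0 - \ol\btheta$ rather than using it directly. One must check that replacing the true $\bDelta_0$-constant terms by their $\btheta^A$-based estimates does not reintroduce an $\Op(n^{-1}\lrn{\bDelta_0}_2) = \Op(n^{-3/2})$-sized or larger term through the $\circ$ operation --- i.e., that the bilinearity of $\bQ_{12}\circ\bu$ in its (implicit) two $\bu$ slots is correctly accounted for, and that $\wh\bDelta_0 - \bDelta_0 = \btheta_0 - \ol\btheta = -\bDelta^A$ is genuinely of order $\Op(n^{-1/2}m^{-1/2}+n^{-1})$ and not merely $\Op(n^{-1/2})$. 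Once Proposition~\ref{Hthm:EEst} is taken as given, this obstacle is already resolved by it, so the remaining work is purely the order arithmetic above; the only care needed is to confirm that $\lrn{\bDelta^*}_2 = \Op((mn)^{-1/2})$ and $\lrn{\ol\btheta - \btheta_0}_2 = \Op(n^{-1/2}m^{-1/2}+n^{-1})$ are admissible inputs, both of which follow from standard $M$-estimation asymptotics and \cite{Zhang2013}.
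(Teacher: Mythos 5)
Your proposal is correct and follows essentially the same route as the paper: invoke Theorem \ref{Hthm:DistEstOra} at $t=0$ with $\lrn{\bDelta_0}_2 = \Op(n^{-1/2})$ to isolate the two $\bDelta_0$-constant bias terms, then use Lemma \ref{Hlm:dbar} (so $\lrn{\bDelta^A}_2 = \Op((nm)^{-1/2}+n^{-1})$) together with Proposition \ref{Hthm:EEst} to show the plug-in corrections match those terms to order $\Op(n^{-3/2}+m^{-1}n^{-1/2})$, and finish by the triangle inequality. The order arithmetic, including absorbing $n^{-1}m^{-1/2}$ into $n^{-3/2}+m^{-1}n^{-1/2}$, is exactly what the paper does.
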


It is well known that the oracle estimator $\btheta^*$ converges in distribution to $\Gau(\bzero, \bI_0^{-1}\bQ_{11}\bI_0^{-1})$. Theorem \ref{Hthm:OSEst} indicates that both $\wt\btheta_{os}$ and $\wt\btheta_{os}'$ have the same limiting distributions when $m = o(n^2)$, which means they both achieve the optimal asymptotic efficiency as the oracle estimator $\btheta^*$ does. Thus, Gaussian approximation can be applied with any consistent estimators of $\bI_0$ and $\bQ_{11}$ (for example, see Proposition \ref{Hthm:EEst}) for statistical inference and construction of confidence intervals. 

\begin{corollary} \label{Hcrl:OSEstGau}Consider $\wt\btheta_{os}$ and $\wt\btheta_{os}'$ defined in Theorem \ref{Hthm:OSEst}. Under the conditions in Theorem \ref{Hthm:OSEst}, when $m = o(n^2)$,
\begin{align*}
    & (nm)^{1/2}\lrm{\wh\bOmega(\wt\btheta_0)\wh\bQ_{11}\wh\bOmega(\wt\btheta_0)^\T}^{-1/2}(\wt\btheta_{os} - \btheta_0) \cid \Gau(\bzero, \bI) \\
    & (nm)^{1/2}\lrm{\wh\bOmega(\wt\btheta_0)\wh\bQ_{11}\wh\bOmega(\wt\btheta_0)^\T}^{-1/2}(\wt\btheta_{os}' - \btheta_0) \cid \Gau(\bzero, \bI).
\end{align*}
\end{corollary}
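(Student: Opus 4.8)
The strategy is to deduce the statement from the classical central limit theorem for the oracle $M$-estimator $\btheta^*$ by successive applications of Slutsky's theorem. Under the regularity conditions of Section~\ref{Hsec:As}, the standard asymptotically-linear expansion of an $M$-estimator gives $(nm)^{1/2}(\btheta^* - \btheta_0) \cid \Gau(\bzero, \bI_0^{-1}\bQ_{11}\bI_0^{-1})$. By Theorem~\ref{Hthm:OSEst}, $\lrn{\wt\btheta_{os} - \btheta^*}_2 = \Op(n^{-3/2} + m^{-1}n^{-1/2})$, so $(nm)^{1/2}\lrn{\wt\btheta_{os} - \btheta^*}_2 = \Op\lrI{(m/n^2)^{1/2} + m^{-1/2}} = \op(1)$ when $m\to\infty$ and $m = o(n^2)$. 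Hence $(nm)^{1/2}(\wt\btheta_{os} - \btheta_0) = (nm)^{1/2}(\btheta^* - \btheta_0) + \op(1) \cid \Gau(\bzero, \bI_0^{-1}\bQ_{11}\bI_0^{-1})$, and the same holds for $\wt\btheta_{os}'$ using the corresponding bound in Theorem~\ref{Hthm:OSEst}.

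Next I would verify that the standardizing matrix is consistent for $\bI_0^{-1}\bQ_{11}\bI_0^{-1}$. Since $\wt\btheta_0$ is the local $M$-estimator, $\lrn{\bDelta_0}_2 = \Op(n^{-1/2}) = \op(n^{-1/4})$, so Proposition~\ref{Hpr:DGGD}(2) applies and yields $\wh\bOmega(\wt\btheta_0) - \bI_0^{-1} = \bQ_{11}^{-1}(\bI_d\otimes\bDelta_0^\T)\bQ_{12}\bI_0^{-1} + \bR_G$ with $\lrn{\bR_G}_2 = \Op(n^{-1} + m^{-1})$; the leading term is $\Op(n^{-1/2})$, so $\wh\bOmega(\wt\btheta_0) \xrightarrow{\P} \bI_0^{-1}$. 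For $\wh\bQ_{11}$, which is built from $\nabla L(\cdot;\bX_j)$ evaluated at $\btheta^A = \ol\btheta$ with $\lrn{\btheta^A - \btheta_0}_2 = \op(1)$, a uniform law of large numbers on a neighborhood of $\btheta_0$ combined with the continuity assumptions gives $n^{-1}\sum_{j\in\mathcal{L}}\nabla L(\btheta^A;\bX_j)\nabla L(\btheta^A;\bX_j)^\T \xrightarrow{\P} \bQ_{11}$ and $n^{-1}\sum_{j\in\mathcal{L}}\nabla L(\btheta^A;\bX_j) \xrightarrow{\P} \E_\mathcal{P}\lrm{\nabla L(\btheta_0;\bX)} = \bzero$, whence $\wh\bQ_{11} \xrightarrow{\P} \bQ_{11}$. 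Therefore $\wh\bOmega(\wt\btheta_0)\wh\bQ_{11}\wh\bOmega(\wt\btheta_0)^\T \xrightarrow{\P} \bI_0^{-1}\bQ_{11}\bI_0^{-1}$.

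The limit $\bI_0^{-1}\bQ_{11}\bI_0^{-1}$ is symmetric positive definite under the regularity conditions, and $\wh\bOmega(\wt\btheta_0)\wh\bQ_{11}\wh\bOmega(\wt\btheta_0)^\T$ is symmetric positive semidefinite ($\wh\bQ_{11}$ being so) and hence, by consistency, positive definite with probability tending to one. On the open cone of symmetric positive definite matrices the symmetric inverse square root is continuous, so the continuous mapping theorem gives $\lrm{\wh\bOmega(\wt\btheta_0)\wh\bQ_{11}\wh\bOmega(\wt\btheta_0)^\T}^{-1/2} \xrightarrow{\P} (\bI_0^{-1}\bQ_{11}\bI_0^{-1})^{-1/2}$. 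A final application of Slutsky's theorem then yields
\begin{equation*}
(nm)^{1/2}\lrm{\wh\bOmega(\wt\btheta_0)\wh\bQ_{11}\wh\bOmega(\wt\btheta_0)^\T}^{-1/2}(\wt\btheta_{os} - \btheta_0) \cid (\bI_0^{-1}\bQ_{11}\bI_0^{-1})^{-1/2}\,\Gau(\bzero, \bI_0^{-1}\bQ_{11}\bI_0^{-1}) = \Gau(\bzero, \bI),
\end{equation*}
and the identical chain of arguments gives the statement for $\wt\btheta_{os}'$.

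The argument is essentially bookkeeping once Theorem~\ref{Hthm:OSEst} and Proposition~\ref{Hpr:DGGD} are in hand; the only points needing care are (i) that the plug-in matrices $\wh\bOmega(\wt\btheta_0)$ and $\wh\bQ_{11}$ are formed at the \emph{estimated} centers $\wt\btheta_0$ and $\ol\btheta$ rather than at $\btheta_0$, so their consistency must be argued through the continuity conditions rather than a bare law of large numbers, and (ii) that $\bI_0^{-1}\bQ_{11}\bI_0^{-1}$ is positive definite, which is what makes its symmetric inverse square root well-defined and the continuous mapping theorem applicable.
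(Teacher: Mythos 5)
Your proposal is correct and follows essentially the same route the paper takes (the paper only sketches it after Theorem \ref{Hthm:OSEst}): the oracle CLT $(nm)^{1/2}(\btheta^* - \btheta_0) \cid \Gau(\bzero, \bI_0^{-1}\bQ_{11}\bI_0^{-1})$, the bound $\lrn{\wt\btheta_{os} - \btheta^*}_2 = \Op(n^{-3/2} + m^{-1}n^{-1/2})$ to kill the gap under $m = o(n^2)$, consistency of $\wh\bOmega(\wt\btheta_0)$ and $\wh\bQ_{11}$ (the latter already established at rate $\Op(n^{-1/2})$ in the proof of Proposition \ref{Hthm:EEst}, so you could cite that instead of a uniform LLN), and Slutsky with the continuous mapping theorem for the inverse square root. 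Your explicit remark that $m^{-1/2} = \op(1)$ requires $m \to \infty$ is a fair point of care that the paper leaves implicit in its distributed-setting asymptotics.
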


\subsection{Technical Assumptions \label{Hsec:As}}
In this section, we list the convexity and identifiability assumptions used in our technical analysis. 
\begin{assump} \label{as:T0Def}
The parameter space $\bTheta \subset \mathbb{R}^d$ is a compact convex set, with $\btheta \in \text{int}(\bTheta)$ and $l_2$-radius $r_0 = \max_{\btheta \in \bTheta}\lrn{\btheta - \btheta_0}_2$.
\end{assump}

\begin{assump} \label{as:T0Ind}
That $\btheta_0 \in \bTheta$ is the unique minimizer of the population risk $\E_{\mathcal{P}}\lrm{L(\btheta;\bX_i)}$. For any $\delta > 0$, there exists $\ve > 0$, such that
$$\liminf_{n \to \infty} \P\lrO{\inf_{\lrn{\btheta - \btheta_0}_2 \ge \delta} \lrM{n^{-1}\sum_{i=1}^n L(\btheta;\bX_i) - n^{-1}\sum_{i=1}^n L(\btheta_0;\bX_i)} \geq \ve} = 1.$$
\end{assump}

\begin{assump} \label{Has:LH} The population risk is twice-differentiable, and there exist finite constants $\lambda_-, \lambda_+, \lambda_1, \lambda_2 > 0$ such that $\lambda_-\bI  \preceq \bI_0  \preceq \lambda_+\bI,$
    $$\E_\mathcal{P}\lrm{\lrn{\nabla L(\btheta_0;\bX)}_2^6} \leq \lambda_1^6 \text{ and } \E_\mathcal{P}\lrm{\lrn{\nabla^2 L(\btheta_0;\bX)}_2^6} \leq \lambda_2^6.$$
Additionally, for all $\btheta', \btheta \in U$,
    $$\lrn{\nabla^2 L(\btheta';\bX) - \nabla^2 L(\btheta;\bX)}_2 \leq h(\bX)\lrn{\btheta' - \btheta}_2$$
We assume $\max\lri{\E_\mathcal{P}\lrm{h(\bX)^6}, \E_\mathcal{P}\lro{\lrA{h(\bX) - \E_\mathcal{P}\lrm{h(\bX)}}^6}} \leq \lambda_h^6$ for some finite positive constant $\lambda_h$.

\end{assump}
We further assume the loss function is three-times differentiable with enough continuity in $U(\rho)$. Recall the definition of the third derivative tensor operator of $L(\btheta;\bX)$, $\bT(\btheta, \bX) = \nabla^3 L(\btheta;\bX)$, for $\bu \in \mathbb{R}^d$:
    \begin{align*}
    & \bT(\btheta, \bX) = \nabla^3 L(\btheta;\bX) = \begin{bmatrix} \nabla^2 \lrm{\frac{\partial}{\partial \theta_{1}}L(\btheta;\bX)} \\ \vdots \\ \nabla^2 \lrm{\frac{\partial}{\partial \theta_{d}}L(\btheta;\bX)} \end{bmatrix}. 
    \end{align*}
Note that $\bQ$ defined in \eqref{Heq:QTDef} is the expectation of $\bT(\btheta_0, \bX)$.
The definition of $\bT(\btheta, \bX)$ indicates, for $\btheta \in U(\rho)$,
\begin{equation*}
    \nabla^2 L(\btheta; \bX) - \nabla^2 L(\btheta_0; \bX) 
= \int_0^1 \lrm{\bI_d \otimes (\btheta - \btheta_0)^\T} \bT\lrm{\btheta_0 + t(\btheta - \btheta_0), \bX} \diff t. 
\end{equation*}
When $\btheta$ is close to $\btheta_0$, and $\bT(\btheta, \bX)$ is smooth enough with respect to $\btheta$, $\nabla^2 L(\btheta; \bX) - \nabla^2 L(\btheta_0; \bX) \approx \lrm{\bI_d \otimes (\btheta - \btheta_0)^\T} \bT\lri{\btheta_0, \bX}$. The condition is specified as follows. 

\begin{assump} \label{Has:L3Cont} The risk function is three-times differentiable, and $\bT(\btheta, \bX)$ is $m(\bX)$-Lipschitz. That is, for all $\btheta', \btheta \in U(\rho)$ and $\bu \in \mathbb{R}^d$, 
    \begin{equation*}
    \lrn{\lri{\bI_d \otimes \bu^\T}\lrm{\bT(\btheta', \bX) - \bT(\btheta, \bX)}}_2 
    \leq  m(\bX)\lrn{\btheta - \btheta'}_2\lrn{\bu}_2, 
    \end{equation*}
with $\max\lri{\E_\mathcal{P}\lrm{m(\bX)^4}, \E_\mathcal{P}\lro{\lrm{m(\bX) - \E\lrm{m(\bX)}}^4}} \leq \lambda_m^4$ for some finite positive constant $\lambda_m$. Additionally, at $\btheta_0$, $\lrn{\bT(\btheta_0, \bX)}_2$ is bounded by $g(\bX)$ for some function $g(\bX)$, and $\E_\mathcal{P}\lrm{g(\bX)^4} \leq \lambda_3^4$.
\end{assump}

Assumption \ref{Has:L3Cont} also implicates Assumption \ref{Has:LH}. That is, under Assumption \ref{Has:L3Cont}, $\lrn{\bT(\btheta, \bX)}_2 \leq g(\bX) + \rho m(\bX).$
We can choose $h(\bX) = g(\bX) + \rho m(\bX)$, so that for any $\btheta', \btheta \in U(\rho)$
\begin{align*}
\lrn{\nabla^2 L(\btheta'; \bX) - \nabla^2 L(\btheta; \bX)}_2 
& \leq \int_0^1 \lrn{\bT\lrm{\btheta + t(\btheta' - \btheta), \bX}(\btheta' - \btheta)}_2 \diff t \\
& \leq h(\bX)\lrn{\btheta' - \btheta}_2.
\end{align*}

\section{Simulations}\label{Hsec:Simu}

In this section, we validate and visualize our methodology with extensive synthetic examples. Our presentation starts with evaluating the accuracy of $\wh\bI_0(\btheta)$ and $\wh\bOmega(\btheta)$. In the second part, the convergence rates of $\wt\btheta_t$ and $\wt\btheta_t'$ proposed in Algorithms \ref{Halg:RHE} and \ref{Halg:RIHE} are compared with existing methods, when multiple rounds of communications among centers are eligible. Then, in the cases of one-step estimation, we demonstrate the distributions and the empirical coverage probabilities of $\wt\btheta_{os}$ and $\wt\btheta_{os}'$ proposed in Theorem \ref{Hthm:OSEst} with different $n$ and $m$. 

Herein, suppose $\bX$ has two components: $\bX = \lri{Y, \bS^\T}^\T$, where $Y$ is the response variable, and $\bS$ is the predictor. Let $\btheta_0$ be the true coefficient describing the association between $Y$ and $\bS$. We show examples of Poisson regression and logistic regression with canonical links, with $\bS$ coming in different sizes and different distributions. It is our pragmatic experience that the distribution of the design matrices makes no substantial difference, as long as those moment assumptions hold.

\subsection{Estimation Accuracy for Fisher Information}
    
The estimation accuracy of $\wh\bI_0(\btheta)$ and $\wh\bOmega(\btheta)$ is quantified by the relative distance to $\bI_0$ and $\bI_0^{-1}$, that is, the Fisher information estimation error --- $\delta_1(\bA) = \lrn{\bI_0}_2^{-1}\lrn{\bI_0 - \bA}_2$, and the inverse Fisher information estimation error --- $\delta_2(\bA) = \lrn{\bI_0^{-1}}_2^{-1}\lrn{\bI_0^{-1} - \bA^{-1}}_2$ for some non-trivial matrix $\bA$. We consider an example of logistic regression:
	\begin{equation}
	    \P(Y_i = 1\mid\bS_i, \btheta_0) = \frac{\exp(\bS_i^\T\btheta_0)}{1 + \exp(\bS_i^\T\btheta_0)}, \label{Heq:Log}
	\end{equation}
	where $d = 4$, $\btheta_0 = d^{-1/2}\bone$, and $\bS_i \sim \Gau(\bzero, \bI)$.
	
	Proposition \ref{Hpr:DGGD} indicates $\bdelta_1$ and $\bdelta_2$ are affected by the error $\bDelta$, the center-wise sample size $n$, and the total number of centers $m$. To demonstrate a comprehensive picture, we sample $\btheta$ from $\Gau(\btheta_0, \sigma^2\bI)$ with $\sigma^2 = 16^{-1}, 256^{-1}$, and $65536^{-1}$. Also, $n$ and $m$ take 100, 200, 400, and 800 separately. For each combination of $\lrm{\sigma^2, n, m}$, 10000 simulations are repeated. 

    The results are visualized in Figure \ref{Hfg:Hhat}. We compare four estimators of $\bI_0$ and their inverses for $\bI_0^{-1}$: the local estimator $\bH^{(l)}(\btheta) = n^{-1}\sum_{j \in \mathcal{L}} \nabla^2 L(\btheta;\bX_j)$ (LC, black lines), the global estimator $\bH^{(g)}(\btheta) = N^{-1}\sum_{j \in \mathcal{G}} \nabla^2 L(\btheta;\bX_{j})$ (GL, green lines), the proposed MG Fisher information estimator $\wh\bI_0(\btheta)$ (MG, red dashed lines), and the GM Fisher information estimator $\wh\bOmega(\btheta)^{-1}$ (GM, red solid lines). 
    
    From Figure \ref{Hfg:Hhat}, we observe that with fixed $\sigma^2$ and $m$, the estimation errors of both proposed methods drop as $n$ increases; while $\bH^{(l)}(\btheta)$ and $\bH^{(g)}(\btheta)$ remain almost the same, when $\btheta$ is far from $\btheta_0$. It is not surprising that our proposed estimators are closer to the truth. Recall that in generalized linear models, $\bQ_{12} = \bzero$; in this scenario, Proposition \ref{Hpr:DGGD} indicates that the impacts of $\bDelta$ on $\wh\bI_0(\btheta)$ and $\wh\bOmega(\btheta)^{-1}$ diminish, as $n$ and $m$ increase. 
    
    \begin{figure*}[!ht]
    \centering 
    \includegraphics[width=.91\textwidth]{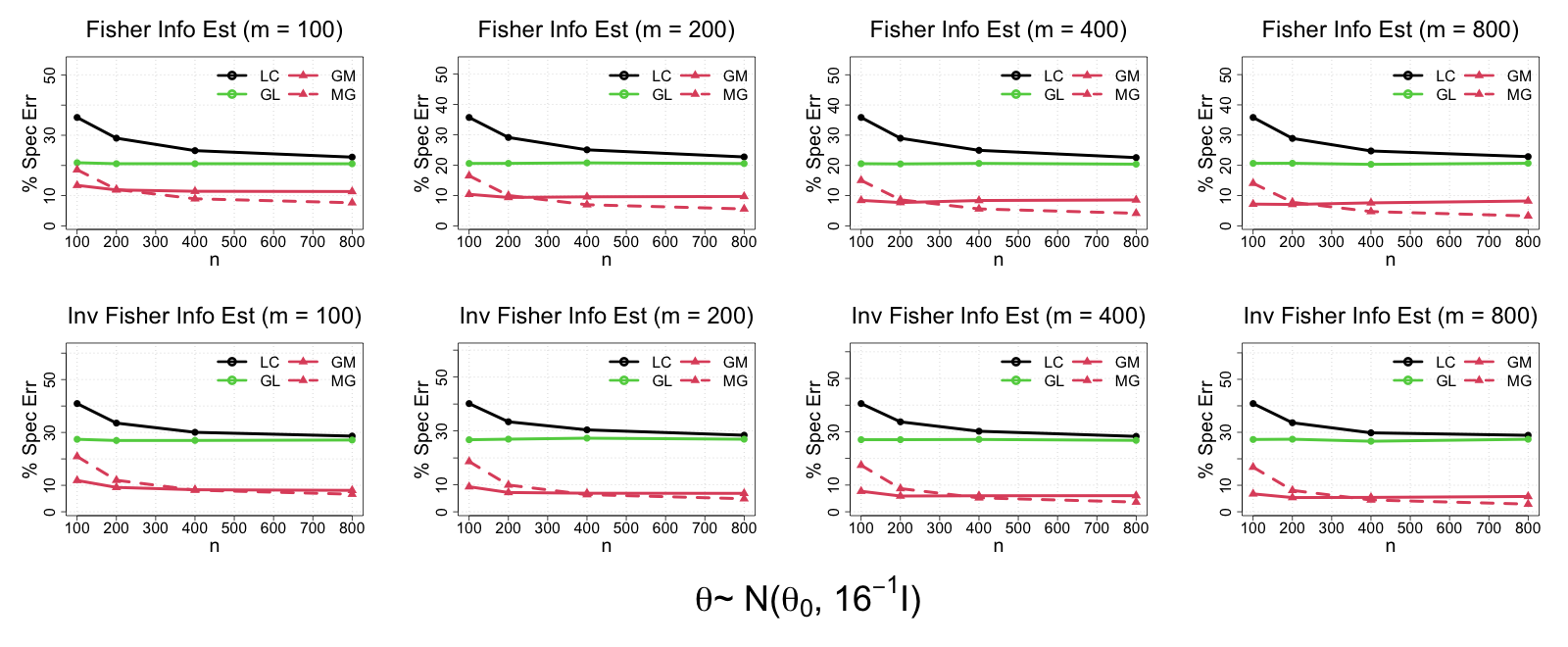} 
    \includegraphics[width=.91\textwidth]{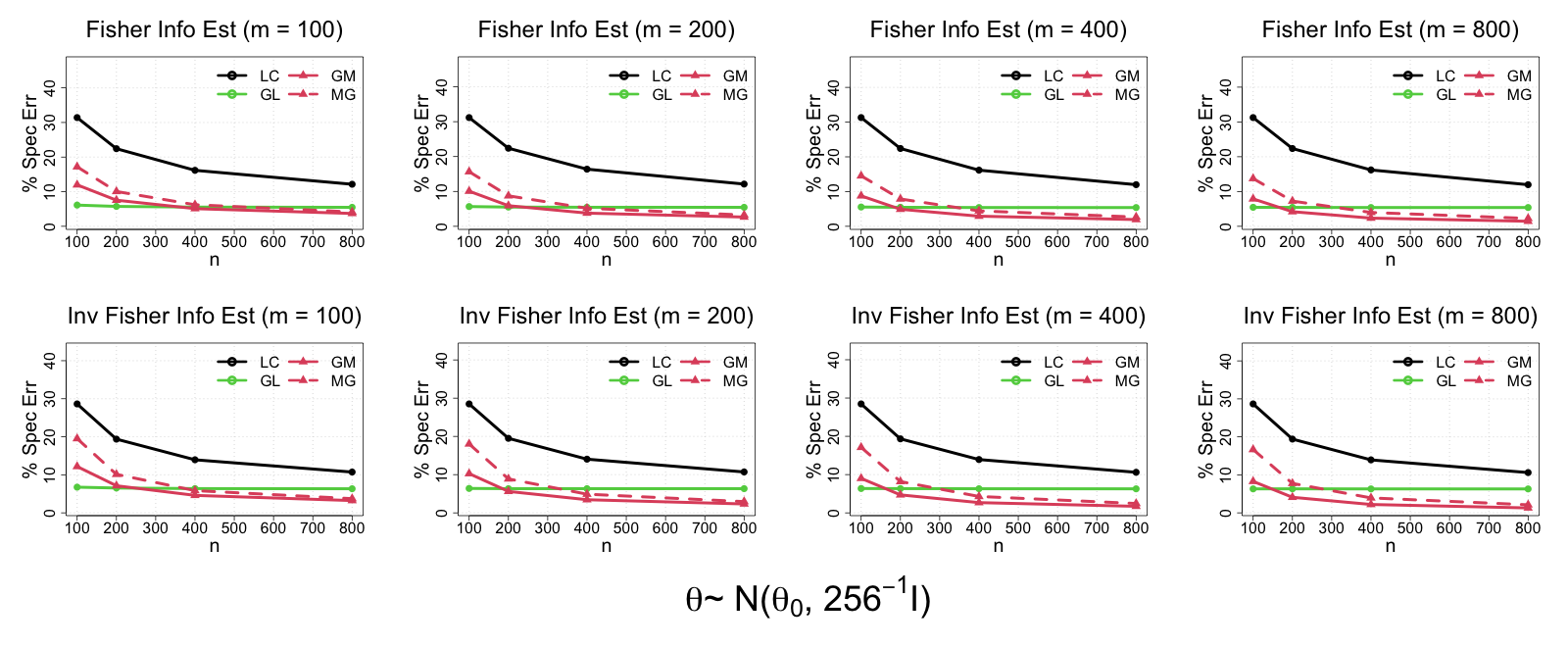} 
    \includegraphics[width=.91\textwidth]{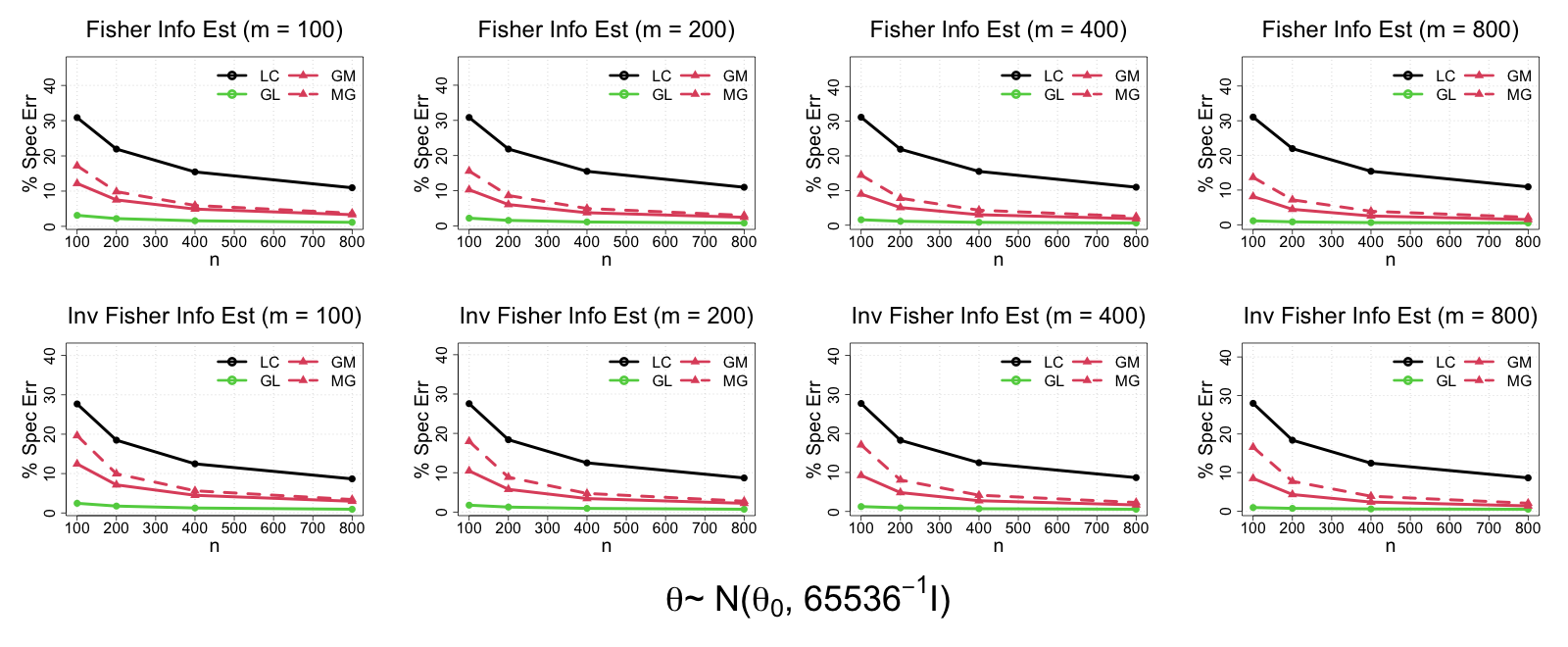} 
    \caption{ Estimation accuracy for $\bI_0$ and $\bI_0^{-1}$. In each setting of $\sigma^2$, the first row stands for the relative estimation errors regarding $\bI_0$, and the second row for $\bI_0^{-1}$. Four estimators are compared: black lines for the local Hessian $\bH^{(l)}(\btheta)$ (LC), green lines for the global Hessian $\bH^{(g)}(\btheta)$ (GL), the red solid lines for $\wh\bOmega(\btheta)^{-1}$ (GM), and the red dashed lines for $\wh\bI_0(\btheta)$ (MG). Their inverses are used to estimate $\bI_0^{-1}$ in the second row. \label{Hfg:Hhat}}
    \end{figure*}

    On the other hand, if $\lrn{\btheta - \btheta_0}_2$ is small, $\delta_1\lrm{\bH^{(l)}(\btheta)}$ and $\delta_2\lrm{\bH^{(l)}(\btheta)}$ also drop and converge to their global counterparts. Similar phenomena can be observed in $\wh\bI_0(\btheta)$ and $\wh\bOmega(\btheta)$ as well. But both red lines drop much faster as $n$ increases. This is consistent with our conclusions in Proposition \ref{Hpr:DGGD}.

\subsection{Performance of Iterative Distributed Estimation}

    Our second presentation focuses on the iterative estimators $\wt\btheta_t$ and $\wt\btheta_t'$ proposed in Algorithm \ref{Halg:RHE} and \ref{Halg:RIHE}. To verify our conclusion, we consider a Poisson model: 
    \begin{equation}
        \P(Y_i = y\mid\bS_i, \btheta_0) = \frac{1}{y!} \exp\lrm{y\bS_i^\T\btheta_0 - \exp(\bS_i^\T\btheta_0)}, \label{Heq:Poi}
    \end{equation}
    where $d = 4$, $\btheta_0 = d^{-1/2}\bone$, and $\bS_i \sim \Gau(\bzero, \bI)$. 
    
    The initial estimator $\wt\btheta_0$ is set to be the average $M$-estimator $\ol\btheta$. Three iterations are performed. We take $n$ and $m$ to be 100, 200, 400, and 800 respectively, and compare the relative $l_2$ distance to the oracle estimator $\btheta^*$, $\delta_o({\btheta}) = \lrn{\btheta^*}_2^{-1}\lrn{\btheta^* - \btheta}_2$,  at different rounds of iterations. For each combination of $n$ and $m$, 10000 simulations are performed. 
    
    Figure \ref{Hfg:AVGPoi} presents the comparisons. In each sub-figure, the back line (CSL) stands for the CSL method proposed in \cite{MJ}. The green line (GL) stands for the ideal Newton estimator when we have all the data. The red dashed line and the red solid line stand for $\wt\btheta_t'$ (MG) and $\wt\btheta_t$ (GM) proposed in Algorithm \ref{Halg:RHE} and \ref{Halg:RIHE}. 
    
    This figure tells that our methods converge to $\btheta^*$ much faster than CSL does. In each setting, regardless of the ratio between $n$ and $m$, our methods are able to keep the relative errors below 5\% after two rounds of communications, and $<1\%$ after three rounds. By contrast, the performance of CSL depends heavily on $n$; that is, the local data quality has great impact on CSL. This confirms our conclusion and discussion in Theorem \ref{Hthm:DistEstOra}. 
    
    \begin{figure*}[!ht]
        \centering
        \includegraphics[width=1\textwidth]{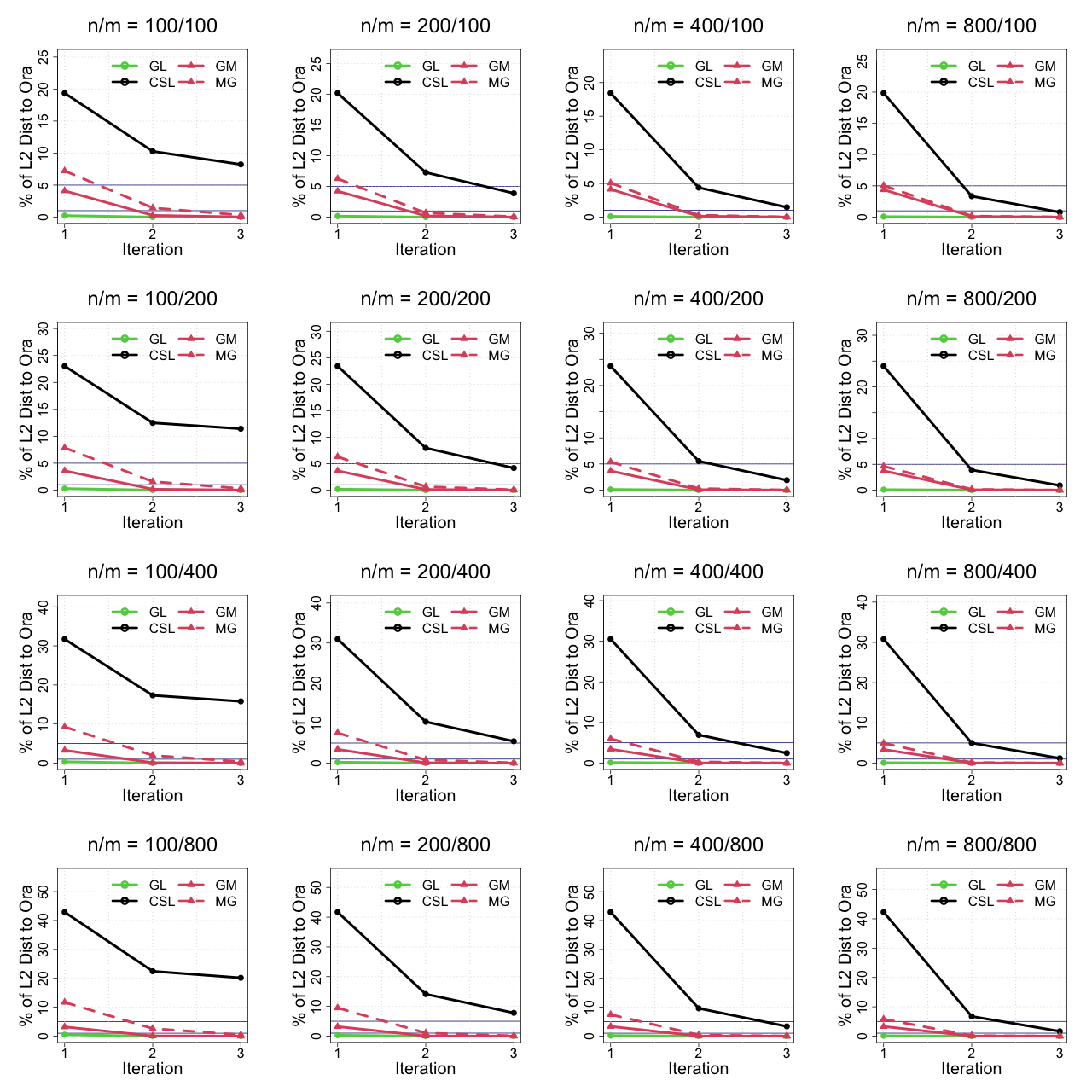} 
        \caption{ Relative $l_2$ distance to the oracle estimator. Back lines stand for the CSL method, and green lines for the Newton estimator (GL) with the global Hessian. Red solid lines stand for $\wt\btheta_t$ (GM), and red dashed lines for $\wt\btheta_t'$ (MG). \label{Hfg:AVGPoi}}
    \end{figure*}

\subsection{Performance of One-Step Distributed Estimation}
    The last presentation focuses on the the refined one-step distributed estimators $\wt\btheta_{os}$ and $\wt\btheta_{os}'$ discussed in Theorem \ref{Hthm:OSEst} and Corollary \ref{Hcrl:OSEstGau}. We state that they have the same asymptotic efficiency as $\btheta^*$ does when $m = o(n^2)$. To examine this, with different $m$ and $n$, we demonstrate plots of the proposed estimators versus the oracle estimator and the empirical confidence interval coverage. The results in this section focus on the first element of $\btheta_0$. 
	
	We take $n$ and $m$ to be 100, 200, 400, 800, and 1600 respectively. Four one-step methods are compared: the one-step CSL methods proposed in \cite{MJ}; the average $M$-estimator discussed in \cite{Zhang2013}; our two proposed methods, $\wt\btheta_{os}$ and $\wt\btheta_{os}'$. Three models are involved:
	\begin{itemize}
	\item ``Model 1'' is logistic regression \eqref{Heq:Log} with $\btheta_0 = d^{-1/2}\bone$, and $\bS_i \sim \Gau(\bzero, \bI)$;
	\item ``Model 2'' is logistic regression \eqref{Heq:Log} with $\btheta_0 = d^{-1/2}\bone$, and elements of $\bS_i$ are mutually independent and follow $\exp(1)$;
	\item ``Model 3'' is Poisson regression \eqref{Heq:Poi} with $\btheta_0 = d^{-1/2}\bone$, and $\bS_i \sim \Gau(\bzero, \bI)$.
	\end{itemize}
	
	Figure \ref{Hfg:QQLogGaud2} examines the approximation performance of the proposed methods. The x-axis is the oracle estimator. And the estimators from the four methods are on the y-axis. We have a few observations. 
	When the ratio $nm^{-1/2}$ increases, the distance between our proposed methods and $\btheta^*$ is getting smaller and smaller. By contrast, due to the impact from the local center, the accuracy of CSL drops rapidly when $m$ increases. Also, it is interesting to mention that there exists an ``eternal'' gap between $\ol\btheta$ and the oracle estimator. This stands for the bias of $\ol\btheta$, which depends on $n$ only, and cannot be reduced by adding more data centers. 	
	
	\begin{figure*}[!ht]
        \centering
        \includegraphics[width=.93\textwidth]{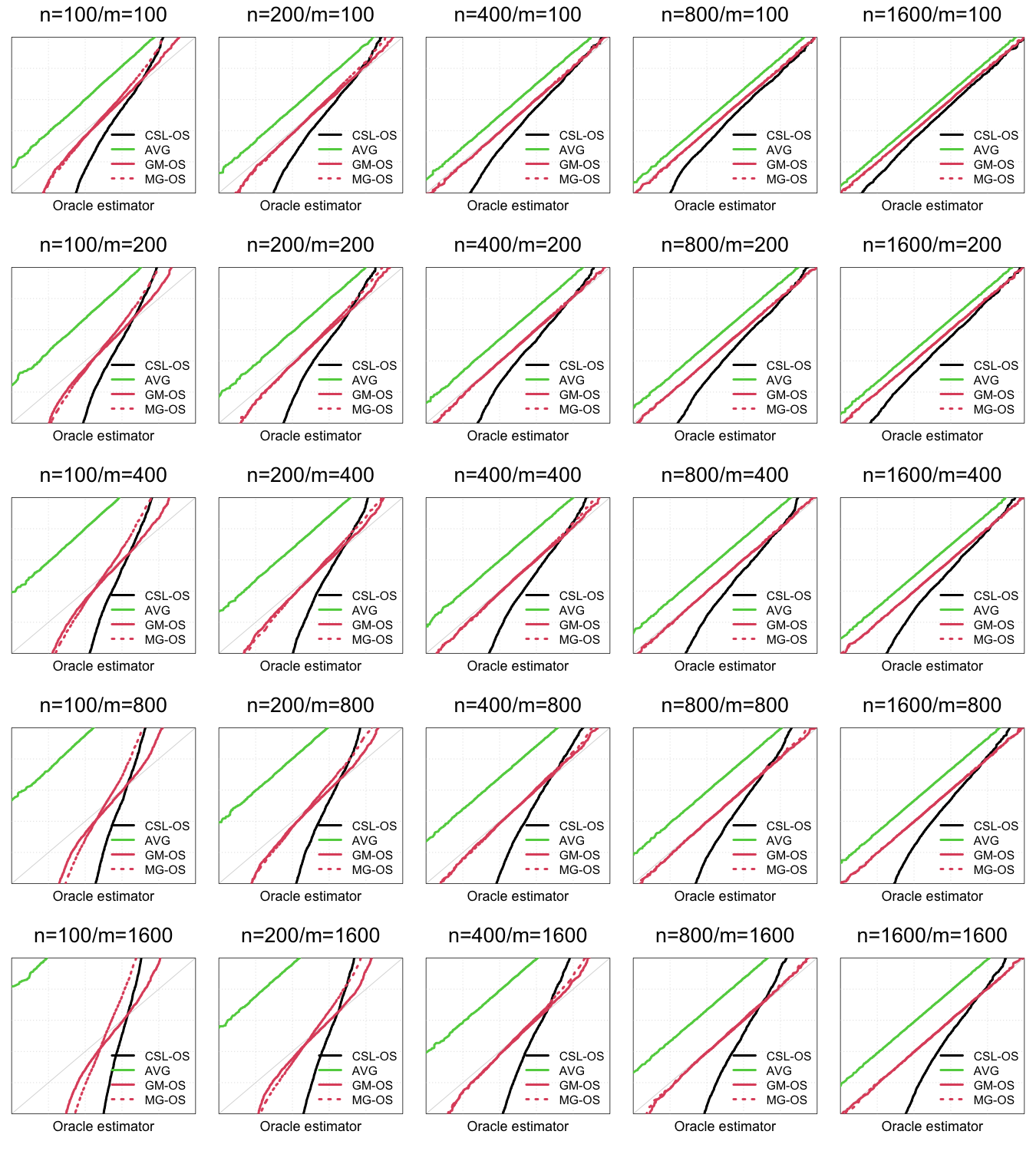} 
        \caption{Plots of the one-step estimators versus the oracle estimator. The oracle estimator is on the x-axis. The y-axis includes four estimators: the green line stands for the average of $M$-estimators (AVG), the black line for the one-shot CSL method (CSL-OS), the red solid line for $\wt\btheta_{os}$ (GM-OS), and the red dashed line for $\wt\btheta_{os}'$ (MG-OS). Inference for the first element of $\btheta_0$ in Model 1 is presented. \label{Hfg:QQLogGaud2} }
    \end{figure*}
    
    
    The 95\% confidence intervals of the first element of $\btheta_0$ are constructed based on $\wt\btheta_{os}$ and $\wt\btheta_{os}'$, and their empirical coverage  is evaluated in Table \ref{Htbl:OSECP}. The normal approximation methods proposed in Corollary \ref{Hcrl:OSEstGau} are applied. These results are consistent with Figure \ref{Hfg:QQLogGaud2}: when $n$ is large and $m = o(n^{2})$, our methods provide close-to-nominal coverage. It is justifiable to request a large $n$, since both proposed methods are based on the large-sample theory, 

	\begin{table*}
        \begin{center}
        \caption{Empirical coverage probabilities for the first element of $\btheta_0$. Methods given in Corollary \ref{Hcrl:OSEstGau} are used to construct 95\% confidence intervals. \label{Htbl:OSECP}}
        \vspace{1em}
        \begin{tabular}{ c c c c c c c c c c c c c c } \hline \multirow{3}{*}{$m$}
            & \multirow{3}{*}{$n$} & \multicolumn{6}{c} {$d = 2$} & \multicolumn{6}{c} {$d = 4$}  \\
            \cline{3-14} & & \multicolumn{2}{c}{Model 1} & \multicolumn{2}{c}{Model 2} & \multicolumn{2}{c} {Model 3} & \multicolumn{2}{c} {Model 1} & \multicolumn{2}{c} {Model 2} & \multicolumn{2}{c} {Model 3}\\ 
            \cline{3-14} & & $\wt\btheta_{os}$ & $\wt\btheta_{os}'$ & $\wt\btheta_{os}$ & $\wt\btheta_{os}'$ & $\wt\btheta_{os}$ & $\wt\btheta_{os}'$ & $\wt\btheta_{os}$ & $\wt\btheta_{os}'$ & $\wt\btheta_{os}$ & $\wt\btheta_{os}'$ & $\wt\btheta_{os}$ & $\wt\btheta_{os}'$ \\ \hline 
            
            \multirow{5}{*}{$100$}
            &  100 & 90 & 86 & 87 & 77 & 83 & 86 & 82 & 76 & 74 & 64 & 73 & 75 \\ 
            &  200 & 93 & 92 & 92 & 88 & 90 & 91 & 89 & 88 & 85 & 82 & 85 & 86 \\ 
            &  400 & 94 & 94 & 94 & 92 & 93 & 93 & 92 & 92 & 91 & 89 & 90 & 91 \\ 
            &  800 & 95 & 94 & 94 & 94 & 94 & 94 & 94 & 93 & 93 & 92 & 93 & 93 \\ 
            & 1600 & 95 & 95 & 95 & 95 & 95 & 95 & 95 & 94 & 94 & 94 & 95 & 94 \\ \hline 
            \multirow{5}{*}{$200$}
            &  100 & 88 & 81 & 83 & 69 & 79 & 83 & 78 & 69 & 71 & 58 & 68 & 71 \\ 
            &  200 & 93 & 90 & 91 & 85 & 87 & 90 & 87 & 85 & 82 & 77 & 79 & 85 \\ 
            &  400 & 94 & 94 & 93 & 90 & 93 & 93 & 92 & 91 & 89 & 87 & 88 & 90 \\ 
            &  800 & 95 & 95 & 95 & 94 & 94 & 94 & 94 & 93 & 93 & 92 & 92 & 93 \\ 
            & 1600 & 94 & 94 & 95 & 94 & 94 & 95 & 94 & 94 & 94 & 94 & 94 & 94 \\ \hline 
            \multirow{5}{*}{$400$}
            &  100 & 84 & 74 & 81 & 60 & 71 & 79 & 73 & 60 & 63 & 47 & 57 & 64 \\ 
            &  200 & 91 & 87 & 89 & 78 & 82 & 88 & 84 & 81 & 78 & 70 & 74 & 82 \\ 
            &  400 & 93 & 92 & 93 & 89 & 90 & 93 & 90 & 91 & 87 & 85 & 85 & 90 \\ 
            &  800 & 95 & 94 & 94 & 93 & 93 & 94 & 94 & 94 & 91 & 90 & 91 & 93 \\ 
            & 1600 & 95 & 95 & 95 & 95 & 95 & 95 & 94 & 94 & 94 & 94 & 93 & 94 \\ \hline 
            \multirow{5}{*}{$800$}
            &  100 & 81 & 65 & 77 & 50 & 66 & 71 & 67 & 50 & 57 & 38 & 50 & 56 \\ 
            &  200 & 88 & 84 & 86 & 71 & 78 & 85 & 79 & 74 & 73 & 61 & 67 & 76 \\ 
            &  400 & 93 & 92 & 92 & 85 & 86 & 91 & 87 & 87 & 84 & 80 & 80 & 87 \\ 
            &  800 & 94 & 94 & 94 & 92 & 92 & 94 & 91 & 92 & 90 & 90 & 89 & 92 \\ 
            & 1600 & 95 & 95 & 95 & 94 & 94 & 94 & 94 & 94 & 93 & 92 & 93 & 94 \\ \hline 
            \multirow{5}{*}{$1600$}
            &  100 & 76 & 53 & 71 & 40 & 58 & 62 & 59 & 38 & 49 & 27 & 42 & 46 \\ 
            &  200 & 86 & 76 & 82 & 61 & 72 & 81 & 74 & 66 & 65 & 50 & 57 & 68 \\ 
            &  400 & 91 & 89 & 89 & 80 & 82 & 89 & 84 & 84 & 79 & 74 & 74 & 83 \\ 
            &  800 & 93 & 93 & 93 & 89 & 89 & 93 & 91 & 91 & 87 & 86 & 86 & 92 \\ 
            & 1600 & 94 & 94 & 95 & 93 & 93 & 94 & 93 & 94 & 92 & 93 & 91 & 93 \\ \hline 
            
            \end{tabular}
        \end{center}
        \normalsize
    \end{table*}
    
\clearpage
\nocite{*}
\bibliographystyle{apalike}
\bibliography{Ref}

\clearpage
\section*{Appendix}
 
\section*{Proof of Proposition \ref{Hpr:DGGD}}
First,  define $\bQ_{12}(\bu) = (\bI \otimes \bu^\T) \bQ_{12}$, $\bQ_{12}(\bu^{\otimes 2}) = \bQ_{12} \circ \bu = (\bI \otimes \bu^\T) \bQ_{12}\bu$,  and $\bQ(\bu^{\otimes 2}) = \bQ \circ \bu =  (\bI \otimes \bu^\T) \bQ\bu$.
Let $\wh\bd_i = \wh\btheta_i - \ol\btheta$ and $\wh\bg_i = \bl_i(\btheta) - \ol\bl(\btheta)$. 
\subsection*{The First Statement}
By definition, 
\begin{align*}
    \bl_i(\btheta_0) & = -\bI_0\wh\bd_i + \bI_0\wh\bd_i + \bl_i(\btheta_0) = -\bI_0\wh\bd_i + \bI_0(\bd_i + \btheta_0 - \bar\btheta) + \bl_i(\btheta_0) \\
        & = -\bI_0\wh\bd_i + \bI_0(\btheta_0 - \bar\btheta) + \lrm{\bI_0 \bd_i + \bl_i(\btheta_0)} \\
    \wh\bg_i & = \bl_i(\btheta_0) + (\wt\bH_i - \wt\bH)\bDelta - \bar\bl(\btheta_0) \\
        & = -\bI_0\wh\bd_i + \bI_0(\btheta_0 - \bar\btheta) + \lrm{\bI_0 \bd_i + \bl_i(\btheta_0)} + (\wt\bH_i - \wt\bH)\bDelta - \bar\bl(\btheta_0).
\end{align*}
Let $\be_i = \bI_0(\btheta_0 - \bar\btheta) + \lrm{\bI_0 \bd_i + \bl_i(\btheta_0)} + (\wt\bH_i - \wt\bH)\bDelta - \bar\bl(\btheta_0)$, then
\begin{equation*}
    \wh\bI_0 - \bI_0 = \lrI{m^{-1}\sum\wh\bd_i\wh\bd_i^\T}^{-1}\lrI{m^{-1}\sum\wh\bd_i\be_i^\T}.
\end{equation*}

Recall the definition $\bd_i = \wh\btheta_i - \btheta_0$ and $\wh\bd_i = \bd_i + \btheta_0 - \ol\btheta$. Note that $\sum \wh\bd_i = \sum \be_i = \bzero$, $m^{-1}\sum\bd_i = \bar\btheta - \btheta_0$, and
\begin{align*}
    m^{-1}\sum\wh\bd_i\be_i^\T & = m^{-1}\sum \lri{\bd_i + \btheta_0 - \ol\btheta}\lrM{\bI_0(\btheta_0 - \bar\btheta) + \lrm{\bI_0 \bd_i + \bl_i(\btheta_0)} + (\wt\bH_i - \wt\bH)\bDelta - \bar\bl(\btheta_0)}^\T \\
        & = m^{-1}\sum \bd_i\lrm{\bl_i(\btheta_0) + \bI_0\bd_i}^\T - \lri{\bar\btheta - \btheta_0}\lri{\bar\btheta - \btheta_0}^\T\bI_0 + \\
        & \blkeq m^{-1}\sum \bd_i \bDelta^\T(\wt\bH_i - \wt\bH) - (\bar\btheta - \btheta_0)\bar\bl(\btheta_0)^\T.
\end{align*}

Lemma \ref{Hlm:Zhang} indicates $\lrn{\bar\bl(\btheta_0)}_2) = \Op(n^{-1/2}m^{-1/2})$, and Lemma \ref{Hlm:dbar} states $\lrn{\bar\btheta - \btheta_0}_2 = \Op\lrm{(nm)^{-1/2} + n^{-1}}$. Also, under Assumptions \ref{as:T0Def}-\ref{Has:LH}, Lemma \ref{Hlm:DDD} states
$\lrN{m^{-1}\sum \bd_i\lrm{\bl_i(\btheta_0) + \bI_0\bd_i}^\T}_2 = \Op(m^{-1/2}n^{-3/2} + n^{-2}).$
Under Assumptions \ref{as:T0Def}-\ref{Has:L3Cont}, Lemma \ref{Hlm:P3s} states     
\begin{equation*}
    m^{-1}\sum_i (\wt\bH_i - \wt\bH)\bDelta\bd_i^\T = -n^{-1} \bQ_{12}^\T \times (\bI_{d\times d} \otimes \bDelta) \times \bI_0^{-1} + \bR_{d11},
\end{equation*}
where $\lrn{\bR_{d11}}_2 = \Op(n^{-1/2})\lrn{\bDelta}_2^3 + \Op\lrm{n^{-1} + (nm)^{-1/2}}\lrn{\bDelta}_2^2 + \Op\lrm{n^{-3/2} + n^{-1}m^{-1/2}}\lrn{\bDelta}_2$.
Putting them together, we have
\begin{align*}
    & \lrN{m^{-1}\sum\wh\bd_i\be_i^\T + n^{-1} \bI_0^{-1}\bQ_{12}(\bDelta)}_2 \\  \leq & \lrN{m^{-1}\sum \bd_i\lrm{\bl_i(\btheta_0) + \bI_0\bd_i}^\T}_2 + \lrN{\lri{\bar\btheta - \btheta_0}\lri{\bar\btheta - \btheta_0}^\T\bI_0}_2 + \\
        & \lrN{m^{-1}\sum \bd_i \bDelta^\T(\wt\bH_i - \wt\bH) + n^{-1} \bI_0^{-1}\bQ_{12}(\bDelta)}_2 + \lrN{(\bar\btheta - \btheta_0)\bar\bl(\btheta_0)^\T}_2 \\
    = & \Op(n^{-1/2})\lrn{\bDelta}_2^3 + \Op\lrm{n^{-1} + (nm)^{-1/2}}\lrn{\bDelta}_2^2 + \Op\lrm{n^{-3/2} + n^{-1}m^{-1/2}}\lrn{\bDelta}_2 + \\
    & \Op(n^{-1}m^{-1} + n^{-2}).
\end{align*}

On the other hand, Lemma \ref{Hlm:NDD} indicates $\lrN{m^{-1}\sum n \wh\bd_i\wh\bd_i^\T - \bI_0^{-1}\bQ_{11}\bI_0^{-1}}_2 = \Op(n^{-1/2} + m^{-1/2}).$ The matrix inverse transformation in continuous. Therefore, by the continuous mapping theorem, 
\begin{align*}
    & \lrN{\lrI{m^{-1}\sum n \wh\bd_i\wh\bd_i^\T}^{-1} - \lri{\bI_0^{-1}\bQ_{11}\bI_0^{-1}}^{-1}}_2 \\
    \leq & \lrN{\lrI{m^{-1}\sum n \wh\bd_i\wh\bd_i^\T}^{-1}}_2\lrN{m^{-1}\sum n \wh\bd_i\wh\bd_i^\T - \bI_0^{-1}\bQ_{11}\bI_0^{-1}}_2  O(1) \\
    = & \Op(n^{-1/2} + m^{-1/2}).
\end{align*}
Note that this part does not depend on $\bDelta$. 
\begin{align*}
        & \wh\bI_0 - \bI_0\\
        = & \lrI{m^{-1}\sum\wh\bd_i\wh\bd_i^\T}^{-1}\lrI{m^{-1}\sum\wh\bd_i\be_i^\T}\\
        = & \lrM{\lrI{m^{-1}\sum n\wh\bd_i\wh\bd_i^\T}^{-1} - \bI_0\bQ_{11}^{-1}\bI_0 + \bI_0\bQ_{11}^{-1}\bI_0}\lrM{m^{-1}\sum n\wh\bd_i\be_i^\T - \bI_0^{-1}\bQ_{12}(\bDelta) + \bI_0^{-1}\bQ_{12}(\bDelta)} \\
        = & \lrM{\lrI{m^{-1}\sum n\wh\bd_i\wh\bd_i^\T}^{-1} - \bI_0\bQ_{11}^{-1}\bI_0}\lrI{m^{-1}\sum n\wh\bd_i\be_i^\T} + \\
        & \lri{\bI_0\bQ_{11}^{-1}\bI_0}\lrM{m^{-1}\sum n\wh\bd_i\be_i^\T - \bI_0^{-1}\bQ_{12}(\bDelta)} + \bI_0\bQ_{11}^{-1}\bQ_{12}(\bDelta).
\end{align*}
Let 
\begin{align*}
    & \bR_{D1} = \lrM{\lrI{m^{-1}\sum n\wh\bd_i\wh\bd_i^\T}^{-1} - \bI_0\bQ_{11}^{-1}\bI_0}\lrI{m^{-1}\sum n\wh\bd_i\be_i^\T} \\
    & \bR_{D2} = \lri{\bI_0\bQ_{11}^{-1}\bI_0}\lrM{m^{-1}\sum n\wh\bd_i\be_i^\T - \bI_0^{-1}\bQ_{12}(\bDelta)},
\end{align*} with
\begin{align*}
        & \lrn{\bR_{D1}}_2 \\
    = & \Op(1 + n^{1/2}m^{-1/2})\lrn{\bDelta}_2^3 + \Op\lrm{n^{-1/2} + n^{1/2}m^{-1}}\lrn{\bDelta}_2^2 + \Op\lri{n^{-1} + m^{-1}}\lrn{\bDelta}_2 + \\
        & \Op(m^{-3/2} + n^{-3/2}) + \Op(n^{-1/2} + m^{-1/2})\lrn{\bQ_{12}(\bDelta)}_2\\
        & \lrn{\bR_{D2}}_2 \\
    = & \Op(n^{1/2})\lrn{\bDelta}_2^3 + \Op\lri{1 + n^{1/2}m^{-1/2}}\lrn{\bDelta}_2^2 + \Op\lri{n^{-1/2} + m^{-1/2}}\lrn{\bDelta}_2 + \\
        & \Op(m^{-1} + n^{-1}).
\end{align*}
Define $\bR_D = \bR_{D1} + \bR_{D2}$, and we complete the proof. 

\subsection*{The Second Statement of Proposition \ref{Hpr:DGGD}}
By the definition, 
\begin{align*}
    \wh\bg_i & = \bl_i(\btheta_0) + (\wt\bH_i - \wt\bH)\bDelta - \bar\bl(\btheta_0) \\
        & = -\bI_0\wh\bd_i + \bI_0(\btheta_0 - \bar\btheta) + \lrm{\bI_0 \bd_i + \bl_i(\btheta_0)} + (\wt\bH_i - \wt\bH)\bDelta - \bar\bl(\btheta_0) \\
    \wh\bd_i & = -\bI_0^{-1}\wh\bg_i + (\btheta_0 - \bar\btheta) + \lrm{\bd_i + \bI_0^{-1}\bl_i(\btheta_0)} + \bI_0^{-1}(\wt\bH_i - \wt\bH)\bDelta - \bI_0^{-1}\bar\bl(\btheta_0).
\end{align*}
Let $\be_i = (\btheta_0 - \bar\btheta) + \lrm{\bd_i + \bI_0\bl_i(\btheta_0)} + \bI_0^{-1}(\wt\bH_i - \wt\bH)\bDelta - \bI_0^{-1}\bar\bl(\btheta_0)$. Then,
\begin{equation*}
    \wh\bOmega - \bI_0^{-1} = \lrI{m^{-1}\sum\wh\bg_i\wh\bg_i^\T}^{-1}\lrI{m^{-1}\sum\wh\bg_i\be_i^\T}.
\end{equation*}
First, Lemma \ref{Hlm:GU} states, under Assumption \ref{as:T0Def}-\ref{Has:LH}
\begin{equation*}
    m^{-1}\sum \wh\bg_i\wh\bg_i^\T = n^{-1}\bQ_{11} + \bR_{\bG},
\end{equation*}
where $\lrN{\bR_{\bG}}_2 = \Op(1)\lrn{\bDelta}_2^4 + \Op(n^{-1/2})\lrn{\bDelta}_2^2 + \Op(n^{-1})\lrn{\bDelta}_2 + \Op(n^{-1}m^{-1/2})$. Therefore, by the continuous mapping theorem, when $\lrn{\bDelta}_2 = \op(n^{-1/4})$, $$\lri{m^{-1}\sum n \wh\bg_i\wh\bg_i^\T}^{-1} \cid \bQ_{11}^{-1}.$$ 

Recall the definition $\bd_i = \wh\btheta_i - \btheta_0$ and $\wh\bd_i = \bd_i + \btheta_0 - \ol\btheta$. Note that $\sum \wh\bg_i = \sum \be_i = \bzero$, $m^{-1}\sum\bd_i = \bar\btheta - \btheta_0$, and
\begin{align*}
    m^{-1}\sum\wh\bg_i\be_i^\T & = m^{-1}\sum \lrm{\bl_i(\btheta_0) + (\wt\bH_i - \wt\bH)\bDelta - \bar\bl(\btheta_0)} \times \\
        & \blkeq \lrO{(\btheta_0 - \bar\btheta) + \lrm{ \bd_i + \bI_0^{-1}\bl_i(\btheta_0)} + \bI_0^{-1}(\wt\bH_i - \wt\bH)\bDelta - \bI_0^{-1}\bar\bl(\btheta_0)}^\T \\
        & = \bar\bl(\btheta_0)(\btheta_0 - \bar\btheta) + m^{-1}\sum \bl_i(\btheta_0)\lrm{\bd_i + \bI_0^{-1}\bl_i(\btheta_0)}^\T + \\
        & \blkeq m^{-1}\sum \bl_i(\btheta_0)\bDelta^\T(\wt\bH_i - \wt\bH)\bI_0^{-1} - \bar\bl(\btheta_0)\bar\bl(\btheta_0)^\T\bI_0^{-1}
        \\
        & \blkeq m^{-1}\sum (\wt\bH_i - \wt\bH)\bDelta\lrm{\bd_i + \bI_0^{-1}\bl_i(\btheta_0)}^\T + \\
        & \blkeq m^{-1}\sum (\wt\bH_i - \wt\bH)\bDelta\bDelta^\T(\wt\bH_i - \wt\bH)\bI_0^{-1}.
\end{align*}

Lemma \ref{Hlm:Zhang} indicates $\lrn{\bar\bl(\btheta_0)}_2) = \Op(n^{-1/2}m^{-1/2})$, and Lemma \ref{Hlm:dbar} states $\lrn{\bar\btheta - \btheta_0}_2 = \Op\lrm{(nm)^{-1/2} + n^{-1}}$. 
Lemma \ref{Hlm:P2} indicates, under Assumptions \ref{as:T0Def}-\ref{Has:L3Cont}, 
        \begin{equation*}
            m^{-1}\sum_i \bl_i(\btheta_0)\bDelta^\T(\wt\bH_i - \wt\bH) 
                = n^{-1}(\bI_{d\times d}\otimes \bDelta^\T) \times\bQ_{12} + \bR_{22},
        \end{equation*}
        where $\lrn{\bR_{22}}_2 = \Op(n^{-1/2})\lrn{\bDelta}_2^3 + \Op\lrm{n^{-1} + (nm)^{-1/2}} \lrn{\bDelta}_2^2 + \Op(n^{-1}m^{-1/2})\lrn{\bDelta}_2$.
        
        Lemma \ref{Hlm:P4} states when Assumptions \ref{as:T0Def}-\ref{Has:LH} hold, 
    $$\lrN{m^{-1}\sum_i (\wt\bH_i - \wt\bH)\bDelta\bDelta^\T(\wt\bH_i - \wt\bH)}_2
        = \Op(1)\lrn{\bDelta}_2^4 + \Op(n^{-1})\lrn{\bDelta}_2^2.$$
        
    Lemma \ref{Hlm:EDD} indicates $\E\lri{\lrN{\bd_i - \bd_{0, i}}_2^{2}}  = O(n^{-2})$. 
    Under Assumptions \ref{as:T0Def}-\ref{Has:LH}, Lemma \ref{Hlm:LDD} states
    \begin{equation*}
        \bl_i(\btheta_0)(\bd_i - \bd_{0, i})^\T = \bW_{i} + \bW_i',
    \end{equation*}
    with $\E\lri{\lrn{\bW_i}_2^2} = O(n^{-3})$, $\lrn{\E\lri{\bW_i}}_2 = O(n^{-2})$, and 
        $\E\lri{\lrn{\bW_i'}_2} = O(n^{-2})$.
    Also, by Lemma \ref{Hlm:R1MC}
    \begin{equation*}
        \E\lrI{\lrN{m^{-1}\sum \bW_i}_2^2} = O(m^{-1}n^{-3} + n^{-4}).
    \end{equation*}
    Hence,
    \begin{align*}
        \lrN{m^{-1}\sum \bl_i(\btheta_0)\lrm{\bl_i(\btheta_0) + \bI_0\bd_i}^\T}_2 & \leq \lrN{m^{-1}\sum \bW_i}_2 + \lrN{m^{-1}\sum \bW_i'}_2 \\
            & = \Op(m^{-1/2}n^{-3/2} + n^{-2}).
    \end{align*}
    
    Also, consider $m^{-1}\sum (\wt\bH_i - \wt\bH)\bDelta\lrm{\bd_i + \bI_0^{-1}\bl_i(\btheta_0)}^\T$. By the Cauchy–Schwarz inequality
    \begin{align*}
        & \lrn{m^{-1}\sum (\wt\bH_i - \wt\bH)\bDelta\lrm{\bd_i + \bI_0^{-1}\bl_i(\btheta_0)}^\T}_2 \\
        \leq & m^{-1}\sum \lrn{(\wt\bH_i - \wt\bH)\bDelta}_2\lrn{\bd_i - \bd_{0, i}}_2 \\
        \leq & \lrM{m^{-1}\sum \lrn{(\wt\bH_i - \wt\bH)\bDelta}_2^2}^{1/2}\lrM{m^{-1}\sum \lrn{\bd_i - \bd_{0, i}}_2^2}^{1/2} \\
        = & \Op(n^{-1})\lrn{\bDelta}_2^2 + \Op(n^{-3/2})\lrn{\bDelta}_2.
    \end{align*}
    The last inequality comes from Lemmas \ref{Hlm:P4} and \ref{Hlm:EDD}. 
    
Putting them together, we have
\begin{align*}
    & \lrN{m^{-1}\sum\wh\bg_i\be_i^\T - n^{-1}\bQ_{12}(\bDelta)\bI_0^{-1}}_2 \\  
    = & \Op(1)\lrn{\bDelta}_2^4 + \Op(n^{-1/2})\lrn{\bDelta}_2^3 + \Op\lrm{n^{-1} + (nm)^{-1/2}}\lrn{\bDelta}_2^2 + \\
    & \Op\lri{n^{-3/2} + n^{-1}m^{-1/2}}\lrn{\bDelta}_2 + \Op(n^{-1}m^{-1} + n^{-2}).
\end{align*}
We assume $\lrn{\bDelta}_2 = \op(n^{-1/4})$, so that $m^{-1}\sum n \wh\bg_i\wh\bg_i^\T \cid \bQ_{11}$, 
 $\lrn{\lri{m^{-1}\sum n \wh\bg_i\wh\bg_i^\T}^{-1} - \bQ_{11}^{-1}} = \op(1)$, and $\lrn{m^{-1}\sum n\wh\bg_i\be_i^\T - \bQ_{12}(\bDelta)\bI_0^{-1}}_2 = \op(1)$. 
\begin{align*}
        & \wh\bOmega - \bI_0^{-1} \\
    = & \lrI{m^{-1}\sum\wh\bg_i\wh\bg_i^\T}^{-1}\lrI{m^{-1}\sum\wh\bg_i\be_i^\T}\\
    = & \lrM{\lrI{m^{-1}\sum n\wh\bg_i\wh\bg_i^\T}^{-1} - \bQ_{11}^{-1} + \bQ_{11}^{-1}}\lrI{m^{-1}\sum n\wh\bg_i\be_i^\T - \bQ_{12}(\bDelta)\bI_0^{-1} + \bQ_{12}(\bDelta)\bI_0^{-1}} \\
    = & \lrM{\lrI{m^{-1}\sum n\wh\bg_i\wh\bg_i^\T}^{-1} - \bQ_{11}^{-1}}\lrI{m^{-1}\sum n\wh\bg_i\be_i^\T} + \\
      & \bQ_{11}^{-1}\lrM{m^{-1}\sum n\wh\bg_i\be_i^\T - \bQ_{12}(\bDelta)\bI_0^{-1}} + \bQ_{11}^{-1}\bQ_{12}(\bDelta)\bI_0^{-1}.
\end{align*}
Let 
\begin{align*}
    & \bR_{G1} = \bQ_{11}^{-1}\lrM{m^{-1}\sum n\wh\bg_i\be_i^\T - \bQ_{12}(\bDelta)\bI_0^{-1}}, \\
    & \bR_{G2} = \lrM{\lrI{m^{-1}\sum n\wh\bg_i\wh\bg_i^\T}^{-1} - \bQ_{11}^{-1}}\lrI{m^{-1}\sum n\wh\bg_i\be_i^\T}, 
\end{align*}
with
\begin{align*}
    \lrn{\bR_{G1}}_2 & = \Op(n)\lrn{\bDelta}_2^4 + \Op(n^{1/2})\lrn{\bDelta}_2^3 + \Op\lri{1 + n^{1/2}m^{-1/2}}\lrn{\bDelta}_2^2 + \\
        & \blkeq \Op\lri{n^{-1/2} + m^{-1/2}}\lrn{\bDelta}_2 + \Op(m^{-1} + n^{-1}), \\
    \lrn{\bR_{G2}}_2 & = \op(\lrn{\bR_{G1}}_2) + \lrm{\Op(n^{1/2})\lrn{\bDelta}_2^2 + \Op(1)\lrn{\bDelta}_2 + \Op(m^{-1/2})}\lrn{\bQ_{12}(\bDelta)}_2.
\end{align*}
Define $\bR_G = \bR_{G1} + \bR_{G2}$, and we complete the proof. 

\section*{Proof of Theorem \ref{Hthm:DistEstOra}}
\subsection*{The first statement}
The first estimator has the decomposition:
\begin{align*}
& \wt\btheta_{t+1} - \btheta_0
         = \wh\bI_0^{-1}(\wh\bI_0 - \bI_0) \bDelta_{t} + \wh\bI_0^{-1}(\bI_0 - \ol\bH^{(g)})\bDelta_{t} - \wh\bI_0^{-1}\bl^{(g)}(\btheta_0) \\
& \wt\btheta_{t+1} - \btheta^* 
         = \wh\bI_0^{-1}(\wh\bI_0 - \bI_0) \bDelta_{t} + \wh\bI_0^{-1}(\bI_0 - \ol\bH^{(g)})\bDelta_{t} + \wh\bI_0^{-1}\lri{\wh\bI_0 - \ol\bH^*}(\ol\bH^*)^{-1}\bl^{(g)}(\btheta_0),
\end{align*}
where $\ol\bH^* = \int_0^1 \sum_{j \in \mathcal{G}} \nabla^2 L\lri{\btheta_0 + t\bDelta^*;\bX_{j}}\diff t$.
Proposition \ref{Hpr:DGGD} states when Assumptions \ref{as:T0Def}-\ref{Has:L3Cont} hold and $\wt\btheta_t \in U(\rho)$, 
$
    \wh\bI_0 - \bI_0 = \bI_0\bQ_{11}^{-1}\bQ_{12}(\bDelta_{t}) + \bR_{D},
$
where $\lrn{\bR_{D}}_2 = \Op(n^{1/2})\lrn{\bDelta_{t}}_2^3 + \Op\lri{1 + n^{1/2}m^{-1/2}}\lrn{\bDelta_{t}}_2^2 + \Op\lri{n^{-1/2} + m^{-1/2}}\lrn{\bDelta_{t}}_2 + \Op(m^{-1} + n^{-1})$. When $\lrn{\bDelta_{t}}_2 = \Op(n^{-1/2})$, $\lrn{\bR_{D}}_2 = \Op(m^{-1} + n^{-1})$, and $\lrn{\wh\bI_0 - \bI_0}_2 = \Op(\lrn{\bDelta_{t}}_2 + m^{-1} + n^{-1})$. Consequently,
\begin{align*}
    \wh\bI_0^{-1}(\wh\bI_0 - \bI_0) \bDelta_{t} & = (\wh\bI_0^{-1} - \bI_0^{-1} + \bI_0^{-1})\lrm{\bI_0\bQ_{11}^{-1}\bQ_{12}(\bDelta_{t}) + \bR_{D}} \bDelta_{t} \\
        & = \bQ_{11}^{-1}\bQ_{12}(\bDelta_{t}^{\otimes 2}) + \wh\bI_0^{-1}(\bI_0 - \wh\bI_0)\bI_0^{-1}\lri{\wh\bI_0 - \bI_0}\bDelta_{t} + \bI_0^{-1}\bR_D\bDelta_{t}.
\end{align*}
Let $\bR_{11} = \wh\bI_0^{-1}(\bI_0 - \wh\bI_0)\bI_0^{-1}\lri{\wh\bI_0 - \bI_0}\bDelta_{t} + \bI_0^{-1}\bR_D\bDelta_{t}$, then $\lrn{\bR_{11}}_2 = \Op(\lrn{\bDelta_{t}}_2^3) + \Op(m^{-1} + n^{-1})\lrn{\bDelta_{t}}_2$.

On the other hand, Lemma \ref{Hlm:IHD} gives $(\bI_0 - \ol\bH^{(g)})\bDelta_{t} = -2^{-1} \bQ(\bDelta_{t}^{\otimes 2}) + \bR_{H}$, 
 where $\lrn{\bR_H} = \Op(n^{-1/2}\lrn{\bDelta_{t}}_2^2 + n^{-1/2}m^{-1/2}\lrn{\bDelta_{t}}_2)$. Hence,
 \begin{align*}
    \wh\bI_0^{-1}(\bI_0 - \ol\bH^{(g)})\bDelta_{t} & = (\wh\bI_0^{-1} - \bI_0^{-1} + \bI_0^{-1})\lrm{-2^{-1} \bQ(\bDelta_{t}^{\otimes 2}) + \bR_{H}} \\
        & = -2^{-1}\bI_0^{-1} \bQ(\bDelta_{t}^{\otimes 2}) + \bI_0^{-1}\bR_{H} + \wh\bI_0^{-1}(\bI_0 - \wh\bI_0)\bI_0^{-1}(\bI_0 - \ol\bH^{(g)})\bDelta_{t}.
        \end{align*}
Let $\bR_{22} = \bI_0^{-1}\bR_{H} + \wh\bI_0^{-1}(\bI_0 - \wh\bI_0)\bI_0^{-1}(\bI_0 - \ol\bH^{(g)})\bDelta_{t}$. We have $\lrn{(\bI_0 - \ol\bH^{(g)})\bDelta_{t}} = \Op(\lrn{\bDelta_{t}}_2^2 + n^{-1/2}m^{-1/2}\lrn{\bDelta_{t}}_2)$ and 
$\lrn{\bR_{22}}_2 = \Op(n^{-1/2}\lrn{\bDelta_{t}}_2^2 + n^{-1/2}m^{-1/2}\lrn{\bDelta_{t}}_2)$.

Consider the last term of $\bDelta_{t+1}$, $-\wh\bI_0^{-1}\bl^{(g)}(\btheta_0)$. 
Note that $\bDelta^* = -(\ol\bH^*)^{-1}\bl^{(g)}(\btheta_0)$, and we state without proof that $\lrn{\bDelta^*}_2 = \Op(n^{-1/2}m^{-1/2})$. Let $\bH^{(g)}_0 = n^{-1}m^{-1}\sum_{j \in \mathcal{G}} \nabla^2 L(\btheta_0;\bX_{j})$. Lemma \ref{Hlm:HHD} and \ref{Hlm:Zhang} indicate
\begin{equation*}
    \lrn{(\ol\bH^* - \bI_0)\bDelta^*}_2 \leq \lrn{\lri{\ol\bH^* - \bH^{(g)}_0}\bDelta^*}_2 + \lrn{(\bH^{(g)}_0 - \bI_0)\bDelta^*}_2 = \Op(n^{-1}m^{-1} + n^{-1/2}m^{-1/2}\lrn{\bDelta^*}_2).
\end{equation*}
Therefore, $\lrn{\bl^{(g)}(\btheta_0)}_2 = \lrn{\ol\bH^*\bDelta^*}_2 = \Op(\lrn{\bDelta^*}_2 + n^{-1}m^{-1})$, and 
\begin{align*}
    \lrn{\wh\bI_0^{-1}\lri{\wh\bI_0 - \ol\bH^*}(\ol\bH^*)^{-1}\bl^{(g)}(\btheta_0)}_2 & \leq  \lrn{\wh\bI_0^{-1}}_2\lrM{\lrn{\wh\bI_0 - \bI_0}_2\lrn{\bDelta^*}_2  + \lrn{(\bI_0 - \ol\bH^*)\bDelta^*}_2} \\
        & = \Op\lrm{(m^{-1} + n^{-1} + \lrn{\bDelta_{t}}_2)\lrn{\bDelta^*}_2 + n^{-1}m^{-1}}.
\end{align*}
Hence, 
\begin{align*}  
    \wt\btheta_{t+1} - \btheta^* & = \bQ_{11}^{-1}\bQ_{12}(\bDelta_{t}^{\otimes 2}) -2^{-1}\bI_0^{-1} \bQ(\bDelta_{t}^{\otimes 2}) + \bR_{11} + \bR_{22} + \wh\bI_0^{-1}\lri{\wh\bI_0 - \ol\bH^*}(\ol\bH^*)^{-1}\bl^{(g)}(\btheta_0) \\
        & = \bQ_{11}^{-1}\bQ_{12}(\bDelta_{t}^{\otimes 2}) -2^{-1}\bI_0^{-1} \bQ(\bDelta_{t}^{\otimes 2}) + \bR^*_1, 
\end{align*}
where $\bR^*_1 = \bR_{11} + \bR_{22} - \wh\bI_0^{-1}\lri{\wh\bI_0 - \ol\bH^*}\bDelta^*$ with
\begin{align*}
    \lrn{\bR^*_1}_2 & \leq \lrn{\bR_{11}}_2 + \lrn{\bR_{12}}_2 + \lrn{\wh\bI_0^{-1}}_2\lrM{\lrn{\wh\bI_0 - \bI_0}_2\lrn{\bDelta^*}_2  + \lrn{(\bI_0 - \ol\bH^*)\bDelta^*}_2} \\
        & = \Op(m^{-1} + n^{-1})(\lrn{\bDelta_{t}}_2 + \lrn{\bDelta^*}_2) + \Op(m^{-1}n^{-1}).
\end{align*}

\subsection*{The second statement}
Similarly, let $\bDelta_{t+1}' =\wt\btheta_{t+1}' - \btheta_0$, and
\begin{align*}
    \wt\btheta_{t+1}' - \btheta_0 & = (\bI - \wh\bOmega\ol\bH^{(g)})\bDelta_{t} - \wh\bOmega\bl^{(g)}(\btheta_0) \\
        & = \bI_0^{-1}(\bI_0 - \ol\bH^{(g)})\bDelta_{t} - (\wh\bOmega - \bI_0^{-1})\ol\bH^{(g)}\bDelta_{t} - \wh\bOmega\bl^{(g)}(\btheta_0),\\
    \wt\btheta_{t+1}' - \btheta^* & = \bI_0^{-1}(\bI_0 - \ol\bH^{(g)})\bDelta_{t} - (\wh\bOmega - \bI_0^{-1})\ol\bH^{(g)}\bDelta_{t} - \lrm{\wh\bOmega - (\ol\bH^*)^{-1}}\bl^{(g)}(\btheta_0).
\end{align*}
For the first component, Lemma \ref{Hlm:IHD} gives $(\bI_0 - \ol\bH^{(g)})\bDelta_{t} = -2^{-1} \bQ(\bDelta_{t}^{\otimes 2}) + \bR_{H}$, 
 where $\lrn{\bR_H}_2 = \Op(n^{-1/2})\lrn{\bDelta_{t}}_2^2 + \Op(n^{-1/2}m^{-1/2})\lrn{\bDelta_{t}}_2$. When $\lrn{\bDelta_{t}}_2 = \Op(n^{-1/2})$, $\lrn{(\bI_0 - \ol\bH^{(g)})\bDelta_{t}} = \Op(1)\lrn{\bDelta_{t}}_2^2 + \Op(n^{-1/2}m^{-1/2})\lrn{\bDelta_{t}}_2$. Therefore,
$$
    \bI_0^{-1}(\bI_0 - \ol\bH^{(g)})\bDelta_{t} = -2^{-1} \bI_0^{-1}\bQ(\bDelta_{t}^{\otimes 2}) + \bI_0^{-1}\bR_{H}.
$$
 
Next, by Proposition \ref{Hpr:DGGD}, $
    \wh\bOmega - \bI_0^{-1} = \bQ_{11}^{-1}\bQ_{12}(\bDelta_{t})\bI_0^{-1} + \bR_{G},
$
where $\lrn{\bR_{G}}_2 = \Op\lri{n^{-1/2} + m^{-1/2}}\lrn{\bDelta_{t}}_2 + \Op(m^{-1} + n^{-1})$. 
\begin{align*}
    (\wh\bOmega - \bI_0^{-1})\ol\bH^{(g)}\bDelta_{t} & = \lrm{\bQ_{11}^{-1}\bQ_{12}(\bDelta_{t})\bI_0^{-1} + \bR_{G}}(\ol\bH^{(g)} - \bI_0 + \bI_0)\bDelta_{t} \\
        & = \bQ_{11}^{-1}\bQ_{12}(\bDelta_{t}^{\otimes 2}) + \bQ_{11}^{-1}\bQ_{12}(\bDelta_{t})\bI_0^{-1}(\ol\bH^{(g)} - \bI_0)\bDelta_{t} + \bR_G\ol\bH^{(g)}\bDelta_{t},
\end{align*}
where
\begin{align*}
    & \lrn{\bQ_{11}^{-1}\bQ_{12}(\bDelta_{t})\bI_0^{-1}(\ol\bH^{(g)} - \bI_0)\bDelta_{t}}_2 = \Op(1)\lrn{\bDelta_{t}}_2^3 + \Op(n^{-1/2}m^{-1/2})\lrn{\bDelta_{t}}_2^2 \\
    & \lrn{\bR_G\ol\bH^{(g)}\bDelta_{t}}_2 = \Op\lri{n^{-1/2} + m^{-1/2}}\lrn{\bDelta_{t}}_2^2 + \Op(m^{-1} + n^{-1})\lrn{\bDelta_{t}}_2.
\end{align*}
Putting them together, we have
\begin{equation*}
    \bDelta_{t+1}' = -2^{-1} \bI_0^{-1}\bQ(\bDelta_{t}^{\otimes 2}) + \bQ_{11}^{-1}\bQ_{12}(\bDelta_{t}^{\otimes 2}) + \bR_1' -\wh\bOmega\bl^{(g)}(\btheta_0),
\end{equation*}
where $\bR_1' = \bI_0^{-1}\bR_H +\bQ_{11}^{-1}\bQ_{12}(\bDelta_{t})\bI_0^{-1}(\ol\bH^{(g)} - \bI_0)\bDelta_{t} + \bR_G\ol\bH^{(g)}\bDelta_{t}$, with $\lrn{\bR'_1}_2 = \Op(n^{-1/2} + m^{-1/2})\lrn{\bDelta_{t}}_2^2 + \Op(n^{-1} + m^{-1})\lrn{\bDelta_{t}}_2$.

Consider $- \lrm{\wh\bOmega - (\ol\bH^*)^{-1}}\bl^{(g)}(\btheta_0)$. Recall that $\lrn{\wh\bOmega - \bI_0^{-1}}_2 = \Op(\lrn{\bDelta_{t}}_2 + m^{-1} + n^{-1})$, and $\lrn{\bl^{(g)}(\btheta_0) + \bI_0\bDelta^*}_2 = \Op(n^{-1}m^{-1} + n^{-1/2}m^{-1/2}\lrn{\bDelta^*}_2)$. 
\begin{align*}
    \lrn{\lrm{\wh\bOmega - (\ol\bH^*)^{-1}}\bl^{(g)}(\btheta_0)}_2 & \leq \lrn{\wh\bOmega - \bI_0^{-1}}_2\lrn{\bl^{(g)}(\btheta_0)}_2 + \lrn{\bI_0^{-1}\lri{\ol\bH^* - \bI_0}\bDelta^*}_2 \\
        & = \Op\lrm{(\lrn{\bDelta_{t}}_2 + m^{-1} + n^{-1})\lrn{\bDelta^*}_2} + \Op(m^{-1}n^{-1}).
\end{align*}
Let $\bR_2^* = \bR'_1 - \lrm{\wh\bOmega - (\ol\bH^*)^{-1}}\bl^{(g)}(\btheta_0)$. Then,
$$\lrn{\bR_2^*} = \Op(n^{-1} + m^{-1})(\lrn{\bDelta_{t}}_2 + \lrn{\bDelta^*}_2) + \Op(m^{-1}n^{-1}).$$

\section*{Proof of Proposition \ref{Hthm:EEst}} 
Recall that $\bQ_{11} = \E\lrM{\bl(\btheta_0;\bX_{i})\bl(\btheta_0;\bX_{i})^\T}$ and $\bQ_{12} = \E\lrO{\bl(\btheta_0;\bX_{i}) \otimes \lrM{\bH(\btheta_0;\bX_{i}) - \bI_0}}$. In this section, since all estimators are based on the local data, for parsimony, we drop the superscript $(l)$. Also, let $\bl = \nabla L$ and $\bH = \nabla^2 L$. 

\subsection*{Estimation of $\bI_0^{-1}\bQ(\bDelta_0^{\otimes 2})$}

Consider $\bQ(\bDelta_0^{\otimes 2})$. By definition and Lemma \ref{Hlm:L3bd} and \ref{Hlm:L3Dec}, 
\begin{align*}
    & \lrn{\lrm{\nabla^3 L(\btheta^A;\bX_i) - \nabla^3 L(\btheta_0;\bX_i)}(\wh\bDelta_0^{\otimes 2})}_2 \leq m(\bX_i)\lrn{\wh\bDelta_0}_2^2\lrn{\bDelta^A}_2 \\
    & \lrM{n^{-1}\sum \nabla^3 L(\btheta_0;\bX_{i}) - \bQ}(\wh\bDelta_0^{\otimes 2}) = 2\bR_2 \\
    & \lrN{\bQ(\wh\bDelta_0^{\otimes 2} - \bDelta_0^{\otimes 2})}_2 \leq d\lambda_h\lrn{\wh\bDelta_0 + \bDelta_0}_2\lrn{\wh\bDelta_0 - \bDelta_0}_2,
\end{align*}
where $\lrn{\bR_2}_2 \leq R_2 \lrn{\wh\bDelta_0}_2^2$ with $\E(R_2^k) = O(n^{-k/2})$. Recall the definition $\wh\bDelta_0 = \wt\btheta_0 - \btheta^A = \bDelta_0 - \bDelta^A$.
\begin{align*}
        & \lrN{\wh\bQ(\wh\bDelta_0^{\otimes 2}) - \bQ(\bDelta_0^{\otimes 2})}_2 \\
        \leq & n^{-1}\sum \lrn{\lrm{\nabla^3 L(\btheta^A;\bX_i) - \nabla^3 L(\btheta_0;\bX_i)}(\wh\bDelta_0^{\otimes 2})}_2 +  \\
        & \lrN{\lrM{n^{-1}\sum \nabla^3 L(\btheta_0;\bX_{i}) - \bQ}(\wh\bDelta_0^{\otimes 2})} +  \lrN{\bQ(\wh\bDelta_0^{\otimes 2} - \bDelta_0^{\otimes 2})}_2 \\
        = & \Op(1)\lrn{\wh\bDelta_0}_2^2\lrn{\bDelta^A}_2 + \Op(n^{-1/2}) \lrn{\wh\bDelta_0}_2^2 + d\lambda_h\lrn{\wh\bDelta_0 - \bDelta_0}_2\lrn{\wh\bDelta_0 + \bDelta_0}_2 \\
        = & \Op(1)\lrn{\bDelta_0 - \bDelta^A}_2^2\lrn{\bDelta^A}_2 + \Op(n^{-1/2}) \lrn{\bDelta_0 - \bDelta^A}_2^2 + d\lambda_h\lrn{\bDelta^A}_2\lrn{2\bDelta_0 - \bDelta^A}_2 \\
        = & \Op(1)\lrn{\bDelta^A}_2^3 + \Op(\lrn{\bDelta_0}_2 + 1)\lrn{\bDelta^A}_2^2 + \Op(1)\lrn{\bDelta_0}_2\lrn{\bDelta^A}_2 + \Op(n^{-1/2})\lrn{\bDelta_0}_2^2.
\end{align*}
When $\lrn{\bDelta^A}_2 = \Op(n^{-1/2})$ and $\lrn{\bDelta_0}_2 = \Op(n^{-1/2})$,
$$
        \lrN{\wh\bQ(\wh\bDelta_0^{\otimes 2}) - \bQ(\bDelta_0^{\otimes 2})}_2 = \Op(1) \lrn{\bDelta^A}_2^2 + \Op(1)\lrn{\bDelta^A}_2\lrn{\bDelta_0}_2 + \Op(n^{-1/2}) \lrn{\bDelta_0}_2^2. 
$$
Consider the local estimator of $\bI_0^{-1}$, $\wh\bH = n^{-1}\sum_i \wh\bH_i $ where $\wh\bH_i = \nabla^2 L(\btheta^A;\bX_i)$.
\begin{align}
    & \lrn{\wh\bH - \bI_0}_2 \leq \lrn{\wh\bH - \bH_0}_2 + \lrn{\bI_0 - \bH_0}_2 = \Op(1)\lrn{\bDelta^A} + \Op(n^{-1/2}) \\
    & \lrn{\wh\bH^{-1} - \bI_0^{-1}}_2 \leq \lrn{\wh\bH^{-1}}_2\lrn{\bI_0^{-1}}_2\lrn{\wh\bH - \bI_0}_2 = \Op(1)\lrn{\bDelta^A} + \Op(n^{-1/2}).
\end{align}

Therefore, 
\begin{align}
        & \lrN{\wh\bH^{-1}\wh\bQ(\wh\bDelta_0^{\otimes 2}) - \bI_0^{-1}\bQ(\bDelta_0^{\otimes 2})}_2 \nonumber \\
    \leq & \lrN{\wh\bH^{-1}\wh\bQ(\wh\bDelta_0^{\otimes 2}) - \bI_0^{-1}\wh\bQ(\wh\bDelta_0^{\otimes 2})}_2 + \lrN{\bI_0^{-1}\wh\bQ(\wh\bDelta_0^{\otimes 2}) - \bI_0^{-1}\bQ(\bDelta_0^{\otimes 2})}_2 \nonumber \\
    = & \Op(1) \lrn{\bDelta^A}_2^2 + \Op(1)\lrn{\bDelta^A}_2\lrn{\bDelta_0}_2 + \Op(n^{-1/2})\lrn{\bDelta_0}_2^2.
\end{align}

\subsection*{Estimation of $\bQ_{11}^{-1}\bQ_{12}(\bDelta_0^{\otimes 2})$}
Recall that $\bl_i$ denotes $\bl_i^{(l)}$ and $\ol\bl$ denotes $\ol\bl^{(l)}$. By definition 
\begin{align*} 
    \wh\bQ_{11} & = n^{-1}\sum \lrm{\bl_i(\btheta^A) - \ol\bl(\btheta^A)}\lrm{\bl_i(\btheta^A) - \ol\bl(\btheta^A)}^\T \\
        & = n^{-1}\sum \lrm{\bl_i(\btheta^A) - \bl_i + \bl_i}\lrm{\bl_i(\btheta^A) - \bl_i + \bl_i}^\T - \ol\bl(\btheta^A)\ol\bl(\btheta^A)^\T \\
        & = n^{-1}\sum \bl_i\bl_i^\T + n^{-1}\sum \lrm{\bl_i(\btheta^A) - \bl_i}\bl_i^\T + n^{-1}\sum \bl_i\lrm{\bl_i(\btheta^A) - \bl_i}^\T + \\
        & \blkeq n^{-1}\sum \lrm{\bl_i(\btheta^A) - \bl_i}\lrm{\bl_i(\btheta^A) - \bl_i}^\T - \ol\bl(\btheta^A)\ol\bl(\btheta^A)^\T.
\end{align*}
By Lemma \ref{Hlm:EZK}, we have
\begin{align*}
    & \E\lrI{\lrN{n^{-1}\sum \bl_i\bl_i^\T - \bQ_{11}}_2^2} = O(n^{-1}) \\
    & \lrN{n^{-1}\sum \bl_i\bl_i^\T - \bQ_{11}}_2 = \Op(n^{-1/2}).
\end{align*}
Let $\ol\bH_i = \int_0^1 \nabla^2 L(\btheta_0 + t\bDelta^A;\bX_i^{(l)})\diff t$, then $\bl_i(\btheta^A) = \bl_i(\btheta_0) + \ol\bH_i\bDelta^A$ and 
\begin{align*} 
    & \lrn{\bl_i(\btheta^A) - \bl_i(\btheta_0)}_2 = \lrn{(\ol\bH_i - \bH_i)\bDelta^A + \bH_i\bDelta^A}_2 \leq h(\bX_i)\lrn{\bDelta^A}_2^2 + \lrN{\bH_i}_2\lrn{\bDelta^A}_2 \\
    & \lrn{\ol\bl(\btheta^A) - \ol\bl(\btheta_0)}_2 \leq n^{-1}\sum \lrn{\bl_i(\btheta^A) - \bl_i(\btheta_0)}_2 = \Op(1)\lrn{\bDelta^A}_2^2 + \Op(1) \lrn{\bDelta^A}_2.
\end{align*}
Therefore,
\begin{align*}
        & \lrN{n^{-1}\sum \lrm{\bl_i(\btheta^A) - \bl_i}\bl_i^\T}_2 \\
    \leq & n^{-1}\sum h(\bX_i)\lrn{\bl_i}_2\lrn{\bDelta^A}_2^2 + n^{-1}\sum \lrn{\bl_i}_2\lrn{\bH_i}_2\lrn{\bDelta^A}_2 \\
    = & \Op(1)\lrn{\bDelta^A}_2^2 + \Op(1) \lrn{\bDelta^A}_2 \\
        & \lrN{n^{-1}\sum \lrm{\bl_i(\btheta^A) - \bl_i}\lrm{\bl_i(\btheta^A) - \bl_i}^\T}_2 \\
    \leq & n^{-1}\sum \lrn{\bl_i(\btheta^A) - \bl_i(\btheta_0)}_2^2 \\
    \leq & n^{-1}\sum 2h(\bX_i)^2\lrn{\bDelta^A}_2^4 + n^{-1}\sum 2\lrN{\bH_i}_2^2\lrn{\bDelta^A}_2^2 \\
    = & \Op(1)\lrn{\bDelta^A}_2^4 + \Op(1)\lrn{\bDelta^A}_2^2 \\
        & \lrn{\ol\bl(\btheta^A)\ol\bl(\btheta^A)^\T}_2 \\
    = & \lrn{\ol\bl(\btheta^A) - \ol\bl + \ol\bl}_2^2 \\
    = & \Op(1)\lrn{\bDelta^A}_2^4 + \Op(1) \lrn{\bDelta^A}_2^2 + \Op(n^{-1}).
\end{align*}
Putting them together, we have
\begin{equation}
    \lrn{\wh\bQ_{11} - \bQ_{11}}_2 = \Op(1)\lrn{\bDelta^A}_2^4 + \Op(1) \lrn{\bDelta^A}_2^2 + \Op(1) \lrn{\bDelta^A}_2 + \Op(n^{-1/2}). \label{Heq:E11Hat}
\end{equation}
Consider $\bQ_{12}$. 
By Lemma \ref{Hlm:lcH}, $\lrN{n^{-1}\sum \bl_i \otimes (\bH_i - \bI_0) - \bQ_{12}} = \Op(n^{-1/2})$. By definition, 
\begin{align*}
\wh\bQ_{12} & = n^{-1}\sum \lrm{\bl_i(\btheta^A) - \bl_i(\btheta_0) + \bl_i(\btheta_0)} \otimes \lrm{\wh\bH_i - \wh\bH} \\
    & = n^{-1}\sum \lrm{\bl_i(\btheta^A) - \bl_i(\btheta_0)} \otimes \lrm{\wh\bH_i - \bI_0 + \bI_0 - \wh\bH} + \\
    & \blkeq n^{-1}\sum \bl_i(\btheta_0) \otimes \lrm{\wh\bH_i - \bH_i + \bH_i - \bI_0 + \bI_0 -  \wh\bH} \\
    & = n^{-1}\sum \lrm{\bl_i(\btheta^A) - \bl_i(\btheta_0)} \otimes \lrm{\wh\bH_i - \bI_0} +  \bar\bl(\btheta^A) \otimes \lrm{\bI_0 - \wh\bH} + \\
    & \blkeq n^{-1}\sum \bl_i(\btheta_0) \otimes \lrm{\wh\bH_i - \bH_i} + n^{-1}\sum \bl_i(\btheta_0) \otimes \lri{\bH_i - \bI_0}.
\end{align*}
Recall that 
\begin{align*}
    & \lrn{\bl_i(\btheta^A) - \bl_i(\btheta_0)}_2 = \lrn{(\ol\bH_i - \bH_i)\bDelta^A + \bH_i\bDelta^A}_2 \leq h(\bX_i)\lrn{\bDelta^A}_2^2 + \lrN{\bH_i}_2\lrn{\bDelta^A}_2 \\
    & \lrn{\wh\bH_i - \bI_0}_2 \leq \lrn{\wh\bH_i - \bH_i}_2 + \lrn{\bH_i - \bI_0}_2 \leq h(\bX_i)\lrn{\bDelta^A}_2 + \lrn{\bH_i - \bI_0}_2. 
\end{align*}
Then,
\begin{align*}
    & \lrN{\bar\bl(\btheta^A) \otimes \lrm{\bI_0 - \wh\bH}} \leq \lrN{\bar\bl(\btheta^A) - \bar\bl(\btheta_0) + \bar\bl(\btheta_0)}_2 \lrn{\bI_0 - \bH + \bH - \wh\bH}_2 \\
        \leq & \lrM{\lrN{\bar\bl(\btheta_0)}_2 + n^{-1}\sum h\lri{\bX_i} \lrn{\bDelta^A}_2^2 + n^{-1}\sum \lrn{\bH_i}_2\lrn{\bDelta^A}_2}\lrM{\lrN{\bI_0 - \bH}_2 + n^{-1}\sum h(\bX_i)\lrn{\bDelta^A}_2}\\
        = & \lrM{\Op(n^{-1/2}) + \Op(1)\lrn{\bDelta^A}_2^2 + \Op(1)\lrn{\bDelta^A}_2}\lrM{\Op(n^{-1/2}) + \Op(1)\lrn{\bDelta^A}_2} \\
        = & \Op(1)\lrn{\bDelta^A}_2^3 + \Op(1)\lrn{\bDelta^A}_2^2 + \Op(n^{-1/2})\lrn{\bDelta^A}_2 + \Op(n^{-1})\\
    & \lrN{n^{-1}\sum \bl_i(\btheta_0) \otimes \lrm{\wh\bH_i - \bH_i}}_2 \leq n^{-1} \sum \lrN{\bl_i(\btheta_0)}_2 h(\bX_i)\lrN{\bDelta^A}_2 = \Op(1)\lrn{\bDelta^A}_2\\
    & \lrN{n^{-1}\sum \lrm{\bl_i(\btheta^A) - \bl_i(\btheta_0)} \otimes \lrm{\wh\bH_i - \bI_0} }_2 \leq n^{-1}\sum \lrN{\bl_i(\btheta^A) - \bl_i(\btheta_0)}_2\lrN{\wh\bH_i - \bI_0}_2 \\
        \leq & n^{-1}\sum \lrM{h(\bX_i)\lrn{\bDelta^A}_2^2 + \lrN{\bH_i}_2\lrn{\bDelta^A}_2}\lrM{h(\bX_i)\lrn{\bDelta^A}_2 + \lrn{\bH_i - \bI_0}_2} \\
        = & n^{-1}\sum h(\bX_i)^2\lrn{\bDelta^A}_2^3 + n^{-1}\sum h(\bX_i)\lri{\lrn{\bH_i - \bI_0}_2 + \lrN{\bH_i}_2}\lrn{\bDelta^A}_2^2 + \\
            & n^{-1}\sum \lrN{\bH_i}_2\lrn{\bH_i - \bI_0}_2\lrn{\bDelta^A}_2 \\
        = & \Op(1) \lrn{\bDelta^A}_2^3 + \Op(1)\lrn{\bDelta^A}_2^2 + \Op(1)\lrn{\bDelta^A}_2
\end{align*}
Therefore,
\begin{align*}
    \lrN{\wh\bQ_{12} - \bQ_{12}}_2 & \leq \lrN{\wh\bQ_{12} - n^{-1}\sum \bl_i \otimes \lri{\bH_i - \bI_0}}_2 + \lrN{n^{-1}\sum \bl_i \otimes \lri{\bH_i - \bI_0} - \bQ_{12}}_2 \\
        & = \Op(1) \lrn{\bDelta^A}_2^3 + \Op(1)\lrn{\bDelta^A}_2^2 + \Op(1)\lrn{\bDelta^A}_2 + \Op(n^{-1/2})
\end{align*}
Let $\bE_{12, j} = \E\lrO{l_j(\btheta_0;\bX_i)\lrM{\bH(\btheta_0;\bX_i) - \bI_0}}$ and $\bQ_{12} = (\bE_{12, 1}, \dots, \bE_{12, d})^\T$. By definition, 
\begin{align*}
    (\bI_{d\times d}\otimes \bu^\T) \times\bQ_{12} \times \bu & = \begin{pmatrix}
        \bu^\T \bE_{12, 1}  \bu \\ \vdots \\
        \bu^\T \bE_{12, d}  \bu
    \end{pmatrix}, \forall \bu \in \mathbb{R}^d.
\end{align*}
Then, $\wh\bQ_{12}(\wh\bDelta_0^{\otimes 2}) - \bQ_{12}(\bDelta_0^{\otimes 2}) = (\wh\bQ_{12} - \bQ_{12})(\wh\bDelta_0^{\otimes 2}) + \bQ_{12}(\wh\bDelta_0^{\otimes 2}) - \bQ_{12}(\bDelta_0^{\otimes 2})$, that is
\begin{align*}
     & (\bI_{d\times d}\otimes \wh\bDelta_0^\T) \times \wh\bQ_{12} \times \wh\bDelta_0 - (\bI_{d\times d}\otimes \bDelta_0^\T) \times \bQ_{12} \times \bDelta_0 \\
    = & (\bI_{d\times d}\otimes \wh\bDelta_0^\T) \times (\wh\bQ_{12} - \bQ_{12}) \times \wh\bDelta_0 +  (\bI_{d\times d}\otimes \wh\bDelta_0^\T) \times \bQ_{12} \times \wh\bDelta_0 - \\
     & (\bI_{d\times d}\otimes \bDelta_0^\T) \times \bQ_{12} \times \bDelta_0.
\end{align*}
Note that 
\begin{align*}
        & \lrn{(\wh\bQ_{12} - \bQ_{12})(\wh\bDelta_0^{\otimes 2})}_2 
    \leq \lrn{\wh\bDelta_0}_2^2\lrn{\wh\bQ_{12} - \bQ_{12}}_2 \\
    = &  \Op(1) \lrn{\bDelta^A}_2^3\lrn{\wh\bDelta_0}_2^2 + \Op(1)\lrn{\bDelta^A}_2^2\lrn{\wh\bDelta_0}_2^2 + \Op(1)\lrn{\bDelta^A}_2\lrn{\wh\bDelta_0}_2^2 + \Op(n^{-1/2})\lrn{\wh\bDelta_0}_2^2 \\
        & \lrn{\bQ_{12}(\wh\bDelta_0^{\otimes 2}) - \bQ_{12}(\bDelta_0^{\otimes 2})}_2 \\
    = & \lrN{\begin{Bmatrix}
        (\bDelta_0 + \wh\bDelta_0)^\T \bE_{12, 1}  (\bDelta_0 - \wh\bDelta_0) \\ \vdots \\
        (\bDelta_0 + \wh\bDelta_0)^\T \bE_{12, d}  (\bDelta_0 - \wh\bDelta_0)
    \end{Bmatrix}}_2 \\
    \leq & O(1)\lrn{\bDelta_0 + \wh\bDelta_0}_2\lrn{\bDelta_0 - \wh\bDelta_0}_2.
\end{align*}
Applying the conditions that $\wh\bDelta_0 = \wt\btheta_0 - \btheta^A = \bDelta_0 - \bDelta^A$, $\lrn{\bDelta^A}_2 = \Op(n^{-1/2})$, and $\lrn{\bDelta_0}_2 = \Op(n^{-1/2})$,
\begin{align*}
    & \lrn{\wh\bQ_{12}(\wh\bDelta_0^{\otimes 2}) - \bQ_{12}(\bDelta_0^{\otimes 2})}_2 \\
    = & \Op(1) \lrn{\bDelta^A}_2^3\lrn{\wh\bDelta_0}_2^2 + \Op(1)\lrn{\bDelta^A}_2^2\lrn{\wh\bDelta_0}_2^2 + \Op(1)\lrn{\bDelta^A}_2\lrn{\wh\bDelta_0}_2^2 + \Op(n^{-1/2})\lrn{\wh\bDelta_0}_2^2 +  \\
    & O(1)\lrn{\bDelta_0 - \wh\bDelta_0}_2\lrn{\bDelta_0 + \wh\bDelta_0}_2  \\
    = & \Op(n^{-1/2})\lrn{\wh\bDelta_0}_2^2 + O(1)\lrn{\bDelta^A}_2\lrn{2\bDelta_0 - \bDelta^A}_2 \\
    = & \Op(1)\lrn{\bDelta^A}_2^2 + O(1)\lrn{\bDelta^A}_2\lrn{\bDelta_0}_2 + \Op(n^{-1/2})\lrn{\bDelta_0}_2^2.
\end{align*}
Also, in this case, \eqref{Heq:E11Hat} indicates $\lrn{\wh\bQ_{11} - \bQ_{11}}_2 = \Op(n^{-1/2})$ and $\lrn{\wh\bQ_{11}^{-1} - \bQ_{11}^{-1}}_2 \leq \lrn{\wh\bQ_{11}^{-1}}_2\lrn{\wh\bQ_{11} - \bQ_{11}}_2\lrn{\bQ_{11}^{-1}}_2 = \Op(n^{-1/2})$. Then
\begin{align*}
    & \lrn{\wh\bQ_{11}^{-1}\wh\bQ_{12}(\wh\bDelta_0^{\otimes 2}) - \bQ_{11}^{-1}\bQ_{12}(\bDelta_0^{\otimes 2})}_2 \\
    \leq & \lrn{\wh\bQ_{11}^{-1}\wh\bQ_{12}(\wh\bDelta_0^{\otimes 2}) - \bQ_{11}^{-1}\wh\bQ_{12}(\wh\bDelta_0^{\otimes 2})}_2 + \lrn{\bQ_{11}^{-1}\wh\bQ_{12}(\wh\bDelta_0^{\otimes 2}) - \bQ_{11}^{-1}\bQ_{12}(\bDelta_0^{\otimes 2})}_2 \\
    \leq & \Op(1)\lrn{\bDelta^A}_2^2 + O(1)\lrn{\bDelta^A}_2\lrn{\bDelta_0}_2 + \Op(n^{-1/2})\lrn{\bDelta_0}_2^2.
\end{align*}

\section*{Proof of Theorem \ref{Hthm:OSEst}}
In either case, Lemma \ref{Hlm:dbar} indicates $\lrn{\bDelta^A}_2 = \Op\lrm{(nm)^{-1/2} + n^{-1}}$. As a result, Proposition \ref{Hthm:EEst} indicates
    \begin{align*}
        & \lrN{\wh\bH^{-1}\wh\bQ(\wh\bDelta_0^{\otimes 2}) - \bI_0^{-1}\bQ(\bDelta_0^{\otimes 2})}_2 = \Op(n^{-1}m^{-1/2} + n^{-3/2}) \\
        & \lrN{\wh\bQ_{11}^{-1}\wh\bQ_{12}(\wh\bDelta_0^{\otimes 2}) - \bQ_{11}^{-1}\bQ_{12}(\bDelta_0^{\otimes 2})}_2 = \Op(n^{-1}m^{-1/2} + n^{-3/2}). 
    \end{align*}
Theorem \ref{Hthm:DistEstOra} indicates
    \begin{equation*}
    \wt\btheta_{1} - \btheta^* = \bQ_{11}^{-1}\bQ_{12}(\bDelta_0^{\otimes 2}) -2^{-1}\bI_0^{-1} \bQ(\bDelta_0^{\otimes 2}) + \bR^*_1,
\end{equation*}
with $\lrn{\bR^*_1}_2 = \Op\lrm{(m^{-1} + n^{-1})(\lrn{\bDelta_0}_2 + \lrn{\bDelta^*}_2) + m^{-1}n^{-1}}$. Also, $\lrn{\bDelta_0}_2 = \Op(n^{-1/2})$ and $\lrn{\bDelta^*} =  \Op\lrm{(nm)^{-1/2} + n^{-1}}$. Then,
    \begin{align*}
        \lrn{\wt\btheta_{os} - \btheta^*} & \leq \lrn{\bR^*_1}_2 + \lrN{\wh\bH^{-1}\wh\bQ(\wh\bDelta_0^{\otimes 2}) - \bI_0^{-1}\bQ(\bDelta_0^{\otimes 2})}_2 + \lrN{\wh\bQ_{11}^{-1}\wh\bQ_{12}(\wh\bDelta_0^{\otimes 2}) - \bQ_{11}^{-1}\bQ_{12}(\bDelta_0^{\otimes 2})}_2 \\
            & = \Op(n^{-3/2} + m^{-1}n^{-1/2}).
    \end{align*}
    
    The second statements can be verified using the similar logic. 
    
\section*{Technical Lemmas}

\subsection*{Lemmas of multi-node}
\begin{lemma} \label{Hlm:dbar} Under Assumptions \ref{as:T0Def}-\ref{Has:LH}, consider the mean of the M-estimators from $m$ centers:
$$\lrn{\bar\btheta - \btheta_0}_2 = \Op\lrm{(nm)^{-1/2}} + O(n^{-1}).$$
\end{lemma}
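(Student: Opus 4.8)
The plan is to push each center's first-order optimality condition through a Bahadur-type expansion, then average over the $m$ centers and track which pieces decay with $m$ and which do not.

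First I would invoke the center-wise rate from \cite{Zhang2013}: under Assumptions \ref{as:T0Def}--\ref{Has:LH} every $M$-estimator is $\sqrt{n}$-consistent, $\lrn{\bd_i}_2 = \Op(n^{-1/2})$ with $\E\lrn{\bd_i}_2^2 = O(n^{-1})$ uniformly in $i$, so that for $n$ large enough each $\wh\btheta_i \in U(\rho)$ with probability tending to one and the integrated Hessian $\ol\bH_i = n^{-1}\sum_{j \in \mathcal{M}_i}\int_0^1 \nabla^2 L(\btheta_0 + t\bd_i;\bX_{ij})\diff t$ is invertible with $\lrn{\ol\bH_i^{-1}}_2 = \Op(1)$. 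The first-order condition $\bl_i(\wh\btheta_i) = \bzero$ and the fundamental theorem of calculus give the exact identity \eqref{Heq:MEstL}, namely $\bl_i(\btheta_0) = -\ol\bH_i\bd_i$, and hence $\bd_i = -\bI_0^{-1}\bl_i(\btheta_0) + \bI_0^{-1}(\bI_0 - \ol\bH_i)\bd_i$. Averaging over $i$,
\[
\ol\btheta - \btheta_0 = -\bI_0^{-1}\lrI{m^{-1}\sum_{i=1}^m \bl_i(\btheta_0)} + m^{-1}\sum_{i=1}^m \bI_0^{-1}(\bI_0 - \ol\bH_i)\bd_i .
\]

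Next I would bound the two terms separately. For the first, $m^{-1}\sum_i \bl_i(\btheta_0) = N^{-1}\sum_{j \in \mathcal{G}}\nabla L(\btheta_0;\bX_j)$ with $N = mn$ is a sample mean of i.i.d.\ vectors; because $\btheta_0$ minimizes the population risk, $\E_{\mathcal{P}}[\nabla L(\btheta_0;\bX)] = \bzero$, and Assumption \ref{Has:LH} supplies a finite second moment, so Markov's inequality applied to the squared norm yields $\Op(N^{-1/2}) = \Op((nm)^{-1/2})$; multiplying by $\bI_0^{-1}$, bounded by Assumption \ref{Has:LH}, preserves the rate. For the remainder term I would write $\bI_0 - \ol\bH_i = (\bI_0 - \bH_i) + (\bH_i - \ol\bH_i)$ with $\bH_i = n^{-1}\sum_{j \in \mathcal{M}_i}\nabla^2 L(\btheta_0;\bX_{ij})$; the first piece is $\Op(n^{-1/2})$ (finite sixth moment of $\nabla^2 L(\btheta_0;\bX)$) and the second is at most $n^{-1}\sum_{j\in\mathcal{M}_i}h(\bX_{ij})\lrn{\bd_i}_2 = \Op(1)\,\Op(n^{-1/2})$ by the Lipschitz bound in Assumption \ref{Has:LH}, giving the uniform bound $\E\lrn{\bI_0 - \ol\bH_i}_2^2 = O(n^{-1})$. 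Then the triangle and Cauchy--Schwarz inequalities give
\begin{align*}
\lrN{m^{-1}\sum_{i=1}^m \bI_0^{-1}(\bI_0 - \ol\bH_i)\bd_i}_2
& \le \lrn{\bI_0^{-1}}_2 \lrI{m^{-1}\sum_{i=1}^m \lrn{\bI_0 - \ol\bH_i}_2^2}^{1/2}\lrI{m^{-1}\sum_{i=1}^m\lrn{\bd_i}_2^2}^{1/2} \\
& = \Op(n^{-1/2})\cdot\Op(n^{-1/2}) = \Op(n^{-1}),
\end{align*}
where each averaged sum is controlled by Markov's inequality using the uniform second-moment bounds. Combining the two parts gives $\lrn{\ol\btheta - \btheta_0}_2 = \Op((nm)^{-1/2}) + \Op(n^{-1})$.

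The main obstacle here is bookkeeping rather than any single delicate estimate: one must ensure the second-moment bounds on $\bd_i$ and on $\ol\bH_i - \bI_0$ hold uniformly in both $n$ and the center index $i$, so that the averages over the $m$ centers stay under control, and one must handle the mild dependence of $\ol\bH_i$ on $\bd_i$ (dealt with by peeling off the Hessian at $\btheta_0$ and applying the Lipschitz condition). If one wanted the sharper form with a genuinely deterministic $O(n^{-1})$ bias, one could further split $\bI_0^{-1}(\bI_0 - \ol\bH_i)\bd_i$ into its expectation, of order $O(n^{-1})$, and a mean-zero fluctuation whose average over $m$ centers is $\Op(m^{-1/2}n^{-1}) = O((nm)^{-1/2})$; that refinement is not needed for the stated bound.
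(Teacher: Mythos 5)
Your argument is sound and reaches the stated rate, but it travels a different road than the paper. The paper treats $\lrm{\bd_i}_{i=1}^m$ simply as i.i.d.\ vectors and does a bias--variance split: by Lemma \ref{Hlm:EDD}, $\lrn{\E(\bd_1)}_2 = O(n^{-1})$ and $\E(\lrn{\bd_1}_2^2) = O(n^{-1})$, so $\E\lrm{\lrn{\bar\bd - \E(\bd_1)}_2^2} = m^{-1}\E(\lrn{\bd_1 - \E(\bd_1)}_2^2) = O\lrm{(nm)^{-1}}$, giving $\Op\lrm{(nm)^{-1/2}}$ fluctuation plus a genuinely deterministic $O(n^{-1})$ bias. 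You instead push the Bahadur expansion $\bd_i = -\bI_0^{-1}\bl_i(\btheta_0) + \bI_0^{-1}(\bI_0 - \ol\bH_i)\bd_i$ through the average, identify the linear part with the full-sample score (hence $\Op\lrm{(nm)^{-1/2}}$), and bound the averaged remainder crudely by $\Op(n^{-1})$ without exploiting cancellation across centers. Your route is more transparent about where the two rates come from and matches the mechanics the paper uses elsewhere (e.g.\ Lemma \ref{Hlm:NDD}); the paper's route is shorter and delivers the slightly sharper deterministic $O(n^{-1})$ directly, a refinement you correctly note can be recovered by centering the remainder.

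One step needs repair: the claim $\E\lrn{\bI_0 - \ol\bH_i}_2^2 = O(n^{-1})$ is not justified by the assumptions as stated, because the Lipschitz bound $\lrn{\ol\bH_i - \bH_i}_2 \leq \lrm{n^{-1}\sum_j h(\bX_{ij})}\lrn{\bd_i}_2$ from Assumption \ref{Has:LH} is only available when $\wh\btheta_i \in U(\rho)$, and outside $U(\rho)$ there is no moment control on $\nabla^2 L(\btheta;\bX)$ at all; a "with probability tending to one" localization does not rescue an unconditional second-moment bound (and a union bound over $m$ centers would also fail when $m$ grows fast). The clean fix, which is exactly what the paper's Lemmas \ref{Hlm:HHD} and \ref{Hlm:EDD} do, is to bound the product $(\bI_0 - \ol\bH_i)\bd_i$ rather than the two factors separately: split on the event $\mathcal{E}_0$ of Lemma \ref{Hlm:E0}, and on $\mathcal{E}_0^c$ use the identity $\ol\bH_i\bd_i = -\bl_i(\btheta_0)$ to replace the uncontrolled $\ol\bH_i\bd_i$ by the score, yielding $\E\lrn{\bd_i - \bd_{0,i}}_2 = \E\lrn{\bI_0^{-1}(\bI_0 - \ol\bH_i)\bd_i}_2 = O(n^{-1})$; Markov applied to $m^{-1}\sum_i\lrn{\bd_i - \bd_{0,i}}_2$ then gives your $\Op(n^{-1})$ remainder without Cauchy--Schwarz. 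With that substitution your proof is complete.
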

\begin{proof} Let $\bd_i = \wh\btheta_i - \btheta_0$. 
By Lemma \ref{Hlm:EDD}, $\lrn{\E(\bd_i)}_2 = O(n^{-1})$ and $\E(\lrn{\bd_i}_2^{2}) = O(n^{-1})$.
\begin{align*}
    \E\lrm{\lrn{\bar\bd - \E(\bd_1)}_2^2} & = m^{-2} \sum_i\sum_{i'} \E\lro{\lrm{\bd_i - \E(\bd_1)}^\T\lrm{\bd_{i'} - \E(\bd_1)}} = m^{-1} \E(\lrn{\bd_1 - \E(\bd_1)}_2^2) \\
        & = m^{-1}\E(\lrn{\bd_1}_2^2) +  m^{-1}\lrn{\E(\bd_1)}_2^2 = O\lrm{(nm)^{-1}} \\
    \lrn{\bar\bd}_2 & = \Op\lrm{(nm)^{-1/2}} + O(n^{-1}).
\end{align*}
\end{proof}

\begin{lemma} \label{Hlm:DDD} Under Assumptions \ref{as:T0Def}-\ref{Has:LH}
$$    \lrN{m^{-1}\sum \bd_i\lrm{\bl_i(\btheta_0) + \bI_0\bd_i}^\T}_2 = \Op(m^{-1/2}n^{-3/2} + n^{-2}).
$$
\end{lemma}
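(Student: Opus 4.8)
The plan is to exploit the asymptotic linearity between the $M$-estimator $\wh\btheta_i$ and the gradient $\bl_i(\btheta_0)$ (cf. \eqref{Heq:MEstL}). Write $\bd_{0,i} = -\bI_0^{-1}\bl_i(\btheta_0)$ for the leading (Bahadur) term of $\bd_i = \wh\btheta_i - \btheta_0$, so that $\bI_0\bd_{0,i} = -\bl_i(\btheta_0)$ and hence
\begin{equation*}
\bl_i(\btheta_0) + \bI_0\bd_i = \bI_0\lri{\bd_i - \bd_{0,i}}.
\end{equation*}
Substituting this identity and pulling $\bI_0$ out on the right shows
\begin{equation*}
m^{-1}\sum \bd_i\lrm{\bl_i(\btheta_0) + \bI_0\bd_i}^\T = \lrM{m^{-1}\sum \bd_i\lri{\bd_i - \bd_{0,i}}^\T}\bI_0,
\end{equation*}
so, since $\lrn{\bI_0}_2 \le \lambda_+$, it is enough to prove $\lrN{m^{-1}\sum \bd_i\lri{\bd_i - \bd_{0,i}}^\T}_2 = \Op(m^{-1/2}n^{-3/2} + n^{-2})$.

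Next I would split $\bd_i = \bd_{0,i} + \lri{\bd_i - \bd_{0,i}}$ inside the remaining product, producing a ``remainder $\times$ remainder'' sum $m^{-1}\sum \lri{\bd_i - \bd_{0,i}}\lri{\bd_i - \bd_{0,i}}^\T$ and a ``leading $\times$ remainder'' sum $m^{-1}\sum \bd_{0,i}\lri{\bd_i - \bd_{0,i}}^\T$. The first is immediate: its operator norm is at most $m^{-1}\sum \lrn{\bd_i - \bd_{0,i}}_2^2$, whose expectation equals $\E\lrn{\bd_1 - \bd_{0,1}}_2^2 = O(n^{-2})$ by Lemma \ref{Hlm:EDD}, hence it is $\Op(n^{-2})$.

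The substantive part is the ``leading $\times$ remainder'' sum. Here the crude per-term bound $\lrn{\bd_{0,i}}_2\lrn{\bd_i - \bd_{0,i}}_2 = \Op(n^{-1/2})\cdot\Op(n^{-1})$ only gives $\Op(n^{-3/2})$, and the needed $m^{-1/2}$ gain has to come from the fact that $\bl_i(\btheta_0)$ is a centered sample mean. This is exactly what Lemma \ref{Hlm:LDD} supplies: $\bd_{0,i}\lri{\bd_i - \bd_{0,i}}^\T = -\bI_0^{-1}\bl_i(\btheta_0)\lri{\bd_i - \bd_{0,i}}^\T = -\bI_0^{-1}\lri{\bW_i + \bW_i'}$, with $\bW_i$ the (nearly) mean-zero part, $\E\lrn{\bW_i}_2^2 = O(n^{-3})$ and $\lrn{\E\bW_i}_2 = O(n^{-2})$, and $\bW_i'$ the residual bias part, $\E\lrn{\bW_i'}_2 = O(n^{-2})$. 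By Lemma \ref{Hlm:R1MC}, $\E\lrN{m^{-1}\sum \bW_i}_2^2 = O(m^{-1}n^{-3} + n^{-4})$, so $\lrN{m^{-1}\sum \bW_i}_2 = \Op(m^{-1/2}n^{-3/2} + n^{-2})$; and Markov's inequality applied to $m^{-1}\sum\lrn{\bW_i'}_2$ gives $\lrN{m^{-1}\sum \bW_i'}_2 = \Op(n^{-2})$. Combining the three contributions yields the stated rate.

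The step I expect to be the real obstacle is producing the $\bW_i + \bW_i'$ decomposition with the correct moment orders (the content of Lemma \ref{Hlm:LDD}), because $\bd_i - \bd_{0,i}$ is correlated with $\bl_i(\btheta_0)$, so $m^{-1}\sum \bd_{0,i}\lri{\bd_i - \bd_{0,i}}^\T$ cannot be treated as a sum of independent mean-zero matrices; one must carefully peel off the mean-zero component $\bW_i$ (whose average concentrates at rate $m^{-1/2}$) from the bias component $\bW_i'$ (whose average is not small in $m$ but is already $O(n^{-2})$ in norm). Everything else is triangle-inequality bookkeeping.
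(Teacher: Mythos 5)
Your proposal is correct and follows essentially the same route as the paper: it uses the identity $\bl_i(\btheta_0) + \bI_0\bd_i = \bI_0(\bd_i - \bd_{0,i})$, splits $\bd_i = \bd_{0,i} + (\bd_i - \bd_{0,i})$, bounds the remainder-squared term via Lemma \ref{Hlm:EDD}, and handles the cross term through the $\bW_i + \bW_i'$ decomposition of Lemma \ref{Hlm:LDD} combined with Lemma \ref{Hlm:R1MC}. You also correctly identified that the only nontrivial ingredient is Lemma \ref{Hlm:LDD}, since $\bd_i - \bd_{0,i}$ is correlated with $\bl_i(\btheta_0)$; the rest is the same triangle-inequality bookkeeping as in the paper.
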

\begin{proof}
We have the decomposition:
    \begin{equation*}
        \bd_i\lrm{\bl_i(\btheta_0) + \bI_0\bd_i}^\T = \bd_{0, i}\lri{\bd_i - \bd_{0, i}}^\T\bI_0 + \lri{\bd_i - \bd_{0, i}}\lri{\bd_i - \bd_{0, i}}^\T\bI_0
    \end{equation*}
    Lemma \ref{Hlm:EDD} indicates $\E\lri{\lrN{\bd_i - \bd_{0, i}}_2^{2}}  = O(n^{-2})$. 
    Under Assumptions \ref{as:T0Def}-\ref{Has:LH}, Lemma \ref{Hlm:LDD} states
    \begin{equation*}
        \bl_i(\btheta_0)(\bd_i - \bd_{0, i})^\T = \bW_{i} + \bW_i' 
    \end{equation*}
    with $\E\lri{\lrn{\bW_i}_2^2} = O(n^{-3})$, $\lrn{\E\lri{\bW_i}}_2 = O(n^{-2})$, and 
        $\E\lri{\lrn{\bW_i'}_2} = O(n^{-2})$.
    Also, by Lemma \ref{Hlm:R1MC}
    \begin{equation*}
        \E\lrI{\lrN{m^{-1}\sum \bW_i}_2^2} = O(m^{-1}n^{-3} + n^{-4}).
    \end{equation*}
    Hence
    \begin{align*}
        \lrN{m^{-1}\sum \bd_i\lrm{\bl_i(\btheta_0) + \bI_0\bd_i}^\T}_2 & \leq \lrN{m^{-1}\sum \bW_i}_2 + \lrN{m^{-1}\sum \bW_i'}_2 + \\
            & \blkeq \lrN{m^{-1}\sum\lri{\bd_i - \bd_{0, i}}\lri{\bd_i - \bd_{0, i}}^\T\bI_0}_2 \\
            & = \Op(m^{-1/2}n^{-3/2} + n^{-2}).
    \end{align*}
\end{proof}

\begin{lemma} \label{Hlm:NDD}
    Under Assumptions \ref{as:T0Def}-\ref{Has:LH}, $$\lrN{m^{-1}\sum n \wh\bd_i\wh\bd_i^\T - \bI_0^{-1}\bQ_{11}\bI_0^{-1}}_2 = \Op(n^{-1/2} + m^{-1/2}).$$
\end{lemma}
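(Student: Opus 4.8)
The plan is to exploit the pseudo-linearity $\bl_i(\btheta_0)\approx-\bI_0\bd_i$ that underlies the whole construction. Writing $\bd_{0,i}=-\bI_0^{-1}\bl_i(\btheta_0)$, one has $n\bd_{0,i}\bd_{0,i}^\T=\bI_0^{-1}\lrm{n\bl_i(\btheta_0)\bl_i(\btheta_0)^\T}\bI_0^{-1}$, so averaging over the $m$ centres should reproduce $\bI_0^{-1}\bQ_{11}\bI_0^{-1}$ up to a $\Op(m^{-1/2})$ fluctuation, while the gap between $\bd_i$ and $\bd_{0,i}$ should only cost lower-order terms. First I would strip off the centring: since $\wh\bd_i=\bd_i-\bar\bd$ with $\bar\bd=m^{-1}\sum\bd_i=\bar\btheta-\btheta_0$, the identity $m^{-1}\sum n\wh\bd_i\wh\bd_i^\T=m^{-1}\sum n\bd_i\bd_i^\T-n\bar\bd\bar\bd^\T$ holds, and Lemma \ref{Hlm:dbar} gives $\lrn{n\bar\bd\bar\bd^\T}_2=\Op(m^{-1}+n^{-1})$, which is within the claimed remainder. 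So it remains to control $m^{-1}\sum n\bd_i\bd_i^\T$.

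Next I would expand
\[
n\bd_i\bd_i^\T=n\bd_{0,i}\bd_{0,i}^\T+n\bd_{0,i}(\bd_i-\bd_{0,i})^\T+n(\bd_i-\bd_{0,i})\bd_{0,i}^\T+n(\bd_i-\bd_{0,i})(\bd_i-\bd_{0,i})^\T.
\]
For the leading piece, $\E\lrm{\nabla L(\btheta_0;\bX)}=\bzero$ (as $\btheta_0$ is an interior minimiser of the population risk), so $\E\lrm{n\bl_i(\btheta_0)\bl_i(\btheta_0)^\T}=\bQ_{11}$ exactly; a Rosenthal/Marcinkiewicz--Zygmund inequality under Assumption \ref{Has:LH} gives $\E\lrn{\bl_i(\btheta_0)}_2^4=O(n^{-2})$, hence $\E\lrn{n\bl_i(\btheta_0)\bl_i(\btheta_0)^\T}_2^2=O(1)$ uniformly in $n$. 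Because the $n\bl_i(\btheta_0)\bl_i(\btheta_0)^\T$ are i.i.d.\ across $i$, $\E\lrn{m^{-1}\sum n\bl_i(\btheta_0)\bl_i(\btheta_0)^\T-\bQ_{11}}_2^2=O(m^{-1})$, and sandwiching by $\bI_0^{-1}$ yields $m^{-1}\sum n\bd_{0,i}\bd_{0,i}^\T=\bI_0^{-1}\bQ_{11}\bI_0^{-1}+\Op(m^{-1/2})$.

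For the three remaining pieces I would invoke Lemma \ref{Hlm:EDD}, which gives $\E\lrn{\bd_i-\bd_{0,i}}_2^2=O(n^{-2})$, together with the standard bound $\E\lrn{\bd_{0,i}}_2^2=O(n^{-1})$. Markov's inequality then gives $\lrn{m^{-1}\sum n(\bd_i-\bd_{0,i})(\bd_i-\bd_{0,i})^\T}_2\le m^{-1}\sum n\lrn{\bd_i-\bd_{0,i}}_2^2=\Op(n^{-1})$, and Cauchy--Schwarz gives
\[
\lrn{m^{-1}\sum n\bd_{0,i}(\bd_i-\bd_{0,i})^\T}_2\le\lrI{m^{-1}\sum n\lrn{\bd_{0,i}}_2^2}^{1/2}\lrI{m^{-1}\sum n\lrn{\bd_i-\bd_{0,i}}_2^2}^{1/2}=\Op(n^{-1/2}),
\]
with the same bound for the transpose term. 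Adding up, $m^{-1}\sum n\bd_i\bd_i^\T-\bI_0^{-1}\bQ_{11}\bI_0^{-1}=\Op(m^{-1/2}+n^{-1/2})$, which together with the centring bound completes the proof.

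The step I expect to be the main obstacle is the uniform-in-$n$ control of the leading term, namely verifying that $\sqrt{n}\,\bl_i(\btheta_0)$ has a bounded fourth moment so that $n\bl_i(\btheta_0)\bl_i(\btheta_0)^\T$ has bounded second moment; this is exactly where the sixth-moment condition on $\nabla L(\btheta_0;\bX)$ in Assumption \ref{Has:LH} enters, through a moment inequality for sums of i.i.d.\ mean-zero vectors. Everything else is routine Cauchy--Schwarz and Markov bookkeeping layered on top of Lemmas \ref{Hlm:dbar} and \ref{Hlm:EDD}.
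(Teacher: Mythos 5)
Your proposal is correct and follows essentially the same route as the paper: the same removal of the centring term via Lemma \ref{Hlm:dbar}, the same decomposition of $\bd_i\bd_i^\T$ around $\bd_{0,i}=-\bI_0^{-1}\bl_i(\btheta_0)$ controlled by Lemmas \ref{Hlm:EDD} and \ref{Hlm:Zhang}, and the same variance argument for the i.i.d.\ average of $n\bd_{0,i}\bd_{0,i}^\T$ yielding the $\Op(m^{-1/2})$ fluctuation (the paper formalizes that last step with Lemma \ref{Hlm:EZK}, but for a second moment your direct i.i.d.\ computation is equivalent). No gaps.
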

\begin{proof}
First, we have $m^{-1}\sum \wh\bd_i\wh\bd_i^\T = m^{-1}\sum \bd_i\bd_i^\T - (\bar\btheta - \btheta_0)(\bar\btheta - \btheta_0)^\T$. 
Let $\bd_{0, i} = -\bI_0^{-1}\bl_i(\btheta_0)$, and Lemma \ref{Hlm:EDD} indicates $\E\lri{\lrN{\bd_i - \bd_{0, i}}_2^{2}}  = O(n^{-2})$ under Assumption \ref{Has:LH}. 
\begin{align*}
    & \bd_i\bd_i^\T \\
    = & (\bd_{0, i} + \bd_i - \bd_{0, i})(\bd_{0, i} + \bd_i - \bd_{0, i})^\T \\
     = & \bd_{0, i}\bd_{0, i} + \bd_{0, i}(\bd_i - \bd_{0, i})^\T + (\bd_i - \bd_{0, i})\bd_{0, i}^\T + (\bd_i - \bd_{0, i})(\bd_i - \bd_{0, i})^\T \\
    & \E\lri{\lrn{\bd_i\bd_i^\T - \bd_{0, i}\bd_{0, i}^\T}_2} \\
    \leq & 2 \lrM{\E\lri{\lrn{\bd_i - \bd_{0, i}}_2^2}\E\lri{\lrn{\bd_{0, i}}_2^2}}^{\frac{1}{2}} + \E\lri{\lrn{\bd_i - \bd_{0, i}}_2^2} = O(n^{-\frac{3}{2}}).
\end{align*}
On the other hand, consider $\bd_{0, i}\bd_{0, i}^\T - n^{-1}\bI_0^{-1}\bQ_{11}\bI_0^{-1}$ where $\bQ_{11} = \E\lrM{\bl(\btheta_0;\bX_{ij})\bl(\btheta_0;\bX_{ij})^\T}$. Under Assumption \ref{Has:LH}, Lemma \ref{Hlm:Zhang} indicates $\E\lrm{\lrn{\bl_i(\btheta_0)}_2^4} = O(n^{-2})$, and
\begin{equation*}
    \E\lrI{\lrn{\bd_{0, i}\bd_{0, i}^\T - n^{-1}\bI_0^{-1}\bQ_{11}\bI_0^{-1}}_2^2} = O(n^{-2}).
\end{equation*}
When it comes to $m$ centers, by Lemma \ref{Hlm:EZK},
    \begin{equation*}
        \E\lrI{\lrN{m^{-1}\sum  {\bd_{0, i}\bd_{0, i}^\T - n^{-1}\bI_0^{-1}\bQ_{11}\bI_0^{-1}}}_2^2} = O(n^{-2}m^{-1}).
    \end{equation*}
Also, Lemma \ref{Hlm:dbar} states $\lrn{\bar\btheta - \btheta_0}_2 = \Op\lrm{(nm)^{-1/2}} + O(n^{-1})$. 
All together, we have
\begin{align*}
        & \lrN{m^{-1}\sum\wh\bd_i\wh\bd_i^\T -n^{-1}\bI_0^{-1}\bQ_{11}\bI_0^{-1}}_2 \\
    \leq & \lrN{m^{-1}\sum\bd_i\bd_i^\T -n^{-1}\bI_0^{-1}\bQ_{11}\bI_0^{-1}}_2 + \lrn{\bar\btheta - \btheta_0}_2^2 \\
    \leq & \lrN{m^{-1}\sum\lri{\bd_i\bd_i^\T - \bd_{0, i}\bd_{0, i}^\T}}_2 + \lrN{m^{-1}\sum\bd_{0, i}\bd_{0, i}^\T - n^{-1}\bI_0^{-1}\bQ_{11}\bI_0^{-1}}_2^2 + \lrn{\bar\btheta - \btheta_0}_2^2 \\
    = & \Op(n^{-3/2}) + \Op(n^{-1}m^{-1/2}).
\end{align*}
Hence, $$\lrN{m^{-1}\sum n \wh\bd_i\wh\bd_i^\T - \bI_0^{-1}\bQ_{11}\bI_0^{-1}}_2 = \Op(n^{-1/2} + m^{-1/2}).$$
\end{proof}

\begin{lemma} \label{Hlm:GU} 
Define $\bG_i = \wh\bg_i\wh\bg_i^\T$, $\bU_i = -\wh\bg_i\wh\bd^\T$, $\bG = m^{-1}\sum\bG_i$, and $\bU = m^{-1}\sum\bU_i$. 
Under Assumption \ref{as:T0Def}-\ref{Has:LH}, when $\btheta \in U(\rho)$, 
\begin{equation*}
    \bG = n^{-1}\bQ_{11} + \bR_{\bG},
\end{equation*}
where $\lrN{\bR_{\bG}}_2 = \Op(1)\lrn{\bDelta}_2^4 + \Op(n^{-1/2})\lrn{\bDelta}_2^2 + \Op(n^{-1})\lrn{\bDelta}_2 + \Op(n^{-1}m^{-1/2})$;
\begin{equation*}
    \bU = -n^{-1}\bQ_{11}\bI_0^{-1} + \bR_{\bU}, 
\end{equation*}
where $\lrn{\bR_{\bU}}_2 = \Op(n^{-1/2})\lrn{\bDelta}_2^2 + \Op(n^{-1})\lrn{\bDelta}_2 + \Op(n^{-1}m^{-1/2} + n^{-3/2}).$

Specifically, when $\lrn{\bDelta}_2 = \Op(n^{-1/2})$, 
\begin{align*}
    & \lrN{n\bG - \bQ_{11}}_2 = \Op(\lrn{\bDelta}_2) + \Op(m^{-\frac{1}{2}}), \\
    & \lrN{n\bU + \bQ_{11}\bI_0^{-1}}_2 = \Op(n^{-\frac{1}{2}} + m^{-\frac{1}{2}}). 
\end{align*}
\end{lemma}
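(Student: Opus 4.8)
The strategy is to expand $\wh\bg_i$ around $\btheta_0$, peel off the ``gradient at the truth'' piece whose second moment is exactly $n^{-1}\bQ_{11}$, and absorb everything else into remainders controlled by the moment lemmas. By the fundamental theorem of calculus, $\bl_i(\btheta) = \bl_i(\btheta_0) + \wt\bH_i\bDelta$ with $\wt\bH_i = n^{-1}\sum_{j\in\mathcal{M}_i}\int_0^1\nabla^2 L(\btheta_0 + t\bDelta;\bX_{ij})\diff t$. Writing $\wt\bH = m^{-1}\sum_i\wt\bH_i$, $\bv_i = \bl_i(\btheta_0) - \ol\bl(\btheta_0)$, and $\bw_i = (\wt\bH_i - \wt\bH)\bDelta$, one gets $\wh\bg_i = \bv_i + \bw_i$, hence $\bG_i = \bv_i\bv_i^\T + \bv_i\bw_i^\T + \bw_i\bv_i^\T + \bw_i\bw_i^\T$.

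For the leading term, $\E[\nabla L(\btheta_0;\bX)] = \bzero$ and i.i.d.\ sampling within a center give $\E[\bl_i(\btheta_0)\bl_i(\btheta_0)^\T] = n^{-1}\bQ_{11}$, while Lemma \ref{Hlm:Zhang} bounds $\E\lrm{\lrn{\bl_i(\btheta_0)}_2^4} = O(n^{-2})$; Lemma \ref{Hlm:EZK} then yields $\lrN{m^{-1}\sum_i\bl_i(\btheta_0)\bl_i(\btheta_0)^\T - n^{-1}\bQ_{11}}_2 = \Op(n^{-1}m^{-1/2})$, and since $\lrn{\ol\bl(\btheta_0)}_2 = \Op((nm)^{-1/2})$ (Lemma \ref{Hlm:Zhang}) the term $\ol\bl(\btheta_0)\ol\bl(\btheta_0)^\T$ is negligible, so $m^{-1}\sum_i\bv_i\bv_i^\T = n^{-1}\bQ_{11} + \Op(n^{-1}m^{-1/2})$. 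For the remainders I split $\wt\bH_i - \bI_0 = (\wt\bH_i - \bH_{0,i}) + (\bH_{0,i} - \bI_0)$ with $\bH_{0,i} = n^{-1}\sum_{j\in\mathcal{M}_i}\nabla^2 L(\btheta_0;\bX_{ij})$; the Lipschitz Assumption \ref{Has:LH} gives $\lrn{\wt\bH_i - \bH_{0,i}}_2 \leq \tfrac12\lri{n^{-1}\sum_j h(\bX_{ij})}\lrn{\bDelta}_2$, and Lemmas \ref{Hlm:Zhang}, \ref{Hlm:EZK} give $m^{-1}\sum_i\lrn{\bH_{0,i} - \bI_0}_2^2 = \Op(n^{-1})$, whence $m^{-1}\sum_i\lrn{\bw_i}_2^2 = \Op(\lrn{\bDelta}_2^4 + n^{-1}\lrn{\bDelta}_2^2)$. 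Combining with $m^{-1}\sum_i\lrn{\bv_i}_2^2 = \Op(n^{-1})$ via Cauchy--Schwarz bounds the cross term by $\Op(n^{-1/2}\lrn{\bDelta}_2^2 + n^{-1}\lrn{\bDelta}_2)$ and $\lrn{m^{-1}\sum_i\bw_i\bw_i^\T}_2 \leq m^{-1}\sum_i\lrn{\bw_i}_2^2$; adding the four pieces gives the stated bound on $\bR_{\bG}$.

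For the statement on $\bU$, I would invoke the pseudo-linear decomposition $\wh\bg_i = -\bI_0\wh\bd_i + \be_i$ established in the proof of Proposition \ref{Hpr:DGGD}, with $\be_i = \bI_0(\btheta_0 - \ol\btheta) + \lrm{\bI_0\bd_i + \bl_i(\btheta_0)} + (\wt\bH_i - \wt\bH)\bDelta - \ol\bl(\btheta_0)$, so that $\wh\bg_i\wh\bd_i^\T = -\bI_0\wh\bd_i\wh\bd_i^\T + \be_i\wh\bd_i^\T$. Lemma \ref{Hlm:NDD} converts $-\bI_0\lri{m^{-1}\sum_i\wh\bd_i\wh\bd_i^\T}$ into $-n^{-1}\bQ_{11}\bI_0^{-1} + \Op(n^{-3/2} + n^{-1}m^{-1/2})$. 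In $m^{-1}\sum_i\be_i\wh\bd_i^\T$ the two $i$-independent summands vanish because $\sum_i\wh\bd_i = \bzero$; the term $\lrm{\bI_0\bd_i + \bl_i(\btheta_0)}\wh\bd_i^\T = \bI_0(\bd_i - \bd_{0,i})\wh\bd_i^\T$ is $\Op(n^{-3/2} + n^{-1}m^{-1/2})$ by the argument of Lemma \ref{Hlm:DDD} (using Lemmas \ref{Hlm:EDD}, \ref{Hlm:LDD}, \ref{Hlm:R1MC}); and $(\wt\bH_i - \wt\bH)\bDelta\,\wh\bd_i^\T$ is $\Op(n^{-1/2}\lrn{\bDelta}_2^2 + n^{-1}\lrn{\bDelta}_2)$ by Cauchy--Schwarz with $m^{-1}\sum_i\lrn{\wh\bd_i}_2^2 = \Op(n^{-1})$ and the bound on $m^{-1}\sum_i\lrn{\bw_i}_2^2$ from the previous paragraph. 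Summing yields the bound on $\bR_{\bU}$. The special case then follows by plugging in $\lrn{\bDelta}_2 = \Op(n^{-1/2})$ and using $n^{1/2}\lrn{\bDelta}_2^2 = \Op(\lrn{\bDelta}_2)$ and $n\lrn{\bDelta}_2^4 = \op(\lrn{\bDelta}_2)$.

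\textbf{Main obstacle.} The subtle point is pinning down the exact powers of $\lrn{\bDelta}_2$ in the Hessian-difference contributions: one must cleanly separate the $\lrn{\bDelta}_2$-proportional Lipschitz part of $\wt\bH_i - \bH_{0,i}$ from the $\Op(n^{-1/2})$ statistical fluctuation $\bH_{0,i} - \bI_0$, and average each sharply over the $m$ centers (via Lemmas \ref{Hlm:Zhang} and \ref{Hlm:EZK}) so that the cross terms land at $\Op(n^{-1/2}\lrn{\bDelta}_2^2 + n^{-1}\lrn{\bDelta}_2)$ rather than something coarser; for $\bU$ one additionally needs the $\Op(n^{-3/2}+n^{-1}m^{-1/2})$ control of $m^{-1}\sum_i\bI_0(\bd_i-\bd_{0,i})\wh\bd_i^\T$, which is exactly the content borrowed from Lemma \ref{Hlm:DDD}.
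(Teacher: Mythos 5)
Your argument is correct, and the error rates you obtain match the lemma. For the $\bG$ part your route is essentially the paper's: the same expansion $\wh\bg_i = \bl_i(\btheta_0) + \wt\bH_i\bDelta - \lrm{\ol\bl(\btheta_0) + \wt\bH\bDelta}$, the same treatment of $m^{-1}\sum\bl_i\bl_i^\T - n^{-1}\bQ_{11}$ via Lemma \ref{Hlm:EZK} and of $\ol\bl\,\ol\bl^\T$ via Lemma \ref{Hlm:Zhang}; the only difference is that the paper delegates the cross and quadratic Hessian-difference terms to Lemmas \ref{Hlm:P2}(1) and \ref{Hlm:P4}, whereas you re-derive the same $\Op(n^{-1/2}\lrn{\bDelta}_2^2 + n^{-1}\lrn{\bDelta}_2)$ and $\Op(\lrn{\bDelta}_2^4 + n^{-1}\lrn{\bDelta}_2^2)$ bounds directly by the split $\wt\bH_i - \bI_0 = (\wt\bH_i - \bH_{0,i}) + (\bH_{0,i}-\bI_0)$ and Cauchy--Schwarz over centers; this is an equivalent computation. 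For the $\bU$ part you genuinely deviate: the paper expands $m^{-1}\sum\wh\bg_i\bd_i^\T$ through $m^{-1}\sum\bl_i\bd_{0,i}^\T$ plus Lemma \ref{Hlm:P3s}(1) and the $O(n^{-3/2})$ control of $\bl_i(\bd_i-\bd_{0,i})^\T$, whereas you use the pseudo-linear identity $\wh\bg_i = -\bI_0\wh\bd_i + \be_i$, pull the leading term from Lemma \ref{Hlm:NDD} (error $\Op(n^{-3/2}+n^{-1}m^{-1/2})$), kill the $i$-independent parts of $\be_i$ via $\sum_i\wh\bd_i=\bzero$, and bound the remaining two terms by the Lemma \ref{Hlm:DDD}-type argument and Cauchy--Schwarz. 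Both routes land on the same $\bR_{\bU}$ rate; yours buys a slightly shorter argument by recycling Lemma \ref{Hlm:NDD}, at the cost of importing the $\be_i$-decomposition from Proposition \ref{Hpr:DGGD}, while the paper's route keeps the $\bU$ computation self-contained in the gradient expansion and is the one reused verbatim inside the proof of Proposition \ref{Hpr:DGGD}(2).

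One point to tidy up: with the lemma's definition $\bU_i = -\wh\bg_i\wh\bd_i^\T$, your identity $\wh\bg_i\wh\bd_i^\T = -\bI_0\wh\bd_i\wh\bd_i^\T + \be_i\wh\bd_i^\T$ gives $m^{-1}\sum\wh\bg_i\wh\bd_i^\T = -n^{-1}\bQ_{11}\bI_0^{-1} + \text{(remainder)}$, hence $\bU = +n^{-1}\bQ_{11}\bI_0^{-1} + \bR_{\bU}$, which is opposite in sign to the displayed statement; the paper's own proof contains the same slip (it derives $-\bU = -n^{-1}\bQ_{11}\bI_0^{-1}+\cdots$ and then records $\bU = -n^{-1}\bQ_{11}\bI_0^{-1}+\bR_{\bU}$), so this is an inconsistency in the statement's sign convention rather than a gap in your argument, and it does not affect the norm bounds; just state explicitly which sign you are proving.
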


\begin{proof} We can expend $\wh\bg_i$ such that $\wh\bg_i = \bl_i(\btheta_0) + \wt\bH_i\bDelta - \lrm{\ol\bl(\btheta_0) + \wt\bH\bDelta}$. For parsimony, let $\bl_i = \bl_i(\btheta_0)$ and $\ol\bl = \ol\bl(\btheta_0)$.
\begin{align*}
    \sum \wh\bg_i\wh\bg_i^\T & = \sum\bl_i\bl_i^\T + \sum\bl_i\bDelta^\T(\wt\bH_i - \wt\bH) - \sum\bl_i\bar\bl^\T + \\
        & \blkeq \sum(\wt\bH_i - \wt\bH)\bDelta\bl_i^\T + \sum(\wt\bH_i - \wt\bH)\bDelta\bDelta^\T(\wt\bH_i - \wt\bH) - \sum(\wt\bH_i - \wt\bH)\bDelta\bar\bl^\T \\ 
        & \blkeq - \sum\bar\bl\wh\bg_i^\T \\
        & = \sum\bl_i\bl_i^\T - m\bar\bl\bar\bl^\T + \sum\bl_i\bDelta^\T(\wt\bH_i - \wt\bH) + \sum(\wt\bH_i - \wt\bH)\bDelta\bl_i^\T + \\
        & \blkeq \sum(\wt\bH_i - \wt\bH)\bDelta\bDelta^\T(\wt\bH_i - \wt\bH).
\end{align*}
Under Assumptions \ref{as:T0Def}-\ref{Has:LH}, by Lemma \ref{Hlm:P2} and \ref{Hlm:P4}, we have 
\begin{align*}
    & \lrN{m^{-1}\sum_i \bl_i(\btheta_0)\bDelta^\T(\wt\bH_i - \wt\bH)}_2 = \Op(n^{-1/2})\lrn{\bDelta}_2^2 + \Op(n^{-1})\lrn{\bDelta}_2 \\
    & \lrN{m^{-1}\sum_i (\wt\bH_i - \wt\bH)\bDelta\bDelta^\T(\wt\bH_i - \wt\bH)}_2
        = \Op(1)\lrn{\bDelta}_2^4 + \Op(n^{-1})\lrn{\bDelta}_2^2.
\end{align*}

Also, $\lrN{\bar\bl}_2^2 = \Op\lrM{(nm)^{-1}}$ and by Lemma \ref{Hlm:EZK}
\begin{align}
    & \E(\lrn{\bl_i}_2^4) = O(n^{-2}) \nonumber \\
    & \E(\bl_i\bl_i^{\T}) = n^{-1}\sum_j\sum_{j'} \E\lrM{\bl(\btheta_0;\bX_{ij})\bl(\btheta_0;\bX_{ij'})^\T} = n^{-1}\bQ_{11} \nonumber \\
    & \E\lrI{\lrN{m^{-1}\sum_i \bl_i\bl_i^{\T} - n^{-1}\bQ_{11}}_2^2} = O(m^{-1})\E(\lrn{\bl_i\bl_i^\T}_2^2) = O(n^{-2}m^{-1}), \label{Heq:ELL}
\end{align}
where $\bQ_{11} = \E\lrM{\bl(\btheta_0;\bX_{11})\bl(\btheta_0;\bX_{11})^\T}$. Therefore, 
\begin{equation}
    \bG = n^{-1}\bQ_{11} + \bR_{\bG}, \label{Heq:G}
\end{equation}
where
\begin{align*}
    \bR_{\bG} & = m^{-1}\sum_i \bl_i\bl_i^{\T} - n^{-1}\bQ_{11} - \bar\bl\bar\bl^\T + m^{-1}\sum\bl_i\bDelta^\T(\wt\bH_i - \wt\bH) +  \\
    & \blkeq m^{-1}\sum(\wt\bH_i - \wt\bH)\bDelta\bl_i^\T + m^{-1}\sum(\wt\bH_i - \wt\bH)\bDelta\bDelta^\T(\wt\bH_i - \wt\bH) \\
    \lrN{\bR_{\bG}}_2 & = \Op(1)\lrn{\bDelta}_2^4 + \Op(n^{-\frac{1}{2}})\lrn{\bDelta}_2^2 + \Op(n^{-1})\lrn{\bDelta}_2 + \Op(n^{-1}m^{-\frac{1}{2}}).
\end{align*}

Similarly, consider $\bU = -m^{-1}\sum \wh\bg_i\wh\bd_i$. 
\begin{align*}
-\bU & = m^{-1} \sum \bg_i(\bd_i - \bar\bd) = m^{-1} \sum \bg_i\bd_i \\
    & = m^{-1} \sum \bl_i\bd_i^\T + m^{-1}\sum (\wt\bH_i - \wt\bH)\bDelta\bd_i^\T - \bar\bl(\btheta_0)\bar\bd^\T \\
    & = m^{-1} \sum \bl_i\bd_{0, i}^\T + m^{-1} \sum \bl_i(\bd_i - \bd_{0, i})^\T + m^{-1}\sum (\wt\bH_i - \wt\bH)\bDelta\bd_i^\T - \bar\bl(\btheta_0)\bar\bd^\T \\
    & = -n^{-1}\bQ_{11}\bI_0^{-1} + \lrI{-m^{-1} \sum \bl_i\bl_{i}^\T + n^{-1}\bQ_{11}}\bI_0^{-1} + \\
    & \blkeq m^{-1} \sum \bl_i(\bd_i - \bd_{0, i})^\T + m^{-1}\sum (\wt\bH_i - \wt\bH)\bDelta\bd_i^\T - \bar\bl(\btheta_0)\bar\bd^\T.
\end{align*}
Recall that by Lemmas \ref{Hlm:dbar}, \ref{Hlm:P3s}, and \ref{Hlm:EDD}
\begin{align*}
    & \lrn{\bar\bd}_2 = \Op\lrm{(nm)^{-1/2}} + O(n^{-1}) \\
    & \lrN{m^{-1}\sum_i (\wt\bH_i - \wt\bH)\bDelta\bd_i^\T}_2 = \Op(n^{-1/2})\lrn{\bDelta}_2^2 + \Op(n^{-1})\lrn{\bDelta}_2 \\
    & \E\lrM{\lrN{\bl_i(\btheta_0)(\bd_i - \bd_{0, i})^\T}_2} \leq \lrO{\E\lrm{\lrN{\bl_i(\btheta_0)}_2^2}\E\lri{\lrN{\bd_i - \bd_{0, i}}_2^2}}^{1/2} = O(n^{-3/2}).
\end{align*}
Together with \eqref{Heq:ELL} that $\lrN{m^{-1}\sum_i \bl_i\bl_i^{\T} - n^{-1}\bQ_{11}}_2= \Op(n^{-1}m^{-1/2})$, 
\begin{equation}
    \bU = -n^{-1}\bQ_{11}\bI_0^{-1} + \bR_{\bU}, \label{Heq:U}
\end{equation}
where 
\begin{align*}
    \bR_{\bU} & = \lrI{n^{-1}\bQ_{11} - m^{-1} \sum \bl_i\bl_{i}^\T}\bI_0^{-1} + m^{-1} \sum \bl_i(\bd_i - \bd_{0, i})^\T + \\
        & \blkeq m^{-1}\sum (\wt\bH_i - \wt\bH)\bDelta\bd_i^\T - \bar\bl(\btheta_0)\bar\bd^\T \\
    \lrn{\bR_{\bU}}_2 & = \Op(n^{-\frac{1}{2}})\lrn{\bDelta}_2^2 + \Op(n^{-1})\lrn{\bDelta}_2 + \Op(n^{-1}m^{-\frac{1}{2}} + n^{-\frac{3}{2}}).
\end{align*}
When $\btheta \in U(\rho)$ and $\lrn{\bDelta} = \Op(n^{-1/2})$, \eqref{Heq:G} and \eqref{Heq:U} can be simplified: 
\begin{align*}
    & \lrN{n\bG - \bQ_{11}}_2 = \Op(1)\lrn{\bDelta}_2 + \Op(m^{-\frac{1}{2}}), \\
    & \lrN{n\bU + \bQ_{11}\bI_0^{-1}}_2 = \Op(n^{-\frac{1}{2}} + m^{-\frac{1}{2}}). 
\end{align*}
\end{proof}        

\begin{lemma} \label{Hlm:P2} Consider $m^{-1}\sum_i\bl_i(\btheta_0)\bDelta^\T(\wt\bH_i - \wt\bH)$ with $\btheta \in U(\rho)$. 
    \begin{enumerate}[label={(\arabic*)}]
        \item If Assumptions \ref{as:T0Def}-\ref{Has:LH} hold, 
            $$\lrN{m^{-1}\sum_i \bl_i(\btheta_0)\bDelta^\T(\wt\bH_i - \wt\bH)}_2 = \Op(n^{-1/2})\lrn{\bDelta}_2^2 + \Op(n^{-1})\lrn{\bDelta}_2;$$
        \item If Assumptions \ref{as:T0Def}-\ref{Has:L3Cont} hold, 
        \begin{equation*}
            m^{-1}\sum_i \bl_i(\btheta_0)\bDelta^\T(\wt\bH_i - \wt\bH) 
                = n^{-1}(\bI_{d\times d}\otimes \bDelta^\T) \times\bQ_{12} + \bR_{22},
        \end{equation*}
        where $\lrn{\bR_{22}}_2 = \Op(n^{-1/2})\lrn{\bDelta}_2^3 + \Op\lrm{n^{-1} + (nm)^{-1/2}} \lrn{\bDelta}_2^2 + \Op(n^{-1}m^{-1/2})\lrn{\bDelta}_2$.
    \end{enumerate}
\end{lemma}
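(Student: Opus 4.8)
The plan is to isolate the within-center correlation between the gradient $\bl_i(\btheta_0)$ and the Hessian fluctuation $\nabla^2 L(\btheta_0;\bX)-\bI_0$, which is the only source of the leading $\bQ_{12}$ term, and to bound everything else as remainder. Writing $\bH_i^0 = n^{-1}\sum_{j\in\mathcal{M}_i}\nabla^2 L(\btheta_0;\bX_j)$ and using $m^{-1}\sum_i\bl_i(\btheta_0) = \ol\bl(\btheta_0)$, I first split
\begin{equation*}
m^{-1}\sum_i\bl_i(\btheta_0)\bDelta^\T(\wt\bH_i - \wt\bH) = m^{-1}\sum_i\bl_i(\btheta_0)\bDelta^\T(\wt\bH_i - \bI_0) - \ol\bl(\btheta_0)\bDelta^\T(\wt\bH - \bI_0).
\end{equation*}
By Lemma \ref{Hlm:Zhang}, $\lrn{\ol\bl(\btheta_0)}_2 = \Op((nm)^{-1/2})$, and the Lipschitz bound on $\nabla^2 L$ in Assumption \ref{Has:LH} together with $\int_0^1 t\,\diff t = \tfrac12$ gives $\lrn{\wt\bH_i - \bH_i^0}_2 \le \tfrac12\lrn{\bDelta}_2\, n^{-1}\sum_{j\in\mathcal{M}_i}h(\bX_j)$, so $\lrn{\wt\bH - \bI_0}_2 \le m^{-1}\sum_i\lrn{\bH_i^0 - \bI_0}_2 + \tfrac12\lrn{\bDelta}_2\, m^{-1}\sum_i n^{-1}\sum_{j\in\mathcal{M}_i}h(\bX_j) = \Op(n^{-1/2} + \lrn{\bDelta}_2)$. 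Hence the last term above is $\Op(n^{-1}m^{-1/2})\lrn{\bDelta}_2 + \Op((nm)^{-1/2})\lrn{\bDelta}_2^2$, which is within $\bR_{22}$ and negligible for part (1).

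Next I split $\wt\bH_i - \bI_0 = (\bH_i^0 - \bI_0) + (\wt\bH_i - \bH_i^0)$. For $m^{-1}\sum_i\bl_i(\btheta_0)\bDelta^\T(\bH_i^0 - \bI_0)$: since $\btheta_0$ minimizes the population risk, $\E[\nabla L(\btheta_0;\bX)] = \bzero$ and $\E[\nabla^2 L(\btheta_0;\bX)] = \bI_0$, so the cross terms in $\E[\bl_i(\btheta_0)\bDelta^\T(\bH_i^0-\bI_0)]$ vanish and the $n$ diagonal terms give $n^{-1}\E[\nabla L(\btheta_0;\bX)\bDelta^\T\{\nabla^2 L(\btheta_0;\bX)-\bI_0\}]$; a direct Kronecker bookkeeping (the $k$-th row of $(\bI_d\otimes\bDelta^\T)\{\ba\otimes\bB\}$ equals $a_k\bDelta^\T\bB$) shows this is $n^{-1}(\bI_{d\times d}\otimes\bDelta^\T)\bQ_{12}$. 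The summands are independent across centers with second moment $\le \lrn{\bDelta}_2^2(\E\lrn{\bl_i(\btheta_0)}_2^4)^{1/2}(\E\lrn{\bH_i^0-\bI_0}_2^4)^{1/2} = O(n^{-2})\lrn{\bDelta}_2^2$ by Lemma \ref{Hlm:Zhang}, so this sum is $n^{-1}(\bI_{d\times d}\otimes\bDelta^\T)\bQ_{12} + \Op(n^{-1}m^{-1/2})\lrn{\bDelta}_2$; for part (1) the crude first-moment bound $\E\lrn{\bl_i(\btheta_0)\bDelta^\T(\bH_i^0-\bI_0)}_2 \le \lrn{\bDelta}_2(\E\lrn{\bl_i(\btheta_0)}_2^2\,\E\lrn{\bH_i^0-\bI_0}_2^2)^{1/2} = O(n^{-1})\lrn{\bDelta}_2$ already suffices.

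The sum $m^{-1}\sum_i\bl_i(\btheta_0)\bDelta^\T(\wt\bH_i - \bH_i^0)$ is where the two regimes diverge. Under Assumption \ref{Has:LH} alone, the bound on $\lrn{\wt\bH_i - \bH_i^0}_2$ from the first paragraph and Cauchy--Schwarz give $\E\lrn{\bl_i(\btheta_0)\bDelta^\T(\wt\bH_i-\bH_i^0)}_2 = O(n^{-1/2})\lrn{\bDelta}_2^2$, so this sum is $\Op(n^{-1/2})\lrn{\bDelta}_2^2$; combined with the previous paragraph this proves part (1). Under Assumptions \ref{as:T0Def}--\ref{Has:L3Cont} I would instead expand one order further: from $\nabla^2 L(\btheta_0+t\bDelta;\bX) - \nabla^2 L(\btheta_0;\bX) = \int_0^1 (\bI_d\otimes(t\bDelta)^\T)\bT(\btheta_0+ut\bDelta,\bX)\,\diff u$ and the $m(\bX)$-Lipschitz property of $\bT$ in Assumption \ref{Has:L3Cont}, one gets $\wt\bH_i - \bH_i^0 = \tfrac12(\bI_d\otimes\bDelta^\T)\bT_i^0 + \bE_i$, where $\bT_i^0 := n^{-1}\sum_{j\in\mathcal{M}_i}\bT(\btheta_0,\bX_j)$ and $\lrn{\bE_i}_2 \le \tfrac16\lrn{\bDelta}_2^2\, n^{-1}\sum_{j\in\mathcal{M}_i}m(\bX_j)$. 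The $\bE_i$-contribution is $\Op(n^{-1/2})\lrn{\bDelta}_2^3$ by Cauchy--Schwarz; for $\tfrac12 m^{-1}\sum_i\bl_i(\btheta_0)\bDelta^\T(\bI_d\otimes\bDelta^\T)\bT_i^0$, the within-center mean is only $O(n^{-1})\lrn{\bDelta}_2^2$ (the cross terms again vanish, and since this piece involves $\bT$ rather than $\nabla^2 L - \bI_0$ it contributes \emph{no} $\bQ_{12}$ term) while the across-center fluctuation is $\Op((nm)^{-1/2})\lrn{\bDelta}_2^2$. Collecting all pieces yields $n^{-1}(\bI_{d\times d}\otimes\bDelta^\T)\bQ_{12} + \bR_{22}$ with $\lrn{\bR_{22}}_2 = \Op(n^{-1/2})\lrn{\bDelta}_2^3 + \Op(n^{-1} + (nm)^{-1/2})\lrn{\bDelta}_2^2 + \Op(n^{-1}m^{-1/2})\lrn{\bDelta}_2$. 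The main obstacle is precisely this second-order expansion of $\wt\bH_i - \bH_i^0$: one must carry the double integral through carefully to peel off $\tfrac12(\bI_d\otimes\bDelta^\T)\bT_i^0$ with the correct constant and an $O(\lrn{\bDelta}_2^2)$ Lipschitz residual, and then check that this $\bT$-term stays a genuine remainder and does not combine with the $\bH_i^0 - \bI_0$ term to alter the $\bQ_{12}$ leading coefficient; everything else is routine moment control available from Lemma \ref{Hlm:Zhang} and Assumptions \ref{Has:LH}--\ref{Has:L3Cont}.
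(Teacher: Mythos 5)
Your proposal is correct and follows essentially the same route as the paper: the same split into $m^{-1}\sum_i\bl_i(\btheta_0)\bDelta^\T(\wt\bH_i-\bI_0)$ plus the $\ol\bl(\btheta_0)\bDelta^\T(\bI_0-\wt\bH)$ term, the same further split $\wt\bH_i-\bI_0=(\bH_i^0-\bI_0)+(\wt\bH_i-\bH_i^0)$, and the same Kronecker identification of $n^{-1}(\bI_{d}\otimes\bDelta^\T)\bQ_{12}$ as the mean of the $(\bH_i^0-\bI_0)$ piece with an $\Op(n^{-1}m^{-1/2})\lrn{\bDelta}_2$ cross-center fluctuation (the paper packages this as Lemma \ref{Hlm:lcH}). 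The only cosmetic difference is in the third-order piece: you keep the uncentered empirical tensor $\bT_i^0$ and control $\bl_i\bDelta^\T(\bI_d\otimes\bDelta^\T)\bT_i^0$ by its $O(n^{-1})\lrn{\bDelta}_2^2$ mean plus an $\Op\lrm{(nm)^{-1/2}}\lrn{\bDelta}_2^2$ fluctuation, whereas the paper centers at $\bQ$ via Lemma \ref{Hlm:L3Dec} and uses the smallness of $\ol\bl(\btheta_0)$ against $2^{-1}\bQ(\bDelta^{\otimes2})$ — both give identical rates.
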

\begin{proof}
First, 
$$    m^{-1}\sum_i \bl_i(\btheta_0)\bDelta^\T(\wt\bH_i - \wt\bH) = m^{-1}\sum_i\bl_i(\btheta_0)\bDelta^\T(\wt\bH_i - \bI_0) + \bar\bl(\btheta_0)\bDelta^\T(\bI_0 - \wt\bH).
$$
By Lemmas \ref{Hlm:Zhang} and \ref{Hlm:L3Dec}, for $k \in [1, 4]$
\begin{align}
    {\lrn{\bar\bl(\btheta_0)}_2^k} & = \Op\lrm{(nm)^{-k/2}} \nonumber \\
    \lrn{\wt\bH - \bI_0}_2^k & \leq 2^{k-1}\lrM{(nm)^{-1}\sum\sum h(\bX_{ij})}^k\lrn{\bDelta}_2^k + 2^{k-1}\lrn{\bH - \bI_0}_2^k \nonumber \\
        & = \Op(1)\lrn{\bDelta}_2^k + \Op\lrm{(nm)^{-k/2}} \label{Heq:HTI} \\
    \lrn{\bar\bl(\btheta_0)\bDelta^\T(\bI_0 - \wt\bH)}_2^k & = \Op\lrm{(nm)^{-k/2}}\lrn{\bDelta}_2^{2k} + \Op\lrm{(nm)^{-k}}\lrn{\bDelta}_2^k. \label{Heq:LDelIH}
\end{align}
\subsubsection*{The first statement}
On the other hand, under Assumptions \ref{as:T0Def}-\ref{Has:LH}, for ${k'} \in [1, 2]$, at the $i$th center
\begin{align*}
    \lrn{\bl_i(\btheta_0)\bDelta^\T(\wt\bH_i - \bI_0)}_2^{k'} & \leq 2^{{k'}-1}\lrn{\bl_i(\btheta_0)\bDelta^\T(\wt\bH_i - \bH_i)}_2^{k'} + 2^{{k'}-1}\lrn{\bl_i(\btheta_0)\bDelta^\T(\bH_i - \bI_0)}_2^{k'} \\
        & \leq 2^{{k'}-1}\lrn{\bl_i(\btheta_0)}_2^{k'}\lrM{n^{-1}\sum_j h(\bX_{ij})}^{k'}\lrn{\bDelta}_2^{2k} + \\
        & \blkeq 2^{{k'}-1}\lrn{\bl_i(\btheta_0)}_2^{k'}\lrn{\bH_i - \bI_0}_2^{k'}\lrn{\bDelta}_2^{k'}.
\end{align*}
By Jensen's inequality and Holder's inequality, 
\begin{align*}
& \E\lrO{\lrn{\bl_i(\btheta_0)}_2^{k'}{\lrM{n^{-1}\sum_j h(\bX_{ij})}^{k'}}} \leq \lrO{\E\lrM{\lrn{\bl_i(\btheta_0)}_2^{2k}}\E\lrm{{h(\bX_{ij})}^{2k}}}^{1/2} = O(n^{-{k'}/2}) \\
& \E\lri{\lrn{\bl_i(\btheta_0)}_2^{k'}\lrn{\bH_i - \bI_0}_2^{k'}} \leq \lrM{\E\lri{\lrn{\bl_i(\btheta_0)}_2^{2k}}\E\lri{\lrn{\bH_i - \bI_0}_2^{2k}}}^{1/2} = O(n^{-{k'}}).
\end{align*}
Let ${k'}=1$, 
\begin{align}    
    \lrn{\bar\bl(\btheta_0)\bDelta^\T(\bI_0 - \wt\bH)}_2 & = \Op\lrm{(nm)^{-1/2}}\lrn{\bDelta}_2^{2} + \Op\lrm{(nm)^{-1}}\lrn{\bDelta}_2 \label{Heq:LDII} \\
    \lrN{m^{-1}\sum_i\bl_i(\btheta_0)\bDelta^\T(\wt\bH_i - \bI_0)}_2 & \leq m^{-1}\sum_i\lrn{\bl_i(\btheta_0)}_2\lrM{n^{-1}\sum_j h(\bX_{ij})}\lrn{\bDelta}_2^{2} + \nonumber \\
        & \blkeq m^{-1}\sum_i\lrn{\bl_i(\btheta_0)}_2\lrn{\bH_i - \bI_0}_2\lrn{\bDelta}_{2} \nonumber \\
        & = \Op(n^{-1/2})\lrn{\bDelta}_2^2 + \Op(n^{-1})\lrn{\bDelta}_2.
\end{align}
Hence,
\begin{align}
    \lrN{m^{-1}\sum_i \bl_i(\btheta_0)\bDelta^\T(\wt\bH_i - \wt\bH)}_2 & \leq \lrN{m^{-1}\sum_i\bl_i(\btheta_0)\bDelta^\T(\wt\bH_i - \bI_0)}_2 + \lrn{\bar\bl(\btheta_0)\bDelta^\T(\bI_0 - \wt\bH)}_2 \nonumber \\ 
        & = \Op(n^{-1/2})\lrn{\bDelta}_2^2 + \Op(n^{-1})\lrn{\bDelta}_2. 
\end{align}
\subsubsection*{The second statement}
Note that the major contribution in this term comes from $m^{-1}\sum_i\bl_i(\btheta_0)\bDelta^\T(\wt\bH_i - \bI_0)$. When Assumption \ref{Has:L3Cont} holds, by Lemma \ref{Hlm:L3Dec}, for ${k'} \in [1, 2]$,
$$
    (\wt\bH_i - \bH_i)\bDelta = 2^{-1}\bQ(\bDelta^{\otimes 2}) + \bR_{1i} + \bR_{2i},
$$
where $\lrn{\bR_{1i}}_2 \leq (6n)^{-1}\sum_j m(\bX_{ij})\lrn{\bDelta}_2^3$ and $\lrn{\bR_{2i}}_2 \leq R_{2i} \lrn{\bDelta}_2^2$ with $\E(R_{2i}^{k'}) = O(n^{-{k'}/2})$. Then,
\begin{align*} 
    & m^{-1}\sum_i\bl_i(\btheta_0)\bDelta^\T(\wt\bH_i - \bH_i) = \bar\bl(\btheta_0)\lro{2^{-1}\bQ(\bDelta^{\otimes 2})}^\T + m^{-1}\sum_i\bl_i(\bR_{1i} + \bR_{2i})^\T \\
    & \lrn{\bl_i(\bR_{1i} + \bR_{2i})^\T}_2^{{k'}} \leq 2^{{k'}-1}\lrn{\bl_i}_2^{k'}\lrM{(6n)^{-1}\sum_j m(\bX_{ij})}^{k'}\lrn{\bDelta}_2^{3{k'}} + 2^{{k'}-1}\lrn{\bl_i}_2^{k'} R_{2i}^{k'}\lrn{\bDelta}_2^{2{k'}}.
\end{align*}
Additionally, by definition
$$\lrN{\bQ(\bDelta^{\otimes 2})}_2 \leq \lambda_3 \lrn{\bDelta}_2^2.$$
Let ${k'}=1$, then $m^{-1}\sum\lrn{\bl_i(\bR_{1i} + \bR_{2i})^\T}_2 = \Op(n^{-1/2})\lrn{\bDelta}_2^3 + \Op(n^{-1})\lrn{\bDelta}_2^2$. 
\begin{align}
    \lrN{m^{-1}\sum_i\bl_i(\btheta_0)\bDelta^\T(\wt\bH_i - \bH_i)}_2 = \Op(n^{-1/2})\lrn{\bDelta}_2^3 + \Op\lrm{n^{-1} + (nm)^{-1/2}} \lrn{\bDelta}_2^2 \label{Heq:P2L31}.
\end{align}    
    
On the other hand, consider $m^{-1}\sum\bl_i(\btheta_0)\bDelta^\T(\bH_i - \bI_0)$. By Lemma \ref{Hlm:lcH}, with the existence of the fourth moments of $\bl(\btheta_0;\bX_{ij})$ and $\bH(\btheta_0;\bX_{ij})$, 
$$
m^{-1}\sum_i \bl_i\bDelta^\T(\bH_i - \bI_0) = (\bI_{d\times d}\otimes \bDelta^\T) \times \lrm{\bl_i \otimes (\bH_i - \bI_0)} = (\bI_{d\times d}\otimes \bDelta^\T) \times \lrI{n^{-1}\bQ_{12} + \bR_{23}}, 
$$
where $\E\lri{\lrn{\bR_{23}}_2^{2}} = O(m^{-1}n^{-2})$. 
\begin{equation}
    \lrN{n^{-1}(\bI_{d\times d}\otimes \bDelta^\T) \times\bR_{23}} \leq \lrn{\bI_{d\times d}\otimes \bDelta^\T}_2\lrN{\bR_{23}}_2 = \lrn{\bDelta}_2\Op(m^{-1/2}n^{-1}). \label{Heq:P2L32}
\end{equation}
Combining \eqref{Heq:LDII}, \eqref{Heq:P2L31}, and \eqref{Heq:P2L32} leads to
\begin{align*}
    m^{-1}\sum_i \bl_i(\btheta_0)\bDelta^\T(\wt\bH_i - \wt\bH) 
        & = m^{-1}\sum_i\bl_i(\btheta_0)\bDelta^\T(\wt\bH_i - \bH_i) + m^{-1}\sum_i \bl_i(\btheta_0)\bDelta^\T(\bH_i - \bI_0) + \\
            & \blkeq \bar\bl(\btheta_0)\bDelta^\T(\bI_0 - \wt\bH) \\
        & = n^{-1}(\bI_{d\times d}\otimes \bDelta^\T) \times\bQ_{12} + \bR_{22},
\end{align*}
where $\lrn{\bR_{22}}_2 = \Op(n^{-1/2})\lrn{\bDelta}_2^3 + \Op\lrm{n^{-1} + (nm)^{-1/2}} \lrn{\bDelta}_2^2 + \Op(n^{-1}m^{-1/2})\lrn{\bDelta}_2$. 
\end{proof}

\begin{lemma} \label{Hlm:P3s} Consider $m^{-1}\sum(\wt\bH_i - \wt\bH)\bDelta\bd_i^\T$ with $\btheta \in U(\rho)$. 
    \begin{enumerate}[label={(\arabic*)}]
        \item If Assumptions \ref{as:T0Def}-\ref{Has:LH} hold, 
            $$\lrN{m^{-1}\sum_i (\wt\bH_i - \wt\bH)\bDelta\bd_i^\T}_2 = \Op(n^{-1/2})\lrn{\bDelta}_2^2 + \Op(n^{-1})\lrn{\bDelta}_2.$$
        \item If Assumptions \ref{as:T0Def}-\ref{Has:L3Cont} hold, 
        \begin{equation*}
            m^{-1}\sum_i (\wt\bH_i - \wt\bH)\bDelta\bd_i^\T = -n^{-1} \bQ_{12}^\T \times (\bI_{d\times d} \otimes \bDelta) \times \bI_0^{-1} - \bR_{22}^\T \bI_0^{-1} + \bR_{42},
        \end{equation*}
        where $\lrn{\bR_{22}}_2 = \Op(n^{-1/2})\lrn{\bDelta}_2^3 + \Op\lrm{n^{-1} + (nm)^{-1/2}} \lrn{\bDelta}_2^2 + \Op(n^{-1}m^{-1/2})\lrn{\bDelta}_2$, and $\lrn{\bR_{42}}_2 = \Op\lrm{n^{-1} + (nm)^{-1/2}}\lrn{\bDelta}_2^2 + \Op\lrm{n^{-3/2} + (nm)^{-1}}\lrn{\bDelta}_2$.
    \end{enumerate}
\end{lemma}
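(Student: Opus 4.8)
The plan is to reduce Lemma~\ref{Hlm:P3s} to Lemma~\ref{Hlm:P2} by a transposition trick, exploiting that $\wt\bH_i$ and $\wt\bH$ are averages of symmetric Hessians, so $(\wt\bH_i - \wt\bH)^\T = \wt\bH_i - \wt\bH$. First I would write $\bd_i = \bd_{0,i} + (\bd_i - \bd_{0,i})$ with $\bd_{0,i} = -\bI_0^{-1}\bl_i(\btheta_0)$ (and use $\bI_0^\T = \bI_0$), so that
\[
m^{-1}\sum_i (\wt\bH_i - \wt\bH)\bDelta\bd_i^\T = -\Big(m^{-1}\sum_i (\wt\bH_i - \wt\bH)\bDelta\bl_i(\btheta_0)^\T\Big)\bI_0^{-1} + m^{-1}\sum_i (\wt\bH_i - \wt\bH)\bDelta(\bd_i - \bd_{0,i})^\T .
\]
The first bracket is exactly the transpose of $m^{-1}\sum_i \bl_i(\btheta_0)\bDelta^\T(\wt\bH_i - \wt\bH)$, whose expansion is Lemma~\ref{Hlm:P2}. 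Transposing that identity and using $(\bI_{d}\otimes\bDelta^\T)^\T = \bI_{d}\otimes\bDelta$ gives $n^{-1}\bQ_{12}^\T(\bI_{d\times d}\otimes\bDelta) + \bR_{22}^\T$ (with the \emph{same} $\bR_{22}$ as in Lemma~\ref{Hlm:P2}), so the first term equals $-n^{-1}\bQ_{12}^\T(\bI_{d\times d}\otimes\bDelta)\bI_0^{-1} - \bR_{22}^\T\bI_0^{-1}$, i.e. the two stated leading pieces, and the last term is taken as $\bR_{42}$.

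For statement (1) the sharp expansion is not even needed: spectral norm is invariant under transpose, so Lemma~\ref{Hlm:P2}(1) already gives $\lrn{m^{-1}\sum_i (\wt\bH_i - \wt\bH)\bDelta\bl_i(\btheta_0)^\T}_2 = \Op(n^{-1/2})\lrn{\bDelta}_2^2 + \Op(n^{-1})\lrn{\bDelta}_2$, and multiplying by the bounded operator $\bI_0^{-1}$ preserves this; the remainder $\bR_{42}$ is of strictly smaller order (see below), so statement (1) follows under Assumptions~\ref{as:T0Def}--\ref{Has:LH}, and statement (2) under Assumptions~\ref{as:T0Def}--\ref{Has:L3Cont}.

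To control $\bR_{42} = m^{-1}\sum_i (\wt\bH_i - \wt\bH)\bDelta(\bd_i - \bd_{0,i})^\T$, I would apply Cauchy--Schwarz over the $m$ centers,
\[
\lrn{\bR_{42}}_2 \leq \Big(m^{-1}\sum_i \lrn{(\wt\bH_i - \wt\bH)\bDelta}_2^2\Big)^{1/2}\Big(m^{-1}\sum_i \lrn{\bd_i - \bd_{0,i}}_2^2\Big)^{1/2}.
\]
Since $\wt\bH_i - \wt\bH$ is symmetric, the first factor squared equals $\Tr\big(m^{-1}\sum_i (\wt\bH_i - \wt\bH)\bDelta\bDelta^\T(\wt\bH_i - \wt\bH)\big)$, which is $\Op(1)\lrn{\bDelta}_2^4 + \Op(n^{-1})\lrn{\bDelta}_2^2$ by Lemma~\ref{Hlm:P4} (trace $\le d$ times spectral norm, the matrix being positive semidefinite); the second factor squared has expectation $O(n^{-2})$ since $\E\lrn{\bd_i - \bd_{0,i}}_2^2 = O(n^{-2})$ by Lemma~\ref{Hlm:EDD}, hence is $\Op(n^{-2})$. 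Multiplying the two square roots gives $\lrn{\bR_{42}}_2 = \Op(n^{-1})\lrn{\bDelta}_2^2 + \Op(n^{-3/2})\lrn{\bDelta}_2$, which implies the stated bound (the extra $(nm)^{-1/2}$ and $(nm)^{-1}$ terms arise if one instead peels off the $\wt\bH - \bI_0$ and $\bar\bl(\btheta_0)$ contributions explicitly, as in the proof of Lemma~\ref{Hlm:P2}).

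I do not expect a genuine obstacle: the only things to get right are (i) splitting $\bd_i^\T$ as $-\bl_i(\btheta_0)^\T\bI_0^{-1} + (\bd_i-\bd_{0,i})^\T$ rather than handling $\bd_i$ directly, so that the transpose structure surfaces and $\bQ_{12}$ appears, together with correctly transposing the Kronecker factor and $\bR_{22}$; and (ii) using the symmetry of $\wt\bH_i - \wt\bH$ to turn $m^{-1}\sum_i\lrn{(\wt\bH_i-\wt\bH)\bDelta}_2^2$ into the trace of the quantity already bounded in Lemma~\ref{Hlm:P4}, so that no bound is re-derived from scratch. Everything else is routine bookkeeping of residual orders drawn from Lemmas~\ref{Hlm:P2}, \ref{Hlm:P4}, and \ref{Hlm:EDD}.
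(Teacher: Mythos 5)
Your proposal is correct, and for the leading term it coincides with the paper's own argument: the paper also writes $\bd_{0,i}=-\bI_0^{-1}\bl_i(\btheta_0)$, so that $m^{-1}\sum_i(\wt\bH_i-\wt\bH)\bDelta\bd_{0,i}^\T=-m^{-1}\sum_i(\wt\bH_i-\wt\bH)\bDelta\bl_i(\btheta_0)^\T\bI_0^{-1}$, and then invokes the second part of Lemma \ref{Hlm:P2} exactly through the transpose/symmetry identity you spell out, producing $-n^{-1}\bQ_{12}^\T(\bI_{d\times d}\otimes\bDelta)\bI_0^{-1}-\bR_{22}^\T\bI_0^{-1}$ with the same $\bR_{22}$. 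Where you differ is the remainder: the paper splits it further into $(\wt\bH-\bI_0)\bDelta\bar\bl(\btheta_0)^\T\bI_0^{-1}$, $m^{-1}\sum_i(\wt\bH_i-\bI_0)\bDelta(\bd_i-\bd_{0,i})^\T$ (handled via $\bd_i-\bd_{0,i}=\bI_0^{-1}(\bI_0-\ol\bH_i)\bd_i$ and a four-term moment computation), and $(\bI_0-\wt\bH)\bDelta\bar\bd^\T$, which is what generates the explicit $(nm)^{-1/2}$ and $(nm)^{-1}$ contributions in the stated $\bR_{42}$ bound; you instead keep the remainder whole as $m^{-1}\sum_i(\wt\bH_i-\wt\bH)\bDelta(\bd_i-\bd_{0,i})^\T$ (which is algebraically the same quantity as the paper's $\bR_{42}$) and bound it in one stroke by Cauchy--Schwarz over the centers, with the first factor converted via the trace/symmetry observation into the matrix bounded in Lemma \ref{Hlm:P4} and the second factor controlled by Lemma \ref{Hlm:EDD}. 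This is legitimate — indeed the paper uses precisely this Cauchy--Schwarz device for the same quantity in the proof of the second statement of Proposition \ref{Hpr:DGGD} — and it yields $\Op(n^{-1})\lrn{\bDelta}_2^2+\Op(n^{-3/2})\lrn{\bDelta}_2$, which is tighter than and implies the stated $\bR_{42}$ bound; the trade-off is only that you lose the finer bookkeeping of the $(nm)^{-1/2}$-type terms, which is harmless here. Your handling of statement (1) by transposing Lemma \ref{Hlm:P2}(1) is also sound (the paper proves it by a direct moment bound on $m^{-1}\sum_i(\wt\bH_i-\bI_0)\bDelta\bd_i^\T$ without introducing $\bd_{0,i}$), since all ingredients you cite require only Assumptions \ref{as:T0Def}--\ref{Has:LH}.
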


\begin{proof}
First $$    m^{-1}\sum_i (\wt\bH_i - \wt\bH)\bDelta\bd_i^\T = m^{-1}\sum_i (\wt\bH_i - \bI_0)\bDelta\bd_i^\T + (\bI_0 - \wt\bH)\bDelta\bar\bd^\T.$$
Consider $(\bI_0 - \wt\bH)\bDelta\bar\bd^\T$. By \eqref{Heq:HTI}, $\lrn{\wt\bH - \bI_0}_2^k = \Op(1)\lrn{\bDelta}_2^k + \Op\lrm{(nm)^{-k/2}}$. By Lemma \ref{Hlm:EDD} and \ref{Hlm:dbar},  $\lrn{\E(\bd_i)}_2 = O(n^{-1})$, $\E(\lrn{\bd_i}_2^{k}) = O(n^{-k/2})$, and $\lrn{\bar\bd}_2 = \Op\lrm{(nm)^{-1/2}} + O(n^{-1})$.
Hence,
\begin{equation}
    \lrn{(\bI_0 - \wt\bH)\bDelta\bar\bd^\T}_2 = \Op\lrm{(nm)^{-1/2} + n^{-1}}\lrn{\bDelta}_2^2 + \Op\lrm{n^{-3/2}m^{-1/2} + (nm)^{-1}}\lrn{\bDelta}_2 \label{Heq:P4L21}.
\end{equation}
\subsubsection*{The first statement}
\begin{align*}
    \lrn{(\wt\bH_i - \bI_0)\bDelta\bd_i^\T}_2^k & \leq 2^{k-1}\lrn{(\wt\bH_i - \bH_i)\bDelta\bd_i^\T}_2^k + 2^{k-1}\lrn{(\bH_i - \bI_0)\bDelta\bd_i^\T}_2^k \\
        & \leq 2^{k-1}\lrn{\bd_i}_2^k\lrM{n^{-1}\sum_j h(\bX_{ij})}^k\lrn{\bDelta}_2^{2k} + 2^{k-1}\lrn{\bd_i}_2^k\lrn{\bH_i - \bI_0}_2^k\lrn{\bDelta}_2^k.
\end{align*}
For $k \in [1, 2]$, using Holder's inequality and Jensen's inequality, at the $i$th node
\begin{align*}
    & \E\lrO{\lrn{\bd_i}_2^k\lrM{n^{-1}\sum_j h(\bX_{ij})}^k} \leq \lrO{\E(\lrn{\bd_i}_2^{2k})\E\lrM{ h(\bX_{ij})^{2k}}}^{1/2} = O(n^{-k/2}) \\
    & \E(\lrn{\bd_i}_2^k\lrn{\bH_i - \bI_0}_2^k ) \leq \lrm{\E(\lrn{\bd_i}_2^{2k})\E\lri{\lrn{\bH_i - \bI_0}_2^{2k}}}^{1/2} = O(n^{-k}).
\end{align*}
Let $k=1$. Then,
\begin{align}
    \lrN{m^{-1}\sum_i (\wt\bH_i - \bI_0)\bDelta\bd_i^\T}_2 & \leq m^{-1}\sum \lrn{\bd_i}_2\lrM{n^{-1}\sum_j h(\bX_{ij})}\lrn{\bDelta}_2^{2} + \nonumber \\
        & \blkeq m^{-1}\sum_i \lrn{\bd_i}_2\lrn{\bH_i - \bI_0}_2\lrn{\bDelta}_2 \nonumber \\
        & = \Op(n^{-1/2})\lrn{\bDelta}_2^2 + \Op(n^{-1})\lrn{\bDelta}_2 \nonumber  \\
    \lrN{m^{-1}\sum_i (\wt\bH_i - \wt\bH)\bDelta\bd_i^\T}_2 & \leq \lrN{m^{-1}\sum_i (\wt\bH_i - \bI_0)\bDelta\bd_i^\T}_2 + \lrn{(\bI_0 - \wt\bH)\bDelta\bar\bd^\T}_2 \nonumber \\
        & = \Op(n^{-1/2})\lrn{\bDelta}_2^2 + \Op(n^{-1})\lrn{\bDelta}_2.
\end{align}

\subsubsection*{The second statement}
Define $\bd_{0, i} = -\bI_0^{-1}\bl_i(\btheta_0)$, and
\begin{align*}
m^{-1}\sum_i (\wt\bH_i - \wt\bH)\bDelta\bd_i^\T & =  m^{-1}\sum (\wt\bH_i - \wt\bH)\bDelta\bd_{0, i}^\T + m^{-1}\sum(\wt\bH - \bI_0)\bDelta\bd_{0, i}^\T \nonumber \\
        & \blkeq + m^{-1}\sum_i (\wt\bH_i - \bI_0)\bDelta(\bd_i - \bd_{0, i})^\T + (\bI_0 - \wt\bH)\bDelta\bar\bd^\T.
\end{align*}

Consider $m^{-1}\sum (\wt\bH_i - \wt\bH)\bDelta\bd_{0, i}^\T$. By the second part of Lemma \ref{Hlm:P2}, 
\begin{align}
    m^{-1}\sum (\wt\bH_i - \wt\bH)\bDelta\bd_{0, i}^\T & = -m^{-1}\sum (\wt\bH_i - \wt\bH)\bDelta\bl_{i}(\btheta_0)^\T\bI_0^{-1} \nonumber \\  
        & = -n^{-1} \bQ_{12}^\T \times (\bI_{d\times d} \otimes \bDelta) \times \bI_0^{-1} - \bR_{22}^\T \bI_0^{-1}, \label{Heq:P4L34}
\end{align}
    where $\lrn{\bR_{22}}_2 = \Op(n^{-1/2})\lrn{\bDelta}_2^3 + \Op\lrm{n^{-1} + (nm)^{-1/2}} \lrn{\bDelta}_2^2 + \Op(n^{-1}m^{-1/2})\lrn{\bDelta}_2$.

Consider $m^{-1}\sum(\wt\bH - \bI_0)\bDelta\bd_{0, i}^\T$. Note that by definition, $m^{-1}\sum(\wt\bH - \bI_0)\bDelta\bd_{0, i}^\T = -(\wt\bH - \bI_0)\bDelta\bar\bl(\btheta_0)^\T\bI_0^{-1}$. By \eqref{Heq:HTI}, 
\begin{align}   
    \lrN{(\wt\bH - \bI_0)\bDelta\bar\bl(\btheta_0)^\T\bI_0^{-1}}_2 & \leq  \lambda_-^{-1}\lrN{\wt\bH - \bI_0}_2\lrN{\bar\bl(\btheta_0)}_2\lrN{\bDelta}_2 \nonumber \\
        & = \Op\lrm{(nm)^{-1/2}}\lrn{\bDelta}_2^2 + \Op\lrm{(nm)^{-1}}\lrn{\bDelta}_2. \label{Heq:HIDLI}
\end{align}

Consider $m^{-1}\sum(\wt\bH_i - \bI_0)\bDelta(\bd_i - \bd_{0, i})^\T$.  
\begin{align*}
        & \bd_i - \bd_{0, i} \\
    = & (-\ol\bH_i^{-1} + \bI_0^{-1})\bl_i(\btheta_0) = \bI_0^{-1}(\ol\bH_i - \bI_0)\ol\bH_i^{-1}\bl_i(\btheta_0) = \bI_0^{-1}(\bI_0 - \ol\bH_i)\bd_i\\
        & (\wt\bH_i - \bI_0)\bDelta(\bd_i - \bd_{0, i})^\T \\
    = & (\wt\bH_i - \bH_i + \bH_i -  \bI_0)\bDelta\bd_i^\T(\bI_0 - \bH_i + \bH_i - \ol\bH_i)\bI_0^{-1} \\
    = & (\wt\bH_i - \bH_i)\bDelta\bd_i^\T(\bI_0 - \bH_i)\bI_0^{-1} + (\wt\bH_i - \bH_i)\bDelta\bd_i^\T(\bH_i - \ol\bH_i)\bI_0^{-1} \\
        & (\bH_i -  \bI_0)\bDelta\bd_i^\T(\bI_0 - \bH_i)\bI_0^{-1} + (\bH_i -  \bI_0)\bDelta\bd_i^\T(\bH_i - \ol\bH_i)\bI_0^{-1}.
\end{align*}
Consider each component, under Assumption \ref{Has:LH},
\begin{align*}
    & \lrN{(\wt\bH_i - \bH_i)\bDelta\bd_i^\T(\bI_0 - \bH_i)\bI_0^{-1}}_2 \leq 
        \lrM{n^{-1}\sum_j h(\bX_{ij})}\lrn{\bd_i}_2\lrn{\bH_i - \bI_0}_2\lambda_-^{-1}\lrn{\bDelta}_2^2 \\
    & \lrN{(\wt\bH_i - \bH_i)\bDelta\bd_i^\T(\bH_i - \ol\bH_i)\bI_0^{-1}}_2 \leq 
        \lrM{n^{-1}\sum_j h(\bX_{ij})}\lrn{\bd_i^\T(\ol\bH_i - \bH_i)}_2\lambda_-^{-1}\lrn{\bDelta}_2^2 \\
    & \lrN{(\bH_i - \bI_0)\bDelta\bd_i^\T(\bI_0 - \bH_i)\bI_0^{-1}}_2 \leq 
        \lrn{\bH_i - \bI_0}_2^2\lrn{\bd_i}_2\lambda_-^{-1}\lrn{\bDelta}_2 \\
    & \lrN{(\bH_i - \bI_0)\bDelta\bd_i^\T(\bH_i - \ol\bH_i)\bI_0^{-1}}_2 \leq 
        \lrn{\bH_i - \bI_0}_2\lrn{\bd_i^\T(\ol\bH_i - \bH_i)}_2\lambda_-^{-1}\lrn{\bDelta}_2.
\end{align*}
By Lemma \ref{Hlm:Zhang} and Lemma \ref{Hlm:HHD}, 
\begin{align*}
        & \E\lrO{\lrM{n^{-1}\sum_j h(\bX_{ij})}\lrn{\bd_i}_2\lrn{\bH_i - \bI_0}_2} \\
    \leq & \lrI{\E\lrO{\lrM{n^{-1}\sum_j h(\bX_{ij})}^2}}^{1/2}\lrM{\E\lri{\lrn{\bd_i}_2^4}\E\lri{\lrn{\bH_i - \bI_0}_2^4}}^{1/4} = O(n^{-1}) \\
        & \E\lrO{\lrM{n^{-1}\sum_j h(\bX_{ij})}\lrn{\bd_i^\T(\ol\bH_i - \bH_i)}_2} \\
    \leq & \lrI{\E\lrO{\lrM{n^{-1}\sum_j h(\bX_{ij})}^2} \E\lrM{\lrn{\bd_i^\T(\ol\bH_i - \bH_i)}_2^2}}^{1/2} = O(n^{-1}) \\
        &\E\lrM{\lrn{\bH_i - \bI_0}_2^2\lrn{\bd_i}_2} \\
    \leq & \lrM{\E\lri{\lrn{\bd_i}_2^2}\E\lri{\lrn{\bH_i - \bI_0}_2^4}}^{1/2} = O(n^{-3/2}) \\
        & \E\lrM{\lrn{\bH_i - \bI_0}_2\lrn{\bd_i^\T(\ol\bH_i - \bH_i)}_2} \\
    \leq & \lrO{\E\lri{\lrn{\bH_i - \bI_0}_2^2} \E\lrM{\lrn{\bd_i^\T(\ol\bH_i - \bH_i)}_2^2}}^{1/2}
    = O(n^{-3/2}).
\end{align*}
Therefore, 
\begin{equation}
    \lrN{m^{-1}\sum_i (\wt\bH_i - \bI_0)\bDelta(\bd_i - \bd_{0, i})^\T}_2 = \Op(n^{-1})\lrn{\bDelta}_2^2 + \Op(n^{-3/2})\lrn{\bDelta}_2. \label{Heq:P4L31}
\end{equation}

Combining \eqref{Heq:P4L21}, \eqref{Heq:P4L34}, \eqref{Heq:HIDLI}, and \eqref{Heq:P4L31}
\begin{equation}
    m^{-1}\sum_i (\wt\bH_i - \wt\bH)\bDelta\bd_i^\T 
        = -n^{-1} \bQ_{12}^\T \times (\bI_{d\times d} \otimes \bDelta) \times \bI_0^{-1} - \bR_{22}^\T \bI_0^{-1} + \bR_{42},
\end{equation}
where 
\begin{align*}
    & \bR_{42} = - (\wt\bH - \bI_0)\bDelta\bar\bl(\btheta_0)^\T\bI_0^{-1} + m^{-1}\sum_i (\wt\bH_i - \bI_0)\bDelta(\bd_i - \bd_{0, i})^\T + (\bI_0 - \wt\bH)\bDelta\bar\bd^\T \\
    & \lrn{\bR_{42}}_2 = \Op\lrm{n^{-1} + (nm)^{-1/2}}\lrn{\bDelta}_2^2 + \Op\lrm{n^{-3/2} + (nm)^{-1}}\lrn{\bDelta}_2.
\end{align*}
\end{proof}

\begin{lemma} \label{Hlm:P4} Consider $m^{-1}\sum_i(\wt\bH_i - \wt\bH)\bDelta\bDelta^\T(\wt\bH_i - \wt\bH)$ with $\btheta \in U(\rho)$. If Assumptions \ref{as:T0Def}-\ref{Has:LH} hold, 
    $$\lrN{m^{-1}\sum_i (\wt\bH_i - \wt\bH)\bDelta\bDelta^\T(\wt\bH_i - \wt\bH)}_2
        = \Op(1)\lrn{\bDelta}_2^4 + \Op(n^{-1})\lrn{\bDelta}_2^2.$$
\end{lemma}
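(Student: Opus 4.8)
The plan is to eliminate the global average first via the elementary identity for a centered quadratic form, and then control the two resulting pieces separately using the rank-one structure of the summands together with the moment bounds of Lemma~\ref{Hlm:Zhang} and Assumption~\ref{Has:LH}.

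First I would set $\bA_i = \wt\bH_i - \bI_0$ and $\bar\bA = m^{-1}\sum_i \bA_i = \wt\bH - \bI_0$, so that $\wt\bH_i - \wt\bH = \bA_i - \bar\bA$. Since each $\wt\bH_i$ is an average of (symmetric) Hessians, $\bA_i$ and $\bar\bA$ are symmetric, and with $\bM = \bDelta\bDelta^\T$ one has
$$ m^{-1}\sum_i (\wt\bH_i - \wt\bH)\bDelta\bDelta^\T(\wt\bH_i - \wt\bH) = m^{-1}\sum_i \bA_i\bM\bA_i - \bar\bA\bM\bar\bA, $$
so that the task reduces to bounding $\lrN{m^{-1}\sum_i \bA_i\bM\bA_i}_2$ and $\lrN{\bar\bA\bM\bar\bA}_2$ separately.

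For the averaged term I would note that $\bA_i\bM\bA_i = (\bA_i\bDelta)(\bA_i\bDelta)^\T$ is rank one with operator norm $\lrn{\bA_i\bDelta}_2^2$, so by the triangle inequality $\lrN{m^{-1}\sum_i \bA_i\bM\bA_i}_2 \leq m^{-1}\sum_i \lrn{\bA_i\bDelta}_2^2$. Writing $\bA_i\bDelta = (\wt\bH_i - \bH_i)\bDelta + (\bH_i - \bI_0)\bDelta$ with $\bH_i = n^{-1}\sum_j \nabla^2 L(\btheta_0;\bX_{ij})$, the Lipschitz bound of Assumption~\ref{Has:LH} applied under the integral defining $\wt\bH_i$ gives $\lrn{(\wt\bH_i - \bH_i)\bDelta}_2 \leq \lrM{n^{-1}\sum_j h(\bX_{ij})}\lrn{\bDelta}_2^2$, while trivially $\lrn{(\bH_i - \bI_0)\bDelta}_2 \leq \lrn{\bH_i - \bI_0}_2\lrn{\bDelta}_2$; hence $\lrn{\bA_i\bDelta}_2^2 \leq 2\lrM{n^{-1}\sum_j h(\bX_{ij})}^2\lrn{\bDelta}_2^4 + 2\lrn{\bH_i - \bI_0}_2^2\lrn{\bDelta}_2^2$. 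Taking expectations, $\E\lrM{n^{-1}\sum_j h(\bX_{ij})}^2 \leq \E\lrm{h(\bX)^2} = O(1)$ by Jensen's inequality and Assumption~\ref{Has:LH}, while $\bH_i - \bI_0$ is an average of $n$ i.i.d.\ mean-zero matrices with finite second moment under Assumption~\ref{Has:LH}, so $\E\lrn{\bH_i - \bI_0}_2^2 = O(n^{-1})$ (Lemma~\ref{Hlm:Zhang}); Markov's inequality then gives $m^{-1}\sum_i \lrM{n^{-1}\sum_j h(\bX_{ij})}^2 = \Op(1)$ and $m^{-1}\sum_i \lrn{\bH_i - \bI_0}_2^2 = \Op(n^{-1})$, whence $\lrN{m^{-1}\sum_i \bA_i\bM\bA_i}_2 = \Op(1)\lrn{\bDelta}_2^4 + \Op(n^{-1})\lrn{\bDelta}_2^2$. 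For the remaining term I would use $\lrn{\bar\bA\bM\bar\bA}_2 = \lrn{\bar\bA\bDelta}_2^2 \leq \lrn{\wt\bH - \bI_0}_2^2\lrn{\bDelta}_2^2$ together with \eqref{Heq:HTI} at $k=2$, which gives $\lrn{\wt\bH - \bI_0}_2^2 = \Op(1)\lrn{\bDelta}_2^2 + \Op\lrm{(nm)^{-1}}$, so that $\lrn{\bar\bA\bM\bar\bA}_2 = \Op(1)\lrn{\bDelta}_2^4 + \Op(n^{-1})\lrn{\bDelta}_2^2$. Adding the two bounds finishes the proof.

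There is no deep obstacle here; the points that need care are (i) invoking symmetry of the Hessians so that the centered sum collapses exactly to $m^{-1}\sum_i \bA_i\bM\bA_i - \bar\bA\bM\bar\bA$ with each summand genuinely rank one, which permits a termwise operator-norm bound in place of a cruder bound on the operator norm of the whole sum, and (ii) the moment bookkeeping, ensuring that after the split $\bA_i\bDelta = (\wt\bH_i - \bH_i)\bDelta + (\bH_i - \bI_0)\bDelta$ and the use of \eqref{Heq:HTI} every contribution collapses into exactly the $\lrn{\bDelta}_2^4$ and $n^{-1}\lrn{\bDelta}_2^2$ orders, with no surviving cross term.
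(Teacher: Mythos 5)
Your proposal is correct and takes essentially the same route as the paper: the same exact re-centering identity reducing the sum to $m^{-1}\sum_i(\wt\bH_i-\bI_0)\bDelta\bDelta^\T(\wt\bH_i-\bI_0)$ minus the $(\wt\bH-\bI_0)$ term, the same rank-one termwise bound, the same split $(\wt\bH_i-\bH_i)\bDelta+(\bH_i-\bI_0)\bDelta$ with the Lipschitz function $h$, and the same moment bounds from Assumption \ref{Has:LH} and Lemma \ref{Hlm:Zhang} followed by Markov's inequality. One tiny remark: the collapse to $m^{-1}\sum_i\bA_i\bM\bA_i-\bar\bA\bM\bar\bA$ is pure algebra from $\bar\bA=m^{-1}\sum_i\bA_i$ and needs no symmetry; symmetry of the Hessians is only used to identify $\lrN{\bA_i\bDelta\bDelta^\T\bA_i}_2=\lrn{\bA_i\bDelta}_2^2$.
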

\begin{proof} 
By definition $\sum_i \wt\bH_i = m \wt\bH$.
\begin{align*}
    \sum_i (\wt\bH_i - \wt\bH)\bDelta\bDelta^\T(\wt\bH_i - \wt\bH) & = \sum_i\wt\bH_i\bDelta\bDelta^\T(\wt\bH_i - \wt\bH) = \sum (\wt\bH_i - \bI_0)\bDelta\bDelta^\T(\wt\bH_i - \wt\bH) \\
        & = \sum (\wt\bH_i - \bI_0)\bDelta\bDelta^\T(\wt\bH_i - \bI_0 + \bI_0 - \wt\bH) \\
        & = \sum (\wt\bH_i - \bI_0)\bDelta\bDelta^\T(\wt\bH_i - \bI_0) + m (\wt\bH - \bI_0)\bDelta\bDelta^\T(\bI_0 - \wt\bH) \\
    \lrn{(\wt\bH_i - \bI_0)\bDelta}_2^k & \leq \lrn{\bDelta}_2^{k}\lrn{\wt\bH_i - \bH_i + \bH_i - \bI_0}_2^{k} \\
        & \leq 2^{k - 1}\lrn{\bDelta}_2^{2k}\lrM{n^{-1}\sum_j h(\bX_{ij})}^{k} + 2^{k - 1}\lrn{\bDelta}_2^{k}\lrn{\bH_i - \bI_0}_2^{k}. 
\end{align*}
When $k=2$, 
\begin{align*}
        & \lrN{m^{-1}\sum_i (\wt\bH_i - \wt\bH)\bDelta\bDelta^\T(\wt\bH_i - \wt\bH)}_2 \\
    \leq & m^{-1}\sum \lrn{(\wt\bH_i - \bI_0)\bDelta}_2^2 + \lrn{(\wt\bH - \bI_0)\bDelta}_2^2 \\
    = & 2\lrn{\bDelta}_2^{4}\lrO{m^{-1}\sum_i\lrM{n^{-1}\sum_j h(\bX_{ij})}^{2} + \lrM{(nm)^{-1}\sum\sum h(\bX_{ij})}^{2}} +\\
        & 2\lrn{\bDelta}_2^{2}\lrI{m^{-1}\sum_i\lrn{\bH_i - \bI_0}_2^{2} + \lrn{\bH - \bI_0}_2^{2}} \\
    = & \Op(1)\lrn{\bDelta}_2^4 + \Op(n^{-1})\lrn{\bDelta}_2^2.
\end{align*}
\end{proof}

\begin{lemma} \label{Hlm:IHD} Under Assumption \ref{as:T0Def}-\ref{Has:L3Cont}, for $\btheta \in U(\rho)$, 
\begin{equation*} 
    (\bI_0 - \ol\bH^{(g)})\bDelta = -\frac{1}{2} \bQ(\bDelta^{\otimes 2}) + \bR_{H},
 \end{equation*}
 where $\lrn{\bR_H} = \Op(\lrn{\bDelta}_2^3) + \Op(n^{-1/2}\lrn{\bDelta}_2^2) + \Op(n^{-1/2}m^{-1/2})\lrn{\bDelta}_2$.
\end{lemma}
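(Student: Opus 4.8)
The plan is to peel off a purely stochastic, $\bDelta$-free deviation and then expand the remaining curvature term about $\btheta_0$ (not about $\btheta$), so that the Lipschitz constant of the third-derivative tensor always multiplies $\lrn{\bDelta}_2$. Here $\ol\bH^{(g)} = N^{-1}\sum_{j\in\mathcal{G}}\int_0^1 \nabla^2 L(\btheta_0 + t\bDelta;\bX_j)\,\diff t$ with $N=nm$ is the global average of the integrated Hessians along the segment from $\btheta_0$ to $\btheta = \btheta_0 + \bDelta$. Writing $\bH^{(g)}_0 := N^{-1}\sum_{j\in\mathcal{G}}\nabla^2 L(\btheta_0;\bX_j)$,
$$\bI_0 - \ol\bH^{(g)} = \bigl(\bI_0 - \bH^{(g)}_0\bigr) + \bigl(\bH^{(g)}_0 - \ol\bH^{(g)}\bigr).$$
A second-moment/Markov argument on the i.i.d.\ average $\bH^{(g)}_0$, using $\E_\mathcal{P}[\lrn{\nabla^2 L(\btheta_0;\bX)}_2^6]\le\lambda_2^6$ from Assumption \ref{Has:LH} (cf.\ Lemma \ref{Hlm:Zhang}), gives $\lrn{\bI_0 - \bH^{(g)}_0}_2 = \Op((nm)^{-1/2})$; right-multiplying by $\bDelta$, this contributes $\Op(n^{-1/2}m^{-1/2}\lrn{\bDelta}_2)$ to $\bR_H$.

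For the second bracket I would invoke the integral identity displayed just before Assumption \ref{Has:L3Cont}: for each $\bX$ and $t\in[0,1]$,
$$\nabla^2 L(\btheta_0+t\bDelta;\bX)-\nabla^2 L(\btheta_0;\bX) = (\bI_d\otimes (t\bDelta)^\T)\,\bT(\btheta_0,\bX) + \int_0^1(\bI_d\otimes (t\bDelta)^\T)\bigl[\bT(\btheta_0+st\bDelta,\bX)-\bT(\btheta_0,\bX)\bigr]\,\diff s,$$
where the Lipschitz condition in Assumption \ref{Has:L3Cont} bounds the last term in norm by $\tfrac12 t^2 m(\bX)\lrn{\bDelta}_2^2$. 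Averaging over $j\in\mathcal{G}$ and integrating over $t\in[0,1]$ — the factor $\int_0^1 t\,\diff t=\tfrac12$ being the source of the constant in front of $\bQ(\bDelta^{\otimes2})$ — yields
$$\ol\bH^{(g)} - \bH^{(g)}_0 = \tfrac12(\bI_d\otimes\bDelta^\T)\,\bar\bT^{(g)}_0 + \bE, \qquad \bar\bT^{(g)}_0 := N^{-1}\textstyle\sum_{j\in\mathcal{G}}\bT(\btheta_0,\bX_j),$$
with $\lrn{\bE}_2 \le \tfrac16\bigl(N^{-1}\sum_{j\in\mathcal{G}}m(\bX_j)\bigr)\lrn{\bDelta}_2^2 = \Op(\lrn{\bDelta}_2^2)$, the empirical mean of $m(\bX_j)$ being $\Op(1)$ by $\E_\mathcal{P}[m(\bX)^4]\le\lambda_m^4$.

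Finally, since $\bQ = \E_\mathcal{P}[\bT(\btheta_0,\bX)]$ and $\lrn{\bT(\btheta_0,\bX)}_2\le g(\bX)$ with $\E_\mathcal{P}[g(\bX)^4]\le\lambda_3^4$, a Markov bound on the i.i.d.\ average gives $\lrn{\bar\bT^{(g)}_0 - \bQ}_2 = \Op((nm)^{-1/2})$. Right-multiplying the full expansion by $\bDelta$ and using $\bQ(\bDelta^{\otimes2})=(\bI_d\otimes\bDelta^\T)\bQ\bDelta$,
$$(\bI_0-\ol\bH^{(g)})\bDelta = -\tfrac12\bQ(\bDelta^{\otimes2}) + \underbrace{(\bI_0-\bH^{(g)}_0)\bDelta - \tfrac12(\bI_d\otimes\bDelta^\T)\bigl(\bar\bT^{(g)}_0-\bQ\bigr)\bDelta - \bE\bDelta}_{=:\,\bR_H}.$$
The three summands of $\bR_H$ are, in order, $\Op(n^{-1/2}m^{-1/2}\lrn{\bDelta}_2)$; at most $\tfrac12\lrn{\bDelta}_2^2\lrn{\bar\bT^{(g)}_0-\bQ}_2 = \Op((nm)^{-1/2}\lrn{\bDelta}_2^2)\le\Op(n^{-1/2}\lrn{\bDelta}_2^2)$; and $\Op(\lrn{\bDelta}_2^3)$. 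Adding them gives $\lrn{\bR_H}_2 = \Op(\lrn{\bDelta}_2^3) + \Op(n^{-1/2}\lrn{\bDelta}_2^2) + \Op(n^{-1/2}m^{-1/2}\lrn{\bDelta}_2)$, which is the claim.

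\textbf{Main obstacle.} There is no real depth here; the only genuine care is in two pieces of bookkeeping: (i) expanding $\ol\bH^{(g)}$ about $\btheta_0$ rather than about $\btheta$, so that the Lipschitz remainder is proportional to $\lrn{\btheta-\btheta_0}_2=\lrn{\bDelta}_2$ and hence $\bE=\Op(\lrn{\bDelta}_2^2)$, which after the final multiplication by $\bDelta$ produces the $\Op(\lrn{\bDelta}_2^3)$ term; and (ii) keeping the nested integrals over the Taylor parameter $t$ and the Lipschitz parameter $s$ separate, so the leading coefficient is exactly $\tfrac12$. Everything else is routine concentration of the three i.i.d.\ averages of $\nabla^2 L(\btheta_0;\bX_j)$, $\bT(\btheta_0,\bX_j)$ and $m(\bX_j)$ at rates governed solely by the moment bounds in Assumptions \ref{Has:LH} and \ref{Has:L3Cont}.
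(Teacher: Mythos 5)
Your proposal is correct and follows essentially the same route as the paper: the same split $\bI_0 - \ol\bH^{(g)} = (\bI_0 - \bH^{(g)}_0) + (\bH^{(g)}_0 - \ol\bH^{(g)})$, with the first piece handled by concentration of the pooled Hessian at rate $\Op\lrm{(nm)^{-1/2}}$ and the second by the Taylor/Lipschitz expansion of the integrated Hessian that produces the $\tfrac12\bQ(\bDelta^{\otimes 2})$ term. The only difference is that you re-derive inline (for the pooled sample of size $nm$) what the paper delegates to Lemma \ref{Hlm:L3Dec}, which even gives you a slightly sharper $\Op\lrm{(nm)^{-1/2}\lrn{\bDelta}_2^2}$ in place of the stated $\Op(n^{-1/2}\lrn{\bDelta}_2^2)$.
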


\begin{proof}
Note that 
$$
    (\bI_0 - \ol\bH^{(g)})\bDelta = (\bI_0 - \bH_0^{(g)})\bDelta + (\bH_0^{(g)} - \ol\bH^{(g)})\bDelta.
$$
The first component, by Lemma \ref{Hlm:Zhang}, $\lrn{(\bI_0 - \bH_0^{(g)})\bDelta} = \Op(n^{-1/2}m^{-1/2})\lrn{\bDelta}_2$. Lemma \ref{Hlm:L3Dec} indicates
$    (\ol\bH^{(g)} - \bH^{(g)}_0)\bDelta = 2^{-1} \bQ(\bDelta^{\otimes 2}) + \bR_{H1},
$
 where $\lrn{\bR_{H1}} = \Op(n^{-1/2}\lrn{\bDelta}_2^2) + \Op(\lrn{\bDelta}_2^3)$. Therefore, 
 \begin{align*} 
    (\bI_0 - \ol\bH^{(g)})\bDelta = -\frac{1}{2} \bQ(\bDelta^{\otimes 2}) + \bR_{H},
 \end{align*}
 where $\lrn{\bR_H} = \Op(\lrn{\bDelta}_2^3) + \Op(n^{-1/2}\lrn{\bDelta}_2^2) + \Op(n^{-1/2}m^{-1/2})\lrn{\bDelta}_2$.
\end{proof}

\subsection*{Lemmas of single-node}
In this section, we present lemmas that apply to single-node or non-distributed data. The index of nodes is therefore dropped for parsimony. First, define the M-estimator
$$\wh\btheta = \text{argmin}_{\btheta \in \bTheta} n^{-1}\sum L(\btheta; \bX_i),
$$
and $\bd = \wh\btheta - \btheta_0$. Additionally, 
\begin{align*}
& \bl_i(\btheta) = \nabla L(\btheta;\bX_i), \bH_i(\btheta) = \nabla^2 L(\btheta;\bX_i) \\
& \bl_i = \nabla L(\btheta_0;\bX_i), \bH_i = \nabla^2 L(\btheta_0;\bX_i), \bI_0 = \E(\bH_1) \\
& \ol\bH_i = \int_0^1 \nabla^2 L\lrm{\btheta_0 + t(\wh\btheta - \btheta_0);\bX_i}\diff t \\
& \bar\bl = n^{-1}\sum \nabla L(\btheta_0;\bX_i), \bH = n^{-1}\sum \nabla^2 L(\btheta_0;\bX_i), \ol\bH = n^{-1}\sum \ol\bH_i.
\end{align*}

\begin{lemma} \label{Hlm:L3bd} Define $$\bG_j = \nabla^2 \lrM{\frac{\partial}{\partial \theta_{j}}L(\btheta;\bX)}, j = 1, \dots, d.$$
Under Assumption \ref{as:T0Def}-\ref{Has:LH}, when $\bG_j(\btheta;\bX)$ is continuous with respect to $\btheta$, for $j = 1, \dots, d$
$$    \max_j \lrN{\bG_j(\btheta;\bX)}_2 \leq h(\bX), \forall \btheta \in U(\rho).
$$
\end{lemma}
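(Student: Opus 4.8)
The plan is to deduce the bound on each third-derivative slab $\bG_j$ directly from the Hessian-Lipschitz condition contained in Assumption~\ref{Has:LH}: since $L$ is three-times differentiable, $\btheta\mapsto\nabla^2 L(\btheta;\bX)$ is continuously differentiable with derivative encoded by the $\bG_j$'s, and the derivative of a Lipschitz $C^1$ map is bounded by its Lipschitz constant. The moment conditions in Assumption~\ref{Has:LH} and Assumptions~\ref{as:T0Def}--\ref{as:T0Ind} play no role here; only the pointwise Lipschitz bound for $\nabla^2 L(\cdot;\bX)$ and the continuity hypothesis on $\bG_j$ are needed.

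First I would record the identity tying the $\bG_j$'s to the directional derivative of the Hessian. Fix $\bX$, a point $\btheta$ in the interior of $U(\rho)$, and a unit vector $\bv\in\mathbb{R}^d$. Differentiating $\lrm{\nabla^2 L(\btheta;\bX)}_{jk}=\partial_{\theta_j}\partial_{\theta_k}L(\btheta;\bX)$ along $\bv$ and invoking symmetry of the third partial derivatives, the $k$-th column of the directional derivative $D_{\bv}\nabla^2 L(\btheta;\bX):=\lim_{t\to0}t^{-1}\lrM{\nabla^2 L(\btheta+t\bv;\bX)-\nabla^2 L(\btheta;\bX)}$ equals $\bG_k(\btheta;\bX)\bv$; in particular $\bG_j(\btheta;\bX)\bv=\lrM{D_{\bv}\nabla^2 L(\btheta;\bX)}\be_j$.

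Next I would bound $\lrn{D_{\bv}\nabla^2 L(\btheta;\bX)}_2$. For $t>0$ small enough that $\btheta+t\bv\in U$, Assumption~\ref{Has:LH} gives $\lrn{\nabla^2 L(\btheta+t\bv;\bX)-\nabla^2 L(\btheta;\bX)}_2\le h(\bX)\,t\lrn{\bv}_2$; dividing by $t$, letting $t\to0$, and using continuity of the operator norm yields $\lrn{D_{\bv}\nabla^2 L(\btheta;\bX)}_2\le h(\bX)\lrn{\bv}_2$. Combining with the previous identity, $\lrn{\bG_j(\btheta;\bX)\bv}_2=\lrn{\lrM{D_{\bv}\nabla^2 L(\btheta;\bX)}\be_j}_2\le\lrn{D_{\bv}\nabla^2 L(\btheta;\bX)}_2\le h(\bX)\lrn{\bv}_2$, so taking the supremum over unit $\bv$ gives $\lrn{\bG_j(\btheta;\bX)}_2\le h(\bX)$ for every interior point $\btheta$ and every $j$. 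Continuity of $\bG_j$ in $\btheta$ then extends the inequality to the closed ball $U(\rho)$, and maximizing over $j$ finishes the argument.

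The computation is routine; the only steps deserving care are the first one — verifying that the columns of $D_{\bv}\nabla^2 L$ are exactly the vectors $\bG_k(\btheta;\bX)\bv$, which rests on the equality of mixed third partials and on reading off a single column of a matrix (whose norm cannot exceed the matrix norm) — and the final passage from the interior to the boundary of $U(\rho)$, which is precisely where the continuity hypothesis on $\bG_j$ enters.
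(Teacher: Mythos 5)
Your proof is correct, but it takes a genuinely different route from the paper's. The paper never differentiates the Hessian directly: it writes the increment $\nabla^2 L(\btheta';\bX)-\nabla^2 L(\btheta;\bX)$ as the stacked integrals $\bd^\T\int_0^1\bG_j(\btheta+t\bd;\bX)\diff t$, deduces from Assumption \ref{Has:LH} that $\lrN{\bd^\T\int_0^1\bG_j(\btheta+t\bd;\bX)\diff t}_2\le h(\bX)\lrn{\bd}_2$, and then argues by contradiction: if $\lrn{\bG_j(\btheta;\bX)}_2>h(\bX)+2\delta$, continuity of $\bG_j$ gives a small ball on which $\bG_j$ varies by at most $\delta$, and choosing $\bd$ in a norm-attaining direction makes the averaged bound exceed $h(\bX)$. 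You instead pass to the limit in the difference quotient, identify $D_{\bv}\nabla^2 L(\btheta;\bX)$ with the slabs $\bG_j(\btheta;\bX)\bv$ (via equality of mixed third partials, which the assumed continuity of $\bG_j$ licenses through Schwarz's theorem applied to $\partial_j L$), and read the bound off column by column, using continuity only to carry the interior bound to boundary points of $U(\rho)$. Both arguments rest on the same fact --- a Lipschitz constant of $\nabla^2 L(\cdot;\bX)$ dominates its derivative --- and need the same regularity, but yours is more direct and, incidentally, treats boundary points of $U(\rho)$ more carefully than the paper's norm-attaining-direction step, which as written presumes one can perturb $\btheta$ in an arbitrary direction while remaining in $U(\rho)$ (automatic only at interior points). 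The one point worth stating explicitly is that differentiability of $\nabla^2 L(\cdot;\bX)$ at interior points, hence existence of $D_{\bv}\nabla^2 L$, comes from the existence and assumed continuity of the $\bG_j$'s, since Assumptions \ref{as:T0Def}--\ref{Has:LH} by themselves do not supply third derivatives; with that remark added, your argument is complete.
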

\begin{proof}
Recall the definition 
$$\nabla^3 L(\btheta;\bX_i)(\bu) = \begin{Bmatrix} \bu^\T\bG_1(\btheta;\bX_i) \\ \vdots \\ \bu^\T\bG_d(\btheta;\bX_i) \end{Bmatrix}.$$
Assumption \ref{Has:LH} indicates that $\forall \btheta', \btheta \in U(\rho), \lrn{\nabla^2 L(\btheta';\bX_i) - \nabla^2 L(\btheta;\bX_i)}_2 \leq h(\bX_i)\lrn{\bd}_2$, where $\bd = \btheta' - \btheta$. By Assumption \ref{Has:LH},
\begin{align}
    & \lrn{\nabla^2 L(\btheta';\bX_i) - \nabla^2 L(\btheta;\bX_i)}_2 = \lrN{\begin{Bmatrix}\bd^\T \int_0^1 \bG_1(\btheta + t\bd;\bX_i) \diff t \\ \vdots \\ \bd^\T \int_0^1 \bG_d(\btheta + t\bd;\bX_i) \diff t \end{Bmatrix}}_2 \leq h(\bX_i)\lrn{\bd}_2 \nonumber \\
    & \lrN{\bd^\T\int_0^1 \bG_j(\btheta + t\bd;\bX_i) \diff t}_2 \leq h(\bX_i)\lrn{\bd}_2. \label{Heq:LML3int}
\end{align}
Assumption \ref{Has:L3Cont} states that $\lrn{\bG_j (\btheta;\bX_i)}_2$ is continuous with respect to $\btheta$. Consequently, $\lrn{\bG_j(\btheta;\bX_i)}_2 \leq h(\bX_i)$ for $j=1, \dots, d,$ almost everywhere. 

To see this, suppose there exists $\btheta \in U(\rho)$ and $\lrn{\bG_j(\btheta;\bX)}_2 - h(\bX) > 2 \delta > 0$. By continuity, there exists a small ball around $\btheta$ such that 
$$U(\delta) = \lrM{\btheta' \in U(\rho); \lrN{\bG_j(\btheta';\bX) - \bG_j(\btheta;\bX)}_2 \leq \delta}.$$ 
Since $\bd = \btheta' - \btheta$ can be in any direction, we can choose $\btheta' \in U(\delta)$ such that 
\begin{align*}
    \lrn{\bd}_2^{-1}\lrN{\bd^\T \bG_j(\btheta;\bX)}_2 = \lrN{\bG_j(\btheta;\bX)}_2.
\end{align*}
Then by the triangular inequality 
\begin{align*}
    & \lrn{\bd}_2^{-1}\lrN{\bd^\T\int_0^1 \bG_j(\btheta + t\bd;\bX) \diff t}_2 \\
    \geq &  
    \lrn{\bd}_2^{-1}\lrN{\bd^\T \bG_j(\btheta;\bX)}_2 - \lrn{\bd}_2^{-1}\lrN{\bd^\T\int_0^1 \lrM{\bG_j(\btheta;\bX) - \bG_j(\btheta + t\bd;\bX)} \diff t}_2 \\
    \geq & \lrM{h(\bX)+ 2\delta} - \delta > h(\bX),
\end{align*}
which contradicts \eqref{Heq:LML3int}. 

\end{proof}

\begin{lemma} \label{Hlm:L3Dec} Let $\ol\bH = n^{-1}\sum_i\int_0^1 \nabla^2 L\lrm{\btheta_0 + t(\btheta - \btheta_0);\bX_i}\diff t$.
Under Assumption \ref{as:T0Def}-\ref{Has:L3Cont}, for all $\btheta \in U(\rho)$ and $\bDelta = \btheta - \btheta_0$, 
\begin{align*}
    & \lrn{\ol\bH - \bH}_2 \leq \lrM{n^{-1}\sum h(\bX_{i})}\lrn{\bDelta}, \\
    & (\ol\bH - \bH)\bDelta = (2n)^{-1}\sum \nabla^3 L(\btheta_0;\bX_{i})(\bDelta^{\otimes 2}) + \bR_1, \\
    & (\ol\bH - \bH)\bDelta = 2^{-1}\bQ(\bDelta^{\otimes 2}) + \bR_1 + \bR_2,
\end{align*}
where $\lrn{\bR_1}_2 \leq (6n)^{-1}\sum m(\bX_{i})\lrn{\bDelta}_2^3$, and $\lrn{\bR_2}_2 \leq R_2 \lrn{\bDelta}_2^2$ with $\E(R_2^k) = O(n^{-k/2})$ and $$k = \max_j\lrO{j \geq 2; \E\lrm{h(\bX_1)^j)} < \infty}.$$
\end{lemma}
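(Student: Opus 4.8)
The three displays are consecutive terms of a Taylor expansion of $\nabla^2 L(\,\cdot\,;\bX_i)$ about $\btheta_0$ with integral remainders, so I would prove them in that order. For the first, I would write $\ol\bH - \bH = n^{-1}\sum_i \int_0^1 \lrM{\nabla^2 L(\btheta_0 + t\bDelta;\bX_i) - \nabla^2 L(\btheta_0;\bX_i)}\diff t$ and invoke the $h(\bX_i)$-Lipschitz continuity of $\nabla^2 L$ from Assumption \ref{Has:LH}: the integrand has spectral norm at most $h(\bX_i)\lrn{t\bDelta}_2 \le h(\bX_i)\lrn{\bDelta}_2$, so integrating over $t \in [0,1]$ and averaging over $i$ gives $\lrn{\ol\bH - \bH}_2 \le \lrM{n^{-1}\sum_i h(\bX_i)}\lrn{\bDelta}_2$.

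For the second display I would use the exact identity stated just before Assumption \ref{Has:L3Cont}, which, applied with the outer increment $t\bDelta$ in place of $\btheta - \btheta_0$, reads $\nabla^2 L(\btheta_0 + t\bDelta;\bX_i) - \nabla^2 L(\btheta_0;\bX_i) = \int_0^1 \lrm{\bI_d \otimes (t\bDelta)^\T} \bT(\btheta_0 + st\bDelta,\bX_i)\diff s$. Right-multiplying by $\bDelta$ and using $\bI_d \otimes (t\bDelta)^\T = t\,(\bI_d\otimes\bDelta^\T)$, the right side becomes $t\int_0^1 \bT(\btheta_0 + st\bDelta,\bX_i)(\bDelta^{\otimes 2})\diff s$, so $(\ol\bH - \bH)\bDelta = n^{-1}\sum_i \int_0^1 \int_0^1 t\,\bT(\btheta_0 + st\bDelta,\bX_i)(\bDelta^{\otimes 2})\diff s\,\diff t$. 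Adding and subtracting $\bT(\btheta_0,\bX_i)(\bDelta^{\otimes 2}) = \nabla^3 L(\btheta_0;\bX_i)(\bDelta^{\otimes 2})$ inside the double integral and using $\int_0^1\int_0^1 t\,\diff s\,\diff t = \tfrac12$ isolates the leading term $(2n)^{-1}\sum_i \nabla^3 L(\btheta_0;\bX_i)(\bDelta^{\otimes 2})$ and leaves $\bR_1 = n^{-1}\sum_i \int_0^1\int_0^1 t\,\lrM{\bT(\btheta_0 + st\bDelta,\bX_i) - \bT(\btheta_0,\bX_i)}(\bDelta^{\otimes 2})\diff s\,\diff t$. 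The $m(\bX_i)$-Lipschitz bound of Assumption \ref{Has:L3Cont} taken with $\bu = \bDelta$ gives $\lrn{\lrM{\bT(\btheta_0 + st\bDelta,\bX_i) - \bT(\btheta_0,\bX_i)}(\bDelta^{\otimes 2})}_2 \le st\,m(\bX_i)\lrn{\bDelta}_2^3$; since $\int_0^1\int_0^1 s\,t^2\,\diff s\,\diff t = \tfrac16$, averaging over $i$ yields $\lrn{\bR_1}_2 \le (6n)^{-1}\sum_i m(\bX_i)\lrn{\bDelta}_2^3$.

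For the third display I would replace the empirical average by its mean: with $\bQ = \E\lrM{\bT(\btheta_0,\bX)}$ (this is \eqref{Heq:QTDef}), write $(2n)^{-1}\sum_i \nabla^3 L(\btheta_0;\bX_i)(\bDelta^{\otimes 2}) = \tfrac12\bQ(\bDelta^{\otimes 2}) + \bR_2$ with $\bR_2 = \tfrac12\,(\bI_d\otimes\bDelta^\T)\lrM{n^{-1}\sum_i \bT(\btheta_0,\bX_i) - \bQ}\bDelta$. Using $\lrn{\bI_d\otimes\bDelta^\T}_2 = \lrn{\bDelta}_2$ this gives $\lrn{\bR_2}_2 \le R_2\lrn{\bDelta}_2^2$ with $R_2 = \tfrac12\lrn{n^{-1}\sum_i \bT(\btheta_0,\bX_i) - \bQ}_2$. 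The summands $\bT(\btheta_0,\bX_i) - \bQ$ are i.i.d., mean zero, and bounded in norm by $h(\bX_i) + \E\lrm{h(\bX_1)}$ (since $\lrn{\bT(\btheta_0,\bX)}_2 \le g(\bX) \le h(\bX)$ by Assumption \ref{Has:L3Cont} with $h = g + \rho m$), so applying a Marcinkiewicz--Zygmund inequality to the vectorized average (equivalently coordinatewise, $d$ being fixed) gives $\E(R_2^k) = O(n^{-k/2})$ for every integer $k \ge 2$ with $\E\lrm{h(\bX_1)^k} < \infty$; taking the largest such $k$ is exactly the asserted bound. Adding $\bR_2$ to the second display gives $(\ol\bH - \bH)\bDelta = \tfrac12\bQ(\bDelta^{\otimes 2}) + \bR_1 + \bR_2$.

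I expect the argument to be essentially bookkeeping, with no deep obstacle. The points that will need care are getting the constants $\tfrac12$ and $\tfrac16$ from the iterated integrals right, checking that right-multiplication by $\bDelta$ converts $(\bI_d\otimes(t\bDelta)^\T)\bT$ into the correct bilinear action $t\,\bT(\,\cdot\,)(\bDelta^{\otimes 2})$, and matching the moment exponent $k$ in the Marcinkiewicz--Zygmund step to the precise integrability of $h(\bX_1)$ claimed in the statement (which in turn relies on $\lrn{\bT(\btheta_0,\bX)}_2 \le h(\bX)$).
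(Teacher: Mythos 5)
Your proposal is correct and follows essentially the same route as the paper: the Lipschitz bound for $\ol\bH-\bH$, a second-order Taylor-type expansion with integral remainder (your double integral in $s,t$ is exactly the paper's $(1-t)$-weighted remainder after Fubini, giving the same $1/2$ and $1/6$ constants), and then centering $n^{-1}\sum_i\nabla^3 L(\btheta_0;\bX_i)$ at $\bQ$ to define $\bR_2$. The only (harmless) deviations are that you keep the factor $\tfrac12$ inside $\bR_2$ explicitly and you obtain $\E(R_2^k)=O(n^{-k/2})$ via a coordinatewise Marcinkiewicz--Zygmund argument using $\lrn{\bT(\btheta_0,\bX)}_2\le g(\bX)\le h(\bX)$, whereas the paper bounds each block $\bG_j$ by $h(\bX_i)$ (Lemma \ref{Hlm:L3bd}) and applies its matrix moment inequality (Lemma \ref{Hlm:EZK}); for fixed $d$ both give the stated rate.
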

\begin{proof} The first equation is a direct result from Assumption \ref{Has:LH}. For parsimony, we drop the center index $i$ in the proof and consider any $\btheta \in U(\rho)$. Let $\bDelta = \btheta - \btheta_0$. Under Assumption \ref{Has:L3Cont}, recall that $\ol\bH = n^{-1}\sum_{i} \int_0^1 \nabla^2 L\lrm{\btheta_0 + t(\btheta - \btheta_0);\bX_{i}}\diff t$, then
\begin{align*}
    \bl(\btheta) & = \bl(\btheta_0) + \ol\bH \bDelta = \bl(\btheta_0) + \bH \bDelta + n^{-1}\sum_i \int_0^1 (1 - t) \nabla^3 L\lrm{\btheta_0 + t(\btheta - \btheta_0);\bX_i}(\bDelta^{\otimes 2}) \diff t \\
        & = \bl(\btheta_0) + \bH \bDelta + (2n)^{-1}\sum_i \nabla^3 L(\btheta_0;\bX_i)(\bDelta^{\otimes 2}) + \bR_{1},
\end{align*}
where
\begin{align*}
    & \bR_{1} = n^{-1}\sum_i \int_0^1 (1 - t)\lrO{\nabla^3 L\lrm{\btheta_0 + t(\btheta - \btheta_0);\bX_i} - \nabla^3 L(\btheta_0;\bX_i)}(\bDelta^{\otimes 2}) \diff t \\
    & \lrn{\bR_{1}}_2 \leq (6n)^{-1}\sum m(\bX_i)\lrn{\bDelta}_2^3.
\end{align*}
Further decomposition gives
\begin{align*}
& n^{-1}\sum_i \nabla^3 L(\btheta_0;\bX_i)(\bDelta^{\otimes 2}) = \bQ(\bDelta^{\otimes 2}) + \bR_2\\
& \bR_2 = \lrO{n^{-1}\sum_i \nabla^3 L(\btheta_0;\bX_i) - \bQ}(\bDelta^{\otimes 2}) \\
    & \blkeq = \begin{pmatrix} \bDelta^\T\lrO{n^{-1}\sum \bG_1(\btheta_0;\bX_i) - \E\lrM{\bG_1(\btheta_0;\bX_i)}}\bDelta \\ \vdots \\ \bDelta^\T\lrO{n^{-1}\sum \bG_d(\btheta_0;\bX_i) - \E\lrM{\bG_d(\btheta_0;\bX_i)}}\bDelta \end{pmatrix}.
\end{align*}
For $j=1,\dots,d$, consider $n^{-1}\sum \bG_j(\btheta_0;\bX_i)$. By definition, $\bG_j(\btheta_0;\bX_i)$ is Hermitian, and Lemma \ref{Hlm:L3bd} indicates 
$\lrN{\bG_j(\btheta_0;\bX_i)}_2 \leq h(\bX_i)$. Let $\bZ_{j, i} = \bG_j(\btheta_0;\bX_i) - \E\lrM{\bG_j(\btheta_0;\bX_i)}$, and
Lemma \ref{Hlm:EZK} indicates 
    \begin{equation*}
        \E\lrI{\lrN{n^{-1}\sum_i  \bZ_{j, i}}_2^k} = O(n^{-k/2}),
    \end{equation*}
for some $k \geq 2$ and $\bE(\lrn{\bZ_{j, i}}_2^k)$ exists. In our case, we have
$$    \E\lrI{\lrn{\bZ_{j, i}}^k} \leq 2^{k-1}\E\lrM{h(\bX_i)^k} + 2^{k-1} \lambda_h^k.
$$
Hence, the maximum of $k$ depends on moments of $h(\bX_i)$.
By definition,
\begin{align*}
& \lrN{\bR_2}_2 = \lrN{\begin{pmatrix} \bDelta^\T n^{-1}\sum_i\bZ_{1, i} \bDelta \\ \vdots \\ \bDelta^\T n^{-1}\sum_i\bZ_{d, i} \bDelta \end{pmatrix}}_2 \leq \lrn{\bDelta}_2^2\lrI{\sum_j\lrN{n^{-1}\sum_i  \bZ_{j, i}}_2^2}^{1/2} \\
& \E\lrM{\lrI{\sum_j\lrN{n^{-1}\sum_i  \bZ_{j, i}}_2^2}^{k/2}} \leq d^{k/2-1}\sum_j \E\lrI{\lrN{n^{-1}\sum_i  \bZ_{j, i}}_2^{k}} = O(n^{-k/2}).
\end{align*}
Let $R_1 = \lrI{\sum_j\lrN{n^{-1}\sum_i  \bZ_{j, i}}_2^2}^{1/2}$, and we complete the proof. 
\end{proof}

\begin{lemma} \label{Hlm:E0}
 Let $\delta_{\rho} = \min\lrM{\rho, \rho\lambda_- / (4\lambda_h)}$, and define the following four events:
\begin{align*}
    {\mathcal{E}}_1 & = \lrM{n^{-1} \sum h(\bX_i) \leq 2\lambda_h} \\
    {\mathcal{E}}_2 & = \lrM{\lrn{\bH - \bI_0}_2 \leq \frac{\rho\lambda_-}{2}} \\
    {\mathcal{E}}_3 & = \lrM{\lrn{\bl(\btheta_0)}_2 \leq \frac{(1 - \rho) \lambda_-\delta_{\rho}}{2}} \\
    {\mathcal{E}}_4 & = \lrM{n^{-1} \sum m(\bX_i) \leq 2\lambda_m}.
\end{align*} Let ${\mathcal{E}}_0 = {\mathcal{E}}_1 \cap {\mathcal{E}}_2 \cap {\mathcal{E}}_3$ and $\wt{\mathcal{E}}_0 = {\mathcal{E}}_0\cap {\mathcal{E}}_4$. 
\begin{enumerate}[label={(\arabic*)}]
\item 
Suppose for some $k \geq 2$
\begin{align*}
& \E\lrm{\lrn{\nabla L(\btheta_0;\bX_i)}_2^k} \leq G^k \text{ and } \E\lrO{\lrn{\nabla^2 L(\btheta_0;\bX_i) - \E\lrm{\lrn{\nabla^2 L(\btheta_0;\bX_i)}}}_2^k} \leq H^k \\
& E\lrm{h(\bX_i)^k} \leq \lambda_h^k \text{ and } E\lrO{\lrA{h(\bX_i) - \E\lrm{h(\bX_i)}}^k} \leq \lambda_h^k.
\end{align*}
Then,
$$\P({\mathcal{E}}_0^c) = O(n^{-k/2}).$$ Specifically, under ${\mathcal{E}}_0$, $$\lrn{\bd}_2 \leq \frac{2\lrn{\bl(\btheta_0)}_2}{(1 - \rho)\lambda_-}, \text{ and } \bH(\btheta') \succeq (1 - \rho)\lambda_-\bI, \forall \btheta' \in U(\rho),$$ and $\lrn{\ol\bH}_2 \geq (1 - \rho)\lambda_-$.
\item With the additional assumption $\max\lrI{E\lrm{m(\bX_i)^k}, E\lrO{\lrA{m(\bX_i) - \E\lrm{m(\bX_i)}}^k}} \leq \lambda_m^k,$ $$\P(\wt{\mathcal{E}}_0^c) = O(n^{-k/2}).$$ 
\end{enumerate}
\end{lemma}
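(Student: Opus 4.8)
The plan is to treat two independent tasks: (i) showing each complementary event $\mathcal{E}_1^c,\mathcal{E}_2^c,\mathcal{E}_3^c$ (and $\mathcal{E}_4^c$) has probability $O(n^{-k/2})$, and (ii) deriving the three deterministic conclusions on $\mathcal{E}_0$. For (i) I would recast each complement as a moment deviation for a sample average of i.i.d.\ mean-zero summands and close with Markov's inequality at exponent $k$ plus Lemma~\ref{Hlm:EZK}. Since $\btheta_0\in\text{int}(\bTheta)$ minimizes the population risk, $\E[\nabla L(\btheta_0;\bX)]=\bzero$, so $\bl(\btheta_0)=n^{-1}\sum_i\nabla L(\btheta_0;\bX_i)$ is centered and $\P(\mathcal{E}_3^c)\le\big(2/((1-\rho)\lambda_-\delta_\rho)\big)^{k}\E\|\bl(\btheta_0)\|_2^{k}=O(n^{-k/2})$ by Lemma~\ref{Hlm:EZK}; likewise $\bH-\bI_0=n^{-1}\sum_i(\nabla^2 L(\btheta_0;\bX_i)-\bI_0)$ is centered, giving $\P(\mathcal{E}_2^c)=O(n^{-k/2})$; and since $\E[h(\bX)]\le\lambda_h$, the event $\mathcal{E}_1^c$ forces $|n^{-1}\sum_i(h(\bX_i)-\E[h(\bX)])|>\lambda_h$, whose probability is $O(n^{-k/2})$ by the centered-moment hypothesis on $h$ and Lemma~\ref{Hlm:EZK}; the identical computation with $m(\bX)$ in place of $h(\bX)$ controls $\mathcal{E}_4^c$. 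A union bound then yields $\P(\mathcal{E}_0^c)\le\P(\mathcal{E}_1^c)+\P(\mathcal{E}_2^c)+\P(\mathcal{E}_3^c)=O(n^{-k/2})$ and $\P(\wt{\mathcal{E}}_0^c)\le\P(\mathcal{E}_0^c)+\P(\mathcal{E}_4^c)=O(n^{-k/2})$.

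For (ii) I would work on $\mathcal{E}_0$ with $\wh L(\btheta):=n^{-1}\sum_i L(\btheta;\bX_i)$, convex on $\bTheta$ by hypothesis, so $\nabla\wh L(\btheta_0)=\bl(\btheta_0)$ and $\nabla^2\wh L(\btheta)=\bH(\btheta)$. On $\mathcal{E}_2$, Weyl's inequality gives $\lambda_{\min}(\bH)\ge\lambda_- -\rho\lambda_-/2=(1-\rho/2)\lambda_-$; on $\mathcal{E}_1$, the Hessian-Lipschitz bound of Assumption~\ref{Has:LH} gives $\|\bH(\btheta')-\bH\|_2\le\big(n^{-1}\sum_i h(\bX_i)\big)\|\btheta'-\btheta_0\|_2\le 2\lambda_h\|\btheta'-\btheta_0\|_2$, so for $\btheta'$ in the ball $U(\delta_\rho)$ (which lies in $U(\rho)$ since $\delta_\rho\le\rho$, and $\delta_\rho\le\rho\lambda_-/(4\lambda_h)$) we get $\lambda_{\min}(\bH(\btheta'))\ge(1-\rho/2)\lambda_- -2\lambda_h\delta_\rho\ge(1-\rho)\lambda_-$; i.e.\ $\wh L$ is $(1-\rho)\lambda_-$-strongly convex on $U(\delta_\rho)$. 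On $\mathcal{E}_3$, $\|\nabla\wh L(\btheta_0)\|_2\le(1-\rho)\lambda_-\delta_\rho/2$, so strong convexity on $U(\delta_\rho)$ gives $\wh L(\btheta)\ge\wh L(\btheta_0)-\|\bl(\btheta_0)\|_2\delta_\rho+\tfrac12(1-\rho)\lambda_-\delta_\rho^2\ge\wh L(\btheta_0)$ for every $\btheta$ on the sphere $\|\btheta-\btheta_0\|_2=\delta_\rho$; by convexity of $\wh L$ this propagates to all $\btheta\notin U(\delta_\rho)$ by sliding along the segment through $\btheta_0$, so the minimizer $\wh\btheta$ lies in $\bar U(\delta_\rho)$. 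Then $\nabla\wh L(\wh\btheta)=\bzero$, and $0=\langle\nabla\wh L(\wh\btheta)-\nabla\wh L(\btheta_0),\bd\rangle+\langle\bl(\btheta_0),\bd\rangle\ge(1-\rho)\lambda_-\|\bd\|_2^2-\|\bl(\btheta_0)\|_2\|\bd\|_2$ gives $\|\bd\|_2\le\|\bl(\btheta_0)\|_2/((1-\rho)\lambda_-)\le 2\|\bl(\btheta_0)\|_2/((1-\rho)\lambda_-)$. Finally, the segment $[\btheta_0,\wh\btheta]$ lies in $U(\delta_\rho)$, so $\ol\bH=\int_0^1\bH(\btheta_0+t\bd)\,\diff t$ is an average of matrices each $\succeq(1-\rho)\lambda_-\bI$, whence $\ol\bH\succeq(1-\rho)\lambda_-\bI$ and $\|\ol\bH\|_2\ge(1-\rho)\lambda_-$. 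Part~(2) adds nothing new: intersecting with $\mathcal{E}_4$ only contributes the tail already bounded.

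The main obstacle is the basin-of-attraction step, namely certifying that the $\bTheta$-minimizer $\wh\btheta$ — a priori located anywhere in the compact set $\bTheta$ — actually falls inside the small ball $U(\delta_\rho)$ on which curvature is controlled. I would resolve it by leaning on the stated global convexity of $L(\cdot;\bX)$: a convex function that does no better on the sphere $\|\btheta-\btheta_0\|_2=\delta_\rho$ than at its center cannot do better anywhere outside that sphere, which is exactly why the thresholds $\delta_\rho=\min\{\rho,\rho\lambda_-/(4\lambda_h)\}$ and $(1-\rho)\lambda_-\delta_\rho/2$ (appearing in $\mathcal{E}_3$) are calibrated as they are. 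If one preferred not to invoke global convexity, Assumption~\ref{as:T0Ind} together with a uniform-consistency/argmax-continuity argument could be substituted, at the cost of a more delicate uniqueness discussion; everything else — Weyl perturbation, the segment/integral estimate for $\ol\bH$, and the single use of the moment bound in Lemma~\ref{Hlm:EZK} — is routine.
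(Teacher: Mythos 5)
Your proposal is correct, but it takes a different route from the paper: the paper does not prove the localization facts at all — it simply cites Lemmas 6 and 7 of \cite{Zhang2013} for $\P(\mathcal{E}_0^c) = O(n^{-k/2})$ and for the deterministic conclusions under $\mathcal{E}_0$, and only supplies the short Markov-inequality bound for $\P(\mathcal{E}_4^c)$ followed by a union bound. You instead reconstruct the cited argument from scratch: tail bounds for $\mathcal{E}_1,\dots,\mathcal{E}_4$ via $k$-th moment Markov bounds, and then the Weyl-plus-Lipschitz curvature estimate, the calibration of $\delta_\rho$, the convexity "sphere" argument localizing $\wh\btheta$ in $\bar U(\delta_\rho)$, and gradient monotonicity for $\lrn{\bd}_2 \le 2\lrn{\bl(\btheta_0)}_2/\{(1-\rho)\lambda_-\}$. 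This buys a self-contained proof (and in fact a slightly sharper constant, $\lrn{\bd}_2 \le \lrn{\bl(\btheta_0)}_2/\{(1-\rho)\lambda_-\}$), at the cost of redoing work the paper outsources. Two small points to tidy up. First, Lemma \ref{Hlm:EZK} is stated for Hermitian matrices, so for the vector average $\bl(\btheta_0)$ in $\mathcal{E}_3$ you should either invoke the paper's Lemma \ref{Hlm:Zhang} (which already gives $\E\lrn{\bar\bl(\btheta_0)}_2^k = O(n^{-k/2})$, again via \cite{Zhang2013}) or note the standard embedding/symmetrization adaptation. Second, your curvature bound $\bH(\btheta')\succeq(1-\rho)\lambda_-\bI$ is established only on $U(\delta_\rho)$, not on all of $U(\rho)$ as the lemma statement literally asserts; on $\mathcal{E}_1\cap\mathcal{E}_2$ one only gets $\lambda_{\min}\{\bH(\btheta')\}\ge(1-\rho/2)\lambda_- - 2\lambda_h\lrn{\btheta'-\btheta_0}_2$, which can fail for radius $\rho$ when $\lambda_h>\lambda_-/4$. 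Your restricted version is the one that is actually provable, matches the cited lemma in \cite{Zhang2013}, and is all that the subsequent uses (localizing $\wh\btheta$ and lower-bounding $\ol\bH$ along the segment) require, so this is a defect of the statement's wording rather than of your argument — but you should say explicitly that you are proving the $U(\delta_\rho)$ version.
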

\begin{proof}
By Lemma 6 and Lemma 7 in \cite{Zhang2013}, $\P({\mathcal{E}}_0^c) = O(n^{-k/2}).$
The probability of ${\mathcal{E}}_4^c$ can be derived in a similar way. By Lemma \ref{Hlm:EZK},
\begin{align*}
    \P({\mathcal{E}}_4^c) & = \E\lrO{\mone\lrM{n^{-1} \sum m(\bX_i) > 2 \lambda_m }} \\
        & \leq \E\lrO{\mone\lrM{n^{-1} \sum m(\bX_i) - \E\lrM{m(\bX_i)} > \lambda_m }} \\
        & \leq \lambda_m^{-k}\E\lrO{\lrA{n^{-1} \sum m(\bX_i) - \E\lrm{m(\bX_i)}}^k} \\
        & = O(n^{-k/2}).
\end{align*}
Hence, $\P(\wt{\mathcal{E}}_0^c) \leq \P({\mathcal{E}}_0^c) + \P({\mathcal{E}}_4^c) = O(n^{-2})$. 
\end{proof}

\begin{lemma} \label{Hlm:Zhang} Under the conditions of Lemma \ref{Hlm:E0} (1), 
    \begin{align*}
    & \E(\lrn{\bar\bl(\btheta_0)}_2^k) \leq O(n^{-k/2}) \\
    & \E(\lrn{\bH - \bI_0}_2^k) \leq O(n^{-k/2}) \\
    & \E(\lrn{\bd}_2^k) = O(n^{-k/2}).
    \end{align*}
\end{lemma}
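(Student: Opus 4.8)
The plan is to reduce all three bounds to the Marcinkiewicz--Zygmund/Rosenthal-type moment inequality recorded in Lemma \ref{Hlm:EZK}, which bounds the $k$-th moment of a normalized sum of i.i.d.\ mean-zero terms by $O(n^{-k/2})$, and then to handle $\bd$ by an event-splitting argument against Lemma \ref{Hlm:E0}.

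First I would treat $\bar\bl(\btheta_0) = n^{-1}\sum_i \nabla L(\btheta_0;\bX_i)$. Since $\btheta_0$ minimizes the population risk $\E_\mathcal{P}\lrm{L(\btheta;\bX)}$ and lies in $\text{int}(\bTheta)$ (Assumption \ref{as:T0Def}), the first-order condition gives $\E\lrm{\nabla L(\btheta_0;\bX_i)} = \bzero$; thus $\bar\bl(\btheta_0)$ is an average of i.i.d.\ centered vectors whose $k$-th moment is at most $G^k$ by the hypotheses of Lemma \ref{Hlm:E0}(1). Applying Lemma \ref{Hlm:EZK} directly yields $\E(\lrn{\bar\bl(\btheta_0)}_2^k) = O(n^{-k/2})$. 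The identical argument applies to $\bH - \bI_0 = n^{-1}\sum_i \lrI{\nabla^2 L(\btheta_0;\bX_i) - \bI_0}$, an average of i.i.d.\ centered matrices with $k$-th moment at most $H^k$ (again by the hypotheses, since $\bI_0 = \E(\bH_1)$), giving $\E(\lrn{\bH - \bI_0}_2^k) = O(n^{-k/2})$.

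For the third bound I would split on the event $\mathcal{E}_0$ from Lemma \ref{Hlm:E0}. On $\mathcal{E}_0$, that lemma's deterministic bound gives $\lrn{\bd}_2 \leq 2\lrn{\bl(\btheta_0)}_2/\lrm{(1-\rho)\lambda_-}$, so $\lrn{\bd}_2^k \leq C\lrn{\bar\bl(\btheta_0)}_2^k$ there for a constant $C$ depending only on $\rho,\lambda_-$ (in this single-node section $\bl(\btheta_0)$ is the same object as $\bar\bl(\btheta_0)$). On $\mathcal{E}_0^c$ I would use only that $\wh\btheta$ and $\btheta_0$ both lie in the compact set $\bTheta$, so $\lrn{\bd}_2 \leq 2r_0$. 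Hence
\[
\E(\lrn{\bd}_2^k) = \E\lri{\lrn{\bd}_2^k \mone_{\mathcal{E}_0}} + \E\lri{\lrn{\bd}_2^k \mone_{\mathcal{E}_0^c}} \leq C\,\E(\lrn{\bar\bl(\btheta_0)}_2^k) + (2r_0)^k\,\P(\mathcal{E}_0^c),
\]
and the first term is $O(n^{-k/2})$ by the first part of the proof while $\P(\mathcal{E}_0^c) = O(n^{-k/2})$ by Lemma \ref{Hlm:E0}(1), so the sum is $O(n^{-k/2})$.

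There is no substantial obstacle here: the heavy lifting is outsourced to Lemmas \ref{Hlm:EZK} and \ref{Hlm:E0}, and the only genuine step is the event-splitting for $\bd$, where the crude compactness bound $\lrn{\bd}_2 \le 2r_0$ on the bad event is exactly what keeps the tail contribution at the target rate $n^{-k/2}$. The one point requiring care is bookkeeping --- verifying that the moment hypotheses listed in Lemma \ref{Hlm:E0}(1) are precisely those needed to invoke Lemma \ref{Hlm:EZK} for each of the three sums, and reconciling the single-node notation ($\bl(\btheta_0)$) with the averaged notation ($\bar\bl(\btheta_0)$) used in the statement.
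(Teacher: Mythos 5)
Your argument is correct and is essentially the paper's own route: the paper simply defers to Theorem 1 and Lemmas 7--9 of \cite{Zhang2013}, which are proved by exactly this combination of moment inequalities for centered i.i.d.\ sums together with event-splitting on $\mathcal{E}_0$ (using the local strong-convexity bound $\lrn{\bd}_2 \le 2\lrn{\bl(\btheta_0)}_2/\{(1-\rho)\lambda_-\}$ on the good event and the compactness bound with $\P(\mathcal{E}_0^c)=O(n^{-k/2})$ on the complement). The only bookkeeping point is that Lemma \ref{Hlm:EZK} is stated for Hermitian matrices, so for the gradient vector $\bar\bl(\btheta_0)$ you should apply it coordinatewise (or via the Hermitian dilation), which changes nothing in the rate.
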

See Theorem 1 and Lemmas 7-9 given in \cite{Zhang2013}. 

\begin{lemma} \label{Hlm:HHD} Under the conditions of Lemma \ref{Hlm:E0} (1),  $$\E\lrM{\lrn{(\bH - \ol\bH)\bd}_2^{k/2}} = O(n^{-{k/2}}).$$
\end{lemma}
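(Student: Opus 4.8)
The plan is to reduce the claim to the deterministic Lipschitz estimate for $\nabla^2 L$ together with two stochastic inputs that are already in hand. Applying the first inequality of Lemma~\ref{Hlm:L3Dec} with $\btheta=\wh\btheta$ (Assumption~\ref{Has:LH}; the segment from $\btheta_0$ to $\wh\btheta$ lies in $\bTheta$) gives, for every realization,
\begin{equation*}
\lrn{\bH-\ol\bH}_2\le\lrM{n^{-1}\textstyle\sum_i h(\bX_i)}\lrn{\bd}_2,
\qquad\text{hence}\qquad
\lrn{(\bH-\ol\bH)\bd}_2^{k/2}\le\lrM{n^{-1}\textstyle\sum_i h(\bX_i)}^{k/2}\lrn{\bd}_2^{k}.
\end{equation*}
Write $\xi_n=n^{-1}\sum_i\{h(\bX_i)-\E h(\bX_1)\}$. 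The two inputs are: $\E[\lrn{\bd}_2^{k}]=O(n^{-k/2})$ from Lemma~\ref{Hlm:Zhang}, and $\E[|\xi_n|^{k}]=O(n^{-k/2})$, the latter a Marcinkiewicz--Zygmund/Rosenthal bound (Lemma~\ref{Hlm:EZK}) that uses the hypothesis $\E[\lrA{h(\bX_1)-\E h(\bX_1)}^{k}]\le\lambda_h^{k}$; also $\E h(\bX_1)\le\lambda_h$ by Jensen's inequality.

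Next I would peel the mean off the Lipschitz average: since $n^{-1}\sum_i h(\bX_i)\le|\xi_n|+\lambda_h$ and $(a+b)^{k/2}\le 2^{k/2}(a^{k/2}+b^{k/2})$,
\begin{equation*}
\lrn{(\bH-\ol\bH)\bd}_2^{k/2}\le 2^{k/2}\,|\xi_n|^{k/2}\lrn{\bd}_2^{k}+2^{k/2}\lambda_h^{k/2}\lrn{\bd}_2^{k}.
\end{equation*}
The expectation of the second term is $O(n^{-k/2})$ by Lemma~\ref{Hlm:Zhang}. For the first, I would use the only a priori bound on the estimation error, $\lrn{\bd}_2\le r_0$ (the $l_2$-radius of $\bTheta$, Assumption~\ref{as:T0Def}), to downgrade $\lrn{\bd}_2^{k}\le r_0^{k/2}\lrn{\bd}_2^{k/2}$ and then apply the Cauchy--Schwarz inequality,
\begin{equation*}
\E\lrO{|\xi_n|^{k/2}\lrn{\bd}_2^{k}}\le r_0^{k/2}\lrI{\E\lrO{|\xi_n|^{k}}}^{1/2}\lrI{\E\lrO{\lrn{\bd}_2^{k}}}^{1/2}=O(n^{-k/2}),
\end{equation*}
both square-rooted factors being $O(n^{-k/2})$ by the two inputs above. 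Adding the two contributions proves the lemma.

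The one point that requires care — and, I expect, the only real obstacle — is precisely this balancing. Applying Cauchy--Schwarz directly to $\{n^{-1}\sum_i h(\bX_i)\}\lrn{\bd}_2^{2}$ would pair $\E[\{n^{-1}\sum_i h(\bX_i)\}^{k}]=O(1)$, which does not decay since $n^{-1}\sum_i h(\bX_i)\to\E h(\bX_1)\ne 0$, with $\E[\lrn{\bd}_2^{2k}]$, whose control would need $2k$ moments; so that route fails. Subtracting $\E h(\bX_1)$ turns the random part of the modulus into something genuinely $\Op(n^{-1/2})$ with a matching $k$th-moment bound, while splitting $\lrn{\bd}_2^{k}=r_0^{k/2}\lrn{\bd}_2^{k/2}$ via $\lrn{\bd}_2\le r_0$ keeps the estimation error at exactly the moment order $k$ furnished by Lemma~\ref{Hlm:Zhang}; with both pieces decaying at rate $n^{-k/2}$, Cauchy--Schwarz loses nothing. (An alternative on the event where $\wh\btheta$ is interior is to replace the first step by the optimality identity $\ol\bH\bd=\bl(\wh\btheta)-\bl(\btheta_0)=-\bl(\btheta_0)$, rewriting $(\bH-\ol\bH)\bd=(\bH-\bI_0)\bd+\bI_0(\bd-\bd_0)$ with $\bd_0=-\bI_0^{-1}\bl(\btheta_0)$ and bounding the pieces by Cauchy--Schwarz with Lemma~\ref{Hlm:Zhang} and by Lemma~\ref{Hlm:EDD}; this yields the same rate but needs the complement handled separately, so the route above seems cleaner.)
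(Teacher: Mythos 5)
Your argument is sound in its moment bookkeeping (the centering of $n^{-1}\sum_i h(\bX_i)$ into $\xi_n$, the downgrade $\lrn{\bd}_2^k\le r_0^{k/2}\lrn{\bd}_2^{k/2}$, and the Cauchy--Schwarz pairing all give $O(n^{-k/2})$ correctly), but its very first step has a genuine gap: you invoke the Lipschitz bound $\lrn{\ol\bH-\bH}_2\le\lrm{n^{-1}\sum_i h(\bX_i)}\lrn{\bd}_2$ ``for every realization,'' justified only by the segment from $\btheta_0$ to $\wh\btheta$ lying in $\bTheta$. Assumption \ref{Has:LH} (and hence the first inequality of Lemma \ref{Hlm:L3Dec}) provides $h(\bX)$-Lipschitz continuity of $\nabla^2 L$ only on the ball $U(\rho)$, not on all of $\bTheta$; $\rho$ may be much smaller than $r_0$. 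On the event $\lrn{\bd}_2>\rho$ the segment leaves $U(\rho)$ and the assumptions give you no control of $\lrn{\ol\bH-\bH}_2$ at all, so the pointwise inequality you start from is unjustified there. This is precisely why the paper's proof is organized around the event $\mathcal{E}_0$ of Lemma \ref{Hlm:E0}: on $\mathcal{E}_0$ one has both $n^{-1}\sum_i h(\bX_i)\le 2\lambda_h$ and $\lrn{\bd}_2\le\delta_\rho\le\rho$ (so the Lipschitz bound is legitimate and yields $(2\lambda_h)^{k/2}\E\lrn{\bd}_2^k$), while on $\mathcal{E}_0^c$ the paper abandons the Lipschitz route entirely, writing $(\bH-\ol\bH)\bd=(\bH-\bI_0)\bd+\bI_0\bd+\bl(\btheta_0)$ via the stationarity identity $\ol\bH\bd=-\bl(\btheta_0)$ and using $\P(\mathcal{E}_0^c)=O(n^{-k/2})$ together with Cauchy--Schwarz.

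The repair is essentially the ``alternative'' you sketch and then set aside: restrict your main computation to an event on which $\lrn{\bd}_2\le\rho$ (either $\mathcal{E}_0$, or simply $\{\lrn{\bd}_2\le\rho\}$ with $\P(\lrn{\bd}_2>\rho)\le\rho^{-k}\E\lrn{\bd}_2^k=O(n^{-k/2})$ by Markov and Lemma \ref{Hlm:Zhang}), and handle the complement by the $\ol\bH\bd=-\bl(\btheta_0)$ decomposition exactly as the paper does. With that split in place, your $\xi_n$-centering plus $\lrn{\bd}_2\le r_0$ trick is a perfectly acceptable substitute for the paper's use of the event $\mathcal{E}_1$ to tame the random Lipschitz average, but it cannot substitute for the restriction to $U(\rho)$, which is where the real content of the paper's event-splitting lies.
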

\begin{proof}
\begin{align*}
    \lrn{(\bH - \ol\bH)\bd}_2 & \leq \mone({\mathcal{E}}_0)\lrn{(\bH - \ol\bH)\bd}_2 + \mone({\mathcal{E}}_0^c)\lrN{(\bH - \bI_0)\bd}_2 + \mone({\mathcal{E}}_0^c)\lrN{\bI_0\bd}_2 + \mone({\mathcal{E}}_0^c)\lrn{\bl} \\
    \E\lrM{\mone({\mathcal{E}}_0)\lrn{(\bH - \ol\bH)\bd}_2^{k/2}} & \leq (2\lambda_h)^{k/2} \E\lrI{\lrn{\bd}_2^{k}} = O(n^{-{k/2}}) \\
    \E\lrM{\mone({\mathcal{E}}_0^c)\lrn{(\bH - \bI_0)\bd}_2^{k/2}} & \leq \E\lrI{\lrn{\bH - \bI_0}_2^{k/2}\lrn{\bd}_2^{k/2}} \leq \lrM{\E\lrI{\lrn{\bH - \bI_0}_2^{k}}\E\lrI{\lrn{\bd}_2^{k}}}^{1/2} = O(n^{-{k/2}}) \\
    \E\lrM{\mone({\mathcal{E}}_0^c)\lrn{\bI_0\bd}_2^{k/2}} & \leq \lambda_+^{k/2} \E\lrM{\mone({\mathcal{E}}_0^c)\lrn{\bd}_2^{k/2}} \leq \lrM{\P({\mathcal{E}}_0^c)\E\lrI{\lrn{\bd}_2^{k}}}^{1/2} = O(n^{-k/2}) \\
    \E\lrM{\mone({\mathcal{E}}_0^c)\lrn{\bl}_2^{k/2}} & \leq \E\lrM{\mone({\mathcal{E}}_0^c)\lrn{\bl}_2^{k/2}} \leq \lrM{\P({\mathcal{E}}_0^c)\E\lrI{\lrn{\bl}_2^{k}}}^{1/2} = O(n^{-k/2}) \\
    \E\lrI{\lrn{(\bH - \ol\bH)\bd}_2^{k/2}} & \leq 4^{{k/2}-1}\E\lrM{\mone({\mathcal{E}}_0)\lrn{(\bH - \ol\bH)\bd}_2^{k/2}} + 4^{{k/2}-1}\E\lrM{\mone({\mathcal{E}}_0^c)\lrn{(\bH - \bI_0)\bd}_2^{k/2}} +\\
        & \blkeq 4^{{k/2}-1}\E\lrM{\mone({\mathcal{E}}_0^c)\lrn{\bI_0\bd}_2^{k/2}} + 4^{{k/2}-1}\E\lrM{\mone({\mathcal{E}}_0^c)\lrn{\bl}_2^{k/2}} \\
        & = O(n^{-{k/2}}).
\end{align*}
\end{proof}

\begin{lemma} \label{Hlm:EDD} Under the conditions of Lemma \ref{Hlm:E0} (1), $\E\lri{\lrN{\bd - \bd_0}_2^{k/2}}  = O(n^{-k/2})$. Specifically, $\lrn{\E(\bd)}_2 = O(n^{-1})$.
\end{lemma}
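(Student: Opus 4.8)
The plan is to exploit the exact first-order characterization of the $M$-estimator and then bound a second-order remainder. On the event $\mathcal{E}_0$ of Lemma~\ref{Hlm:E0}~(1), $\wh\btheta$ lies in the interior of $\bTheta$, so the stationarity condition $n^{-1}\sum_i \nabla L(\wh\btheta;\bX_i)=\bzero$ holds; writing $\nabla L(\wh\btheta;\bX_i)=\bl_i+\ol\bH_i\bd$ by the integral form of Taylor's theorem and averaging gives $\bar\bl+\ol\bH\bd=\bzero$, hence $\bd=-\ol\bH^{-1}\bar\bl$ (the inverse existing on $\mathcal{E}_0$ since there $\ol\bH\succeq(1-\rho)\lambda_-\bI$). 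With $\bd_0=-\bI_0^{-1}\bar\bl$ the one-step linearization (the single-node analogue of the $\bd_{0,i}$ used earlier), the resolvent identity $\bI_0^{-1}-\ol\bH^{-1}=\bI_0^{-1}(\ol\bH-\bI_0)\ol\bH^{-1}$ yields the clean representation
$$
\bd-\bd_0=\bI_0^{-1}(\ol\bH-\bI_0)\ol\bH^{-1}\bar\bl=-\bI_0^{-1}(\ol\bH-\bI_0)\bd .
$$

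Next I would estimate this on $\mathcal{E}_0$. Splitting $\ol\bH-\bI_0=(\ol\bH-\bH)+(\bH-\bI_0)$, Lemma~\ref{Hlm:L3Dec} (equivalently Assumption~\ref{Has:LH}) gives $\lrn{\ol\bH-\bH}_2\le (n^{-1}\sum_i h(\bX_i))\lrn{\bd}_2\le 2\lambda_h\lrn{\bd}_2$ on $\mathcal{E}_1$, while $\lrn{\bI_0^{-1}}_2\le\lambda_-^{-1}$, so on $\mathcal{E}_0$
$$
\lrn{\bd-\bd_0}_2\le C\lrI{\lrn{\bd}_2^2+\lrn{\bH-\bI_0}_2\lrn{\bd}_2}
$$
for a constant $C$ depending only on $\lambda_-,\lambda_h,\rho$. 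Raising this to the power $k/2$, taking expectations, and applying the Cauchy--Schwarz inequality to the cross term reduces everything to the moment bounds $\E(\lrn{\bd}_2^k)=O(n^{-k/2})$ and $\E(\lrn{\bH-\bI_0}_2^k)=O(n^{-k/2})$ from Lemma~\ref{Hlm:Zhang}; this gives $\E\{\mone(\mathcal{E}_0)\lrn{\bd-\bd_0}_2^{k/2}\}=O(n^{-k/2})$.

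On the complement $\mathcal{E}_0^c$ I would use only crude bounds: $\lrn{\bd}_2\le r_0$ deterministically by compactness of $\bTheta$, and $\lrn{\bd_0}_2\le\lambda_-^{-1}\lrn{\bar\bl}_2$, so $\lrn{\bd-\bd_0}_2^{k/2}\lesssim r_0^{k/2}+\lrn{\bar\bl}_2^{k/2}$. Then $\E\{\mone(\mathcal{E}_0^c)r_0^{k/2}\}=r_0^{k/2}\P(\mathcal{E}_0^c)=O(n^{-k/2})$ by Lemma~\ref{Hlm:E0}, and $\E\{\mone(\mathcal{E}_0^c)\lrn{\bar\bl}_2^{k/2}\}\le\{\P(\mathcal{E}_0^c)\E(\lrn{\bar\bl}_2^k)\}^{1/2}=O(n^{-k/2})$ by Cauchy--Schwarz together with Lemmas~\ref{Hlm:E0} and~\ref{Hlm:Zhang}. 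Adding the two contributions gives $\E(\lrn{\bd-\bd_0}_2^{k/2})=O(n^{-k/2})$. For the final claim, note $\E(\bd_0)=-\bI_0^{-1}\E\{\nabla L(\btheta_0;\bX_1)\}=\bzero$ since $\btheta_0$ minimizes the population risk, so by Jensen's inequality $\lrn{\E(\bd)}_2=\lrn{\E(\bd-\bd_0)}_2\le\E(\lrn{\bd-\bd_0}_2)\le\{\E(\lrn{\bd-\bd_0}_2^{k/2})\}^{2/k}=O(n^{-1})$.

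The only mildly delicate point is the bookkeeping on $\mathcal{E}_0^c$: one must not use the identity $\bd=-\ol\bH^{-1}\bar\bl$ there (where $\ol\bH$ need not be invertible) and should instead fall back on the deterministic parameter-space bound $\lrn{\bd}_2\le r_0$. Given the tail estimate $\P(\mathcal{E}_0^c)=O(n^{-k/2})$ from Lemma~\ref{Hlm:E0} and the moment bounds of Lemma~\ref{Hlm:Zhang}, the rest is routine.
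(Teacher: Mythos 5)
Your argument is correct, and it rests on the same key identity the paper uses, namely $\bd-\bd_0=\bI_0^{-1}(\bI_0-\ol\bH)\bd$ together with the moment bounds of Lemma \ref{Hlm:Zhang} and the tail bound $\P(\mathcal{E}_0^c)=O(n^{-k/2})$ of Lemma \ref{Hlm:E0}. The difference is organizational: the paper splits $(\bI_0-\ol\bH)\bd=(\bI_0-\bH)\bd+(\bH-\ol\bH)\bd$, handles the first term by Cauchy--Schwarz, and delegates the second to Lemma \ref{Hlm:HHD}, whose proof contains exactly the $\mathcal{E}_0$ versus $\mathcal{E}_0^c$ bookkeeping you carry out by hand; you instead perform the event-splitting directly on $\bd-\bd_0$, using the Lipschitz bound $\lrn{\ol\bH-\bH}_2\le 2\lambda_h\lrn{\bd}_2$ on $\mathcal{E}_0$ (legitimate there, since $\lrn{\bd}_2\le\delta_\rho$ puts $\wh\btheta$ in $U(\rho)$) and the crude bounds $\lrn{\bd}_2\le r_0$, $\lrn{\bd_0}_2\le\lambda_-^{-1}\lrn{\bar\bl}_2$ on $\mathcal{E}_0^c$. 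Your version is more self-contained (no appeal to Lemma \ref{Hlm:HHD}) and is slightly more careful about where the stationarity identity is invoked — the paper applies it unconditionally with only the remark that $\wh\btheta$ may lie outside $U(\rho)$ — at the cost of duplicating an argument the paper has already packaged in a lemma. You also spell out the final step $\lrn{\E(\bd)}_2\le\E\lrn{\bd-\bd_0}_2=O(n^{-1})$ via $\E(\bd_0)=\bzero$ and Jensen with $k\ge 2$, which the paper leaves implicit; that step is fine as written.
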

\begin{proof}
Let $\bd_0 = -\bI_0^{-1}\bl(\btheta_0)$. Note that $\wh\btheta$ may not be in $U(\rho)$.
$$
    \bd - \bd_{0} = \bI_0^{-1}(\bI_0 - \ol\bH)\bd = \bI_0^{-1}(\bI_0 - \bH)\bd + \bI_0^{-1}(\bH - \ol\bH)\bd.
$$
By Lemma \ref{Hlm:HHD}, $\E\lrM{\lrn{(\ol\bH - \bH)\bd}_2^{k/2}} = O(n^{-k/2})$, and 
\begin{align*}
    \E\lri{\lrN{(\bH - \bI_0)\bd}_2^{k/2}} & \leq \lrM{\E\lrI{\lrN{\bH - \bI_0}_2^{k}\lrN{\bd}_2^{k}}}^{1/2} = O(n^{-k/2}) \\
    \E\lrI{\lrn{\bd - \bd_0}_2^{k/2}} & \leq 2^{k/2-1}\lambda_-^{-k/2}\E\lrI{\lrN{(\bI_0 - \bH)\bd}_2^{k/2}} + 2^{k/2-1}\lambda_-^{-k/2}\E\lrI{\lrN{( \bH - \ol\bH)\bd}_2^{k/2}} \\
        & = O(n^{-k/2}).
\end{align*}
\end{proof}

\begin{lemma} \label{Hlm:R1MC}
Given independent and identically distributed random matrices $\bW_i$ with $\E\lri{\lrN{\bW_i}_2^2} < \infty$ and rank $r > 0$,
\begin{equation*}
\E\lrI{\lrN{n^{-1}\sum \bW_i}_2^2} \leq n^{-1}r\E(\lrN{\bW_1}_2^2) + d\lrN{\E(\bW_1)}_2^2.
\end{equation*}
\end{lemma}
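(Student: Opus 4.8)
The plan is to pass from the spectral norm to the Frobenius norm, perform a coordinatewise bias--variance decomposition, use independence to kill the cross terms in the variance part, and finally translate the Frobenius-norm bounds back into spectral-norm bounds using the rank information. Write $\ol\bW = n^{-1}\sum_i \bW_i$. The first move is the elementary inequality $\lrn{\bA}_2 \leq \lrn{\bA}_F$ for any matrix $\bA$, so that $\E\lri{\lrn{\ol\bW}_2^2} \leq \E\lri{\lrn{\ol\bW}_F^2} = \sum_{j,k} \E\lro{(\ol\bW)_{jk}^2}$, the sum running over the entries of the matrix.

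Next I would treat each entry separately. For a fixed $(j,k)$, write $\E\lro{(\ol\bW)_{jk}^2} = \var\lro{(\ol\bW)_{jk}} + \lro{\E(\ol\bW)_{jk}}^2$. Since the $\bW_i$ are i.i.d., $\var\lro{(\ol\bW)_{jk}} = n^{-1}\var\lro{(\bW_1)_{jk}} \leq n^{-1}\E\lro{(\bW_1)_{jk}^2}$, and $\E(\ol\bW)_{jk} = \E(\bW_1)_{jk}$. Summing over all entries gives
\begin{equation*}
    \E\lri{\lrn{\ol\bW}_F^2} \;\leq\; n^{-1}\sum_{j,k} \E\lro{(\bW_1)_{jk}^2} + \sum_{j,k}\lro{\E(\bW_1)_{jk}}^2 \;=\; n^{-1}\E\lri{\lrn{\bW_1}_F^2} + \lrn{\E(\bW_1)}_F^2 .
\end{equation*}

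Finally I would convert the two Frobenius-norm quantities back to spectral norms using the rank. Because each $\bW_i$ has rank $r$, we have $\lrn{\bW_1}_F^2 \leq r\lrn{\bW_1}_2^2$, which handles the variance term. For the bias term, $\E(\bW_1)$ need not be low rank --- averaging rank-$r$ matrices can produce a matrix of full rank --- so one bounds its rank by the ambient dimension $d$, giving $\lrn{\E(\bW_1)}_F^2 \leq d\lrn{\E(\bW_1)}_2^2$. Combining these yields $\E\lri{\lrn{\ol\bW}_2^2} \leq n^{-1}r\,\E(\lrn{\bW_1}_2^2) + d\,\lrn{\E(\bW_1)}_2^2$, as claimed.

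There is no real obstacle here: the argument is entirely elementary. The only point requiring a little care is the asymmetry in the last step --- using the rank $r$ for the summand and the ambient dimension $d$ for the expectation --- which is exactly why the two terms of the bound carry different constants. One should also note that squareness of the $\bW_i$ is not needed; the same chain of inequalities works verbatim for rectangular matrices, with $d$ interpreted as the smaller dimension.
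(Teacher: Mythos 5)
Your proof is correct and follows essentially the same route as the paper: bound the spectral norm by the Frobenius norm, use independence to reduce the cross/variance contribution to $n^{-1}\E(\lrn{\bW_1}_F^2)$ plus $\lrn{\E(\bW_1)}_F^2$, then convert back with $\lrn{\bW_1}_F^2 \leq r\lrn{\bW_1}_2^2$ and $\lrn{\E(\bW_1)}_F^2 \leq d\lrn{\E(\bW_1)}_2^2$. Your entrywise bias--variance decomposition is just the paper's trace expansion of $\E\lrI{\lrN{\sum_i\bW_i}_F^2}$ written coordinatewise, so the two arguments coincide.
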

\begin{proof} Recall that $\lrn{\bW_i}_2^2 \leq \lrn{\bW_i}_F^2 \leq r\lrn{\bW_i}_2^2$.
\begin{align*}
    \E\lrI{\lrN{\sum\bW_i}_F^2} & = \E\lrO{\Tr\lrM{\lrI{\sum\bW_i}\lrI{\sum\bW_i}^\T}} \\
        & = \sum \E(\lrN{\bW_i}_F^2) + \sum_i\sum_{j \ne i} \Tr\lrM{\E(\bW_i)\E(\bW_j^\T)} \\
        & = n\E(\lrN{\bW_1}_F^2) + n(n-1)\lrN{\E(\bW_1)}_F^2 \\
    \E\lrI{\lrN{n^{-1}\sum\bW_i}_2^2} & \leq \E\lrI{\lrN{n^{-1}\sum\bW_i}_F^2} \leq n^{-1}r\E(\lrN{\bW_1}_2^2) + d\lrN{\E(\bW_1)}_2^2. 
\end{align*}
Higher order convergence can be controlled by the Marcinkiewicz–Zygmund inequality. 
\end{proof}

\begin{lemma} \label{Hlm:EZK} Let $\lrm{\bZ_i}_i^n$ be independent and identically distributed Hermitian matrices with $\E(\bZ_i) = \bzero$ and $\E(\lrn{\bZ_i}_2^k) \leq \zeta^k$ for some positive numbers $\zeta > 0$ and $k\geq 2$.
    \begin{equation*}
        \E\lrI{\lrN{n^{-1}\sum  \bZ_i}_2^k} \leq \lrO{n^{-\frac{k}{2}} 4^{k-1}\lrM{\text{e} \lrI{k + 2\log d}}^{\frac{k}{2}} + n^{-k+1} 2^{3k-1}\lrM{\text{e} \lrI{k + 2\log d}}^k} \zeta^k.
    \end{equation*}
\end{lemma}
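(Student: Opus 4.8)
The plan is to run a truncation argument wrapped around the moment form of the matrix Bernstein inequality. By homogeneity I may assume $\zeta = 1$; write $\bS = \sum_i \bZ_i$. First note that Jensen's inequality gives $\E\|\bZ_i\|_2^2 \le (\E\|\bZ_i\|_2^k)^{2/k} \le 1$ (this is the only place $k \ge 2$ is used), and hence, by Hermiticity, $\|\sum_i \E \bZ_i^2\|_2 \le \sum_i \E\|\bZ_i\|_2^2 \le n$. Fix a truncation level $\tau \asymp n^{1/k}$ and split $\bZ_i = \bZ_i' + \bZ_i''$ with $\bZ_i' = \bZ_i \mone(\|\bZ_i\|_2 \le \tau)$; since $\E\bZ_i = \bzero$ I recenter as $\bZ_i = (\bZ_i' - \E\bZ_i') + (\bZ_i'' - \E\bZ_i'')$.

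For the bounded part $\sum_i (\bZ_i' - \E\bZ_i')$ I have the almost-sure bound $\|\bZ_i' - \E\bZ_i'\|_2 \le 2\tau$ and the variance proxy $\|\sum_i \E(\bZ_i' - \E\bZ_i')^2\|_2 \le n$, so the moment version of the matrix Bernstein inequality (integrating the matrix $\Tr$-exponential tail bound, which carries the $2d$ dimensional factor, against $k t^{k-1}$) gives $(\E\|\sum_i(\bZ_i' - \E\bZ_i')\|_2^k)^{1/k} \lesssim \sqrt{n(k + 2\log d)} + \tau(k + 2\log d)$; here the ambient dimension enters only through $\log d$, and the factor $e$ in the statement is produced by optimizing the free exponent in the Chernoff/moment step. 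For the heavy-tailed remainder, the triangle inequality gives $\|\sum_i(\bZ_i'' - \E\bZ_i'')\|_2 \le \sum_i \|\bZ_i\|_2\mone(\|\bZ_i\|_2 > \tau) + n\,\E[\|\bZ_1\|_2\mone(\|\bZ_1\|_2 > \tau)]$, and applying the scalar Rosenthal inequality to the nonnegative sum, together with the elementary tail-moment estimates $\E[\|\bZ_1\|_2^j \mone(\|\bZ_1\|_2 > \tau)] \le \tau^{j-k}$ for $1 \le j \le k$, shows that with $\tau \asymp n^{1/k}$ this term contributes $\E\|\sum_i(\bZ_i'' - \E\bZ_i'')\|_2^k = O(n)$. (An alternative route applies the matrix Rosenthal–Pinelis inequality directly in the Schatten $2q$-norm with $2q \asymp k + 2\log d$ and converts to the operator norm, but this still needs the truncation to give meaning to the $2q$-th summand moments.)

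Combining the two pieces via $(a+b)^k \le 2^{k-1}(a^k + b^k)$ yields $\E\|\bS\|_2^k \lesssim (n(k + 2\log d))^{k/2} + n(k + 2\log d)^k$; tracking the absolute constants through the splittings and through the dimensional conversion produces the coefficients $4^{k-1}(e(k + 2\log d))^{k/2}$ and $2^{3k-1}(e(k + 2\log d))^k$, and dividing by $n^k$ (since $n^{-1}\sum_i\bZ_i = n^{-1}\bS$) delivers the claim. The main obstacle is precisely this tension between two "dimensions": the summands are controlled only in $L^k$, so no matrix concentration bound can be applied directly — one needs either an almost-sure or an exponential-moment control to get the $\log d$, and without it one is stuck with a crude $d$ instead — while the truncation at level $\tau \asymp n^{1/k}$ is exactly what bridges the gap; the delicate bookkeeping is to check that the truncation remainder, after rescaling by $n^{-k}$, is dominated by the $n^{-k+1}$ term already present in the stated bound.
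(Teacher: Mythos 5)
Your truncation argument is correct in substance and reaches the stated rate, but it is a genuinely different route from the paper's. The paper symmetrizes with Rademacher signs (a ghost-sample argument costing a factor $2^{k-1}$) and then invokes the moment-form matrix Rosenthal--Pinelis inequality of Chen, Gittens and Tropp (Theorem A.1(2) in \cite{Chen2012}), which bounds $\lrm{\E\lrn{\sum_i \ve_i\bZ_i}_2^k}^{1/k}$ by $(\mathrm{e}c)^{1/2}\lrn{\lrm{\sum_i\E\bZ_i^2}^{1/2}}_2 + 2\mathrm{e}c\,\lrm{\E\max_i\lrn{\bZ_i}_2^k}^{1/k}$ with $c = k+2\log d$; the heavy tails are absorbed directly through the crude bound $\E\max_i\lrn{\bZ_i}_2^k \le n\zeta^k$, which is precisely the source of the $n^{-k+1}$ term, and no truncation is needed. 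Your route instead truncates at $\tau \asymp n^{1/k}\zeta$, applies the matrix Bernstein moment bound to the bounded recentered part, and handles the tail part by scalar Rosenthal; the bookkeeping you sketch (variance proxy $n\zeta^2$, tail moments $\E[\lrn{\bZ_1}_2^j\mone(\lrn{\bZ_1}_2>\tau)]\le \tau^{j-k}\zeta^k$, so the remainder contributes $O_k(n\zeta^k)$) is sound and yields $\E\lrn{n^{-1}\sum_i\bZ_i}_2^k \lesssim n^{-k/2}c^{k/2}\zeta^k + n^{-k+1}c^k\zeta^k$, i.e.\ the same dependence on $n$, $k$ and $d$. Two caveats. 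First, your claim that careful tracking reproduces the exact coefficients $4^{k-1}(\mathrm{e}c)^{k/2}$ and $2^{3k-1}(\mathrm{e}c)^k$ is not substantiated: those constants are an artifact of the paper's specific chain (symmetrization factor times the split of the Chen--Gittens--Tropp bound), whereas your route picks up the Rosenthal constant (which grows like $(k/\log k)^k$) and the constants from integrating the Bernstein tail, so you should state the conclusion with unspecified $k$-dependent absolute constants --- which is all the paper ever uses, since the lemma is only invoked for rates. Second, your closing remark that no matrix concentration bound can be applied directly under mere $L^k$ control is inaccurate: the matrix moment inequality the paper uses applies to unbounded summands as is (it needs only $\sum_i\E\bZ_i^2$ and $\E\max_i\lrn{\bZ_i}_2^k$), so truncation is a convenience of your approach, not a necessity; relatedly, the Schatten-norm alternative you mention also does not require truncation. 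Neither caveat breaks your proof, but the constant-matching claim should be weakened.
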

\begin{proof}
    Let $\lrm{\ve_i}^m$ be independent and identically distributed Rademacher random variables. 
    \begin{align*}
        \lrN{\sum  \bZ_i}_2^k & = \lrN{\sum  \bZ_i - \E\lrI{\sum \bZ_i'}}_2^k
            = \lrN{\E\lrM{\sum  \lri{\bZ_i- \bZ_i'}\mid\bZ_i}}_2^k \\
            & \leq \E\lrI{\lrN{\sum  \lri{\bZ_i- \bZ_i'}}_2^k\mid\bZ_i} \\
        \E\lrI{\lrN{\sum  \bZ_i}_2^k} & \leq \E\lrI{\lrN{\sum  \lri{\bZ_i- \bZ_i'}}_2^k}.
    \end{align*}
    Note that each element of $\bZ_i - \bZ_i'$ is symmetrically distributed. Then,
    \begin{align*}
        \E\lrI{\lrN{\sum  \bZ_i}_2^k} & \leq \E\lrI{\lrN{\sum  \lri{\bZ_i- \bZ_i'}}_2^k} = \E\lrI{\lrN{\sum  \ve_i\lri{\bZ_i- \bZ_i'}}_2^k\mid\ve_i} \\
            & = \E\lrI{\lrN{\sum  \ve_i\lri{\bZ_i- \bZ_i'}}_2^k} \leq 2^{k-1}\E\lrI{\lrN{\sum  \ve_i\bZ_i}_2^k}.
    \end{align*}
    By Theorem A.1 (2) in \cite{Chen2012}, since $\ve_i\bZ_i$ are independent and identically distributed symmetrically distributed Hermitian matrices, 
    \begin{align*}
        \lrM{\E\lrI{\lrN{\sum n^{-1}\ve_i\bZ_i}_2^k}}^{\frac{1}{k}} & \leq (\text{e} c)^{1/2} \lrN{\lrM{\sum \E\lrI{n^{-2}\bZ_i^2}}^{1/2}}_2 + 2 \text{e} c \lrM{\E\lrI{\max_i\lrN{n^{-1}\bZ_i}_2^k}}^{\frac{1}{k}} \\
            & \leq n^{-1/2}(\text{e} c)^{1/2} \lrN{\lrM{\E\lrI{\bZ_i^2}}^{1/2}}_2 + n^{-1} 2 \text{e} c \lrM{\E\lrI{\sum \lrN{\bZ_i}_2^k}}^{\frac{1}{k}} \\
            & \leq n^{-1/2} (\text{e} c)^{1/2} \lrN{\E\lri{\bZ_i^2}}_2^{1/2} + n^{-1+\frac{1}{k}} 2\text{e} c \lrM{\E\lri{\lrN{\bZ_i}_2^k}}^{\frac{1}{k}},
    \end{align*}
    where $c = k + 2\log d$. Therefore, 
    \begin{align*}
        \E\lrI{\lrN{n^{-1} \sum  \bZ_i}_2^k} & \leq 2^{k-1}\E\lrI{\lrN{\sum  n^{-1}\ve_i\bZ_i}_2^k} \\
            & \leq n^{-\frac{k}{2}} 4^{k-1}(\text{e} c)^{\frac{k}{2}} \lrN{\E\lri{\bZ_i^2}}_2^{\frac{k}{2}} + n^{-k+1} 2^{3k - 1}(\text{e} c)^k \lrM{\E\lri{\lrN{\bZ_i}_2^k}} \\
            & \leq \lrM{n^{-\frac{k}{2}} 4^{k-1}(\text{e} c)^{\frac{k}{2}} + n^{-k+1} 2^{3k-1}(\text{e} c)^k}\zeta^k.
    \end{align*}
    
\end{proof}

\begin{lemma}\label{Hlm:N3DD} 
    Under Assumptions \ref{as:T0Def}-\ref{Has:L3Cont}, 
\begin{equation}
    \bl(\btheta_0)\bd^\T(\ol\bH - \bH)\bI_0^{-1} = \bW + \bW',
\end{equation}
where $\bW = 2^{-1} \bl(\btheta_0)\lrM{\bQ(\bd_0^{\otimes 2})}^\T\bI_0^{-1}$, with $\lrn{\E(\bW)}_2 = O(n^{-2})$, $\E(\lrn{\bW}_2) = O(n^{-3/2})$, and  $\E\lrI{\lrN{\bW'}_2} = O(n^{-2})$.

Additionally, when Assumption \ref{Has:LH} holds, 
$
    \E\lrI{\lrn{\bW}_2^2} = O(n^{-3}).
$
\end{lemma}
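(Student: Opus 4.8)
The plan is to expand $(\ol\bH-\bH)\bd$ by the third-order Taylor argument of Lemma~\ref{Hlm:L3Dec}, peel off the leading quadratic term evaluated at the linearized error $\bd_0=-\bI_0^{-1}\bl(\btheta_0)$, and then control the mean and the moments of the resulting pieces with the M-estimator first-order identity together with the moment bounds of Lemmas~\ref{Hlm:Zhang}, \ref{Hlm:HHD}, and~\ref{Hlm:EDD}. Since $\ol\bH$ and $\bH$ are symmetric, $\bd^\T(\ol\bH-\bH)=[(\ol\bH-\bH)\bd]^\T$, so the target equals $\bl(\btheta_0)[(\ol\bH-\bH)\bd]^\T\bI_0^{-1}$. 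On the event $\wt{\mathcal{E}}_0$ of Lemma~\ref{Hlm:E0} (on which $\wh\btheta\in U(\rho)$, $n^{-1}\sum h(\bX_i)\le 2\lambda_h$ and $n^{-1}\sum m(\bX_i)\le 2\lambda_m$), Lemma~\ref{Hlm:L3Dec} with $\btheta=\wh\btheta$ gives $(\ol\bH-\bH)\bd=\tfrac12\bQ(\bd^{\otimes2})+\bR_1+\bR_2$ with $\lrn{\bR_1}_2\le\tfrac{\lambda_m}{3}\lrn{\bd}_2^3$ and $\lrn{\bR_2}_2\le R_2\lrn{\bd}_2^2$, $\E(R_2^k)=O(n^{-k/2})$. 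Writing the symmetric bilinear form $\bQ(\bu,\bv)=(\bI_d\otimes\bu^\T)\bQ\bv$, which obeys $\lrn{\bQ(\bu,\bv)}_2\le C_{\bQ}\lrn{\bu}_2\lrn{\bv}_2$, and using $\bQ(\bd^{\otimes2})-\bQ(\bd_0^{\otimes2})=\bQ(\bd-\bd_0,\bd)+\bQ(\bd_0,\bd-\bd_0)$, I would \emph{define} $\bW':=\bl(\btheta_0)\bd^\T(\ol\bH-\bH)\bI_0^{-1}-\bW$ on the whole sample space, so the stated identity holds by construction, and on $\wt{\mathcal{E}}_0$,
$$\bW'=\bl(\btheta_0)\Big[\tfrac12\big(\bQ(\bd-\bd_0,\bd)+\bQ(\bd_0,\bd-\bd_0)\big)+\bR_1+\bR_2\Big]^\T\bI_0^{-1}.$$

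Next I would bound $\E\lrn{\bW'}_2$. On $\wt{\mathcal{E}}_0$ the display gives $\lrn{\bW'}_2\le C\lrn{\bl(\btheta_0)}_2\big(\lrn{\bd-\bd_0}_2(\lrn{\bd}_2+\lrn{\bd_0}_2)+\lrn{\bd}_2^3+R_2\lrn{\bd}_2^2\big)$, and each of the three contributions is $O(n^{-2})$ by the generalized Hölder inequality using Lemma~\ref{Hlm:Zhang} ($\E\lrn{\bl(\btheta_0)}_2^k+\E\lrn{\bd}_2^k=O(n^{-k/2})$, $\lrn{\bd_0}_2\le\lambda_-^{-1}\lrn{\bl(\btheta_0)}_2$), Lemma~\ref{Hlm:EDD} ($\E\lrn{\bd-\bd_0}_2^3=O(n^{-3})$), and $\E(R_2^6)=O(n^{-3})$. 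On $\wt{\mathcal{E}}_0^c$ I would avoid Lemma~\ref{Hlm:L3Dec} and instead invoke the first-order condition $\ol\bH\bd=-\bl(\btheta_0)$, giving $(\ol\bH-\bH)\bd=-\bl(\btheta_0)-(\bH-\bI_0)\bd-\bI_0\bd$; then, with $\lrn{\bd}_2\le 2r_0$ on $\bTheta$, both $\lrn{\bl(\btheta_0)\bd^\T(\ol\bH-\bH)\bI_0^{-1}}_2$ and $\lrn{\bW}_2\le\tfrac12 C_{\bQ}\lambda_-^{-3}\lrn{\bl(\btheta_0)}_2^3$ are dominated by polynomials in $\lrn{\bl(\btheta_0)}_2$ and $\lrn{\bH-\bI_0}_2$ with bounded (degree $\le 6$) moments, so Cauchy--Schwarz with $\P(\wt{\mathcal{E}}_0^c)=O(n^{-3})$ (Lemma~\ref{Hlm:E0}(2), $k=6$) and Lemma~\ref{Hlm:Zhang} force $\E[\lrn{\bW'}_2\mone(\wt{\mathcal{E}}_0^c)]=O(n^{-2})$. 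Adding the two parts gives $\E\lrn{\bW'}_2=O(n^{-2})$.

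For $\bW$ the key computation is $\E(\bW)$. Since $\bd_0=-\bI_0^{-1}\bl(\btheta_0)$ with $\bl(\btheta_0)=n^{-1}\sum_i\bl_i$, $\bl_i=\nabla L(\btheta_0;\bX_i)$ i.i.d.\ and $\E\bl_i=\bzero$, bilinearity yields
$$\bW=\frac{1}{2n^3}\sum_{i}\sum_{j}\sum_{k}\bl_k\big[\bQ(\bI_0^{-1}\bl_i,\bI_0^{-1}\bl_j)\big]^\T\bI_0^{-1}.$$
Any term in which some index appears exactly once has zero mean (independence and $\E\bl_i=\bzero$, with $\bQ(\cdot,\bzero)=\bzero$ handling the cases $i=k\ne j$ and $j=k\ne i$), so only the $n$ diagonal terms $i=j=k$ survive: $\E(\bW)=\tfrac{1}{2n^2}\bM\bI_0^{-1}$ with $\bM=\E\big[\bl_1[\bQ(\bI_0^{-1}\bl_1,\bI_0^{-1}\bl_1)]^\T\big]$, and $\lrn{\bM}_2\le C_{\bQ}\lambda_-^{-2}\E\lrn{\bl_1}_2^3\le C_{\bQ}\lambda_-^{-2}\lambda_1^3<\infty$ by Assumption~\ref{Has:LH}; hence $\lrn{\E(\bW)}_2=O(n^{-2})$. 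The remaining two bounds follow from the crude estimate $\lrn{\bW}_2\le\tfrac12 C_{\bQ}\lambda_-^{-3}\lrn{\bl(\btheta_0)}_2^3$ and Lemma~\ref{Hlm:Zhang}: $\E\lrn{\bW}_2=O(n^{-3/2})$, and, using the sixth moment of $\nabla L(\btheta_0;\bX)$ from Assumption~\ref{Has:LH}, $\E\lrn{\bW}_2^2=O(n^{-3})$.

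I expect the main obstacle to be bookkeeping rather than any single estimate: the Taylor expansion of Lemma~\ref{Hlm:L3Dec} is only valid on $\{\wh\btheta\in U(\rho)\}$, so one must re-route through the first-order identity $\ol\bH\bd=-\bl(\btheta_0)$ and the tail bound for $\wt{\mathcal{E}}_0^c$ to close the argument on the complement. The one genuinely substantive step is the mean cancellation in $\E(\bW)$: the triple sum of $n^3$ terms collapses to $O(n)$ surviving diagonal terms because the $\bl_i$ are independent and centered, and this is exactly what upgrades the $O(n^{-3/2})$ rate of $\E\lrn{\bW}_2$ to the $O(n^{-2})$ rate of $\lrn{\E(\bW)}_2$.
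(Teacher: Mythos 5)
Your overall route coincides with the paper's: split on the event $\wt{\mathcal{E}}_0$ of Lemma \ref{Hlm:E0}, apply Lemma \ref{Hlm:L3Dec} on the good event, absorb $\bQ(\bd^{\otimes2})-\bQ(\bd_0^{\otimes2})$ via bilinearity, and obtain $\lrn{\E(\bW)}_2=O(n^{-2})$ by the collapse of the triple sum to its $i=j=k$ diagonal; the good-event moment bounds and the $\E(\bW)$, $\E\lrn{\bW}_2$, $\E\lrn{\bW}_2^2$ computations are all sound and match the paper's.

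The genuine gap is in your treatment of $\wt{\mathcal{E}}_0^c$. You invoke Lemma \ref{Hlm:E0}(2) with $k=6$ to claim $\P(\wt{\mathcal{E}}_0^c)=O(n^{-3})$, but that case of the lemma requires sixth moments of $m(\bX)$, whereas Assumption \ref{Has:L3Cont} only supplies fourth moments; the paper's own proof of Lemma \ref{Hlm:E0} accordingly concludes $\P(\wt{\mathcal{E}}_0^c)=O(n^{-2})$ (the event ${\mathcal{E}}_4$ is the bottleneck). With the correct $O(n^{-2})$ tail, your bad-event estimate breaks: after replacing $(\ol\bH-\bH)\bd$ by $-\bl(\btheta_0)-(\bH-\bI_0)\bd-\bI_0\bd$ and bounding $\lrn{\bd}_2\le 2r_0$, the dominant contribution is $\E\lrm{\mone(\wt{\mathcal{E}}_0^c)\lrn{\bl(\btheta_0)}_2}$, and plain Cauchy--Schwarz gives only $\P(\wt{\mathcal{E}}_0^c)^{1/2}\{\E\lrn{\bl(\btheta_0)}_2^2\}^{1/2}=O(n^{-3/2})$, short of the required $O(n^{-2})$. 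The conclusion is still reachable, but you must not discard all smallness on the bad event: either use a sharper H\"older split such as $\E\lrm{\mone(\wt{\mathcal{E}}_0^c)\lrn{\bl(\btheta_0)}_2}\le \P(\wt{\mathcal{E}}_0^c)^{3/4}\{\E\lrn{\bl(\btheta_0)}_2^4\}^{1/4}=O(n^{-2})$; or split $\wt{\mathcal{E}}_0^c$ into ${\mathcal{E}}_0^c$ (which does have probability $O(n^{-3})$ under the $k=6$ moments available in Assumption \ref{Has:LH}) and ${\mathcal{E}}_0\cap{\mathcal{E}}_4^c$, on which $\lrn{\bd}_2\lesssim\lrn{\bl(\btheta_0)}_2$; or follow the paper and keep the factor $\bd^\T(\bH-\ol\bH)$, bounding $\E\lrm{\mone(\wt{\mathcal{E}}_0^c)\lrn{\bl(\btheta_0)\bd^\T(\bH-\ol\bH)\bI_0^{-1}}_2}\le\lambda_-^{-1}\lrO{\P(\wt{\mathcal{E}}_0^c)\E\lrn{\bl(\btheta_0)}_2^4}^{1/4}\lrO{\E\lrn{\bd^\T(\bH-\ol\bH)}_2^2}^{1/2}=O(n^{-2})$ via Lemma \ref{Hlm:HHD}.
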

\begin{proof}
Recall that by Assumption \ref{Has:LH} and Lemma \ref{Hlm:E0} (2), $\P(\wt{\mathcal{E}}_0^c) = O(n^{-2})$. 
\begin{align}
 \E\lrm{\mone(\wt{\mathcal{E}}_0^c)\lrn{\bl(\btheta_0)\bd^\T(\bH - \ol\bH)\bI_0^{-1}}_2}
    & \leq \lambda_-^{-1}\lrO{\P(\wt{\mathcal{E}}_0^c)\E\lrM{\lrn{\bl(\btheta_0)}_2^4}}^{1/4}\lrO{\E\lrM{\lrN{\bd^\T(\bH - \ol\bH)}_2^2}}^{1/2} \nonumber \\
    & = O(n^{-2}) \label{Heq:ECLDHH}.
\end{align}
Under $\wt{\mathcal{E}}_0$, we have $\wh\btheta \in U(\rho)$. Lemma \ref{Hlm:L3Dec} indicates
$$
    (\ol\bH - \bH)\bd = 2^{-1}\bQ(\bd^{\otimes 2}) + \bR_{13} + \bR_{23},
$$
where $\lrn{\bR_{13}}_2 \leq (6n)^{-1}\sum m(\bX_{ij})\lrn{\bd}_2^3$, and $\lrn{\bR_{23}}_2 \leq R_2 \lrn{\bd}_2^2$ with $\E(R_2^k) = O(n^{-k/2})$. 
\begin{align}
    \bl(\btheta_0)\bd^\T(\ol\bH - \bH)\bI_0^{-1} & =  2^{-1}\bl(\btheta_0)\lrM{\bQ(\bd_0^{\otimes 2})}^\T\bI_0^{-1} + \label{Heq:fst11} \\
        & \blkeq 2^{-1}\bl(\btheta_0)\lrM{\bQ(\bd^{\otimes 2}) - \bQ(\bd_0^{\otimes 2})}^\T\bI_0^{-1} + \label{Heq:fst12} \\
        & \blkeq \bl(\btheta_0)\bR_{13}^\T\bI_0^{-1}+\bl(\btheta_0)\bR_{23}^\T\bI_0^{-1}. \label{Heq:fst13}
\end{align}
Also, under $\wt{\mathcal{E}}_0$, $\lrn{\bd}_2 \leq 2\lrn{\bl(\btheta_0)}_2\lrm{(1 - \rho)\lambda_-}^{-1}$ and $\lrn{\bH - \bI_0}_2 \leq \rho\lambda_-/2$. 

\subsubsection*{First term \eqref{Heq:fst11}}
Consider $\bl(\btheta_0)\lrM{\bQ(\bd_0^{\otimes 2})}^\T\bI_0^{-1}$. 
Note that this random element does not depends on $\wt{\mathcal{E}}_0$, instead, on $\bl(\btheta_0;\bX_1)$ and $h(\bX_1)$. Lemma \ref{Hlm:L3bd} gives for $j = 1, \dots, d$, $\lrN{\bG_j (\btheta_0;\bX_i)}_2 \leq h(\bX_i)$ and $\lrN{\E\lrM{\bG_{j}(\btheta_0;\bX_i)}}_2 \leq \lambda_h$. Then,
\begin{align}
    \lrN{\bI_0^{-1}\bl(\btheta_0)\lrM{\bQ(\bd_0^{\otimes 2})}^\T}_2 & \leq \lrn{\bd_0}_2\lrN{ \begin{bmatrix} \bd_0^\T \E\lrM{\bG_{1}(\btheta_0;\bX_i)}\bd_0 \\ \vdots \\ \bd_0^\T \E\lrM{\bG_{d}(\btheta_0;\bX_i)}\bd_0 \end{bmatrix}}_1 \nonumber \\
        & \leq d\lambda_-^{-3}\lrn{\bl(\btheta_0)}_2^3\lambda_h \label{Heq:LLLDef} \\
     \E\lrO{\lrN{\bI_0^{-1}\bl(\btheta_0)\lrM{\bQ(\bd_0^{\otimes 2})}^\T}_2}  & \leq d\lambda_-^{-3}\E\lrM{\lrn{\bl(\btheta_0)}_2^3\lambda_h} = O(n^{-3/2}). \label{Heq:P1L3212}
\end{align}
Note that $\E(\bd_0) = \bzero$. Let $\bd_{0, i} = -\bI_0^{-1}\bl(\btheta_0;\bX_i)$ and $\bd_0 = n^{-1}\sum \bd_{0, i}$. 
\begin{align*}
    -\bI_0^{-1}\bl(\btheta_0)\lrM{\bQ(\bd_0^{\otimes 2})}^\T & = n^{-3} \sum_{i=1}^n\sum_{j=1}^n\sum_{k=1}^n \bQ(\bd_{0, i}\otimes\bd_{0, j}\otimes\bd_{0, k}) \\
    \E\lrO{-\bI_0^{-1}\bl(\btheta_0)\lrM{\bQ(\bd_0^{\otimes 2})}^\T} & = n^{-3}\sum_i \E\lrM{\bQ(\bd_{0, i}\otimes\bd_{0, i}\otimes\bd_{0, i})} \\
    & = n^{-2} \E\lrM{\bQ(\bd_{0, 1}\otimes\bd_{0, 1}\otimes\bd_{0, 1})}. 
\end{align*}
Hence, its expectation can be controlled.
\begin{equation} 
    \lrN{\E\lrO{\bI_0^{-1}\bl(\btheta_0)\lrM{\bQ(\bd_0^{\otimes 2})}^\T}}_2 \leq n^{-2} \E\lrm{d \lambda_h \lrn{\bl(\btheta_0;\bX_1)}_2^3\lrn{\bI_0^{-1}}_2^3} = O(n^{-2}).  \label{Heq:P1L321}
\end{equation}
Let $\bW = 2^{-1}\bl(\btheta_0)\lrO{\bQ(\bd_0^{\otimes 2})}^\T\bI_0^{-1}$. Therefore, $\lrn{\E(\bW)}_2 = O(n^{-2})$ and $\E(\lrn{\bW}_2) = O(n^{-3/2})$. When Assumption \ref{Has:LH} holds, by definition \eqref{Heq:LLLDef}
$$
    \E\lrI{\lrn{\bW}_2^2} \leq d^2\lambda_-^{-4}\lambda_h^2 \E\lrM{\lrn{\bl(\btheta_0)}_2^6} = O(n^{-3}).
$$
On the other hand, by \eqref{Heq:LLLDef}, 
\begin{equation*}
\E\lrM{\lrN{\mone(\wt{\mathcal{E}}_0^c)\bW}_2} \leq d\lambda_-^{-3}\E\lrM{\mone(\wt{\mathcal{E}}_0^c) \lrn{\bl(\btheta_0)}_2^3\lambda_h} \leq d\lambda_-^{-3}\lambda_h\lrM{\P(\wt{\mathcal{E}}_0^c)}^{1/4}\lrm{\E\lrI{\lrn{\bl}_2^4}}^{3/4} = O(n^{-2}).
\end{equation*}
Also note that together with \eqref{Heq:ECLDHH},
\begin{equation}
    \E\lrM{\lrN{\mone(\wt{\mathcal{E}}_0^c)\bW'}_2} \leq  \E\lrm{\mone(\wt{\mathcal{E}}_0^c)\lrn{\bl(\btheta_0)\bd^\T(\bH - \ol\bH)\bI_0^{-1}}_2} + \E\lrM{\lrN{\mone(\wt{\mathcal{E}}_0^c)\bW}_2} = O(n^{-2}). \label{Heq:ECWP}
\end{equation}
We are going to show that $\E\lrM{\lrN{\mone(\wt{\mathcal{E}}_0)\bW'}_2} = O(n^{-2})$. 
\subsubsection*{Second term \eqref{Heq:fst12}}
By definition
\begin{align*}
    \nabla^3 \E\lrM{L(\btheta_0;\bX_{i})(\bd^{\otimes 2})} - \nabla^3 \E\lrM{L(\btheta_0;\bX_{i})(\bd_0^{\otimes 2})}
    & = 
    \begin{bmatrix} (\bd + \bd_0)^\T \E\lrM{\bG_{1}(\btheta_0;\bX_i)}(\bd - \bd_0) \\ \vdots \\ (\bd + \bd_0)^\T\E\lrM{\bG_{d}(\btheta_0;\bX_i)}(\bd - \bd_0) \end{bmatrix}.
\end{align*}
Therefore, for $j = 1, \dots, d$, under $\wt{\mathcal{E}}_0$
\begin{align*}
    & \lrA{(\bd + \bd_0)^\T \E\lrM{\bG_{j}(\btheta_0;\bX_i)}(\bd - \bd_0)} \\
    \leq & (\lrn{\bd}_2 + \lrn{\bd_0}_2)\lambda_h \lambda_-^{-1}\lrI{\lrn{\ol\bH - \bH}_2 + \lrn{\bH - \bI_0}_2}\lrn{\bd}_2  \\
    \leq & \frac{2\lambda_h\lrO{2\lrm{\lambda_-(1 - \rho)}^{-1} + \lambda_-^{-1}}}{\lambda_-^2(1 - \rho)}\lrn{\bl(\btheta_0)}_2^2 \lrM{\frac{4\lambda_h}{(1 - \rho)\lambda_-}\lrn{\bl(\btheta_0)} + \lrn{\bH - \bI_0}_2}\\
    \leq & \frac{8\lambda_h^2\lrO{2\lrm{\lambda_-(1 - \rho)}^{-1} + \lambda_-^{-1}}}{\lambda_-^3(1 - \rho)^2}\lrn{\bl(\btheta_0)}_2^3 + \frac{2\lambda_h\lrO{2\lrm{\lambda_-\rho(1 - \rho)}^{-1} + \lambda_-^{-1}}}{\lambda_-^2(1 - \rho)}\lrn{\bl(\btheta_0)}_2^2\lrn{\bH - \bI_0}_2.
\end{align*}
Consequently,
\begin{align}
    & \E\lrI{\mone(\wt{\mathcal{E}}_0)\lrN{\bl(\btheta_0) \lrO{\bQ(\bd^{\otimes 2}) - \bQ(\bd_0^{\otimes 2})}^\T\bI_0^{-1}}_2} \nonumber \\
    \leq & \E\lrO{\lrN{\bl(\btheta_0)}_2 \lrN{ \lrO{\bQ(\bd^{\otimes 2}) - \bQ(\bd_0^{\otimes 2})}^\T\bI_0^{-1}}_1} \nonumber \\
    \leq & d c_{1\rho} \E\lrm{\lrn{\bl(\btheta_0)}_2^4} + d c_{2\rho} \E\lrm{\lrn{\bl(\btheta_0)}_2^3\lrn{\bI_0 - \bH}_2} \nonumber \\
    = & O(n^{-2}). \label{Heq:P1L322}
\end{align}

\subsubsection*{The last term \eqref{Heq:fst13}}
\begin{align}
    \E\lrM{\mone(\wt{\mathcal{E}}_0)\lrn{\bl(\btheta_0)\bR_{13}^\T\bI_0^{-1}}_2} & \leq \lambda_-^{-1}3^{-1}\lambda_m\E\lrM{ \mone(\wt{\mathcal{E}}_0)\lrn{\bl(\btheta_0)}\lrn{\bd}_2^3} \nonumber \\
        & \leq 8 \lambda_-^{-1} 3^{-1}\lambda_m\E\lrO{ \lrn{\bl(\btheta_0)}_2^4 \lrm{(1 - \rho)\lambda_-}^{-3}} = O(n^{-2}) \label{Heq:P1L323}, \\
    \E\lrM{\mone(\wt{\mathcal{E}}_0)\lrn{\bl(\btheta_0)\bR_{23}^\T\bI_0^{-1}}_2} & \leq \lambda_-^{-1}\E\lrM{ \mone(\wt{\mathcal{E}}_0)\lrn{\bl(\btheta_0)}\lrn{\bd}_2^2R_2} \nonumber \\
        & \leq 4 \lambda_-^{-1} \E\lrO{ \lrn{\bl(\btheta_0)}_2^3 R_2 \lrm{(1 - \rho)\lambda_-}^{-2}} = O(n^{-2}) \label{Heq:P1L324}.
\end{align}

With \eqref{Heq:P1L322}, \eqref{Heq:P1L323}, \eqref{Heq:P1L324}, and \eqref{Heq:ECWP}
\begin{align*}
    & \E\lrM{\lrN{\mone(\wt{\mathcal{E}}_0)\bW'}_2} = O(n^{-2}) \\
    & \E\lrM{\lrN{\bW'}_2} \leq \E\lrM{\lrN{\mone(\wt{\mathcal{E}}_0)\bW'}_2} + \E\lrM{\lrN{\mone(\wt{\mathcal{E}}_0^c)\bW'}_2} = O(n^{-2}).
\end{align*}
\end{proof}

\begin{lemma} \label{Hlm:LDD} Under Assumptions \ref{as:T0Def}-\ref{Has:LH}, for each node,
let $\bW_1 = \bl(\btheta_0)\bd_0^\T(\bI_0 - \bH)\bI_0^{-1}$ and $\bW_2 = 2^{-1} \bl(\btheta_0)\lrM{\nabla^3 \bQ(\bd_0^{\otimes 2})}^\T\bI_0^{-1}$. Then,
\begin{align*}
    & \bl(\btheta_0)(\bd - \bd_0)^\T = \bW_1 - \bW_2 + \bW' \\
    & \E\lri{\lrn{\bW_1}_2} = O(n^{-3/2}), \lrn{\E\lri{\bW_1}}_2 = O(n^{-2}) \\
    & \E\lri{\lrn{\bW_2}_2} = O(n^{-3/2}), \lrn{\E\lri{\bW_2}}_2 = O(n^{-2}) \\
    & \E\lri{\lrn{\bW'}_2} = O(n^{-2}).
\end{align*}

Additionally, when Assumption \ref{Has:LH} holds, 
$
    \E\lrI{\lrn{\bW_1}_2^2} = O(n^{-3})$ and $\E\lrI{\lrn{\bW_2}_2^2} = O(n^{-3}).
$
\end{lemma}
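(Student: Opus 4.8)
The plan is to reduce everything to the single-node decomposition already used for Lemma~\ref{Hlm:EDD} together with Lemma~\ref{Hlm:N3DD}. First I would start from $\bd - \bd_0 = \bI_0^{-1}(\bI_0 - \ol\bH)\bd$ (with $\bd_0 = -\bI_0^{-1}\bl(\btheta_0)$), split $\bI_0 - \ol\bH = (\bI_0 - \bH) + (\bH - \ol\bH)$, and transpose, using that all the matrices involved are symmetric, to obtain
\begin{equation*}
\bl(\btheta_0)(\bd - \bd_0)^\T = \bl(\btheta_0)\bd^\T(\bI_0 - \bH)\bI_0^{-1} + \bl(\btheta_0)\bd^\T(\bH - \ol\bH)\bI_0^{-1}.
\end{equation*}
On the second summand I would apply Lemma~\ref{Hlm:N3DD}, which gives $\bl(\btheta_0)\bd^\T(\bH - \ol\bH)\bI_0^{-1} = -\bW_2 - \bW^{\dagger}$, where $\bW_2 = 2^{-1}\bl(\btheta_0)\lrM{\bQ(\bd_0^{\otimes 2})}^\T\bI_0^{-1}$ is exactly the random element $\bW$ of that lemma and $\E\lrn{\bW^{\dagger}}_2 = O(n^{-2})$; in particular the three moment statements for $\bW_2$ ($\E\lrn{\bW_2}_2 = O(n^{-3/2})$, $\lrn{\E\bW_2}_2 = O(n^{-2})$, and, under Assumption~\ref{Has:LH}, $\E\lrn{\bW_2}_2^2 = O(n^{-3})$) are inherited verbatim. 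On the first summand I would write $\bd = \bd_0 + (\bd - \bd_0)$, which isolates $\bW_1 = \bl(\btheta_0)\bd_0^\T(\bI_0 - \bH)\bI_0^{-1}$ and leaves the remainder $\bl(\btheta_0)(\bd - \bd_0)^\T(\bI_0 - \bH)\bI_0^{-1}$. I would bound the remainder in $L^1$ by $\lambda_-^{-1}\E\lrO{\lrn{\bl(\btheta_0)}_2\lrn{\bd - \bd_0}_2\lrn{\bH - \bI_0}_2}$ and apply H\"older's inequality with exponents $(4,2,4)$, using $\E\lrn{\bl(\btheta_0)}_2^4 = O(n^{-2})$ and $\E\lrn{\bH - \bI_0}_2^4 = O(n^{-2})$ from Lemma~\ref{Hlm:Zhang} and $\E\lrn{\bd - \bd_0}_2^2 = O(n^{-2})$ from Lemma~\ref{Hlm:EDD}, which yields $O(n^{-2})$. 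Taking $\bW'$ to be this remainder minus $\bW^{\dagger}$ then produces the claimed decomposition $\bl(\btheta_0)(\bd - \bd_0)^\T = \bW_1 - \bW_2 + \bW'$ with $\E\lrn{\bW'}_2 = O(n^{-2})$.

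It then remains to bound $\bW_1$. Substituting $\bd_0 = -\bI_0^{-1}\bl(\btheta_0)$ gives the explicit form $\bW_1 = -\bl(\btheta_0)\bl(\btheta_0)^\T\bI_0^{-1}(\bI_0 - \bH)\bI_0^{-1}$, which involves only the i.i.d.\ averages $\bl(\btheta_0) = n^{-1}\sum_i\nabla L(\btheta_0;\bX_i)$ and $\bH = n^{-1}\sum_i\nabla^2 L(\btheta_0;\bX_i)$, so no truncation to $U(\rho)$ is needed. The bound $\E\lrn{\bW_1}_2 = O(n^{-3/2})$ follows from $\lrn{\bW_1}_2 \leq \lambda_-^{-2}\lrn{\bl(\btheta_0)}_2^2\lrn{\bH - \bI_0}_2$ and Cauchy--Schwarz with Lemma~\ref{Hlm:Zhang} ($O(n^{-1})\cdot O(n^{-1/2})$), and the bound $\E\lrn{\bW_1}_2^2 = O(n^{-3})$ from $\lrn{\bW_1}_2^2 \leq \lambda_-^{-4}\lrn{\bl(\btheta_0)}_2^4\lrn{\bH - \bI_0}_2^2$ and H\"older with exponents $(3/2,3)$ using the sixth-moment estimates $\E\lrn{\bl(\btheta_0)}_2^6 = O(n^{-3})$ and $\E\lrn{\bH - \bI_0}_2^6 = O(n^{-3})$ of Lemma~\ref{Hlm:Zhang}.

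The genuinely non-routine point, and the one I would spend the most care on, is $\lrn{\E\bW_1}_2 = O(n^{-2})$, since the naive estimate coming from $\E\lrn{\bW_1}_2 = O(n^{-3/2})$ loses a factor of $n^{1/2}$. Here I would expand $\bW_1$ as the triple average $-n^{-3}\sum_{i,j,k}\bl_i\bl_j^\T\bI_0^{-1}(\bI_0 - \bH_k)\bI_0^{-1}$, with $\bl_i = \nabla L(\btheta_0;\bX_i)$, $\bH_k = \nabla^2 L(\btheta_0;\bX_k)$, $\E\bl_i = \bzero$ and $\E(\bI_0 - \bH_k) = \bzero$. Every summand in which one index differs from the other two has mean zero, by independence together with the vanishing of the isolated factor's mean; checking this entrywise handles the fact that $\bI_0^{-1}$ sits between $\bl_j^\T$ and $\bI_0 - \bH_k$. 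Only the $n$ diagonal terms $i=j=k$ survive, giving $\E\bW_1 = -n^{-2}\E\lrO{\bl_1\bl_1^\T\bI_0^{-1}(\bI_0 - \bH_1)}\bI_0^{-1}$, whence $\lrn{\E\bW_1}_2 \leq n^{-2}\lambda_-^{-2}\E\lrO{\lrn{\bl_1}_2^2\lrn{\bH_1 - \bI_0}_2} = O(n^{-2})$ by Cauchy--Schwarz and the moment assumptions. The analogous second-order cancellation behind $\lrn{\E\bW_2}_2 = O(n^{-2})$ is already carried out inside the proof of Lemma~\ref{Hlm:N3DD}, which is the reason I would invoke that lemma rather than reprove it.
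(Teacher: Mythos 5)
Your proposal is correct and follows essentially the same route as the paper's proof: the identity $\bd-\bd_0=\bI_0^{-1}(\bI_0-\ol\bH)\bd$ split into the three terms $\bW_1$, the cross term $\bl(\btheta_0)(\bd-\bd_0)^\T(\bI_0-\bH)\bI_0^{-1}$ (bounded by the same H\"older $(4,2,4)$ argument with Lemmas \ref{Hlm:Zhang} and \ref{Hlm:EDD}), and the term handled by Lemma \ref{Hlm:N3DD}, plus the same triple-sum cancellation giving $\lrn{\E(\bW_1)}_2=O(n^{-2})$ and the same sixth-moment H\"older bound for $\E\lri{\lrn{\bW_1}_2^2}$. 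Your reading of $\bW_2$ as the $\bW$ of Lemma \ref{Hlm:N3DD} (i.e.\ interpreting the statement's ``$\nabla^3\bQ$'' as $\bQ$) matches the paper's own usage, and the sign convention for the absorbed residual is immaterial.
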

\begin{proof}
Recall that $\bd - \bd_{0} = \bI_0^{-1}(\bI_0 - \ol\bH)\bd$,
\begin{align*}
\bl(\btheta_0)(\bd - \bd_{0})^\T & = \bl(\btheta_0)\bd^\T(\bI_0 - \ol\bH)\bI_0^{-1} \\
    & = \bl(\btheta_0)\bd_0^\T(\bI_0 - \bH)\bI_0^{-1} + \\
    & \blkeq \bl(\btheta_0)(\bd - \bd_0)^\T(\bI_0 - \bH)\bI_0^{-1} + \\
    & \blkeq \bl(\btheta_0)\bd^\T(\bH - \ol\bH)\bI_0^{-1}.
\end{align*}

Consider $\bl(\btheta_0)\bd_0^\T(\bI_0 - \bH)\bI_0^{-1}$:
\begin{align*}
    \E\lrM{\lrN{\bl(\btheta_0)\bd_0^\T(\bI_0 - \bH)\bI_0^{-1}}_2} & \leq \lambda_-^{-2} \E\lrM{\lrN{\bl}_2^2\lrn{\bI_0 - \bH}_2} = O(n^{-3/2}) \\
    \E\lrM{\bl(\btheta_0)\bd_0^\T(\bI_0 - \bH)} & = n^{-3}\sum_{i=1}^n\sum_{j=1}^n\sum_{k=1}^n \E\lrM{\bl_i\bl_j^\T\bI_0^{-1}(\bH_k - \bI_0)} \\
        & = n^{-2}\E\lrM{\bl_1\bl_1^\T\bI_0^{-1}(\bH_1 - \bI_0)} \\
    \lrN{\E\lrM{\bl(\btheta_0)\bd_0^\T(\bI_0 - \bH)}}_2 & = O(n^{-2}). 
\end{align*}
Let $\bW_1 = \bl(\btheta_0)\bd_0^\T(\bI_0 - \bH)\bI_0^{-1}$. Then, $\E\lri{\lrn{\bW_1}_2} = O(n^{-3/2})$ and $\lrn{\E\lri{\bW_1}}_2 = O(n^{-2})$. When Assumption \ref{Has:LH} holds, 
$$
    \E\lrI{\lrn{\bW_1}_2^2} \leq \lambda_-^{-4} \E\lrM{\lrN{\bl}_2^4\lrn{\bI_0 - \bH}_2^2} \leq \lambda_-^{-4} \lrM{\E\lrI{\lrn{\bl}_2^6}}^{2/3}\lrM{\E\lrI{\lrn{\bI_0 - \bH}_2^6}}^{1/3} = O(n^{-3}).
$$

Consider $\bl(\btheta_0)(\bd - \bd_0)^\T(\bI_0 - \bH)\bI_0^{-1}$. 
By Lemma \ref{Hlm:EDD}, 
\begin{align*}
    \E\lrM{\lrn{\bl(\btheta_0)(\bd - \bd_0)^\T(\bH - \bI_0)\bI_0^{-1}}_2} & \leq \lambda_-^{-1}\E\lrM{\lrN{\bl(\btheta_0)}_2\lrN{\bd - \bd_0}_2\lrn{\bH - \bI_0}_2} \\
        & \leq \lambda_-^{-1}\lrM{\E\lri{\lrN{\bl(\btheta_0)}_2^4}\E\lri{\lrn{\bH - \bI_0}_2^4}}^{1/4}\lrM{\E\lri{\lrN{\bd - \bd_0}_2^2}}^{1/2} \\
        & = O(n^{-2}). 
\end{align*}

Consider $\bl(\btheta_0)\bd^\T(\bH - \ol\bH)\bI_0^{-1}$. By Lemma \ref{Hlm:N3DD}, 
\begin{equation*}
    \bl(\btheta_0)\bd^\T(\bH - \ol\bH)\bI_0^{-1} = -\bW_2 + \bW'_2,
\end{equation*}
where $\bW_2 = 2^{-1} \bl(\btheta_0)\lrM{\bQ(\bd_0^{\otimes 2})}^\T\bI_0^{-1}$, with $\lrn{\E(\bW_2)}_2 = O(n^{-2})$, $\E(\lrn{\bW_2}_2) = O(n^{-3/2})$, and  $\E\lrI{\lrN{\bW_2'}_2} = O(n^{-2})$.
The proof is complete by putting them together. 
\end{proof}

\begin{lemma} \label{Hlm:lcH} \ 
\begin{enumerate}
\item In a single-node, when the $k$th ($k \geq 2$) moments of $\bl(\btheta_0;\bX_i)$ and $\bH(\btheta_0;\bX_i)$ exist, then,
    \begin{align*}
        & \E\lrm{\lrn{\bl \otimes (\bH - \bI_0)}_2^{k/2}} = O(n^{-k/2}) \\
        & \E\lrm{\bl \otimes (\bH - \bI_0)} = n^{-1}\bQ_{12} \text{, where } \bQ_{12} = \E\lro{\bl(\btheta_0;\bX_1) \otimes \lrm{\bH(\btheta_0;\bX_1) - \bI_0}}.
    \end{align*}
\item When $m$ independent and identically distributed copies $\bl_i$ and $\bH_i$ exist, and $k \geq 4$, 
    \begin{equation*}
        m^{-1}\sum \bl_i \otimes (\bH_i - \bI_0) = n^{-1}\bQ_{12} + \bR, \text{ with } \E\lri{\lrN{\bR}_2^{k/2}} = O(m^{-k/4}n^{-k/2}),
    \end{equation*}
    where $\bQ_{12} = \E\lrO{\bl(\btheta_0;\bX_1) \otimes \lrm{\bH(\btheta_0;\bX_1) - \bI_0}}$. 
\end{enumerate}
\end{lemma}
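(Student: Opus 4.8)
The plan is to reduce both statements to one elementary fact — the spectral norm is multiplicative on Kronecker products — together with the $O(n^{-k/2})$ moment rates for the centred sample averages $\bar\bl = n^{-1}\sum_i \nabla L(\btheta_0;\bX_i)$ and $\bH - \bI_0 = n^{-1}\sum_i\lrm{\nabla^2 L(\btheta_0;\bX_i) - \bI_0}$; in the lemma ``$\bl$'' stands for $\bar\bl$. Since $\btheta_0$ is an interior minimizer of the population risk, $\E\lri{\nabla L(\btheta_0;\bX_i)} = \bzero$, and $\E\lri{\nabla^2 L(\btheta_0;\bX_i)} = \bI_0$ by definition, so both averages are sums of i.i.d.\ mean-zero terms with finite $k$th moment. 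The Marcinkiewicz--Zygmund inequality (equivalently Lemma \ref{Hlm:EZK} after passing to a Hermitian dilation in the gradient case) then gives $\E\lri{\lrn{\bar\bl}_2^{k}} = O(n^{-k/2})$ and $\E\lri{\lrn{\bH - \bI_0}_2^{k}} = O(n^{-k/2})$, which are the only quantitative inputs needed.

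For part (1), I expand $\bl \otimes (\bH - \bI_0) = n^{-2}\sum_i \sum_j \bl_i \otimes (\bH_j - \bI_0)$ with $\bl_i = \nabla L(\btheta_0;\bX_i)$, $\bH_j = \nabla^2 L(\btheta_0;\bX_j)$. Taking expectations, independence annihilates every $i \ne j$ term, since it factors as $\E(\bl_i) \otimes \E(\bH_j - \bI_0) = \bzero$; only the $n$ diagonal terms survive, each equal to $\bQ_{12}$, so $\E\lrm{\bl \otimes (\bH - \bI_0)} = n^{-1}\bQ_{12}$. For the moment bound, $\lrn{\bl \otimes (\bH - \bI_0)}_2 = \lrn{\bl}_2\lrn{\bH - \bI_0}_2$, so by Cauchy--Schwarz
$$
\E\lrm{\lrn{\bl \otimes (\bH - \bI_0)}_2^{k/2}} \le \lrI{\E\lri{\lrn{\bl}_2^{k}}}^{1/2}\lrI{\E\lri{\lrn{\bH - \bI_0}_2^{k}}}^{1/2} = O(n^{-k/4})\,O(n^{-k/4}) = O(n^{-k/2}).
$$

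For part (2), let $\bl_i, \bH_i$ now denote the within-center averages in the $i$th of $m$ i.i.d.\ centers and set $\bW_i = \bl_i \otimes (\bH_i - \bI_0) - n^{-1}\bQ_{12}$, so by part (1) the $\bW_i$ are i.i.d.\ with mean $\bzero$ and $\bR = m^{-1}\sum_i \bW_i$. From $\lrn{\bW_i}_2^{k/2} \le 2^{k/2-1}\lrI{\lrn{\bl_i \otimes (\bH_i - \bI_0)}_2^{k/2} + n^{-k/2}\lrn{\bQ_{12}}_2^{k/2}}$ and part (1) one gets $\E\lri{\lrn{\bW_i}_2^{k/2}} = O(n^{-k/2})$; this is where $k \ge 4$ is used, so that $k/2 \ge 2$ is a legitimate moment order for a concentration bound on the average. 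Because $\bW_i$ is $d^2 \times d$ and not square, I first replace it by its Hermitian dilation (the symmetric block matrix with off-diagonal blocks $\bW_i$ and $\bW_i^\T$), which has the same spectral norm, and then apply Lemma \ref{Hlm:EZK} to these i.i.d.\ mean-zero Hermitian matrices with $\zeta \asymp n^{-1}$; the bound it returns is a constant times $(m^{-k/4} + m^{-k/2+1})\,O(n^{-k/2})$, and for $k \ge 4$ the $m^{-k/4}$ term dominates, giving $\E\lri{\lrn{\bR}_2^{k/2}} = O(m^{-k/4}n^{-k/2})$. For the case $k = 4$ actually invoked in Lemma \ref{Hlm:P2}, one can bypass the dilation and quote Lemma \ref{Hlm:R1MC} with the centred $\bW_i$ to obtain $\E\lri{\lrn{\bR}_2^{2}} = O(m^{-1}n^{-2})$ directly.

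The only step with any genuine content is this last matrix concentration: the summands $\bl_i \otimes (\bH_i - \bI_0)$ are rectangular, so Lemma \ref{Hlm:EZK} cannot be invoked verbatim and the Hermitian-dilation trick (or, for $k = 4$, Lemma \ref{Hlm:R1MC}) is needed. Everything else is independence bookkeeping, the Kronecker-norm identity, Cauchy--Schwarz, and the standard $n^{-k/2}$ moment rates for $\bar\bl$ and $\bH - \bI_0$.
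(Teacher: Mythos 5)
Your proof is correct, and for part (1) it coincides with the paper's argument: the identity $\lrn{\bl \otimes (\bH - \bI_0)}_2 = \lrn{\bl}_2\lrn{\bH - \bI_0}_2$, Cauchy--Schwarz with the $O(n^{-k/2})$ moment rates of Lemma \ref{Hlm:Zhang}, and independence bookkeeping for $\E\lrm{\bl \otimes (\bH - \bI_0)} = n^{-1}\bQ_{12}$. For part (2) you and the paper take slightly different routes around the same obstacle, namely that the centred summands $\bW_i = \bl_i \otimes (\bH_i - \bI_0) - n^{-1}\bQ_{12}$ are rectangular so Lemma \ref{Hlm:EZK} does not apply verbatim. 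The paper exploits the Kronecker block structure: it writes $\bW_i$ in terms of the $d$ Hermitian blocks $\bZ_{i,j} = l_j(\bH_i - \bI_0) - n^{-1}\bE_{12,j}$, applies Lemma \ref{Hlm:EZK} with exponent $k/2$ to each block, and reassembles via $\lrN{m^{-1}\sum_i\bW_i}_2^{k/2} = \lrN{\sum_j (m^{-1}\sum_i \bZ_{i,j})^\T(m^{-1}\sum_i \bZ_{i,j})}_2^{k/4} \leq d^{k/4-1}\sum_j \lrN{m^{-1}\sum_i \bZ_{i,j}}_2^{k/2}$. You instead pass to the Hermitian dilation of the full $d^2 \times d$ matrix and apply Lemma \ref{Hlm:EZK} once; since the dilation preserves the spectral norm, is linear, and keeps the mean zero, this is equally valid, and your bookkeeping of the resulting bound $(m^{-k/4} + m^{-k/2+1})O(n^{-k/2})$, with $m^{-k/4}$ dominating exactly when $k \geq 4$, matches the lemma's rate and correctly explains where $k \geq 4$ enters (the paper's blockwise version needs the same observation). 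Your dilation argument is marginally more general (it does not use the Kronecker structure at all, only a finite $k/2$-th moment for the rectangular summands, at the harmless cost of a larger dimension inside the $\log d$ factor), while the paper's blockwise argument stays entirely within the Hermitian setting of Lemma \ref{Hlm:EZK} as stated; your fallback for $k=4$ via Lemma \ref{Hlm:R1MC} with centred $\bW_i$ is also correct, since that lemma is proved through the Frobenius norm and so covers rectangular matrices.
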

\begin{proof}
For the first statement, note that $\E\lrM{\bl(\btheta_0;\bX_i)} = \bzero$ and $\E\lrM{\bH(\btheta_0;\bX_i) - \bI_0} = \bzero$. Let $\bW = \bl \otimes (\bH - \bI_0) = \lrm{l_1(\bH - \bI_0), \cdots, l_d(\bH - \bI_0)}^\T$.
\begin{align*}
    & \bW^\T\bW = \lrm{\bl \otimes (\bH - \bI_0)}^\T \times \lrm{\bl \otimes (\bH - \bI_0)} = (\bl^\T\bl) \otimes (\bH - \bI_0)^2 \\
    & \lrn{\bl \otimes (\bH - \bI_0)}_2 = \lrn{\bW^\T\bW}_2^{1/2} = \lrn{\bl}_2\lrn{\bH - \bI_0}_2 \\
    & \E(\lrn{\bW}_2^{k/2}) = \E(\lrn{\bl}_2^{k/2}\lrn{\bH - \bI_0}_2^{k/2}) \leq \lrm{\E(\lrn{\bl}_2^{k})\E(\lrn{\bH - \bI_0}_2^{k})}^{1/2} = O(n^{-k/2}) \\
    & \E(\bW) = n^{-2} \sum \E\lrm{\bl_i \otimes (\bH_i - \bI_0)} = n^{-1}\bQ_{12}.
\end{align*}
For the second statement, denote the $j$th element of $\bW_i$ and $\bQ_{12}$ by $\bW_{i, j} = l_j(\bH_i - \bI_0)$ and $\bE_{12, j} = \E(\bW_{1, j})$, and $\bW_{i, j}$ is Hermitian. Let $\bZ_{i, j} = \bW_{i, j} - n^{-1}\bE_{12, j}$, and
$$\E\lrI{\lrn{\bZ_{i, j}}_2^{k/2}} \leq 2^{k/2-1} \E\lrI{\lrn{\bW_{i, j}}_2^{k/2}} + 2^{k/2-1}\lrI{n^{-1}\lrN{\bE_{12, j}}_2}^{k/2} = O(n^{-k/2}).$$ By Lemma \ref{Hlm:EZK}, 
\begin{align*}
        & \E\lrI{\lrN{\sum_i m^{-1}\bZ_{i, j}}_2^{k/2}} = O(m^{-k/4}n^{-k/2}) \\
        & \E\lrI{\lrN{m^{-1}\sum_i \bW_i - n^{-1}\bQ_{12}}_2^{k/2}} \\
    = & \E\lrM{\lrN{\lrI{m^{-1}\sum_i\bW_i - n^{-1}\bQ_{12}}^\T\lrI{m^{-1}\sum_i\bW_i - n^{-1}\bQ_{12}}}_2^{k/4}} \\
    = & \E\lrM{\lrN{\sum_{j=1}^d \lrI{m^{-1}\sum_i \bZ_{i, j}}^\T\lrI{m^{-1}\sum_i \bZ_{i, j}}}_2^{k/4}} \\
    \leq & d^{k/4-1} \sum_{j=1}^d \E\lrI{\lrN{ m^{-1}\sum_i \bZ_{i, j}}_2^{k/2}} = O(m^{-k/4}n^{-k/2}).
\end{align*}
\end{proof}
\end{document}